%% file: main.tex
\documentclass[12pt]{article}
\usepackage{amsfonts, amsmath, amsthm, booktabs, enumitem, float, graphicx, natbib, setspace}
\usepackage[margin=1.1in]{geometry}

\newtheorem{theorem}{Theorem}[section]
\newtheorem{proposition}[theorem]{Proposition}
\newtheorem{lemma}[theorem]{Lemma}
\newtheorem{corollary}{Corollary}[theorem]
\theoremstyle{remark}
\newtheorem{remark}[theorem]{Remark}

\DeclareMathOperator{\Beta}{Beta}
\DeclareMathOperator{\N}{N}
\DeclareMathOperator{\E}{E}
\DeclareMathOperator{\B}{B}
\DeclareMathOperator{\Var}{Var}
\DeclareMathOperator{\Cov}{Cov}

\newcommand{\IF}{\operatorname{IF}}

\begin{document}

\title{Gaussian Boundary Inference in a Hypergeometric Heavy-Tailed Family}

\author{
    Steve Lawford\thanks{ENAC (University of Toulouse), 7 avenue Edouard Belin, BP 54005, 31055, Toulouse, Cedex 4, France (email: steve.lawford@enac.fr).
    }
}

\date{}

\maketitle

\begin{abstract}
\noindent This paper develops inference for a Gaussian-nested hypergeometric family of distribution functions. The family
\[
    G_c(z) = \frac12 + z\,\frac{\Gamma(c-1/2)}{2\sqrt2\,\Gamma(c)}\,{}_1F_1\!\left(\frac12;c;-\frac{z^2}{2}\right),\quad c\ge\frac32,
\]
contains the standard normal distribution at the boundary
$c=3/2$. Away from the boundary, the density has algebraic
tail behaviour $g_c(z)\sim(c-3/2)|z|^{-3}$, so the parameter $c$ indexes a directed heavy-tailed deformation of the Gaussian law. The distribution also admits an equivalent beta-precision normal scale-mixture representation, in which the Gaussian boundary corresponds to degenerate unit precision. I establish the admissibility of the hypergeometric family, derive minimum-distance estimators, and obtain both regular interior asymptotics and nonstandard boundary asymptotics under the normal null. The standardised fit-improvement statistic converges to the mixture distribution $\tfrac12\delta_0+\tfrac12\chi_1^2$. I extend the theory to plug-in location-scale procedures, including a robust
median/IQR version. Simulations document accurate null size
and directed power against heavy-tailed alternatives. An
application to daily S\&P~500 returns, both unconditionally
and after GARCH(1,1) filtering, illustrates the empirical
implications for tail fitting and risk quantiles, and documents that GARCH filtering substantially reduces but does not eliminate the symmetric heavy-tailed departure detected by the test.
\end{abstract}

\noindent\textbf{Keywords:}
hypergeometric functions; normality testing; heavy tails; boundary inference; minimum-distance estimation; empirical process; financial returns.

\medskip

\noindent\textbf{JEL classification:} C12; C14; C46; C58.

\newpage

\section{Introduction}\label{sec:introduction}

Normality is one of the most useful reference assumptions in econometrics, statistics, and financial modelling. It provides tractable likelihoods, familiar asymptotic approximations, and a benchmark against which departures in skewness, kurtosis, and tail behaviour are measured. Yet in many empirical settings, and especially in economics and finance, the Gaussian model is best understood as a local approximation rather than a literal description of the data. Asset returns, forecast errors, regression residuals, and macro-financial shocks frequently display more mass in the shoulders and tails than the Gaussian law permits \citep{mandelbrot63, fama65, cont01, gabaix09}. Detecting such departures is important, but detection alone is not enough:  rejection of normality by an omnibus test \citep{anderson_darling52, jarque_bera87, bai_ng05, doornik_hansen08} says that something is wrong with the Gaussian approximation; it does not say what distributional direction has been found, nor what model should replace the Gaussian benchmark.

This paper develops a new directed test of normality based on a Gaussian-nested hypergeometric family of distribution functions. The family is
\[
    G_c(z) = \frac12 + z\,\frac{\Gamma(c-1/2)}{2\sqrt2\,\Gamma(c)}\,
    {}_1F_1\!\left(\frac12;c;-\frac{z^2}{2}\right),
    \quad c\ge\frac32,
\]
where ${}_1F_1$ is Kummer's confluent hypergeometric function \citep{abramowitz_stegun72, olver_etal10}. The standard normal distribution is obtained exactly at the finite boundary value
\[
    G_{3/2} = \Phi.
\]
For every $c > 3/2$, the density $g_c = G_c'$ is symmetric and has algebraic tails,
\[
    g_c(z) \sim \left(c-\frac32\right)|z|^{-3}, \quad |z|\to\infty.
\]
Thus the Gaussian law appears as a critical boundary case: at $c = 3/2$, the algebraic tail component vanishes and the density reduces to the standard normal; for $c > 3/2$, the distribution has finite mean but infinite variance. The parameter $c$ therefore indexes a specific heavy-tailed deformation of the Gaussian law.

The construction is not an arbitrary use of a special function. It proceeds from the classical identity
\[
    \Phi(z) = \frac12 + \frac{z}{\sqrt{2\pi}}\,
    {}_1F_1\!\left(\frac12;\frac32;-\frac{z^2}{2}\right).
\]
A natural hypergeometric ansatz would allow both the numerator and denominator parameters of ${}_1F_1$ to vary. However, the endpoint restrictions required of a distribution function force the numerator parameter to be $1/2$. The remaining denominator parameter $c$ is then free, and monotonicity holds on the admissible region $c \geq 3/2$. The result is a genuine one-parameter family of distribution functions, not merely a formal perturbation of the Gaussian density.

This feature distinguishes the present approach from many standard flexible approximations to normality. Gram-Charlier, Edgeworth, and seminonparametric expansions \citep{gallant_nychka87, jondeau_rockinger01} can be useful for representing deviations from a benchmark density, but they often require truncation choices, sign restrictions, or additional corrections to ensure positivity and integrability. By contrast, $G_c$ is an admissible distribution function by construction. It also differs from standard heavy-tailed parametric families. In the Student $t$ family, normality is approached as the degrees of freedom tend to infinity, and the tail index varies with the shape parameter. In the hypergeometric family, normality occurs at the finite boundary $c=3/2$, and all non-Gaussian members share the same  tail index. The parameter $c$ controls the scale of the algebraic tail rather than its index. The family is therefore best understood as a Gaussian-boundary tail-deformation model: a one-parameter perturbation of the normal law in a single, well-defined distributional direction. A companion paper in progress frees the tail index, yielding a two-parameter family that nests the present case.

The hypergeometric family admits a useful probabilistic representation. For $c>3/2$,
\[
    Z \sim G_c \quad \Longleftrightarrow \quad
    Z = \frac{\varepsilon}{\sqrt{\Lambda}},
    \quad \varepsilon\sim \N(0,1), \quad
    \Lambda\sim \operatorname{Beta}\left(1,c-\frac32\right),
\]
with $\varepsilon$ and $\Lambda$ independent. Thus $G_c$ is a beta-precision normal scale mixture. The Gaussian boundary $c=3/2$ corresponds to the degenerate case $\Lambda\equiv1$, while $c>3/2$ introduces beta-distributed precision heterogeneity. Small realisations of $\Lambda$ generate high-variance conditional Gaussian observations and hence the algebraic tails of $G_c$. This representation connects the hypergeometric family to the normal variance-mixture literature \citep{andrews_mallows74, barndorff-nielsen_etal82}, but the inference problem here is not general estimation of a normal mixture: it is the boundary test of Gaussian degeneracy within a specific hypergeometric path.

The paper is also motivated by a broader line of work on hypergeometric functions as econometric modelling tools. Hypergeometric functions have long appeared in econometrics and statistics, especially in exact finite-sample distribution theory, unit-root asymptotics, ratios of quadratic forms, and related problems where nonstandard distributions can be written in terms of special functions \citep{phillips83, abadir93, abadir95, hillier01, forchini02}. In that literature, hypergeometric functions usually arise as analytic representations of distributions or moments arising from a given model or inferential problem. A separate and more constructive line of work, initiated by \citet{abadir99}, proposed estimating hypergeometric parameters directly as a flexible but structured way to model unknown nonlinear functions, with applications to omitted nonlinearity, constructive nonlinear modelling, and density estimation in finance. That programme was developed further in \citet{lawford01} and \citet{abadir_rockinger03}.

The appeal of that programme is clear. Generalised hypergeometric functions contain many familiar functional forms as special or limiting cases; they possess analytic identities, transformations, integral representations, and asymptotic expansions; and their parameters can alter functional shape parsimoniously. They therefore offer a middle ground between rigid parametric specifications and fully nonparametric methods: flexible enough for empirical realism, yet structured enough for interpretation and computation.

However, the generality of that programme also made it difficult. Estimating unrestricted ${}_pF_q$ functions raises questions about admissibility, identification, numerical evaluation, analytic continuation, and asymptotic theory, and the resulting procedures can be computationally delicate and theoretically hard to characterise. The present paper takes a narrower and more disciplined route. Rather than using a general ${}_pF_q$ as a flexible approximator, it isolates one specific confluent hypergeometric object that is analytically tractable, is an admissible distribution function, nests the Gaussian law exactly, and yields a well-defined boundary testing problem. In this sense, the paper contributes a full theoretical treatment of one canonical hypergeometric distributional path, as a foundation for broader hypergeometric modelling.

The econometric problem is to estimate $c$ and test the Gaussian boundary. Let $Z_1,\ldots,Z_n$ be i.i.d. with empirical distribution function $\widehat F_n$. For a fixed finite measure $\nu$, define the minimum-distance criterion
\[
    Q_n(c) = \int_\mathbb{R} \{\widehat F_n(z)-G_c(z)\}^2\,d\nu(z),
    \quad c\in\mathcal C=[3/2,c_U],
\]
and let
\[
    \widehat c_n\in\arg\min_{c\in\mathcal C}Q_n(c).
\]
The Gaussian null corresponds to the boundary value $c_0=3/2$. The test statistic is the improvement in fit from estimating $c$ rather than imposing $c=c_0$:
\[
    T_n = n\,\{Q_n(c_0)-Q_n(\widehat c_n)\}.
\]
After standardisation by null constants determined by $\nu$, $\Phi$, and the hypergeometric score direction $\dot G_{c_0}$, the statistic has the nonstandard boundary limit
\[
    \widetilde T_n \Rightarrow \frac12\delta_0 + \frac12\chi_1^2,
\]
where $\delta_0$ denotes a point mass at zero. The limiting distribution reflects the fact that the Gaussian null lies on the boundary of the admissible parameter space: negative fluctuations of the unconstrained estimator would imply $c<3/2$, which is inadmissible, and are therefore projected to the boundary. The limit is consistent with the general boundary theory of \citet{chernoff54}, \citet{self_liang87} and \citet{andrews01}; the contribution is to derive it rigorously for a hypergeometric minimum-distance criterion in which the Gaussian null arises as an exact finite boundary rather than an asymptotic limit.

The paper develops the corresponding minimum-distance theory. First, I prove consistency of $\widehat{c}_n$, under correct specification and under misspecification, where the limit is the pseudo-true hypergeometric projection of the underlying distribution. Second, I derive asymptotic normality when the pseudo-true value is interior. Third, I derive the boundary distribution under the Gaussian null and the associated fit-improvement statistic. Fourth, I establish directed consistency: the test has power tending to one against fixed alternatives for which the best-fitting hypergeometric distribution improves strictly on the Gaussian boundary. Finally, I derive local asymptotic power under $\sqrt{n}$-local alternatives. The local power index is
\[
    \delta_\nu(\eta) =
    \frac{2\int_\mathbb{R} \eta(z)\,\dot G_{c_0}(z)\,d\nu(z)}{\sqrt{\Omega_0(\nu)}},
\]
where $\eta$ is the local departure from normality and $\Omega_0(\nu)$ is the null asymptotic variance of the score functional. This expression makes explicit that the hypergeometric test is directed rather than omnibus: it is locally sensitive to departures that project positively onto the hypergeometric score direction $\dot G_{c_0}$.

For empirical use, I develop plug-in location-scale versions of the test. The first uses the sample mean and standard deviation: the null theory follows by replacing the Brownian bridge covariance kernel with the projected empirical-process kernel induced by estimating location and scale. The second, and operationally more useful, version uses the sample median and normal-calibrated IQR scale: this robust plug-in statistic is motivated by the fact that the hypergeometric alternatives have infinite variance for $c>3/2$, so standard-deviation standardisation can be unstable under the alternatives of greatest interest. Median/IQR standardisation preserves tail-shape information more effectively and has the same boundary-mixture null limit, with the plug-in variance constant replaced by the corresponding robust projected-process variance.

The choice of $\nu$ is part of the definition of the minimum-distance criterion. The paper focuses on a pre-specified tail-sensitive measure
\[
    d\nu_{1/2}(z) \propto [\Phi(z)\{1-\Phi(z)\}]^{-1/2}\,d\Phi(z).
\]
This choice increases the relative importance of the tails while keeping $\nu$ finite and symmetric. The local power calculation shows that one could, in principle, choose $\nu$ to maximise power against a specified direction, but the optimal weighting would depend on the unknown alternative and could place excessive weight in regions where the empirical distribution is noisy. Instead, the paper fixes a transparent tail-weighted criterion and studies its finite-sample behaviour.

The contribution of this paper is not a more powerful normality test: omnibus procedures such as Jarque--Bera, Anderson--Darling, and Shapiro--Wilk often have higher power against classical non-Gaussian alternatives. The contribution is an inferential architecture: admissibility, estimation, boundary asymptotics, plug-in implementation, and finite-sample calibration, for a specific hypergeometric distributional model in which the Gaussian law arises as an exact finite boundary. The normality testing application is the natural vehicle for demonstrating that architecture; the fitted parameter $\widehat c$ and its tail-risk implications are what the framework delivers. The simulations support the theory. Under the Gaussian null, the fixed-scale statistic has accurate finite-sample size from $n=25$ onwards, validating the boundary mixture approximation in small samples. The plug-in statistics are well calibrated at moderate $n$; the robust median/IQR plug-in is mildly liberal in small samples, with the distortion decreasing with $n$. The mean/SD plug-in has correct null size but essentially zero power against hypergeometric alternatives with $c>3/2$, because sample-standard-deviation standardisation is not regular under infinite-variance alternatives and destroys the tail signal; this validates the recommendation of the robust version. Joint estimation of the shape parameter together with unknown location and scale is severely size-distorted: the exact-profiled statistic has size that increases markedly with $n$ over the simulated range, reaching $31\%$ at nominal $10\%$ when $n=1600$, confirming the necessity of the plug-in approach.

Power simulations confirm that the fixed-scale statistic is highly powerful against hypergeometric alternatives in the designed direction, and that the robust plug-in has good power against hypergeometric and symmetric heavy-tailed alternatives, including Student $t$ and Laplace distributions, though with a cost in power relative to the fixed-scale test against hypergeometric alternatives, due to the projection loss from median/IQR standardisation. The fixed-scale test has essentially zero power against non-hypergeometric alternatives, including the Student $t_3$, reflecting the directed nature of the test: despite being heavy-tailed, the $t_3$ departure has negligible positive projection onto the hypergeometric score direction; its survival-tail index is 3, rather than the index 2 shared by the hypergeometric alternatives. As expected from the local power theory, both hypergeometric procedures have negligible power against skewed alternatives; this is a consequence of the symmetry of the hypergeometric family and not a defect. Omnibus tests such as Jarque--Bera, Anderson--Darling, and Shapiro--Wilk often have higher power against classical non-Gaussian alternatives, but they provide neither a fitted heavy-tail direction nor an interpretable hypergeometric parameter estimate with direct tail-risk implications.

The empirical illustration applies the robust plug-in hypergeometric statistic to daily S\&P~500 returns, both unconditionally and after GARCH(1,1) filtering. In both exercises the Gaussian boundary is rejected with $\widehat c>3/2$; GARCH filtering substantially reduces the estimated
tail departure but does not eliminate it. For raw returns the Student $t$ and Laplace fit better overall, while for GARCH-filtered residuals the hypergeometric ranks second by log score. In both cases the fitted $\widehat c$ implies materially more extreme far-tail quantiles than the
Student $t$, despite the latter's better overall fit. The hypergeometric model is not proposed as a universal replacement for conventional heavy-tailed distributions but as a constructive diagnostic: it asks whether the departure from normality is consistent with a particular Gaussian-boundary algebraic-tail deformation, and what tail-risk implications that deformation carries.

The paper contributes to several literatures. It contributes to normality and goodness-of-fit testing \citep{shapiro_wilk65, jarque_bera87, bai_ng05, doornik_hansen08}: unlike omnibus tests, rejection of the hypergeometric statistic has a direction and an associated fitted distribution. It connects to empirical-process and minimum-distance inference \citep{shorack_wellner86, bai03}, where the weak convergence arguments used here are standard tools applied in a nonstandard boundary setting. It relates to boundary inference and one-sided testing, where limiting distributions involve projections and mixtures of chi-squared laws \citep{chernoff54, self_liang87, andrews01}. It builds on flexible density modelling and seminonparametric methods \citep{gallant_nychka87, gallant_tauchen89, jondeau_rockinger01}, from which it differs by working with an admissible distribution function rather than a truncated expansion. It contributes to the use of hypergeometric functions in econometrics, including exact distribution theory, unit-root asymptotics, and direct estimation of hypergeometric parameters \citep{phillips83, abadir93, abadir95, abadir99, lawford01, abadir_rockinger03}, and to hypergeometric distribution modelling \citep{gordy98}. Finally, it connects to heavy-tailed financial return modelling \citep{mandelbrot63, fama65, bollerslev87, nelson91, mcneil_etal15}, contributing a directed test and fitted distribution rather than a model selected by information criteria.

The paper provides a full theoretical treatment of a hypergeometric distribution family with an exact Gaussian boundary, minimum-distance estimation, nonstandard boundary inference, and a robust plug-in implementation with documented finite-sample behaviour. The resulting test complements existing normality tests rather than replacing them. It is not designed to detect all departures from normality, but to detect, estimate, and interpret a specific heavy-tailed deformation of the Gaussian law; and to do so with a fitted parameter that carries direct tail-risk meaning. More broadly, the paper suggests that hypergeometric econometrics need not begin with a fully general ${}_{p}F_q$ specification. One productive route is to identify special-function structures that satisfy the shape restrictions required by the econometric object of interest, and to develop inference for the resulting low-dimensional family. The present family is one such object. Future work can extend the foundation to varying tail-index and asymmetric hypergeometric families, generated residuals from volatility models, multivariate settings, likelihood-based estimation, and broader classes of hypergeometric distributional deformations.

The rest of the paper is organised as follows. Section~\ref{sec:hgm_family} introduces the hypergeometric family and its main distributional properties. Section~\ref{sec:inference} defines the minimum-distance criterion, states the regularity conditions, and derives consistency, interior asymptotics, boundary inference, and local asymptotic power. Section~\ref{sec:plugin} develops plug-in location-scale versions and their asymptotic theory, including the robust median/IQR statistic. Section~\ref{sec:simulations} reports size and power simulations. Section~\ref{sec:application} presents the S\&P~500 empirical illustration. Section~\ref{sec:conclusion} concludes and discusses extensions. Proofs are in Appendix~\ref{app:proofs}. Simulation implementation details, including the full quadrature specification, null constants, and power design, are in Appendix~\ref{app:simulations}. Additional empirical tables and figures are in Appendix~\ref{app:empirical}. Appendix~\ref{app:crossover} derives the tail dominance and Value-at-Risk (VaR) implications of a departure from the Gaussian boundary. Appendix~\ref{app:joint} provides a short self-contained theoretical treatment of joint location-scale estimation as a reference against which the plug-in procedures of Section~\ref{sec:plugin} are compared. The logical dependencies among the main results are given in Appendix~\ref{app:dependencies}.

\section{The hypergeometric family}\label{sec:hgm_family}

This section introduces the hypergeometric family $\{G_c:c \geq 3/2\}$, establishes that it consists of admissible distribution functions, and derives its main distributional properties. The starting point is a classical identity that expresses the standard normal distribution function in terms of Kummer's confluent hypergeometric function ${}_1F_1$. Generalising the denominator parameter of that representation from $3/2$ to a free parameter $c \geq 3/2$ generates a one-parameter family of symmetric distribution functions that contains the Gaussian as an exact boundary case. The properties of the family (admissibility, tail behaviour, moment existence, the phase transition at $c=3/2$, and the beta-precision scale-mixture representation) are derived from the endpoint constraints of Lemma~\ref{lem:endpoint} and standard properties of the ${}_1F_1$ function, and are stated in the lemmas, proposition, and corollaries below; unimodality and the characteristic function are established as consequences of the mixture representation; the location-scale extension follows from the basic distributional properties of the family. Readers unfamiliar with hypergeometric functions will find the necessary background in \citet{abramowitz_stegun72}, \citet{abadir99}, and \citet{olver_etal10}; each property of the ${}_1F_1$ function used in the proofs is stated explicitly when first needed.

\begin{lemma}[Confluent hypergeometric representation of $\Phi$]\label{lem:normalcdf}
    For all \(z\in\mathbb R\),
    \begin{equation}\label{eq:normalcdf}
        \Phi(z) = \frac12 + \frac{z}{\sqrt{2\pi}}\,
        {}_1F_1\!\left(\frac12;\frac32;-\frac{z^2}{2}\right).
    \end{equation}
\end{lemma}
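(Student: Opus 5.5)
The plan is to expand both sides as power series in $z$ and match coefficients. Since $\Phi$ is symmetric about zero, $\Phi(z)-\tfrac12=\int_0^z \varphi(t)\,dt$ with $\varphi(t)=(2\pi)^{-1/2}e^{-t^2/2}$. Expanding the exponential as $e^{-t^2/2}=\sum_{k\ge0}(-1)^k t^{2k}/(2^k k!)$, which converges uniformly on $[0,z]$, and integrating term by term gives
\[
    \Phi(z)-\frac12=\frac{1}{\sqrt{2\pi}}\sum_{k=0}^\infty \frac{(-1)^k z^{2k+1}}{2^k k!\,(2k+1)}.
\]

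For the right-hand side, I use the defining series ${}_1F_1(a;b;x)=\sum_{k\ge0}\frac{(a)_k}{(b)_k}\frac{x^k}{k!}$, which converges for all $x$ when $b\notin\{0,-1,\dots\}$. The key identity is
\[
    \frac{(1/2)_k}{(3/2)_k}=\frac{\Gamma(k+1/2)\,\Gamma(3/2)}{\Gamma(1/2)\,\Gamma(k+3/2)}=\frac{1/2}{k+1/2}=\frac{1}{2k+1}.
\]
Substituting $x=-z^2/2$ then gives
\[
    \frac{z}{\sqrt{2\pi}}\,{}_1F_1\!\left(\tfrac12;\tfrac32;-\tfrac{z^2}{2}\right)=\frac{1}{\sqrt{2\pi}}\sum_{k\ge0}\frac{(-1)^k z^{2k+1}}{2^k k!\,(2k+1)}.
\]
This is exactly the series obtained above, which establishes \eqref{eq:normalcdf} for every real $z$.

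An alternative route avoids series altogether. Differentiate the right-hand side using $\frac{d}{dx}\,{}_1F_1(a;b;x)=\frac{a}{b}\,{}_1F_1(a+1;b+1;x)$, then apply Kummer's transformation ${}_1F_1(a;b;x)=e^x\,{}_1F_1(b-a;b;-x)$ together with a contiguous relation. This shows the derivative equals $\varphi(z)$, and the value at $z=0$ is $\tfrac12$. The series argument is simpler, so I would use it.

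There is no real obstacle. The only points needing justification are the interchange of summation and integration, which is immediate from the entire, uniformly convergent exponential series, and the Pochhammer ratio simplification, which is a one-line computation from $(a)_k=\Gamma(a+k)/\Gamma(a)$.
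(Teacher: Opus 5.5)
Your proof is correct and follows the paper's argument: both rest on the Pochhammer simplification $(1/2)_k/(3/2)_k=1/(2k+1)$ and a term-by-term manipulation of an entire power series. The only difference is direction. The paper differentiates the series for $z\,{}_1F_1(\tfrac12;\tfrac32;-z^2/2)$ to recover $e^{-z^2/2}$ and then uses $H(0)=\Phi(0)=\tfrac12$, whereas you integrate the exponential series from $0$ to $z$ and match coefficients.
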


\noindent Equation~\eqref{eq:normalcdf}
is the constructive starting point of the paper. Replacing the denominator
parameter $3/2$ by a free parameter $c\geq 3/2$ generates a candidate
one-parameter family. The next result shows that the endpoint restrictions
required of a distribution function force the numerator parameter to equal
$1/2$ and determine the normalising constant uniquely.

\begin{lemma}[Endpoint constraints on the hypergeometric ansatz]
\label{lem:endpoint}
Let $a,c\in\mathbb{R}$ with $c\notin\mathbb{Z}_{0,-}$ and
$c-a\notin\mathbb{Z}_{0,-}$, and let $\kappa(a,c)$ be a real constant. Define
\[
    G_{a,c}(z) = \frac12 + z\,\kappa(a,c)\,{}_1F_1\!\left(a;c;-\frac{z^2}{2}\right), \quad z\in\mathbb{R}.
\]
If $G_{a,c}$ satisfies the endpoint conditions
\[
    \lim_{z\to-\infty}G_{a,c}(z)=0,
    \quad
    \lim_{z\to+\infty}G_{a,c}(z)=1,
\]
then $a=1/2$, and the normalising constant is determined uniquely by
\begin{equation}\label{eq:Khalf}
    \kappa\!\left(\tfrac12,c\right) = \frac{\Gamma(c-1/2)}{2\sqrt{2}\,\Gamma(c)}.
\end{equation}
\end{lemma}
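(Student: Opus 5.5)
The plan is to use the large-argument asymptotics of Kummer's function on the negative real axis. For $x\to+\infty$ and $c\notin\mathbb{Z}_{0,-}$, the standard expansion (Abramowitz--Stegun 13.1.5, or DLMF 13.7.2 with the Kummer transformation) is
\[
    {}_1F_1(a;c;-x) = \frac{\Gamma(c)}{\Gamma(c-a)}\,x^{-a}\{1+O(x^{-1})\}
    + \frac{\Gamma(c)}{\Gamma(a)}\,e^{-x}(-x)^{a-c}\{1+O(x^{-1})\}.
\]
The second term is exponentially small and can be discarded. The condition $c-a\notin\mathbb{Z}_{0,-}$ guarantees that $1/\Gamma(c-a)\neq0$, so the leading coefficient of the algebraic term is nonzero. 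If $a$ is a nonpositive integer, ${}_1F_1(a;c;\cdot)$ is a polynomial, and this case is handled separately below.

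Substituting $x=z^2/2$, I get, as $z\to+\infty$,
\[
    z\,{}_1F_1\!\left(a;c;-\tfrac{z^2}{2}\right) \sim \frac{\Gamma(c)}{\Gamma(c-a)}\,2^{a}\,z^{1-2a}.
\]
The function $z\,{}_1F_1(a;c;-z^2/2)$ is odd. The two endpoint conditions are therefore equivalent to the single requirement $\kappa\, z\,{}_1F_1(a;c;-z^2/2)\to\frac12$ as $z\to+\infty$. This forces $\kappa\neq0$ and a finite nonzero limit of $z^{1-2a}$, so $1-2a=0$ and hence $a=\tfrac12$. The other possibilities fail as follows. If $a<\tfrac12$, the term diverges. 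If $a>\tfrac12$, the limit is $0\neq\tfrac12$. In the polynomial case $a\in\mathbb{Z}_{0,-}$, the quantity $z\cdot(\text{polynomial in } z^2)$ either diverges or, for $a=0$, equals $z$, which also diverges. With $a=\tfrac12$ the limit equals $\kappa\,\Gamma(c)\sqrt2/\Gamma(c-\tfrac12)$. Setting this equal to $\tfrac12$ gives $\kappa=\Gamma(c-1/2)/(2\sqrt2\,\Gamma(c))$ uniquely, which is \eqref{eq:Khalf}. As a sanity check, at $c=3/2$ this gives $\Gamma(1)/(2\sqrt2\,\Gamma(3/2))=1/\sqrt{2\pi}$, in agreement with Lemma~\ref{lem:normalcdf}.

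The main obstacle is justifying the asymptotic expansion with its nonvanishing leading coefficient under exactly the stated hypotheses. I would first invoke the Kummer transformation ${}_1F_1(a;c;-x)=e^{-x}{}_1F_1(c-a;c;x)$ and then the known growth ${}_1F_1(b;c;x)\sim\frac{\Gamma(c)}{\Gamma(b)}e^x x^{b-c}$ as $x\to\infty$, valid for $b\notin\mathbb{Z}_{0,-}$. Applying this with $b=c-a$ requires exactly $c-a\notin\mathbb{Z}_{0,-}$, and it yields the algebraic term directly. This route makes the role of each hypothesis transparent and sidesteps the full two-term expansion.
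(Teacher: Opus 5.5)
Your proposal is correct and follows essentially the paper's own argument. Both use the large-argument asymptotic ${}_1F_1(a;c;-x)\sim\Gamma(c)\,x^{-a}/\Gamma(c-a)$ with nonzero coefficient, split into the cases $a<\tfrac12$, $a>\tfrac12$, $a=\tfrac12$, and then solve $\sqrt2\,\kappa\,\Gamma(c)/\Gamma(c-\tfrac12)=\tfrac12$. Your additions (the oddness reduction, ruling out $\kappa=0$, and deriving the asymptotic via the Kummer transformation) are minor tightenings. The separate polynomial case is redundant, because the Kummer-transformation route already covers $a\in\mathbb{Z}_{0,-}$.
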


\begin{remark}
At $c=3/2$, using $\Gamma(3/2)=\sqrt{\pi}/2$,
\[
    \kappa\!\left(\tfrac12,\tfrac32\right) = \frac{\Gamma(1)}{2\sqrt{2}\,\Gamma(3/2)} = \frac{1}{\sqrt{2\pi}},
\]
which matches the constant in Lemma~\ref{lem:normalcdf}. Hence
$G_{1/2,3/2}=\Phi$, confirming that the normal distribution is
recovered exactly at the boundary.
\end{remark}

\noindent Having fixed $a=1/2$, we henceforth write
\begin{equation}\label{eq:Gc_def}
    G_c := G_{1/2,c}, \quad \kappa(c) := \kappa\!\left(\tfrac12,c\right) = \frac{\Gamma(c-1/2)}{2\sqrt{2}\,\Gamma(c)},
    \quad c\geq\frac32.
\end{equation}

Lemma~\ref{lem:endpoint} fixes $a=1/2$ and determines the normalising
constant $\kappa(c)$. It remains to verify that $G_c$ is non-decreasing,
so that $g_c = G_c' \geq 0$ and $G_c$ is a genuine distribution
function. This is the content of the next lemma.

\begin{lemma}[Monotonicity of $G_c$]\label{lem:monotonicity}
For every $c\geq 3/2$, the function $G_c$ defined in \eqref{eq:Gc_def} satisfies $G_c'(z)\geq 0$ for all $z\in\mathbb{R}$.
\end{lemma}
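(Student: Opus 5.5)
Differentiate $G_c(z)=\tfrac12+\kappa(c)\,z\,{}_1F_1(\tfrac12;c;-z^2/2)$ directly and reduce the derivative to a single ${}_1F_1$ with a known sign. Write $x=z^2/2$ and $h(z)=z\,{}_1F_1(\tfrac12;c;-x)$. Using $\frac{d}{dx}{}_1F_1(a;c;x)=\frac{a}{c}{}_1F_1(a+1;c+1;x)$, we get
\[
    h'(z)={}_1F_1\!\left(\tfrac12;c;-x\right)-\frac{z^2}{2c}\,{}_1F_1\!\left(\tfrac32;c+1;-x\right).
\]
A cleaner route is termwise. With $h(z)=\sum_k \frac{(1/2)_k}{(c)_k k!}(-1)^k z^{2k+1}2^{-k}$, we have
\[
    h'(z)=\sum_k \frac{(1/2)_k(2k+1)}{(c)_k k!}(-x)^k .
\]
Since $(2k+1)(1/2)_k=(3/2)_k\cdot 1$, this gives $(1/2)_k(2k+1)=(3/2)_k$. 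Hence
\[
    g_c(z)=\kappa(c)\,{}_1F_1\!\left(\tfrac32;c;-\tfrac{z^2}{2}\right).
\]
As a check, at $c=3/2$ this is $\kappa e^{-x}=\phi(z)$. It then suffices to show ${}_1F_1(\tfrac32;c;-x)\ge 0$ for $x\ge0$ and $c\ge 3/2$, because $\kappa(c)>0$ for $c>1/2$.

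For $c=3/2$, Kummer's transformation gives ${}_1F_1(a;a;-x)=e^{-x}>0$. For $c>3/2$, I use Kummer's transformation ${}_1F_1(a;c;-x)=e^{-x}{}_1F_1(c-a;c;x)$ with $a=3/2$. This yields
\[
    {}_1F_1\!\left(\tfrac32;c;-x\right)=e^{-x}\,{}_1F_1\!\left(c-\tfrac32;c;x\right).
\]
The series ${}_1F_1(c-\tfrac32;c;x)$ has all terms nonnegative when $c-3/2\ge0$, $c>0$ and $x\ge0$. It is therefore $\ge1>0$.

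Alternatively, the Euler integral ${}_1F_1(a;c;-x)=\frac{\Gamma(c)}{\Gamma(a)\Gamma(c-a)}\int_0^1 e^{-xt}t^{a-1}(1-t)^{c-a-1}dt$ with $a=3/2$ and $c-a>0$ gives positivity directly. After the substitution $\lambda=t$, it also foreshadows the $\operatorname{Beta}(1,c-3/2)$ mixture representation. Since $g_c$ is even in $z$, nothing further is needed for $z<0$.

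The main obstacle is the derivative identity, namely showing that $\frac{d}{dz}\bigl[z\,{}_1F_1(\tfrac12;c;-z^2/2)\bigr]={}_1F_1(\tfrac32;c;-z^2/2)$. I would verify it termwise as above, justified by uniform convergence of the entire power series on compacts, so that termwise differentiation is valid. The Pochhammer identity $(2k+1)(1/2)_k=(3/2)_k$ follows by induction, or from $(1/2)_{k+1}=(1/2)(3/2)_k$. Everything else is standard properties of ${}_1F_1$, which I would state explicitly with references to \citet{abramowitz_stegun72} and \citet{olver_etal10}.
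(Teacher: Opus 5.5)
Your proof is correct. It is not the paper's main argument, but it is essentially the alternative the paper itself sketches at the end of its proof.

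The paper keeps the two-term expression for $G_c'$ that you derive and then set aside. It applies Kummer's transformation to both terms, inserts the Euler integrals, and rewrites the integrand as $-2u^{c-3/2}\frac{d}{du}\bigl[e^{xu}(1-u)^{1/2}\bigr]$. Integrating by parts gives \eqref{eq:Gc_integral} for $c>3/2$; the case $c=3/2$ is handled by direct evaluation.

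You instead collapse the derivative to the single contiguous function ${}_1F_1(\tfrac32;c;-z^2/2)$ using the termwise identity $(2k+1)(1/2)_k=(3/2)_k$. The paper obtains this formula only afterwards, in Corollary~\ref{cor:density}, via a contiguous recurrence. Kummer's transformation then reduces positivity to a series with nonnegative terms and leading term $1$.

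Your route is shorter and treats $c=3/2$ and $c>3/2$ uniformly. It also yields the explicit bound $g_c(z)\geq\kappa(c)e^{-z^2/2}$, which is tight at $z=0$.

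The paper's stated reason for its longer route is that it does not depend on a fortunate reduction to a single contiguous function, and so may generalise better to other hypergeometric ansätze. Its concrete payoff, the integral form \eqref{eq:Gc_integral} used for Proposition~\ref{prop:mixture}, is also delivered by your Euler-integral variant. Indeed $\kappa(c)\Gamma(c)/\{\Gamma(3/2)\Gamma(c-3/2)\}=(c-\tfrac32)/\sqrt{2\pi}$, so your integral is exactly the beta-precision mixture density. For this family, nothing is lost.
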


\noindent The proof in fact establishes strict positivity:
$G_c'(z)>0$ for all $z\in\mathbb{R}$ and all $c\geq\tfrac32$. Combining Lemmas~\ref{lem:endpoint} and~\ref{lem:monotonicity} yields two immediate corollaries.

\begin{corollary}[$G_c$ is a distribution function]\label{cor:Gc_distn}
For every $c\geq 3/2$, $G_c$ defined in \eqref{eq:Gc_def} is a
distribution function on $\mathbb{R}$.
\end{corollary}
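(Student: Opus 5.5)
The plan is to verify directly the defining properties of a distribution function: non-decreasing, right-continuous, and with limits $0$ at $-\infty$ and $1$ at $+\infty$. Everything reduces to Lemmas~\ref{lem:endpoint} and~\ref{lem:monotonicity} together with the analyticity of ${}_1F_1$.

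\emph{Continuity.} Fix $c\geq 3/2$. The Kummer function ${}_1F_1(1/2;c;x)$ is an entire function of $x$, since its power series has infinite radius of convergence whenever $c\notin\mathbb{Z}_{0,-}$. Composing with $z\mapsto -z^2/2$ and multiplying by $z\kappa(c)$ therefore gives a $C^\infty$ function of $z$. So $G_c$ is continuous, and in particular right-continuous, and $g_c=G_c'$ exists everywhere.

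\emph{Monotonicity.} Lemma~\ref{lem:monotonicity} gives $G_c'(z)\geq 0$ for all $z$. By the mean value theorem, $G_c$ is then non-decreasing on $\mathbb{R}$.

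\emph{Endpoint limits.} Lemma~\ref{lem:endpoint} is stated as a necessity result: if the limits hold, then $a=1/2$ and $\kappa$ is as given. It is not immediate that it asserts the converse, so I would check directly that the limits are attained for $a=1/2$ and $\kappa(c)$ as in \eqref{eq:Khalf}. This is the only step requiring an actual computation, and the main point needing care. I would use the large-argument asymptotic
\[
    {}_1F_1(a;c;-x)\sim \frac{\Gamma(c)}{\Gamma(c-a)}\,x^{-a}, \qquad x\to+\infty,
\]
valid for $c-a\notin\mathbb{Z}_{0,-}$. The exponentially small term $e^{-x}x^{a-c}$ is negligible here. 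Taking $a=1/2$ and $x=z^2/2$, this yields
\[
    z\,\kappa(c)\,{}_1F_1\!\left(\tfrac12;c;-\tfrac{z^2}{2}\right)\to \operatorname{sign}(z)\,\kappa(c)\,\frac{\Gamma(c)\sqrt2}{\Gamma(c-1/2)} = \tfrac12\operatorname{sign}(z).
\]
Hence $G_c(z)\to 0$ as $z\to-\infty$ and $G_c(z)\to 1$ as $z\to+\infty$. This is exactly the calculation that fixes the constant in Lemma~\ref{lem:endpoint}, so in practice the step amounts to citing that computation in its sufficiency direction.

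Combining the three steps, $G_c$ is a continuous, non-decreasing function with limits $0$ and $1$ at $\mp\infty$, and is therefore a distribution function on $\mathbb{R}$. Its density $g_c$ is nonnegative and integrates to $1$ by the fundamental theorem of calculus.
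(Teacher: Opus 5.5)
Your proposal is correct and follows essentially the same route as the paper, which likewise combines the endpoint limits from Lemma~\ref{lem:endpoint}, non-decreasingness from Lemma~\ref{lem:monotonicity}, and continuity from the fact that ${}_1F_1(\tfrac12;c;\cdot)$ is entire. You are right that Lemma~\ref{lem:endpoint} is stated only as a necessity result, whereas the paper's proof of the corollary simply cites it for the limits; your explicit large-argument computation supplies the sufficiency direction, which is already implicit in the calculation inside that lemma's proof.
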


\begin{corollary}[Density of $G_c$]\label{cor:density}
For every $c\geq 3/2$, $G_c$ is absolutely continuous with density
\begin{equation}\label{eq:gc}
    g_c(z) = \kappa(c)\,{}_1F_1\!\left(\frac32;c;-\frac{z^2}{2}\right).
\end{equation}
The density $g_c$ is even, and $g_{3/2}=\phi$.
\end{corollary}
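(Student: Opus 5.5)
The plan is to differentiate the defining representation \eqref{eq:Gc_def} term by term and then recognise the result as a single ${}_1F_1$ via a standard contiguous relation. Since ${}_1F_1(a;c;x)$ is entire in $x$ for $c\notin\mathbb{Z}_{0,-}$, the function $G_c$ is $C^\infty$ on $\mathbb{R}$, and by Corollary~\ref{cor:Gc_distn} it is a distribution function. A continuously differentiable distribution function is absolutely continuous, with density equal to its derivative. So it suffices to compute $G_c'$.

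For the computation, write $x=-z^2/2$. The product rule gives
\[
G_c'(z)=\kappa(c)\Bigl[{}_1F_1\!\left(\tfrac12;c;x\right)-z^2\,\tfrac{d}{dx}{}_1F_1\!\left(\tfrac12;c;x\right)\Bigr]
=\kappa(c)\Bigl[{}_1F_1\!\left(\tfrac12;c;x\right)+2x\,\tfrac{d}{dx}{}_1F_1\!\left(\tfrac12;c;x\right)\Bigr].
\]
I would then invoke the differentiation identity $a\,{}_1F_1(a+1;c;x)=a\,{}_1F_1(a;c;x)+x\,\tfrac{d}{dx}{}_1F_1(a;c;x)$ with $a=1/2$. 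This identity is checked coefficientwise from the series: the coefficient of $x^k/k!$ on the right is $(a)_k(a+k)/(c)_k=a\,(a+1)_k/(c)_k$. It gives ${}_1F_1(\tfrac12;c;x)+2x\,{}_1F_1'(\tfrac12;c;x)={}_1F_1(\tfrac32;c;x)$, which is exactly \eqref{eq:gc}.

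For evenness, $g_c$ depends on $z$ only through $z^2$. At $c=3/2$, ${}_1F_1(\tfrac32;\tfrac32;x)=e^{x}$ because $(a)_k/(a)_k=1$. Combined with $\kappa(3/2)=1/\sqrt{2\pi}$ from the Remark, this gives $g_{3/2}(z)=e^{-z^2/2}/\sqrt{2\pi}=\phi(z)$. Alternatively, one can differentiate Lemma~\ref{lem:normalcdf} directly.

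The only real step is the contiguous identity, and its coefficient check is routine, so I expect no genuine obstacle. The small points to state explicitly are that term-by-term differentiation is legitimate because the power series has infinite radius of convergence, and that absolute continuity follows from differentiability everywhere with a continuous derivative.
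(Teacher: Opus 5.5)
Your proposal is correct and follows essentially the paper's route. Both arguments differentiate the series and then merge the two ${}_1F_1$ terms with a contiguous relation. Your identity $a\,{}_1F_1(a+1;c;x)=a\,{}_1F_1(a;c;x)+x\,\tfrac{d}{dx}{}_1F_1(a;c;x)$ is the paper's derivative formula combined with its recurrence (Abramowitz and Stegun, eq.~13.4.4). Absolute continuity comes from the fundamental theorem of calculus in both, and $g_{3/2}=\phi$ follows from Lemma~\ref{lem:normalcdf}, or equivalently from ${}_1F_1(c;c;x)=e^x$.
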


\noindent The family $\{G_c: c\geq 3/2\}$ is therefore a well-defined one-parameter family of distribution functions indexed by $c$. At the boundary $c=3/2$, the density reduces to the standard normal: $g_{3/2}=\phi$. The family varies continuously and smoothly in its parameter, as the next two lemmas show.

\begin{lemma}[Continuity of $G_c$ in $c$]\label{lem:continuity}
For each fixed $z\in\mathbb{R}$, the map $c\mapsto G_c(z)$ is
continuous on $[3/2,\infty)$. Moreover, $G_c\to\Phi$ uniformly
in $z$ as $c\downarrow 3/2$.
\end{lemma}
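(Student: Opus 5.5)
Pointwise continuity in $c$ and uniform convergence to $\Phi$ as $c\downarrow 3/2$ both come most cleanly from an integral representation of the density, but for the boundary limit I will bypass the density and work with the Beta mixture directly.

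\textbf{Pointwise continuity.} I would start from Euler's integral for Kummer's function. For $c>3/2$,
\[
{}_1F_1(\tfrac32;c;-x)=\frac{\Gamma(c)}{\Gamma(3/2)\Gamma(c-3/2)}\int_0^1 t^{1/2}(1-t)^{c-5/2}e^{-xt}\,dt .
\]
Equivalently, $G_c(z)=\E\,\Phi(z\sqrt{\Lambda})$ with $\Lambda\sim\Beta(1,c-3/2)$. This is the scale-mixture representation announced in the introduction.

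I would verify it directly rather than cite it. Differentiate $\E\,\Phi(z\sqrt\Lambda)$ in $z$. Then use $\Lambda^{1/2}\cdot(c-3/2)(1-\Lambda)^{c-5/2}$ against the Euler kernel, together with $\kappa(c)\,\Gamma(c)/(\Gamma(3/2)\Gamma(c-3/2)) = (c-3/2)/\sqrt{2\pi}$. Both sides vanish at $z=0$ by symmetry, so they agree.

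Alternatively, for fixed $z$, the series ${}_1F_1(\tfrac12;c;-z^2/2)=\sum_k (\tfrac12)_k(-z^2/2)^k/((c)_k k!)$ converges uniformly for $c$ in compacts of $[3/2,\infty)$. Each term $1/(c)_k$ is continuous, and $|1/(c)_k|\le 1/(3/2)_k$ gives a summable majorant. Since $\kappa(c)$ is continuous on $[3/2,\infty)$, this yields continuity of $c\mapsto G_c(z)$ for each $z$, including at the endpoint $c=3/2$. This series argument avoids the mixture entirely for the first claim, and I would use it.

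\textbf{Uniform convergence to $\Phi$.} Here I would use the mixture. Write $\Lambda_c\sim\Beta(1,c-3/2)$, so that $\E(1-\Lambda_c)=(c-3/2)/(c-1/2)\to0$ as $c\downarrow3/2$. For every $z$,
\[
|G_c(z)-\Phi(z)|\le \E\,|\Phi(z\sqrt{\Lambda_c})-\Phi(z)|\le \E\,\bigl[\,|z|(1-\sqrt{\Lambda_c})\,\phi(z\sqrt{\Lambda_c})\,\bigr].
\]
The bound uses the mean value theorem: the derivative of $\Phi$ on $[z\sqrt{\Lambda_c},z]$ is at most $\phi(z\sqrt{\Lambda_c})$.

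This bound is not uniform in $z$ when $\Lambda_c$ is small, so I would split on the event $\{\Lambda_c\ge1/4\}$.
\begin{itemize}
\item On $\{\Lambda_c\ge1/4\}$, we have $\phi(z\sqrt{\Lambda_c})\le\phi(z/2)$, so the integrand is at most $\sup_z|z|\phi(z/2)\cdot(1-\Lambda_c)$.
\item On the complement, the integrand is at most $1$. Its probability is $P(\Lambda_c<1/4)\le \tfrac43\E(1-\Lambda_c)$ by Markov's inequality.
\end{itemize}
Hence $\sup_z|G_c(z)-\Phi(z)|\le C(c-3/2)\to0$.

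As a cross-check, uniformity can also be obtained without the rate. Pointwise convergence plus monotone limits with continuous limit $\Phi$ gives uniform convergence by Pólya's theorem, since each $G_c$ is a distribution function (Corollary~\ref{cor:Gc_distn}).

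\textbf{Main obstacle.} The main obstacle is justifying the mixture identity rigorously, in particular the matching of constants. If that becomes delicate, I would fall back on the series argument plus Pólya's theorem.
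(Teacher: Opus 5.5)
Your proof is correct. The pointwise part coincides with the paper's: termwise continuity of $1/(c)_k$, a Weierstrass majorant, and continuity of $\kappa$. Your majorant is $1/(3/2)_k$ on all of $[3/2,\infty)$, where the paper uses $1/(c_L)_k$ on compacts.

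The uniform part takes a genuinely different route. The paper stays inside the ${}_1F_1$ machinery. It splits $\mathbb{R}$ into $\{|z|\le Z\}$, where the same majorant gives convergence uniform in $z$, and $\{|z|>Z\}$. On the tails, Kummer's large-argument expansion makes the leading $x^{-1/2}$ terms cancel through the normalisation of $\kappa(c)$, leaving $\tfrac12(\tfrac32-c)|z|^{-2}+o(|z|^{-2})$.

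You instead use $G_c(z)=\E\,\Phi(z\sqrt{\Lambda_c})$, which is legitimate at this point. Your constant $\kappa(c)\Gamma(c)/\{\Gamma(3/2)\Gamma(c-3/2)\}=(c-\tfrac32)/\sqrt{2\pi}$ is right. The inputs (Corollary~\ref{cor:density}, or equivalently \eqref{eq:Gc_integral}) precede the lemma, so there is no circularity with Proposition~\ref{prop:mixture}.

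Your bound combines three pieces. On $\{\Lambda_c\ge 1/4\}$ you use the mean-value bound with $1-\sqrt{\Lambda_c}\le 1-\Lambda_c$. On the complement you use the trivial bound $|\Phi(z\sqrt{\Lambda_c})-\Phi(z)|\le 1$; apply the split to this quantity, not to the mean-value bound, which is unbounded in $z$ as $\Lambda_c\to 0$. Markov's inequality then gives $\sup_z|G_c(z)-\Phi(z)|\le C\,(c-\tfrac32)/(c-\tfrac12)$. This buys an explicit rate at the boundary. It also avoids needing the $o(|z|^{-2})$ remainder to be uniform in $c$ near $\tfrac32$, which the paper's tail step requires but does not verify.

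Your P\'olya fallback is the most economical route. Pointwise convergence at $c=\tfrac32$ from the series argument, Corollary~\ref{cor:Gc_distn}, and continuity of $\Phi$ suffice, applied along arbitrary sequences $c_n\downarrow\tfrac32$. It needs each $G_c$ to be monotone in $z$, not any monotonicity in $c$.

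One slip: at $z=0$ both sides equal $\tfrac12$; it is $G_c(z)-\tfrac12$ that vanishes.
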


\begin{lemma}[Differentiability of $G_c$ in $c$]\label{lem:differentiability}
For each fixed $z\in\mathbb{R}$, the map $c\mapsto G_c(z)$ is infinitely differentiable on $(3/2,\infty)$. In particular, the first derivative is
\begin{equation}\label{eq:Gdot}
    \dot G_c(z) := \frac{\partial G_c(z)}{\partial c}
    = z\left[\dot \kappa(c)\,{}_1F_1\!\left(\frac12;c;-\frac{z^2}{2}\right)
    + \kappa(c)\,\frac{\partial}{\partial c}\,{}_1F_1\!\left(\frac12;c;-\frac{z^2}{2}\right)
    \right],
\end{equation}
where
\begin{equation}\label{eq:Kdot}
    \dot \kappa(c) = \frac{\partial \kappa(c)}{\partial c}
    = \kappa(c)\left[\psi(c-\tfrac12)-\psi(c)\right],
\end{equation}
$\psi=\Gamma'/\Gamma$ is the digamma function, and
\begin{equation}\label{eq:1F1dot}
    \frac{\partial}{\partial c}\,{}_1F_1\!\left(\frac12;c;-\frac{z^2}{2}\right)
    = -\sum_{k=1}^{\infty}
    \frac{(1/2)_k}{(c)_k\,k!}\left(-\frac{z^2}{2}\right)^k\,
    \left[\psi(c+k)-\psi(c)\right].
\end{equation}
\end{lemma}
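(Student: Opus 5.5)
The plan is to treat $G_c(z)$ as the product of two factors in $c$, the constant $\kappa(c)$ and the series ${}_1F_1(\tfrac12;c;-z^2/2)$, with $z$ fixed, and to establish smoothness of each factor separately. The product rule then gives \eqref{eq:Gdot} immediately, and smoothness of a product of $C^\infty$ functions gives infinite differentiability.

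\emph{The constant.} For $\kappa(c)=\Gamma(c-\tfrac12)/\{2\sqrt2\,\Gamma(c)\}$ I will use that $\Gamma$ is analytic and nonvanishing on $(0,\infty)$. For $c>3/2$ both $c-\tfrac12$ and $c$ lie in $(1,\infty)$, so $\kappa$ is $C^\infty$ there (indeed analytic, and in fact on a larger interval). Logarithmic differentiation gives $\dot\kappa/\kappa=\psi(c-\tfrac12)-\psi(c)$, which is \eqref{eq:Kdot}.

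\emph{The series.} Write $x=-z^2/2$ and
\[
{}_1F_1(\tfrac12;c;x)=\sum_{k\ge0}\frac{(1/2)_k}{(c)_k\,k!}x^k .
\]
Each term is $C^\infty$ in $c>0$, because $1/(c)_k=\Gamma(c)/\Gamma(c+k)$ is smooth and positive there. Termwise, $\partial_c (c)_k^{-1}=-(c)_k^{-1}[\psi(c+k)-\psi(c)]$, since $\log (c)_k=\log\Gamma(c+k)-\log\Gamma(c)$. The $k=0$ term is constant in $c$, which is why the sum in \eqref{eq:1F1dot} starts at $k=1$.

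To justify termwise differentiation of every order, the main step, I will show that for each $m\ge0$ the series of $m$-th $c$-derivatives converges uniformly on compact subintervals $[c_1,c_2]\subset(3/2,\infty)$. Using $\partial_c^m (c)_k^{-1}=(c)_k^{-1}P_{m,k}(c)$, where $P_{m,k}$ is a polynomial in $\psi^{(j)}(c+k)-\psi^{(j)}(c)$ for $j<m$, I will bound these differences. Since $|\psi(c+k)-\psi(c)|=\sum_{i<k}(c+i)^{-1}\le k/c_1$, and the higher differences are bounded uniformly in $k$, this gives $|P_{m,k}(c)|\le C_m(1+k)^m$. Also $(c)_k\ge(c_1)_k\ge (1)_k=k!$ for $c_1\ge1$, and $(1/2)_k\le k!$. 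The $m$-th derivative series is therefore dominated by
\[
\sum_{k\ge0} C_m(1+k)^m \frac{|x|^k}{k!}<\infty ,
\]
uniformly on $[c_1,c_2]$, and the Weierstrass M-test yields smoothness together with the formula \eqref{eq:1F1dot}.

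The only delicate point is obtaining a polynomial-in-$k$ bound for the higher polygamma differences that is uniform on compacts. For this I will use that $\psi^{(j)}$ for $j\ge1$ is bounded on $[c_1,\infty)$, together with the harmonic-sum bound for $\psi$ itself. Everything else is routine.
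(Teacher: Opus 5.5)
Your proposal is correct and follows essentially the same route as the paper. Both arguments get smoothness of $\kappa$ from $\Gamma$ and its digamma log-derivative, differentiate the Kummer series term by term with a Weierstrass M-test on compact subintervals, use the identity $\partial_c (c)_k=(c)_k[\psi(c+k)-\psi(c)]$, and finish with the product rule. Your explicit bound $|P_{m,k}(c)|\le C_m(1+k)^m$, with dominating series $\sum_k C_m(1+k)^m|x|^k/k!$, makes the all-orders step more precise than the paper's brief appeal to ``a logarithmic factor in $k$''.
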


\noindent The admissibility of $G_c$ shown above does not yet reveal how the family departs from the Gaussian. The next lemma makes this precise by characterising the tail behaviour of $g_c$.

\begin{lemma}[Tail behaviour of $g_c$]
\label{lem:tail}
For $c\geq 3/2$, as $|z|\to\infty$:
\begin{enumerate}[label=\textup{(\roman*)}]
    \item if $c>3/2$, the density has an algebraic tail,
    \begin{equation}\label{eq:tail_algebraic}
        g_c(z) = \left(c-\frac32\right)|z|^{-3} + o(|z|^{-3});
    \end{equation}
    \item if $c=3/2$, then $g_{3/2}=\phi$
exactly, with super-exponential tail decay
$g_{3/2}(z)=\frac{1}{\sqrt{2\pi}}\,e^{-z^2/2}
\to0$ faster than any power of $|z|^{-1}$
as $|z|\to\infty$.
\end{enumerate}
\end{lemma}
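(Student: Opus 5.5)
The plan is to work directly from the density formula of Corollary~\ref{cor:density}, $g_c(z)=\kappa(c)\,{}_1F_1(3/2;c;-z^2/2)$, and apply the standard large-argument asymptotics of Kummer's function. For real $x\to+\infty$ and $b-a\notin\mathbb{Z}_{0,-}$, the relevant expansion is
\[
    {}_1F_1(a;b;-x) = \frac{\Gamma(b)}{\Gamma(b-a)}\,x^{-a}\bigl\{1+O(x^{-1})\bigr\} + \frac{\Gamma(b)}{\Gamma(a)}\,e^{-x}(-x)^{a-b}\bigl\{1+O(x^{-1})\bigr\}.
\]
This can be obtained from Kummer's transformation ${}_1F_1(a;b;-x)=e^{-x}{}_1F_1(b-a;b;x)$ together with the standard expansion of ${}_1F_1(\cdot;\cdot;x)$ as $x\to+\infty$ (Abramowitz--Stegun 13.1.5). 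Since $g_c$ is even, it suffices to take $z\to+\infty$ with $a=3/2$, $b=c$ and $x=z^2/2$.

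For part (i), take $c>3/2$, so that $b-a=c-3/2>0$ and $\Gamma(c-3/2)$ is finite. The exponentially small second term is then $o(x^{-3/2})$. The leading term gives
\[
    g_c(z) = \kappa(c)\,\frac{\Gamma(c)}{\Gamma(c-3/2)}\left(\frac{z^2}{2}\right)^{-3/2}\{1+o(1)\} = \frac{\Gamma(c-1/2)}{2\sqrt2\,\Gamma(c)}\cdot\frac{\Gamma(c)}{\Gamma(c-3/2)}\cdot 2\sqrt2\,|z|^{-3}\{1+o(1)\}.
\]
Using $\Gamma(c-1/2)=(c-3/2)\Gamma(c-3/2)$, the constant collapses to $c-3/2$. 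This yields \eqref{eq:tail_algebraic}, and the claimed $o(|z|^{-3})$ remainder is in fact $O(|z|^{-5})$.

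For part (ii), set $c=3/2$. Then $a=b$, and ${}_1F_1(3/2;3/2;-x)=e^{-x}$ directly from the series, because the Pochhammer ratios cancel termwise. Hence $g_{3/2}(z)=\kappa(3/2)e^{-z^2/2}=\phi(z)$ by the Remark following Lemma~\ref{lem:endpoint}. This is consistent with Corollary~\ref{cor:density}. Super-exponential decay is then immediate: $|z|^m e^{-z^2/2}\to0$ for every $m$. The same case can also be read off the general expansion, since $1/\Gamma(0)=0$ kills the algebraic term.

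The main obstacle is justifying the asymptotic expansion rigorously rather than citing it. If a self-contained argument is preferred, I would use the beta-precision mixture representation announced in the introduction, or equivalently the Euler integral
\[
    {}_1F_1(a;b;-x)=\frac{\Gamma(b)}{\Gamma(a)\Gamma(b-a)}\int_0^1 e^{-xt}t^{a-1}(1-t)^{b-a-1}\,dt,
\]
valid for $b>a>0$, which covers exactly the case $c>3/2$. Applying Watson's lemma at the endpoint $t=0$ gives $\int_0^1 e^{-xt}t^{1/2}(1-t)^{c-5/2}\,dt=\Gamma(3/2)x^{-3/2}\{1+O(x^{-1})\}$. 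Multiplying by the prefactor reproduces the leading term above, with no need to track the exponentially small contribution. This route requires care only near $t=1$ when $c<5/2$, where $(1-t)^{c-5/2}$ is integrably singular. That piece is handled by splitting the integral at $t=1/2$: on $[1/2,1]$ the integral is bounded by $e^{-x/2}\int_{1/2}^1 t^{1/2}(1-t)^{c-5/2}\,dt=O(e^{-x/2})$, which is negligible against $x^{-3/2}$.
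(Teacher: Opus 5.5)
Your proof is correct. Your main route differs from the paper's main route, which mentions it only in a closing sentence (``applying this argument directly to \eqref{eq:gc} gives the same result'').

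The paper works instead from the two-term form $g_c=\kappa(c)\bigl[{}_1F_1(\tfrac12;c;-x)-\tfrac{x}{c}\,{}_1F_1(\tfrac32;c+1;-x)\bigr]$ inherited from the monotonicity proof. It expands both pieces via A\&S 13.5.1 \emph{to second order} and observes that the $x^{-1/2}$ terms cancel. It then reads $c-\tfrac32$ off the difference of the $1/x$ correction coefficients, finishing with $\kappa(c)\Gamma(c)/\Gamma(c-\tfrac12)=1/(2\sqrt2)$ from Lemma~\ref{lem:endpoint}.

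You start from the contracted density of Corollary~\ref{cor:density}, so you need only the leading term of the expansion. The constant appears directly as $\Gamma(c-\tfrac12)/\Gamma(c-\tfrac32)=c-\tfrac32$, with no cancellation and no subleading coefficients to track. This is shorter and less error-prone. What the paper's version buys is that it stays with the expansion already used for Lemma~\ref{lem:continuity}. Because its two pieces are exactly the product-rule pieces of $\frac{d}{dz}\bigl[z\,{}_1F_1(\tfrac12;c;-z^2/2)\bigr]$, it also exhibits the density tail as the $z$-derivative of the $|z|^{-2}$ correction to $G_c$ that underlies Corollary~\ref{cor:survival}.

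Your Watson-lemma alternative is the most self-contained of the three, since it needs only the Euler integral in its valid range $c>\tfrac32$. Because $\kappa(c)$ times your Euler integral is exactly the mixture formula \eqref{eq:mixture_density}, it also explains the constant: $g_c(z)\sim(c-\tfrac32)\int_0^\infty(2\pi)^{-1/2}\lambda^{1/2}e^{-\lambda z^2/2}\,d\lambda=(c-\tfrac32)|z|^{-3}$. In other words, the tail coefficient is the $\Beta(1,c-\tfrac32)$ mixing density evaluated at zero precision.

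One citation slip: A\&S 13.1.5 is itself the $\operatorname{Re} z<0$ statement you need, so the detour through Kummer's transformation is unnecessary. The $x\to+\infty$ expansion you describe is 13.1.4.
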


\noindent The boundary $c=3/2$ therefore separates Gaussian from
algebraic tail behaviour: for $c>3/2$ the density decays as
$|z|^{-3}$, with the scale of the algebraic component governed
by $c-3/2$. This is the sense in which $c$ indexes a directed
heavy-tailed deformation of the Gaussian law. Integrating the tail of Lemma~\ref{lem:tail} gives the corresponding survival function asymptotic.

\begin{corollary}[Survival function tail]\label{cor:survival}
For $c>3/2$, as $z\to\infty$,
\begin{equation}\label{eq:survival}
    1-G_c(z) \sim \frac{c-3/2}{2}\,z^{-2}.
\end{equation}
In particular, the survival function has tail index $2$ for every $c>3/2$: the index is fixed across the family and $c$ governs only the scale of the tail.
\end{corollary}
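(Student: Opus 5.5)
The plan is to integrate the density tail of Lemma~\ref{lem:tail}(i) from $z$ to $\infty$, using that $G_c$ is a distribution function (Corollary~\ref{cor:Gc_distn}) with density $g_c$ (Corollary~\ref{cor:density}). Since $G_c$ is absolutely continuous and $G_c(t)\to1$ as $t\to\infty$ (Lemma~\ref{lem:endpoint}), for every $z$ we have
\[
    1-G_c(z)=\int_z^\infty g_c(t)\,dt .
\]
This integral is finite because $g_c\ge0$ (Lemma~\ref{lem:monotonicity}) and $g_c(t)=O(t^{-3})$.

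Fix $\varepsilon>0$. By Lemma~\ref{lem:tail}(i), there is $z_\varepsilon>0$ such that for all $t\ge z_\varepsilon$,
\[
    \bigl|g_c(t)-(c-\tfrac32)t^{-3}\bigr|\le\varepsilon t^{-3}.
\]
For $z\ge z_\varepsilon$, integrate this bound over $[z,\infty)$ and use $\int_z^\infty t^{-3}\,dt=z^{-2}/2$. This gives
\[
    \Bigl|1-G_c(z)-\frac{c-3/2}{2}z^{-2}\Bigr|\le\frac{\varepsilon}{2}z^{-2}.
\]
Because $c>3/2$, the leading constant $(c-3/2)/2$ is strictly positive. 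Dividing by it, the ratio $\{1-G_c(z)\}\big/\{\tfrac{c-3/2}{2}z^{-2}\}$ lies within $\varepsilon/(c-3/2)$ of $1$ for all large $z$. Since $\varepsilon$ is arbitrary, this yields \eqref{eq:survival}. This is the standard fact that an $o(t^{-3})$ remainder integrates to an $o(z^{-2})$ remainder.

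For the tail-index statement, note that $1-G_c(z)\sim C z^{-2}$ with $C>0$. Hence $1-G_c$ is regularly varying with index $-2$, and in particular $\lim_{z\to\infty}\log\{1-G_c(z)\}/\log z=-2$. This limit does not depend on $c$, while $c$ enters only through the constant $C=(c-3/2)/2$. This is exactly the claim that the index is fixed and $c$ governs only the tail scale. By evenness of $g_c$, the same asymptotic holds for the left tail $G_c(-z)$.

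There is no real obstacle here, since everything reduces to Lemma~\ref{lem:tail}. The only point needing care is that the $o(|z|^{-3})$ term in \eqref{eq:tail_algebraic} must be used uniformly over $[z,\infty)$ rather than pointwise. The $\varepsilon$-argument above handles this directly, so I would not need to appeal to dominated convergence.
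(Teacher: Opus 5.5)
Your proposal is correct and follows essentially the same route as the paper: write $1-G_c(z)=\int_z^\infty g_c(t)\,dt$, integrate the leading term $(c-3/2)t^{-3}$, and control the remainder with the same $\varepsilon$-bound $|g_c(t)-(c-3/2)t^{-3}|\le\varepsilon t^{-3}$ for large $t$, giving an error of at most $\tfrac{\varepsilon}{2}z^{-2}$. Your added remarks on regular variation with index $-2$ and on the left tail by evenness are correct and make the ``tail index'' clause explicit, which the paper leaves implicit.
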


\noindent Lemma~\ref{lem:tail} and Corollary~\ref{cor:survival} together characterise the tail structure of the family completely. Two remarks place this structure in context.

\begin{remark}[Phase transition at $c=3/2$]\label{rem:phasetransition}
For $c>3/2$ the algebraic tail coefficient $c-3/2>0$ decreases continuously to zero as $c\downarrow3/2$. At the boundary it vanishes
exactly and $g_{3/2}=\phi$ by Corollary~\ref{cor:density}: the tail switches from algebraic to super-exponential.
\end{remark}

\begin{remark}[Comparison with standard heavy-tailed families]
\label{rem:comparison}
The non-Gaussian members $\{G_c : c>3/2\}$ of the family are distinct from standard symmetric heavy-tailed families. By Corollary~\ref{cor:survival}, the survival function satisfies $1-G_c(z)\sim\frac{c-3/2}{2}z^{-2}$, so the tail index is fixed at $2$ for every $c>3/2$: the parameter $c$ governs the scale of the algebraic tail, not its index. Symmetric $\alpha$-stable laws have tail index $\alpha\in(0,2)$ in the heavy-tailed case, so no non-Gaussian stable law has tail index $2$. The Student $t_k$ family (equivalently, the Pearson Type~VII family with shape parameter $m=(k+1)/2$) has tail index $k$, so index $2$ forces $k=2$; but $G_c$ is not a rescaling of $t_2$, since the peak $g_c(0)=\kappa(c)$ and the tail coefficient $c-3/2$ vary with $c$ at different rates, whereas a scale family ties them through a single parameter: rescaling $z\mapsto z/\sigma$ multiplies the peak by $\sigma^{-1}$ and the tail coefficient by $\sigma^2$, so their product
$\kappa(c)\,(c-3/2)^{1/2}$ would be constant across a scale family but varies with $c$ in the hypergeometric family. The hypergeometric family shares with $t_2$ only the moment structure (finite mean, infinite variance), not its functional form. More broadly, the Student $t_k$ family for general $k$ is symmetric and, like the hypergeometric family, unable to capture skewness; it places the Gaussian at the asymptotic limit $k\to\infty$ rather than at a finite boundary point. The structural advantage of $\{G_c\}$ over the Student $t$ family is precisely this finite boundary: the exact location of the Gaussian at $c=3/2$ enables the $\frac12\delta_0+\frac12\chi_1^2$ boundary limit
of Theorem~\ref{thm:fit_improvement}.
\end{remark}

\noindent The fixed tail index has immediate implications for the existence of moments.

\begin{corollary}[Moment existence]\label{cor:moments}
Let $c>3/2$ and $Z\sim G_c$. For $r>0$,
\[
    \E[|Z|^r]<\infty
    \quad\Longleftrightarrow\quad
    r<2.
\]
In particular, $\E[Z]=0$ by symmetry and $\E[Z^2]=\infty$: every
non-Gaussian member of the family has zero mean and infinite variance.
\end{corollary}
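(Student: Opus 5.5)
The plan is to reduce moment existence to the tail asymptotics of Lemma~\ref{lem:tail}, using the standard layer-cake or direct-integration argument. Fix $c>3/2$ and $r>0$. By Corollary~\ref{cor:density}, $g_c$ is even, continuous and strictly positive, so
\[
    \E[|Z|^r]=2\int_0^\infty z^r g_c(z)\,dz .
\]
I will split this integral at some $M>0$. On $[0,M]$ the integrand is continuous, hence the contribution is finite for every $r>0$. Everything therefore hinges on the tail piece $\int_M^\infty z^r g_c(z)\,dz$.

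For the tail, I would invoke Lemma~\ref{lem:tail}(i): $g_c(z)=(c-3/2)z^{-3}+o(z^{-3})$ as $z\to\infty$. Since $c-3/2>0$, I can choose $M$ large enough that
\[
    \tfrac12(c-\tfrac32)z^{-3}\le g_c(z)\le 2(c-\tfrac32)z^{-3}
    \quad\text{for } z\ge M .
\]
Then the tail integral is sandwiched between constant multiples of $\int_M^\infty z^{r-3}\,dz$. That integral is finite if and only if $r-3<-1$, that is, $r<2$. Comparison in both directions then yields the equivalence $\E|Z|^r<\infty \Leftrightarrow r<2$. Alternatively, the same conclusion follows from Corollary~\ref{cor:survival} via $\E|Z|^r=\int_0^\infty r t^{r-1}P(|Z|>t)\,dt$ together with $P(|Z|>t)\sim(c-3/2)t^{-2}$. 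I will use the density route, since it needs only the two-sided bound above.

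For the ``in particular'' claims, taking $r=1<2$ gives $\E|Z|<\infty$. Evenness of $g_c$ makes $z g_c(z)$ odd and absolutely integrable, so $\E[Z]=0$. Taking $r=2$ gives $\E[Z^2]=\infty$, i.e.\ infinite variance. There is no real obstacle here. The only point needing care is that the lower bound on $g_c$ genuinely uses $c>3/2$: the leading coefficient must be strictly positive for the divergence direction. This is exactly why the corollary excludes the Gaussian boundary $c=3/2$, where all moments are finite.
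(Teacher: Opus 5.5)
Your proposal is correct and follows essentially the same route as the paper. Both split $2\int_0^\infty z^r g_c(z)\,dz$ at $M$, control the tail with Lemma~\ref{lem:tail}, and compare with $\int_M^\infty z^{r-3}\,dz$; your explicit two-sided bound is simply the limit comparison test the paper invokes. The $r=1$ (zero mean by symmetry) and $r=2$ (infinite variance) cases are handled identically.
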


\begin{remark}[Tail departure from normality]\label{rem:heavytail}
The moment structure is markedly non-Gaussian for every $c>3/2$,
regardless of how close $c$ is to the boundary: $G_c$ is
symmetric with finite mean but infinite variance, since the
algebraic tail $|z|^{-3}$ is too heavy for a second moment to
exist. The parameter $c$ indexes departure from normality in the tails.
\end{remark}

\begin{figure}[t]
    \centering
    \includegraphics[width=\textwidth]{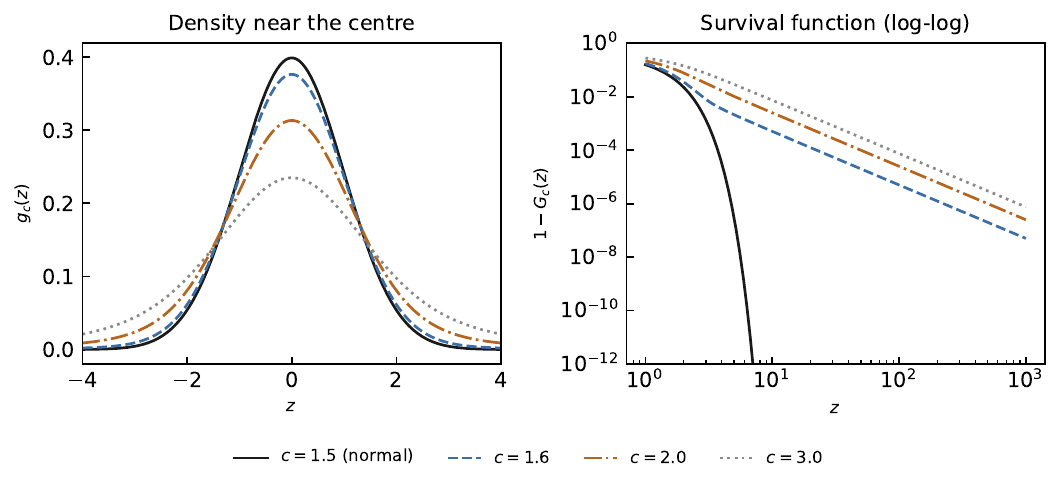}
    \caption{The hypergeometric family $G_c$ and its density $g_c$ for $c\in\{1.5,1.6,2.0,3.0\}$, with $c=3/2$ the standard normal. \emph{Left:} densities $g_c$ on linear axes, showing that the family is a smooth perturbation of the normal near the centre, with peak height decreasing in $c$. \emph{Right:} survival functions $1-G_c$ on log-log axes; for $c>3/2$ the curves are asymptotically linear with common slope $-2$, reflecting the fixed tail index in
    $1-G_c(z)\sim\tfrac{c-3/2}{2}\,z^{-2}$ for every $c>3/2$; the normal ($c=3/2$) falls away with super-exponential decay.}
    \label{fig:hgm_family}
\end{figure}

Figure~\ref{fig:hgm_family} illustrates the two features of the family established above: proximity to the normal at the centre and qualitatively distinct tails. The left panel plots the densities $g_c$: near the origin the curves are close to the standard normal, to which they reduce exactly at $c=3/2$; as $c$ increases the peak falls and mass shifts toward the tails, interpolating continuously away from normality. The right panel plots the survival functions $1-G_c$ on log-log axes, on which an algebraic tail appears as a straight line with slope equal to
the negative tail index. For every $c>3/2$ the slope is $-2$, consistent with Corollary~\ref{cor:survival}; $c$ shifts the intercept but not the slope, confirming that the tail index is fixed and $c$ governs only the tail scale. The normal case $c=3/2$ is qualitatively different: its survival function curves steeply downward, reflecting super-exponential decay, and falls below every algebraic tail beyond a moderate threshold. The figure thus displays the phase transition of
Remark~\ref{rem:phasetransition} directly.

The distributional properties established above (algebraic tails, fixed tail index, and finite mean with infinite variance) all follow from the hypergeometric density \eqref{eq:gc}, but they also admit a unified probabilistic explanation that connects the construction to a classical mixture structure and provides intuition for the role of $c$.

\begin{proposition}[Beta-precision normal scale mixture]\label{prop:mixture}
For every $c>3/2$, let $\varepsilon\sim\N(0,1)$ and
$\Lambda\sim\Beta(1,c-3/2)$ be independent. Then
$Z=\varepsilon/\sqrt{\Lambda}$ has distribution $G_c$. Equivalently,
\begin{equation}\label{eq:mixture_density}
    g_c(z) = \int_0^1 \frac{\sqrt{\lambda}}{\sqrt{2\pi}}\,
    e^{-\lambda z^2/2}\,\left(c-\tfrac32\right)(1-\lambda)^{c-5/2}\,d\lambda,
\end{equation}
the marginal density obtained by averaging $\N(0,\lambda^{-1})$
densities over $\Lambda\sim\Beta(1,c-3/2)$.
\end{proposition}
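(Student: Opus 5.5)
The plan is to verify the density identity \eqref{eq:mixture_density} directly and then read off the distributional statement. Since $\varepsilon$ and $\Lambda$ are independent, conditional on $\Lambda=\lambda$ the variable $Z$ is $\N(0,\lambda^{-1})$, with density $\sqrt{\lambda}\,(2\pi)^{-1/2}e^{-\lambda z^2/2}$. The $\Beta(1,c-3/2)$ density is $(c-3/2)(1-\lambda)^{c-5/2}$ on $(0,1)$, because $1/\B(1,c-3/2)=c-3/2$. Fubini/Tonelli (everything is nonnegative) then gives that $Z$ has the density on the right of \eqref{eq:mixture_density}. It therefore suffices to show that this integral equals $g_c(z)=\kappa(c)\,{}_1F_1(3/2;c;-z^2/2)$ from Corollary~\ref{cor:density}. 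Two distributions with equal densities coincide, so $Z\sim G_c$.

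For the identity I will use Kummer's Euler-type integral representation
\[
{}_1F_1(a;b;x)=\frac{\Gamma(b)}{\Gamma(a)\Gamma(b-a)}\int_0^1 e^{x t}\,t^{a-1}(1-t)^{b-a-1}\,dt,\qquad \operatorname{Re} b>\operatorname{Re} a>0.
\]
I apply it with $a=3/2$, $b=c$ and $x=-z^2/2$. This requires $c>3/2$, which is exactly the hypothesis of the proposition; that is why the boundary case is excluded here. The integrand becomes $e^{-\lambda z^2/2}\lambda^{1/2}(1-\lambda)^{c-5/2}$, which matches the mixture integrand. Hence the mixture integral equals
\[
\frac{c-3/2}{\sqrt{2\pi}}\cdot\frac{\Gamma(3/2)\Gamma(c-3/2)}{\Gamma(c)}\,{}_1F_1\!\left(\tfrac32;c;-\tfrac{z^2}{2}\right).
\]
It remains to match the constant with $\kappa(c)=\Gamma(c-1/2)/\{2\sqrt2\,\Gamma(c)\}$. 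Using $(c-3/2)\Gamma(c-3/2)=\Gamma(c-1/2)$ and $\Gamma(3/2)=\sqrt\pi/2$, the prefactor becomes $\frac{\sqrt\pi}{2\sqrt{2\pi}}\,\Gamma(c-1/2)/\Gamma(c)=\Gamma(c-1/2)/\{2\sqrt2\,\Gamma(c)\}=\kappa(c)$, as required.

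If a self-contained route avoiding the Euler integral is preferred, the fallback is to expand $e^{-\lambda z^2/2}$ in its power series and integrate term by term against $\lambda^{1/2}(1-\lambda)^{c-5/2}$. Each term produces a Beta function $\B(k+3/2,c-3/2)$, and the ratio of successive terms reproduces $(3/2)_k/(c)_k$, giving the ${}_1F_1$ series. Term-by-term integration is justified by dominated convergence, since $\sum_k (\lambda z^2/2)^k/k!\le e^{z^2/2}$ on $[0,1]$ and the Beta weight is integrable for $c>3/2$. The main obstacle is only bookkeeping: confirming the parameter condition $c>a=3/2$ for the integral representation, and getting the normalising constant to match $\kappa(c)$ exactly. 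No analytic difficulty is expected beyond these checks.
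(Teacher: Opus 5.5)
Your proof is correct, but it reaches the mixture integral from a different representation of $g_c$ than the paper uses. The paper invokes no new identity. It takes the integral form \eqref{eq:Gc_integral} already obtained in the proof of Lemma~\ref{lem:monotonicity}, which came from Kummer's transformation, Euler integrals for ${}_1F_1(c-\tfrac12;c;x)$ and ${}_1F_1(c-\tfrac12;c+1;x)$, and an integration by parts against $V(u)=e^{xu}(1-u)^{1/2}$. In that form the constant has already collapsed to $(c-\tfrac32)/\sqrt{2\pi}$, so the paper's proof is the single substitution $u=1-\lambda$.

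You instead start from the compact density \eqref{eq:gc} of Corollary~\ref{cor:density} and apply the Euler integral with $a=\tfrac32$, $b=c$. This puts the integrand into mixture form at once. It also makes the role of $c>\tfrac32$ transparent: it is exactly the condition $b>a$, equivalently integrability of the $\Beta(1,c-\tfrac32)$ weight. The price is the explicit constant check, which you carry out correctly via $(c-\tfrac32)\Gamma(c-\tfrac32)=\Gamma(c-\tfrac12)$ and $\Gamma(\tfrac32)=\sqrt\pi/2$. The paper itself points to this Euler representation of ${}_1F_1(\tfrac32;c;\cdot)$ in its remark giving an alternative monotonicity argument.

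Because you apply the integral representation to the density rather than to $G_c$, you avoid the differentiation-under-the-integral issue the paper cites as its reason for going through Lemma~\ref{lem:monotonicity}. Your term-by-term Beta-function fallback, with the dominated-convergence justification, amounts to a self-contained proof of the case of the Euler integral you need. Finally, your Tonelli argument for the density of $\varepsilon/\sqrt{\Lambda}$ supplies a step the paper simply calls ``immediate''.
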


\begin{remark}[Interpretation and boundary]\label{rem:mixture}
The representation clarifies the role of $c$ at two levels.
Probabilistically, $c-3/2$ is the second shape parameter of the Beta distribution governing the latent precision $\Lambda\in(0,1)$: small values of $\Lambda$ generate high-variance conditional Gaussian observations and hence the algebraic tails of $G_c$. 

As $c \downarrow 3/2$, the $\mathrm{Beta}(1,c-3/2)$ distribution converges weakly to a point mass at one, $\Lambda \equiv 1$. We define the boundary case by this degenerate limiting distribution, so $Z\mid(\Lambda=1)\sim\mathrm{N}(0,1)$ and $G_{3/2}=\Phi$ exactly. Nonnegativity of $g_c$ is immediate from \eqref{eq:mixture_density}, since the integrand is a product of a $\N(0,\lambda^{-1})$ density and a $\Beta(1,c-3/2)$ density, both non-negative. The representation connects the hypergeometric family to the normal variance-mixture literature
\citep{andrews_mallows74, barndorff-nielsen_etal82}, but the inference problem studied here is not the general estimation of a normal mixture: it is the boundary test of Gaussian degeneracy within this particular beta-precision hypergeometric path.
\end{remark}

\begin{corollary}[Unimodality]\label{cor:unimodal}
For every $c\geq\tfrac32$, the density $g_c$ is strictly unimodal with mode at $z=0$: $g_c(0)>g_c(z)$ for all $z\neq0$, and $g_c$ is strictly decreasing on $(0,\infty)$.
\end{corollary}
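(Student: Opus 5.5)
The plan is to handle the boundary and interior cases separately. For $c=3/2$ the claim holds directly, because $g_{3/2}=\phi$ by Corollary~\ref{cor:density}, and $\phi'(z)=-z\phi(z)<0$ for $z>0$. Since $\phi$ is also even, it is strictly decreasing on $(0,\infty)$ and strictly maximised at $0$.

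For $c>3/2$ I would use the mixture representation \eqref{eq:mixture_density} of Proposition~\ref{prop:mixture}. Write
\[
g_c(z)=\int_0^1 \sqrt{\lambda}\,\phi(\sqrt{\lambda}\,z)\,w_c(\lambda)\,d\lambda,
\qquad
w_c(\lambda)=\left(c-\tfrac32\right)(1-\lambda)^{c-5/2},
\]
which averages centred Gaussian densities, each even and strictly decreasing in $|z|$. Evenness of $g_c$ is already given by Corollary~\ref{cor:density}. For monotonicity on $(0,\infty)$, I would differentiate under the integral sign:
\[
g_c'(z)=-z\int_0^1 \frac{\lambda^{3/2}}{\sqrt{2\pi}}\,e^{-\lambda z^2/2}\,w_c(\lambda)\,d\lambda .
\]
For $z>0$ the integrand is strictly positive on $(0,1)$, so $g_c'(z)<0$. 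This gives strict decrease on $(0,\infty)$. Combined with evenness and continuity at $0$, it also gives $g_c(0)>g_c(z)$ for all $z\neq0$.

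The only technical step is justifying the interchange of derivative and integral, via dominated convergence. Note that $w_c$ may blow up at $\lambda=1$ when $3/2<c<5/2$, but it remains integrable because it is a Beta density. For $z$ in a compact set, the $z$-derivative of the integrand is bounded by
\[
|z|\,\lambda^{3/2}(2\pi)^{-1/2}\,w_c(\lambda)\le C\,w_c(\lambda),
\]
which is integrable on $(0,1)$. The interchange is therefore justified locally uniformly in $z$.

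As an alternative that avoids differentiation, one can argue pointwise. For $0<z_1<z_2$ and each $\lambda\in(0,1)$, $e^{-\lambda z_1^2/2}>e^{-\lambda z_2^2/2}$. Integrating this strict inequality against the positive measure $\sqrt{\lambda}\,w_c(\lambda)\,d\lambda$ gives $g_c(z_1)>g_c(z_2)$ immediately. I would probably present this version, since it sidesteps the dominated-convergence step. The same argument with $z_1=0$ gives the strict maximum at $0$.
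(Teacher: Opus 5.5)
Your proposal is correct and follows the paper's proof: you split off $c=3/2$ via $g_{3/2}=\phi$, and for $c>3/2$ you use the beta-precision mixture \eqref{eq:mixture_density} with weights strictly positive on $(0,1)$. Your pointwise comparison $e^{-\lambda z_1^2/2}>e^{-\lambda z_2^2/2}$ is an explicit version of the paper's remark that the mixture inherits strict decrease in $|z|$ from its Gaussian components, and your differentiation-under-the-integral variant is also valid but not needed.
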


\begin{corollary}[Characteristic function]\label{cor:charfun}
Let $Z\sim G_c$ with $c>3/2$. Then
\begin{equation}\label{eq:charfun}
    \varphi_c(t) = \E[e^{itZ}] = \Gamma(c-1/2)\,\exp\!\left(-\frac{t^2}{2}\right)\,
    \Psi\!\left(c-\frac32;\,0;\,\frac{t^2}{2}\right), \quad t\in\mathbb{R},
\end{equation}
where $\Psi$ is Tricomi's confluent hypergeometric function. Equivalently,
\begin{equation}\label{eq:charfun_integral}
    \varphi_c(t) = (c-3/2)\int_0^1 \exp\!\left(-\tfrac{t^2}{2\lambda}\right)\,
    (1-\lambda)^{c-5/2}\,d\lambda, \quad t\in\mathbb{R}.
\end{equation}
As $c\downarrow 3/2$, $\varphi_c(t)\to\exp(-t^2/2)$,
the characteristic function of the standard normal.
\end{corollary}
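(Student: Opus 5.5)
The plan is to work entirely from the mixture representation of Proposition~\ref{prop:mixture}, conditioning on the latent precision. Write $Z=\varepsilon/\sqrt{\Lambda}$ with $\varepsilon\sim\N(0,1)$ independent of $\Lambda\sim\Beta(1,c-3/2)$. Conditional on $\Lambda=\lambda$, $Z$ is $\N(0,\lambda^{-1})$, so $\E[e^{itZ}\mid\Lambda=\lambda]=\exp(-t^2/(2\lambda))$. Since $|e^{itZ}|\le1$, Fubini applies. Averaging over the $\Beta(1,c-3/2)$ density $(c-\tfrac32)(1-\lambda)^{c-5/2}$ on $(0,1)$ then gives \eqref{eq:charfun_integral} immediately. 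Equivalently, one could integrate $e^{itz}$ against \eqref{eq:mixture_density} and swap the order of integration.

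To get the Tricomi form \eqref{eq:charfun}, I substitute $\lambda=1/(1+u)$ with $u\in(0,\infty)$. Then $d\lambda=-(1+u)^{-2}\,du$ and $1-\lambda=u/(1+u)$, and the exponent becomes
\[
-\tfrac{t^2}{2\lambda}=-\tfrac{t^2}{2}-\tfrac{t^2}{2}u .
\]
Collecting powers, the integrand becomes $e^{-t^2/2}\,e^{-(t^2/2)u}\,u^{c-5/2}(1+u)^{-(c-5/2)-2}=e^{-t^2/2}\,e^{-(t^2/2)u}\,u^{c-5/2}(1+u)^{1/2-c}$. I then compare with the standard integral representation
\[
\Gamma(a)\Psi(a;b;x)=\int_0^\infty e^{-xu}u^{a-1}(1+u)^{b-a-1}\,du,\qquad a>0,\ x>0.
\]
Taking $a=c-\tfrac32>0$ gives $a-1=c-\tfrac52$, and matching $b-a-1=\tfrac12-c$ forces $b=0$. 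Hence
\[
\varphi_c(t)=(c-\tfrac32)\Gamma(c-\tfrac32)\,e^{-t^2/2}\,\Psi\!\left(c-\tfrac32;0;\tfrac{t^2}{2}\right),
\]
and $(c-\tfrac32)\Gamma(c-\tfrac32)=\Gamma(c-\tfrac12)$ yields \eqref{eq:charfun} for $t\neq0$.

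The only delicate point is $t=0$, where the Tricomi integral representation requires $x>0$. There I argue directly: $\varphi_c(0)=1$ from \eqref{eq:charfun_integral}, since the Beta density integrates to one. On the Tricomi side I use the limiting value $\Psi(a;b;0)=\Gamma(1-b)/\Gamma(a-b+1)$, valid for $b<1$ and $a>0$. This gives $\Psi(c-\tfrac32;0;0)=1/\Gamma(c-\tfrac12)$, so the right side of \eqref{eq:charfun} also equals $1$. Alternatively, \eqref{eq:charfun} at $t=0$ is read as the continuous extension as $t\to0$.

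For the limit as $c\downarrow3/2$, I use the weak convergence $\Beta(1,c-3/2)\Rightarrow\delta_1$ noted in Remark~\ref{rem:mixture}. This holds because the mean $1/(c-1/2)\to1$ and the variance tends to $0$. For fixed $t$, the map $\lambda\mapsto\exp(-t^2/(2\lambda))$ is bounded and continuous on $[0,1]$. This requires setting it equal to $0$ at $\lambda=0$ when $t\ne0$; when $t=0$ it is identically $1$. The portmanteau theorem then gives $\varphi_c(t)\to\exp(-t^2/2)$. I expect the main obstacle to be bookkeeping: matching the Tricomi parameters exactly, in particular confirming $b=0$, and handling the $t=0$ edge case. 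Everything else follows from Proposition~\ref{prop:mixture}.
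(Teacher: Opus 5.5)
Your proposal is correct and follows the paper's proof essentially step for step. You condition on $\Lambda$ to obtain \eqref{eq:charfun_integral}, substitute $u=(1-\lambda)/\lambda$ to match Tricomi's integral representation with $a=c-\tfrac32$ and $b=0$, and use $\Beta(1,c-\tfrac32)\Rightarrow\delta_1$ for the boundary limit; your explicit check at $t=0$ via $\Psi(a;0;0)=1/\Gamma(a+1)$ is a small point the paper leaves implicit.
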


\begin{remark}[Tricomi versus Kummer]\label{rem:charfun}
The characteristic function involves Tricomi's $\Psi$ rather than the Kummer ${}_1F_1$ that appears in $g_c$. This is not accidental: the density averages the conditional Gaussian density over the beta-precision distribution via the Euler integral, giving ${}_1F_1$; the characteristic function averages $\exp(-t^2/(2\lambda))$ over the same distribution, but the reciprocal $\lambda^{-1}$ in the exponent produces the Tricomi representation instead. Because the second parameter is zero, there is no simple ${}_1F_1$ expression for $\Psi(c-\tfrac32;0;\tfrac{t^2}{2})$; the
nearest identity \citep[eq.~13.1.7]{abramowitz_stegun72} is
$\Psi(c-\tfrac32;0;\tfrac{t^2}{2})=\tfrac{t^2}{2}\,\Psi(c-\tfrac12;2;\tfrac{t^2}{2})$. Since $\E[Z^2]=\infty$ for all $c>3/2$
by Corollary~\ref{cor:moments}, and since twice differentiability of $\varphi_c$ at the origin is equivalent to finite second moment, $\varphi_c$ is not twice differentiable at $t=0$. For every $c>3/2$, methods that rely on second moments, moment-generating functions, or cumulant expansions are unavailable. Although these methods are well defined at the Gaussian boundary $c=3/2$, they are not available on any right neighbourhood of that boundary and therefore do not provide a regular basis for inference on c. This provides a further motivation for the minimum-distance distribution-function approach of Section~\ref{sec:inference}.
\end{remark}

\begin{corollary}[Location-scale extension]\label{cor:location_scale_family}
For $\mu\in\mathbb{R}$, $\sigma>0$, and $c\geq\tfrac32$, define
$G_{\mu,\sigma,c}(x)=G_c\bigl((x-\mu)/\sigma \bigr)$. Then $G_{\mu,\sigma,c}$ is a distribution function on $\mathbb{R}$ with
density $g_{\mu,\sigma,c}(x)=\sigma^{-1} g_c\bigl((x-\mu)/\sigma\bigr)$, and
$G_{\mu,\sigma,3/2}(x)=\Phi\bigl((x-\mu)/ \sigma\bigr)$. For $c>\tfrac32$, the density is symmetric about $\mu$ with
\[
    g_{\mu,\sigma,c}(x) \sim \bigl(c-\tfrac32\bigr)\sigma^2|x-\mu|^{-3},
    \quad|x|\to\infty.
\]
If $X\sim G_{\mu,\sigma,c}$ with $c>\tfrac32$, then $\E[|X-\mu|^r]<\infty
\Leftrightarrow r<2$; in particular $\E[X]=\mu$, while the variance does
not exist.
\end{corollary}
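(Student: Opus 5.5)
The statement is Corollary~\ref{cor:location_scale_family}. Every claim in it transfers from properties of $G_c$ already established, via the affine map $x\mapsto(x-\mu)/\sigma$.

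\emph{Distribution function and density.} Corollary~\ref{cor:Gc_distn} says $G_c$ is a distribution function. Composing it with the increasing, continuous, affine bijection $x\mapsto(x-\mu)/\sigma$ preserves all the defining properties: monotonicity, right-continuity, and the limits $0$ and $1$ at $\mp\infty$. So $G_{\mu,\sigma,c}$ is a distribution function. Equivalently, $G_{\mu,\sigma,c}$ is the law of $X=\mu+\sigma Z$ with $Z\sim G_c$.

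Absolute continuity with density $g_c$ comes from Corollary~\ref{cor:density}. The chain rule, or a change of variables in $\int_{-\infty}^x \sigma^{-1}g_c((u-\mu)/\sigma)\,du$, then gives $g_{\mu,\sigma,c}(x)=\sigma^{-1}g_c((x-\mu)/\sigma)$. At $c=3/2$, the identity $G_{3/2}=\Phi$ from Corollary~\ref{cor:density} and the remark after Lemma~\ref{lem:endpoint} gives $G_{\mu,\sigma,3/2}(x)=\Phi((x-\mu)/\sigma)$. Symmetry about $\mu$ follows because $g_c$ is even (Corollary~\ref{cor:density}).

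\emph{Tail.} Put $z=(x-\mu)/\sigma$. Since $\mu,\sigma$ are fixed, $|x|\to\infty$ if and only if $|z|\to\infty$. Lemma~\ref{lem:tail}(i) gives
\[
\sigma^{-1}g_c(z)=\sigma^{-1}\bigl(c-\tfrac32\bigr)|z|^{-3}(1+o(1))=\bigl(c-\tfrac32\bigr)\sigma^{2}|x-\mu|^{-3}(1+o(1)).
\]
Here $|x-\mu|^{-3}$ may be replaced by $|x|^{-3}$ up to a factor $1+o(1)$ if preferred.

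\emph{Moments.} Write $X-\mu=\sigma Z$. Then $\E|X-\mu|^r=\sigma^r\E|Z|^r$, which is finite if and only if $r<2$ by Corollary~\ref{cor:moments}. Taking $r=1$ shows $X$ is integrable, so $\E[X]=\mu+\sigma\E[Z]=\mu$ because $\E[Z]=0$. Taking $r=2$ gives $\E[(X-\mu)^2]=\infty$, so the variance does not exist.

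There is no real obstacle; the only point needing care is the tail statement. Here one should note that the equivalence of $|x|\to\infty$ and $|z|\to\infty$, together with $|x-\mu|/|x|\to1$, makes the asymptotic transfer legitimate.
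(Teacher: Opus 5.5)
Your proposal is correct and follows essentially the same route as the paper's proof: compose $G_c$ with the increasing affine map, differentiate to get the density, use evenness of $g_c$ for symmetry, and transfer Lemma~\ref{lem:tail} and Corollary~\ref{cor:moments} through $X=\mu+\sigma Z$. Your explicit remark that $|x|\to\infty$ if and only if $|z|\to\infty$ is a small point the paper leaves implicit.
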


\section{Minimum-distance inference}\label{sec:inference}

This section develops the inferential theory for the hypergeometric family. The minimum-distance criterion measures the squared $\nu$-distance between the empirical distribution function and the parametric family $\{G_c:c\in\mathcal{C}\}$, and the estimator $\widehat c_n$ minimises it over the admissible parameter space. The Gaussian null $H_0:c=3/2$ lies on the boundary of $\mathcal{C}$, which is the source of the nonstandard limiting distribution of the test statistic. Consistency, interior asymptotics, boundary inference, directed consistency, and local asymptotic power are derived in turn.

\subsection{Criterion, estimator, and test statistic}\label{subsec:criterion}

Let $Z_1,\ldots,Z_n$ be i.i.d.\ with distribution function $F_0$ and empirical distribution function $\widehat F_n$. For a fixed finite measure $\nu$ on $\mathbb{R}$, define the minimum-distance criterion
\[
    Q_n(c) = \int_{\mathbb{R}} \{\widehat F_n(z)-G_c(z)\}^2\,d\nu(z),\quad c\in\mathcal{C}=\left[\frac32,c_U\right],
\]
and its population counterpart
\[
    Q(c) = \int_{\mathbb{R}}\{F_0(z)-G_c(z)\}^2\,d\nu(z).
\]
The parameter space $\mathcal{C}=[3/2,c_U]$ is compact, with the lower bound $3/2$ dictated by the admissibility region of the family and $c_U\in(3/2,\infty)$ a fixed finite upper bound chosen large enough not to bind in applications. The choice of $\nu$ is part of the definition of the criterion. Since the parameter $c$ controls a Gaussian-to-heavy-tail deformation, a natural class of tail-sensitive measures is
\begin{equation}\label{eq:nu_lambda}
    d\nu_\lambda(z) \propto
    [\Phi(z)\{1-\Phi(z)\}]^{-\lambda}\,d\Phi(z),
    \quad 0\leq\lambda<1,
\end{equation}
which increases the relative weight on the tails as $\lambda$ increases while keeping $\nu_\lambda$ finite. The substitution $u=\Phi(z)$ transforms $\nu_\lambda$ to a $\Beta(1-\lambda,1-\lambda)$ measure on $(0,1)$ with total mass given by the beta function $\B(1-\lambda,1-\lambda)=(\Gamma(1-\lambda))^2/\Gamma(2-2\lambda)$, which is finite if and only if $\lambda<1$. The simulations and empirical illustration use  $\lambda=1/2$, giving
\[
    d\nu_{1/2}(z) \propto [\Phi(z)\{1-\Phi(z)\}]^{-1/2}\,d\Phi(z).
\]
The rationale for the $d\Phi(z)$ base measure, the tail-weighting structure, and the choice $\lambda=1/2$ are discussed in Remark~\ref{rem:nu_choice}. The minimum-distance estimator is any
\[
    \widehat c_n\in\arg\min_{c\in\mathcal{C}}Q_n(c).
\]
The Gaussian null $H_0:F_0=\Phi$ corresponds to the boundary value $c_0=3/2$. The test statistic is the improvement in fit from estimating $c$ rather than imposing $c=c_0$:
\[
    T_n = n\{Q_n(c_0)-Q_n(\widehat c_n)\}.
\]
Since $\widehat c_n\in\mathcal{C}$ and $c_0=3/2$ is the left
endpoint of $\mathcal{C}$, $T_n\geq 0$ by construction. The
asymptotic properties of $\widehat c_n$ and $T_n$ are derived in
the subsections below.

\begin{remark}[Structure of $\nu_\lambda$ and the choice $\lambda=1/2$]
\label{rem:nu_choice} Two structural features of the family $\nu_\lambda$ deserve comment: (i) \textit{Role of the $d\Phi(z)$ base.}
Using $d\Phi(z)=\phi(z)\,dz$ as the base measure rather than Lebesgue measure $dz$ is not just a normalisation convenience. On the quantile scale the weight $[u(1-u)]^{-\lambda}$ diverges at $u=0$ and $u=1$, and against Lebesgue measure this divergence is not integrable for any $\lambda>0$. The factor $\phi(z)$ in $d\Phi(z)$ provides super-exponential damping in both tails, converting the divergent weight into the integrable unnormalised $\Beta(1-\lambda,1-\lambda)$ density on $(0,1)$. It follows that $\lambda=0$ gives $d\nu_0\propto d\Phi(z)$, a measure concentrated near the centre, and Lebesgue measure $dz$ (Cram\'er--von Mises) is not a member of the $\nu_\lambda$ family for any $\lambda\in[0,1)$. (ii) \textit{Choice of $\lambda$.}
Fixing $\lambda=1/2$ delivers a finite, symmetric, tail-sensitive criterion with analytically computable null constants. In principle $\lambda$ could be selected to maximise power against a target alternative class, but the optimal choice depends on the direction of departure from the Gaussian null, which is unknown in practice; a data-adaptive procedure would require controlling the joint distribution of the selection step and the test statistic, a non-standard problem requiring bootstrap calibration. The cost of fixing $\lambda=1/2$ relative to an oracle-optimal choice is expected to be small against symmetric heavy-tailed alternatives, precisely the class for which the hypergeometric score direction $\dot G_{c_0}$ and the $\nu_{1/2}$ weighting are well-aligned; the local power analysis of
Section~\ref{subsec:localpower} makes this precise.
\end{remark}

\begin{remark}[Comparison with likelihood-based estimation]\label{rem:mle}
Maximum likelihood estimation (MLE) is feasible in principle using the explicit density in Corollary~\ref{cor:density}, or through the beta-precision mixture representation of Proposition~\ref{prop:mixture}. Its boundary theory is, however, non-regular: the model is not differentiable in quadratic mean at the Gaussian boundary, and the usual $\sqrt{n}$ likelihood expansion and regular Fisher-information theory do not apply. The minimum-distance approach avoids this problem and extends naturally to hypergeometric models where explicit densities may not be available.
\end{remark}

\subsection{Assumptions}\label{subsec:assumptions}

\begin{enumerate}
    \item[(A0)] \textit{(Family regularity.)}
    For each $c\in\mathcal{C}$, $G_c$ is a distribution
    function; for each $z\in\mathbb{R}$, the map
    $c\mapsto G_c(z)$ is continuous.

    \item[(A1)] \textit{(Compact parameter space.)}
    $\mathcal{C}=[3/2,c_U]$ with $c_U\in(3/2,\infty)$ fixed.

    \item[(A2)] \textit{(Finite measure.)}
    $\nu(\mathbb{R})<\infty$.

    \item[(A3)] \textit{(Identification.)}
    $Q$ has a unique minimiser $c^\star\in\mathcal{C}$.

    \item[(A4)] \textit{(Tail sensitivity.)}
    $\nu((M,\infty))>0$ and $\nu((-\infty,-M))>0$ for all
    $M>0$.
\end{enumerate}

\noindent Assumption (A0) is not a maintained assumption but a verified property of the hypergeometric family: $G_c$ is a distribution function for every $c\geq 3/2$ by Corollary~\ref{cor:Gc_distn}, and $c\mapsto G_c(z)$ is continuous for each fixed $z$ by Lemma~\ref{lem:continuity}. It is listed explicitly because it is the condition on the parametric family that drives both uniform convergence of $Q_n$ and continuity of $Q$, and would need to be re-verified for any extension of the family. Assumption (A1) ensures minimisers exist and the argmin theorem applies. The upper bound $c_U$ is a technical device that does not affect the asymptotic distribution under the null and is chosen large enough not to
bind in practice. Assumption (A2) is part of the definition of $\nu$ and makes the criterion well defined. It also converts the pointwise Glivenko--Cantelli bound into a uniform bound on $Q_n-Q$. Assumption (A3) is the substantive condition. Under correct specification it holds by identifiability: distinct values of $c$ yield distinct tail coefficients in Lemma~\ref{lem:tail}, hence distinct distributions. Under misspecification $c^\star$ is the pseudo-true value and (A3) is a genuine condition on the geometry of $F_0$ relative to $\{G_c\}$. In practice this is a weak regularity condition: multiple minimisers would require a non-generic alignment between $F_0$ and the hypergeometric family, analogous to the standard identification conditions in minimum-distance and GMM estimation. Assumption (A4) requires $\nu$ to assign positive mass to every tail region. It ensures the criterion is sensitive to the tail differences that distinguish members of the family; it is needed only for identification under correct specification, and is
satisfied by $\nu_{1/2}$ by construction.

\begin{remark}[Identification, misspecification, and the boundary]
\label{rem:identification}
Three features of the setup merit comment. (i)~The identifying signal for $c$ weakens as $c\downarrow 3/2$: by Lemma~\ref{lem:continuity}, $G_c\to\Phi$ uniformly, so all features that distinguish $G_c$ from the Gaussian, including the algebraic tail and the peak height $g_c(0)=\kappa(c)$, vanish continuously at the boundary. This is intrinsic to any test of normality against alternatives that approach the
null in distribution, rather than a limitation of the present
procedure; it is precisely what the boundary distribution theory
below addresses. (ii)~Under misspecification, $\widehat c_n$ converges to the pseudo-true value $c^\star$, the minimum-distance projection of $F_0$ onto $\{G_c:c\in\mathcal{C}\}$: an interpretable index of tail heaviness whether or not $F_0\in\{G_c\}$. (iii) Under the Gaussian null, $c^\star=c_0=3/2$ lies on the boundary of $\mathcal{C}$. Consistency is unaffected, but the boundary position means that standard $\sqrt{n}$-asymptotic normality fails: the limit is of one-sided type, as derived below.

Taken together, rejection of $H_0:c=3/2$ carries a direction
(departure from normality toward heavier algebraic tails, quantified by $\widehat c_n-3/2$) that omnibus tests do not provide. The cost is reduced power near the null in small samples; the procedure is most informative at larger sample sizes and on data where the departure is of the symmetric heavy-tailed kind it is designed to detect. The simulations
below examine this trade-off directly.
\end{remark}

\subsection{Consistency}\label{subsec:consistency}

The first main result establishes almost sure convergence of
$\widehat c_n$ to the pseudo-true value $c^\star$, both under
correct specification and under misspecification. Under correct specification $c^\star$ is the true parameter and $Q(c^\star)=0$; under
misspecification $c^\star$ is the closest member of the family
to $F_0$ in the minimum-distance sense and $Q(c^\star)>0$.

\begin{theorem}[Consistency]\label{thm:consistency}
Under \textup{(A0)--(A3)}, any $\widehat c_n\in\arg\min_{c\in\mathcal{C}}Q_n(c)$ satisfies
$\widehat c_n\to c^\star$ almost surely.
\end{theorem}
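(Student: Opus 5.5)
The plan is the standard argmin-consistency argument for extremum estimators on a compact parameter space. It has three ingredients: uniform almost sure convergence $\sup_{c\in\mathcal C}|Q_n(c)-Q(c)|\to0$, continuity of $Q$ on $\mathcal C$, and the unique minimiser $c^\star$ given by (A3).

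\emph{Step 1: uniform convergence.} For every $c$ and $z$, both $\widehat F_n(z)$ and $F_0(z)$ lie in $[0,1]$, and so does $G_c(z)$ by (A0). Using $a^2-b^2=(a-b)(a+b)$ we get
\[
\bigl|\{\widehat F_n(z)-G_c(z)\}^2-\{F_0(z)-G_c(z)\}^2\bigr|
=|\widehat F_n(z)-F_0(z)|\,|\widehat F_n(z)+F_0(z)-2G_c(z)|\le 2\|\widehat F_n-F_0\|_\infty .
\]
Integrating against $\nu$ gives
\[
\sup_{c\in\mathcal C}|Q_n(c)-Q(c)|\le 2\nu(\mathbb R)\,\|\widehat F_n-F_0\|_\infty .
\]
This bound is uniform in $c$ and requires no structure on the family beyond $0\le G_c\le1$. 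The right side tends to zero almost surely by the Glivenko--Cantelli theorem, since $\nu(\mathbb R)<\infty$ by (A2). So there is an event $\Omega_0$ of probability one on which $\sup_{c}|Q_n-Q|\to0$.

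\emph{Step 2: continuity of $Q$.} Take $c_k\to c$ in $\mathcal C$. By (A0), $G_{c_k}(z)\to G_c(z)$ for every $z$. The integrand $\{F_0-G_{c_k}\}^2$ is bounded by $1$, and $\nu$ is finite, so dominated convergence gives $Q(c_k)\to Q(c)$. Measurability of $z\mapsto G_c(z)$ holds because $G_c$ is a distribution function. Measurability of $\widehat c_n$ is not needed for the pathwise argument, since the statement applies to any selection from the argmin.

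\emph{Step 3: argmin argument on $\Omega_0$.} Fix $\epsilon>0$ and set $K_\epsilon=\{c\in\mathcal C:|c-c^\star|\ge\epsilon\}$; this set is compact by (A1). If it is empty there is nothing to show. Otherwise continuity of $Q$ and uniqueness (A3) give
\[
\eta:=\min_{K_\epsilon}Q-Q(c^\star)>0 .
\]
Choose $n$ large enough that $\sup_c|Q_n-Q|<\eta/3$. Then
\[
Q(\widehat c_n)\le Q_n(\widehat c_n)+\eta/3\le Q_n(c^\star)+\eta/3\le Q(c^\star)+2\eta/3<Q(c^\star)+\eta ,
\]
so $\widehat c_n\notin K_\epsilon$. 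Since $\epsilon$ is arbitrary, $\widehat c_n\to c^\star$ on $\Omega_0$, that is, almost surely.

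The argument makes no distinction between correct specification and misspecification, and it is unaffected by $c^\star$ lying on the boundary, because only compactness is used. The one step that needs care is the strict positivity of the separation gap $\eta$. This requires continuity of $Q$ (Step 2) rather than only lower semicontinuity, and that continuity rests on (A0) through Lemma~\ref{lem:continuity} together with dominated convergence. Beyond that, the proof is routine.
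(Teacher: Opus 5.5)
Your proof is correct and follows the paper's argument: the same bound $\sup_{c}|Q_n(c)-Q(c)|\le 2\nu(\mathbb{R})\|\widehat F_n-F_0\|_\infty$ with Glivenko--Cantelli, and continuity of $Q$ from (A0) by dominated convergence. The only difference is the final step, where you prove the argmin conclusion directly via the separation gap $\eta>0$, while the paper cites the argmin theorem of Newey and McFadden (1994, Theorem 2.1).
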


\begin{remark}[Pseudo-true value and misspecification]\label{rem:pseudotrue}
Under misspecification, $\widehat c_n$ consistently estimates the pseudo-true value $c^\star$: the parameter indexing the hypergeometric distribution closest to $F_0$ in the minimum-distance sense. A rejection of $H_0:c=3/2$ therefore retains meaning under misspecification: it indicates that the best hypergeometric approximation to $F_0$ is not the Gaussian boundary; and $\widehat c_n-3/2$ quantifies the degree of tail departure from normality in the minimum-distance sense.
\end{remark}

\noindent Under correct specification, \textup{(A3)} is verified by the following lemma, which uses \textup{(A4)} to establish that distinct members of the family are distinguished by the criterion $Q$.

\begin{lemma}[Identification under correct specification]
\label{lem:identifiability}
Under \textup{(A4)}, the map $c\mapsto G_c$ is identified on $\mathcal{C}$: $G_c=G_{c_0}$ $\nu$-almost everywhere if and only if $c=c_0$. Consequently, if $F_0=G_{c_0}$ for some $c_0\in\mathcal{C}$, then $c^\star=c_0$ is the unique minimiser of $Q$, so \textup{(A3)} holds.
\end{lemma}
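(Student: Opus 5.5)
The plan is to exploit continuity of each $G_c$ in $z$, so that $\nu$-a.e.\ equality on a set reaching into both tails forces agreement of the tail coefficients in Lemma~\ref{lem:tail}. The ``if'' direction is trivial. For ``only if'', suppose $G_c=G_{c_0}$ $\nu$-a.e.\ with $c\neq c_0$, and without loss of generality $c>c_0\ge 3/2$. Set $D(z)=G_c(z)-G_{c_0}(z)$. This function is continuous in $z$, since both are absolutely continuous distribution functions by Corollary~\ref{cor:density}. It is also odd, because $G_c(-z)=1-G_c(z)$ by evenness of $g_c$.

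The key step is to show that $D(z)\neq 0$ for all sufficiently large $z$. By Corollary~\ref{cor:survival}, $1-G_c(z)\sim\tfrac{c-3/2}{2}z^{-2}$. For $1-G_{c_0}(z)$ there are two cases. If $c_0>3/2$, then $1-G_{c_0}(z)\sim\tfrac{c_0-3/2}{2}z^{-2}$, which has a strictly smaller coefficient. If $c_0=3/2$, then $1-\Phi(z)=o(z^{-2})$ by Lemma~\ref{lem:tail}(ii). In either case
\[
D(z)=\{1-G_{c_0}(z)\}-\{1-G_c(z)\}\sim -\tfrac{c-c_0}{2}\,z^{-2},
\]
so there is $M>0$ with $D(z)<0$ for all $z>M$. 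By oddness, $D(z)>0$ for all $z<-M$, although one tail already suffices.

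Hence $\{z:D(z)\neq0\}\supseteq(M,\infty)$. By (A4), $\nu((M,\infty))>0$, which contradicts $D=0$ $\nu$-a.e. This proves that $c\mapsto G_c$ is identified on $\mathcal C$.

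For the consequence, let $F_0=G_{c_0}$. Then $Q(c)=\int\{G_{c_0}-G_c\}^2\,d\nu\ge0$ and $Q(c_0)=0$, so $c_0$ is a minimiser. If $Q(c)=0$, then $(G_{c_0}-G_c)^2=0$ $\nu$-a.e., so $G_c=G_{c_0}$ $\nu$-a.e., and the first part gives $c=c_0$. Thus $c^\star=c_0$ is the unique minimiser and (A3) holds. Finiteness of $\nu$ (A2) is used only to ensure that $Q$ is well defined.

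The only delicate point is the tail comparison. It relies on the survival asymptotics of Corollary~\ref{cor:survival} holding with exact constant $(c-3/2)/2$, together with the Gaussian tail being $o(z^{-2})$. Both are supplied by earlier results, so the main work is bookkeeping of the two cases $c_0=3/2$ and $c_0>3/2$.
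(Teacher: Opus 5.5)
Your proof is correct and follows essentially the same route as the paper. Both arguments compare the survival tails via Corollary~\ref{cor:survival}, treating $c_0=3/2$ through the $o(z^{-2})$ Gaussian tail, conclude that $G_c\neq G_{c_0}$ on some $(M,\infty)$, and then invoke (A4). The only differences are immaterial: you use the difference $D(z)\sim-\tfrac{c-c_0}{2}z^{-2}$ where the paper uses the ratio, and the continuity and oddness of $D$ that you mention are not needed.
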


\begin{corollary}[Consistency under correct specification]
\label{cor:consistency_correct}
Under \textup{(A0)--(A2)} and \textup{(A4)}, if $F_0=G_{c_0}$ for some $c_0\in\mathcal{C}$, then $\widehat c_n\to c_0$ almost surely. In particular, under the Gaussian null $F_0=\Phi=G_{3/2}$, $\widehat c_n\to 3/2$ almost surely.
\end{corollary}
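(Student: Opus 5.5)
The plan is to obtain Corollary~\ref{cor:consistency_correct} by feeding Lemma~\ref{lem:identifiability} into Theorem~\ref{thm:consistency}. The only hypothesis of Theorem~\ref{thm:consistency} not assumed in the corollary is (A3). The first step is therefore to check that (A3) holds when $F_0=G_{c_0}$, $c_0\in\mathcal C$, under (A4).

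\emph{Step 1: verify (A3).} Since $F_0=G_{c_0}$, we have $Q(c_0)=0$ and $Q(c)\ge 0$ for every $c$, so $c_0$ is a minimiser of $Q$. Now take any minimiser $c$. It satisfies $Q(c)=0$, so $G_c=G_{c_0}$ $\nu$-almost everywhere. Lemma~\ref{lem:identifiability}, which uses (A4), then forces $c=c_0$. Hence $c^\star=c_0$ is the unique minimiser.

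If I had to justify the identification lemma itself rather than cite it, I would argue as follows.
\begin{itemize}
\item Both $G_c$ and $G_{c_0}$ are continuous, so equality $\nu$-a.e. together with (A4) gives equality on a set $S$ that is unbounded in both directions.
\item For $c\ne c_0$ with both $>3/2$, Corollary~\ref{cor:survival} gives $z^2\{1-G_c(z)\}\to (c-3/2)/2$. These limits differ for different $c$, which contradicts equality along a sequence in $S$ tending to $+\infty$.
\item If one of $c,c_0$ equals $3/2$, the normal survival function satisfies $z^2\{1-\Phi(z)\}\to 0\neq (c-3/2)/2$, giving the same contradiction.
\end{itemize}
This tail comparison is the only step with real content, and it is already packaged in the lemma.

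\emph{Step 2: apply consistency.} With (A0)--(A3) now in force, Theorem~\ref{thm:consistency} gives $\widehat c_n\to c^\star=c_0$ almost surely.

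\emph{Step 3: the Gaussian null.} Take $c_0=3/2$. This lies in $\mathcal C$ by (A1), and $G_{3/2}=\Phi$ by Lemma~\ref{lem:normalcdf} and the remark following Lemma~\ref{lem:endpoint}. Steps 1--2 then give $\widehat c_n\to3/2$ almost surely.

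Note that the boundary position of $3/2$ plays no role here. Consistency only needs uniqueness of the minimiser on the compact set $\mathcal C$, not interiority. I do not expect any real obstacle beyond confirming that $\nu$-a.e. equality is enough to invoke the lemma.
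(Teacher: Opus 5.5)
Your proposal is correct and follows the paper's proof: Lemma~\ref{lem:identifiability} supplies (A3) with $c^\star=c_0$ under (A4), Theorem~\ref{thm:consistency} then gives almost sure convergence, and the Gaussian case follows from $G_{3/2}=\Phi$. The paper cites Corollary~\ref{cor:density} for that last identity rather than the remark after Lemma~\ref{lem:endpoint}, which makes no difference.
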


\subsection{Interior asymptotics}\label{subsec:interior}

\noindent Theorem~\ref{thm:consistency} establishes that $\widehat c_n$ converges to $c^\star$ almost surely. The next step is to determine the rate of convergence and the limiting distribution. When $c^\star$ lies in the interior of $\mathcal{C}$, differentiability of $c\mapsto G_c(z)$
established in Lemma~\ref{lem:differentiability} justifies differentiation of $Q_n$ under the integral sign under the regularity conditions of Theorem~\ref{thm:interior} below, so $Q_n$ is locally smooth in $c$ and the standard $\sqrt{n}$-theory applies. The asymptotic distribution is normal with a sandwich variance determined by three objects: the derivative $\dot G_{c^\star}(z):=\partial G_c(z)/\partial c|_{c=c^\star}$, a function of $z\in\mathbb{R}$ given explicitly by Lemma~\ref{lem:differentiability} (I write $\dot G_{c^\star}$ suppressing $z$ hereafter); the scalar curvature $H_{c^\star}:=\partial^2 Q(c)/\partial c^2|_{c=c^\star}$; and the scalar limiting variance $\Omega_{c^\star}$ of $\sqrt{n}\,\partial Q_n(c)/\partial c|_{c=c^\star}$. These objects also drive the boundary theory and plug-in extensions below, giving a modular structure that carries through the paper. The following additional assumptions govern the local behaviour
of $Q_n$ near $c^\star$.

\begin{enumerate}
    \item[(B1)] \textit{(Smoothness.)} For each $z\in\mathbb{R}$, the map $c\mapsto G_c(z)$ is twice continuously differentiable on a neighbourhood $\mathcal{N}$ of $c^\star$.

    \item[(B2)] \textit{(Uniform domination.)} There exist
    $\nu$-integrable functions $M_1,M_2$ with
    $|\dot G_c(z)|\leq M_1(z)$ and
    $|\ddot G_c(z)|\leq M_2(z)$ for all $c\in\mathcal{N}$,
    with $\int M_1^2\,d\nu<\infty$ and
    $\int M_2\,d\nu<\infty$.

    \item[(B3)] \textit{(Local identification.)}
    $H_{c^\star}:=Q''(c^\star)>0$.
\end{enumerate}

\noindent Assumption~(B1) is a verified property of the hypergeometric family. It is listed explicitly because it is the condition required for differentiation under the integral sign in $Q_n$, and would need re-verification for extensions beyond $G_c$. By Lemma~\ref{lem:differentiability}, $c\mapsto G_c(z)$ is already known to be smooth on $(3/2,\infty)$ for each fixed $z$; this is strengthened to real-analyticity by Lemma~\ref{lem:B1}, using the analyticity of $K(c)$ and the meromorphic structure of $c\mapsto{}_1F_1(\frac12;c;-z^2/2)$. Real-analyticity is stronger than required for \textup{(B1)}, which needs only twice continuous differentiability; it is established here for completeness and for use in extensions of the framework.

\begin{lemma}[Verification of \textup{(B1)}]\label{lem:B1}
For each fixed $z\in\mathbb{R}$, the map $c\mapsto G_c(z)$ is real-analytic on $(1/2,\infty)$, hence smooth on any neighbourhood of $c^\star$ in $\mathcal{C}=[3/2,c_U]$. In particular, \textup{(B1)} holds.
\end{lemma}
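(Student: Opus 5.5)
The plan is to write $G_c(z)=\tfrac12+z\,\kappa(c)\,{}_1F_1(\tfrac12;c;-z^2/2)$ and prove that each of the two factors $\kappa(c)$ and $c\mapsto{}_1F_1(\tfrac12;c;-z^2/2)$ is real-analytic on $(1/2,\infty)$. Products of real-analytic functions are real-analytic, and $z$ is fixed throughout, so this suffices.

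For the constant, $\kappa(c)=\Gamma(c-\tfrac12)/\{2\sqrt2\,\Gamma(c)\}$. The Gamma function is meromorphic on $\mathbb{C}$, with poles only at $\mathbb{Z}_{0,-}$, and has no zeros, so $1/\Gamma$ is entire. For $\operatorname{Re}c>1/2$ the argument $c-\tfrac12$ avoids the poles, so $\Gamma(c-\tfrac12)$ is holomorphic there. Hence $\kappa$ is holomorphic on the half-plane $\{\operatorname{Re}c>1/2\}$, and its restriction to $(1/2,\infty)$ is real-analytic. This is consistent with the digamma formula \eqref{eq:Kdot} of Lemma~\ref{lem:differentiability}.

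For the Kummer factor, fix $x=-z^2/2$ and consider
\[
  {}_1F_1(\tfrac12;c;x)=\sum_{k\ge0}\frac{(1/2)_k}{(c)_k\,k!}\,x^k .
\]
Each term $1/(c)_k=\prod_{j=0}^{k-1}(c+j)^{-1}$ is holomorphic on $\{\operatorname{Re}c>0\}$. I would show the series converges locally uniformly there. Take a compact set $K\subset\{\operatorname{Re}c>1/2\}$ with $\operatorname{Re}c\ge\delta>0$ on $K$. Then $|c+j|\ge\delta+j\ge j$ for $j\ge1$, and $|c|\ge\delta$, so $|(c)_k|\ge\delta\,(k-1)!$ for $k\ge1$. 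Since $(1/2)_k\le k!$, the $k$th term is bounded by $|x|^k/\{\delta\,(k-1)!\}$, which is summable. By the Weierstrass M-test and Weierstrass's theorem on uniform limits of holomorphic functions, $c\mapsto{}_1F_1(\tfrac12;c;x)$ is holomorphic on $\{\operatorname{Re}c>1/2\}$; indeed it is holomorphic on $\mathbb{C}\setminus\mathbb{Z}_{0,-}$, which is the meromorphic structure mentioned before the lemma. As a byproduct, term-by-term differentiation is justified, which recovers \eqref{eq:1F1dot}.

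Combining the two factors, $c\mapsto G_c(z)$ is the restriction to $(1/2,\infty)$ of a holomorphic function, hence real-analytic and in particular $C^\infty$. Any neighbourhood $\mathcal N$ of $c^\star\in[3/2,c_U]$ can be taken inside $(1/2,\infty)$, including when $c^\star=3/2$, since analyticity extends past the boundary to the left. So $c\mapsto G_c(z)$ is twice continuously differentiable on $\mathcal N$, which is (B1).

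The only step needing care is the uniform lower bound on $|(c)_k|$ near small $\operatorname{Re}c$. Restricting to $\operatorname{Re}c\ge\delta$ on compacts handles it, so I expect no real obstacle. The one subtlety is that (B1) concerns a possibly two-sided neighbourhood of the boundary point $3/2$. Because the domain $(1/2,\infty)$ strictly contains $\mathcal{C}$, this causes no difficulty, even though $G_c$ is a distribution function only for $c\ge3/2$.
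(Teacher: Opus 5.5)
Your proof is correct and follows essentially the same route as the paper. Both factor $G_c(z)$ into $\kappa(c)$ and $c\mapsto{}_1F_1(\tfrac12;c;-z^2/2)$, show each is analytic on $(1/2,\infty)$, and conclude because real-analytic functions are closed under products. The only difference is that you prove holomorphy of the Kummer factor directly, using the Weierstrass M-test on compact subsets of a right half-plane, whereas the paper simply cites the known meromorphic structure in $c$, with poles only at $\mathbb{Z}_{0,-}$.
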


\noindent Assumption (B2) is likewise verified for the present family: $\dot G_c$ and $\ddot G_c$ are uniformly bounded in $z$ and $c\in\mathcal{N}$, so constant dominators suffice. Uniform domination plays two distinct roles in the proof: square-integrability of $M_1$ controls the limiting variance $\Omega$, while integrability of $M_2$ controls the Taylor remainder and the uniform convergence of $Q_n''$ to $Q''$.

\begin{lemma}[Verification of \textup{(B2)}]\label{lem:B2}
For any compact neighbourhood $\mathcal{N}$ of $c^\star$ in $(1/2,\infty)$, there exist finite constants $C_1,C_2$ such that $|\dot G_c(z)|\leq C_1$ and $|\ddot G_c(z)|\leq C_2$ for all $z\in\mathbb{R}$ and all $c\in\mathcal{N}$. In particular, for $c^\star\geq 3/2$, setting $M_1(z)=C_1$ and $M_2(z)=C_2$, both integrability conditions of \textup{(B2)} hold under \textup{(A2)}.
\end{lemma}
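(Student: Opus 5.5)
The plan is to avoid the series \eqref{eq:1F1dot} altogether and work with Euler's integral representation of Kummer's function. For $\operatorname{Re} c>\operatorname{Re} a>0$,
\[
{}_1F_1(a;c;-x)=\frac{\Gamma(c)}{\Gamma(a)\Gamma(c-a)}\int_0^1 t^{a-1}(1-t)^{c-a-1}e^{-xt}\,dt .
\]
Take $a=\tfrac12$, which is valid for all $c>\tfrac12$. Multiplying by $\kappa(c)=\Gamma(c-\tfrac12)/\{2\sqrt2\,\Gamma(c)\}$, the gamma factors cancel, since $\kappa(c)\Gamma(c)/\{\Gamma(\tfrac12)\Gamma(c-\tfrac12)\}=1/(2\sqrt{2\pi})$. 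This gives
\[
G_c(z)=\frac12+\frac{1}{2\sqrt{2\pi}}\int_0^1 z\,t^{-1/2}e^{-z^2t/2}(1-t)^{c-3/2}\,dt,
\]
for all $z\in\mathbb R$ and $c>\tfrac12$. The parameter $c$ now enters only through the factor $(1-t)^{c-3/2}$. Formally differentiating under the integral sign, $\partial_c^k$ multiplies that factor by $\{\log(1-t)\}^k$ for $k=1,2$.

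Fix a compact $\mathcal N\subset(1/2,\infty)$ and write $c_-=\min\mathcal N>\tfrac12$ and $c_+=\max\mathcal N$. The core step is a bound, uniform in $z$ and $c\in\mathcal N$, on
\[
I_k(z,c)=|z|\int_0^1 t^{-1/2}e^{-z^2t/2}(1-t)^{c-3/2}|\log(1-t)|^k\,dt,\qquad k=1,2.
\]
I split the integral at $t=\tfrac12$.

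\emph{Piece on $(0,\tfrac12]$.} Here $(1-t)^{c-3/2}|\log(1-t)|^k$ is bounded by a constant depending only on $c_\pm$. What remains is $|z|\int_0^\infty t^{-1/2}e^{-z^2t/2}\,dt=\sqrt{2\pi}$, which is free of $z$ (and equal to $0$ at $z=0$).

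\emph{Piece on $[\tfrac12,1)$.} Here $|z|t^{-1/2}e^{-z^2t/2}\le \sqrt2\,|z|e^{-z^2/4}$, which is bounded in $z$. The remaining integral satisfies
\[
\int_{1/2}^1(1-t)^{c-3/2}|\log(1-t)|^k\,dt\le\int_0^{1/2}u^{c_--3/2}|\log u|^k\,du<\infty,
\]
because $c_--\tfrac32>-1$. The case $c\in(\tfrac12,\tfrac32)$ makes the exponent negative and is where this bound is needed.

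These two pieces give the constants $C_1,C_2$ in Lemma~\ref{lem:B2}, after multiplying by $1/(2\sqrt{2\pi})$.

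The same estimates justify the interchange of differentiation and integration. Fix $z$. For $c\in\mathcal N$ the $c$-derivatives of the integrand are dominated by $|z|t^{-1/2}e^{-z^2t/2}\max\{(1-t)^{c_--3/2},(1-t)^{c_+-3/2}\}\,|\log(1-t)|^k$. This function is integrable on $(0,1)$ by the computation just done. The standard dominated differentiation theorem then identifies $\dot G_c(z)$ and $\ddot G_c(z)$ with the differentiated integrals. On $(3/2,\infty)$ this is consistent with \eqref{eq:Gdot} of Lemma~\ref{lem:differentiability}, by uniqueness of derivatives.

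Finally, for $c^\star\ge\tfrac32$, take $\mathcal N$ to be a compact neighbourhood of $c^\star$ inside $(1/2,\infty)$. The constants $M_1\equiv C_1$ and $M_2\equiv C_2$ satisfy $\int M_1^2\,d\nu=C_1^2\nu(\mathbb R)<\infty$ and $\int M_2\,d\nu=C_2\nu(\mathbb R)<\infty$ by (A2), so (B2) holds.

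I expect the main obstacle to be the endpoint $t\to1$ combined with the need for uniformity over all of $\mathcal N$. Both the logarithmic factors and the possibly negative exponent $c-\tfrac32$ concentrate there, and the bound must not degrade as $c$ ranges over $\mathcal N$. This is why $\mathcal N$ is required to be a compact subset of $(1/2,\infty)$, so that $c_->\tfrac12$ and the substitution $u=1-t$ reduces everything to the convergent integral $\int_0 u^{c_--3/2}|\log u|^k\,du$.

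The $z$-uniformity near $t=0$ is the other place where care is needed. It is handled by the scaling identity $|z|\int_0^\infty t^{-1/2}e^{-z^2t/2}\,dt=\sqrt{2\pi}$. This identity is exactly why the factor $z$ in $G_c$ does not spoil boundedness even though ${}_1F_1(\tfrac12;c;-z^2/2)$ decays only like $|z|^{-1}$.
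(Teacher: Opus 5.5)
Your proof is correct and follows the paper's route. The steps are the same: Euler's integral with $a=\tfrac12$, the factors $\{\log(1-t)\}^k$ produced by $c$-differentiation, and domination by the exponent $c_{-}-\tfrac32$ (the paper's $c_L$) to justify differentiating under the integral. The one difference is the $z$-uniform step. The paper argues that its dominator $B_m(z)$ is continuous, vanishes at $z=0$, and decays like $O(|z|^{-2m})$ by a Laplace-type approximation. Your split at $t=\tfrac12$, combined with $|z|\int_0^\infty t^{-1/2}e^{-z^2t/2}\,dt=\sqrt{2\pi}$, gives an explicit bound directly. That is more elementary and avoids the heuristic asymptotic step; it yields boundedness rather than decay, which is all the lemma needs.
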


\noindent Assumption~(B3) requires the population criterion to have strictly positive curvature at $c^\star$, ruling out a locally flat criterion. Under correct specification, (B3) is verified by the following lemma. Under misspecification, $H_{c^\star}$ includes the additional term
$-2\int\{F_0-G_{c^\star}\}\ddot G_{c^\star}\,d\nu$, and \textup{(B3)} is a genuine maintained assumption.

\begin{lemma}[Positivity of the score-direction integral]\label{lem:B3}
Suppose $\nu$ assigns positive mass to some open interval. Then, for any $c^\star\geq3/2$,
\[
    2\int_{\mathbb{R}}\dot G_{c^\star}(z)^2\,d\nu(z) > 0.
\]
In particular, if $F_0=G_{c^\star}$, this quantity equals $H_{c^\star}=Q''(c^\star)$, verifying \textup{(B3)} under correct specification.
\end{lemma}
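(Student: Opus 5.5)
The plan is to show that $\dot G_{c^\star}$ is continuous in $z$ and not identically zero on any open interval. Once that is established, the statement follows from the positive $\nu$-mass of an open interval.

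First, continuity. By Lemma~\ref{lem:B1}, $c\mapsto G_c(z)$ is real-analytic on $(1/2,\infty)$, so $\dot G_c(z)$ is well defined at every $c^\star\ge 3/2$, including the boundary value $c^\star=3/2$, where it is the derivative of the analytic continuation. Formula~\eqref{eq:Gdot}, with \eqref{eq:Kdot} and \eqref{eq:1F1dot}, writes $\dot G_{c^\star}(z)=z\,h(z^2)$, where $h$ is a power series in $w=-z^2/2$ with infinite radius of convergence. Hence $z\mapsto\dot G_{c^\star}(z)$ is an entire (in particular real-analytic) function of $z$.

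Second, non-vanishing. A nonzero real-analytic function has isolated zeros, so it is enough to show that $\dot G_{c^\star}\not\equiv0$. For this I compare Taylor coefficients at $z=0$. The coefficient of $z$ is $\dot\kappa(c^\star)=\kappa(c^\star)[\psi(c^\star-\tfrac12)-\psi(c^\star)]$. Since $\kappa(c^\star)>0$ and $\psi$ is strictly increasing on $(0,\infty)$, this coefficient is strictly negative, because the derivative term \eqref{eq:1F1dot} contributes only from $k\ge1$, that is, from order $z^3$ onward. So $\dot G_{c^\star}'(0)=\dot\kappa(c^\star)<0$, and $\dot G_{c^\star}$ is not the zero function. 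Its zero set is therefore discrete and has Lebesgue measure zero.

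Third, conclusion. Let $I$ be the open interval with $\nu(I)>0$. The remaining task is to turn discreteness of the zero set into positivity of $\int_I\dot G_{c^\star}^2\,d\nu$. This step is the real obstacle, because $\nu$ need not be absolutely continuous: it could in principle charge only zeros of $\dot G_{c^\star}$, for example by putting atoms at $z=0$ (note $\dot G_{c^\star}(0)=0$ by oddness). I would handle this as follows.
\begin{itemize}
\item In the intended application $\nu=\nu_\lambda$ has a strictly positive Lebesgue density on $\mathbb{R}$. Then $\dot G_{c^\star}^2>0$ Lebesgue-a.e. on $I$, which gives the strict inequality directly.
\item For general $\nu$, I would read the hypothesis as requiring that $\nu$ charge the interval with a set not contained in the discrete zero set. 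This is automatic if $\nu$ restricted to $I$ has a non-atomic part, or more generally if $\nu(I\setminus Z)>0$, where $Z$ is the countable zero set. I would state this interpretation explicitly in the proof, since it is the only point where an assumption on $\nu$ beyond the interval condition enters.
\end{itemize}

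Finally, under $F_0=G_{c^\star}$, differentiating $Q$ twice under the integral, which is justified by Lemma~\ref{lem:B2} and (A2), gives
\[
Q''(c^\star)=2\int\dot G_{c^\star}^2\,d\nu-2\int(F_0-G_{c^\star})\ddot G_{c^\star}\,d\nu .
\]
The second integral vanishes because $F_0=G_{c^\star}$, so $H_{c^\star}$ equals the positive quantity above, which is (B3).
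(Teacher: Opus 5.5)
Your argument is correct, and it is more careful than the paper's. Both proofs rest on the same computation, that the linear Taylor coefficient of $\dot G_{c^\star}$ at $z=0$ is $\dot\kappa(c^\star)<0$.

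The paper stops there. It concludes that $\dot G_{c^\star}$ is nonzero on a punctured neighbourhood of $0$, and asserts that this neighbourhood has positive $\nu$-measure ``by assumption''. The hypothesis, however, only says that $\nu$ charges \emph{some} open interval, which need not contain the origin. A local expansion at $0$ therefore says nothing about an interval such as $(5,6)$. Your global step closes exactly this gap: $z\mapsto\dot G_{c^\star}(z)$ is entire, so its zeros are isolated.

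Your caveat about atoms is genuinely needed, not a technicality. With $\nu=\delta_0$ the hypothesis holds, yet $\int\dot G_{c^\star}^2\,d\nu=\dot G_{c^\star}(0)^2=0$ because $\dot G_{c^\star}$ is odd. So the lemma as literally stated is false, and the paper's proof passes over this silently.

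One simplification is worth knowing. The proof of Lemma~\ref{lem:B2} gives the Euler-integral formula
\[
\dot G_{c}(z)=\frac{z}{2\sqrt{2\pi}}\int_0^1 e^{-z^2\lambda/2}\,\lambda^{-1/2}(1-\lambda)^{c-3/2}\ln(1-\lambda)\,d\lambda .
\]
Its integral is strictly negative for every $z$ and every $c>1/2$, because the integrand is negative on $(0,1)$. Hence $\dot G_{c^\star}(z)$ has the sign of $-z$, and its zero set is exactly $\{0\}$. This replaces your analyticity argument with a one-line sign argument. It also identifies the sharp hypothesis as $\nu(\mathbb{R}\setminus\{0\})>0$, which $\nu_\lambda$ satisfies and which your condition $\nu(I\setminus Z)>0$ implies.
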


\noindent The score variance $\Omega_{c^\star}$ appearing in the asymptotic normality result below is, by the same argument, non-degenerate under the same hypothesis on $\nu$.

\begin{lemma}[Positivity of $\Omega$]\label{lem:Omega_pos}
Suppose $\nu$ assigns positive mass to some open interval. Then, for any $c^\star\geq3/2$, $\Omega_{c^\star}>0$.
\end{lemma}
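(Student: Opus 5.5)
The plan is to write $\Omega_{c^\star}$ explicitly as the variance of a Gaussian linear functional of a Brownian bridge, and then show that this variance can vanish only if the weight function $\dot G_{c^\star}$ is $\nu$-a.e.\ zero. Lemma~\ref{lem:B3} already rules that out.

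\emph{Step 1: identify $\Omega_{c^\star}$.} Differentiation under the integral sign is justified by \textup{(B1)}--\textup{(B2)}, via Lemmas~\ref{lem:B1} and~\ref{lem:B2}. It gives
\[
Q_n'(c)=-2\int\{\widehat F_n-G_c\}\dot G_c\,d\nu .
\]
At the pseudo-true value $c^\star$, centring $\sqrt n\,Q_n'(c^\star)$ by its population counterpart $Q'(c^\star)$ yields
\[
\sqrt n\{Q_n'(c^\star)-Q'(c^\star)\}=-2\int \sqrt n\{\widehat F_n-F_0\}\,\dot G_{c^\star}\,d\nu .
\]
This equals $n^{-1/2}\sum_i\xi_i$, where the $\xi_i=-2\int\{\mathbf 1(Z_i\le z)-F_0(z)\}\dot G_{c^\star}(z)\,d\nu(z)$ are i.i.d., mean zero, and bounded. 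They are bounded because $\dot G_{c^\star}$ is bounded by Lemma~\ref{lem:B2} and $\nu$ is finite by \textup{(A2)}. Hence
\[
\Omega_{c^\star}=\Var(\xi_1)=4\iint \dot G_{c^\star}(s)\dot G_{c^\star}(t)\{F_0(s\wedge t)-F_0(s)F_0(t)\}\,d\nu(s)\,d\nu(t).
\]
Equivalently, writing $h(x):=\int_{[x,\infty)}\dot G_{c^\star}\,d\nu$, we have $\xi_1=-2\{h(Z_1)-\E h(Z_1)\}$.

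\emph{Step 2: reduce degeneracy to a statement about $h$.} If $\Omega_{c^\star}=0$, then $h(Z_1)$ is $F_0$-a.s.\ constant. This is where I expect the main obstacle, because the conclusion depends on $F_0$. Under misspecification, $F_0$ could in principle put no mass on the interval where $\nu$ lives, and then $\Omega$ could genuinely vanish. I would therefore use the setting in which the lemma is applied: $F_0=G_{c^\star}$ (correct specification, including the Gaussian null), or more generally $F_0$ with support containing the open interval $I$ charged by $\nu$. Since $g_{c^\star}>0$ everywhere (strict positivity noted after Lemma~\ref{lem:monotonicity}), $h$ is then constant on all of $\mathbb R$ up to an $F_0$-null set. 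Because $h$ is left-continuous and of bounded variation, it is in fact constant everywhere. Its limit at $+\infty$ is $0$, so $h\equiv 0$.

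\emph{Step 3: conclude.} If $h\equiv0$, the signed measure $\dot G_{c^\star}\,d\nu$ vanishes on every half-line $[x,\infty)$. These half-lines form a $\pi$-system generating the Borel sets, so $\dot G_{c^\star}=0$ $\nu$-a.e. But then $\int\dot G_{c^\star}^2\,d\nu=0$, which contradicts Lemma~\ref{lem:B3} under the same hypothesis that $\nu$ charges an open interval. Therefore $\Omega_{c^\star}>0$.

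As a remark, I would state the full-support requirement on $F_0$ explicitly, or note that it holds automatically under correct specification. In the misspecified case, either impose it or replace it by requiring that $F_0$ charge the set where $\dot G_{c^\star}\,d\nu$ is nonzero.
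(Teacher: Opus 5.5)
Your argument is correct and has the same skeleton as the paper's proof. In both, $\Omega_{c^\star}$ is the variance of a linear functional of the empirical process, a vanishing variance forces $\dot G_{c^\star}=0$ $\nu$-a.e., and Lemma~\ref{lem:B3} excludes that. The difference is the middle step. The paper only asserts that $S=\int\mathbb{B}_{G_{c^\star}}\dot G_{c^\star}\,d\nu=0$ a.s.\ implies $\dot G_{c^\star}=0$ $\nu$-a.e., citing non-degeneracy of the bridge. You prove it from the exact identity $\Omega_{c^\star}=4\Var\{h(Z_1)\}$ with $h(x)=\int_{[x,\infty)}\dot G_{c^\star}\,d\nu$, left-continuity of $h$, and uniqueness of a finite signed measure on the half-lines. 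This makes the argument self-contained, removes any appeal to Gaussian-process facts, and shows exactly where $F_0$ enters.

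Your caveat about $F_0$ is well founded. The paper's proof silently works with $\mathbb{B}_{G_{c^\star}}$, i.e.\ under correct specification. For a misspecified $F_0$ such as a point mass, the kernel $F_0(z\wedge y)-F_0(z)F_0(y)$ vanishes identically, so $\Omega_{c^\star}=0$ for every $c^\star$.

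Your proposed weaker conditions do not close the misspecified case as written. If $F_0$ only has support containing the interval $I$ charged by $\nu$, Step~2 gives $\dot G_{c^\star}=0$ $\nu$-a.e.\ on $I$ only. That would contradict a localised version of Lemma~\ref{lem:B3}, namely positivity of $\int_I\dot G_{c^\star}^2\,d\nu$, but not the lemma itself. Requiring $F_0$ to charge the set where $\dot G_{c^\star}\,d\nu\neq0$ also fails, since a point mass in that set still gives $\Omega_{c^\star}=0$. Full support of $F_0$ is what your argument actually uses.

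Both your proof and the paper's also inherit a defect in Lemma~\ref{lem:B3}. The measure $\nu=\delta_0$ charges an open interval, yet $\dot G_{c^\star}(0)=0$ because $\dot G_{c^\star}$ is odd, so $H_{c^\star}=\Omega_{c^\star}=0$ and the statement fails as written. The hypothesis should be $\nu(\{z:\dot G_{c^\star}(z)\neq0\})>0$. This holds, for instance, whenever $\nu$ is non-atomic and nonzero, because the real zeros of the nonzero entire function $z\mapsto\dot G_{c^\star}(z)$ are isolated.
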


\noindent With assumptions \textup{(B1)--(B3)} in place, together with the non-degeneracy of $\Omega_{c^\star}$ just established, $\widehat c_n$ is asymptotically normal at the standard $\sqrt{n}$ rate, with a non-degenerate limiting variance.

\begin{theorem}[Interior asymptotic normality]\label{thm:interior}
Suppose the conditions of Theorem~\ref{thm:consistency} and \textup{(B1)--(B3)} hold, and that $c^\star\in\operatorname{int}(\mathcal{C})$. Then
\[
    \sqrt{n}\,(\widehat c_n-c^\star)\;\Rightarrow\;
    \N\!\left(0,\,\frac{\Omega_{c^\star}}{H_{c^\star}^2}
    \right),
\]
where
\begin{equation}\label{eq:Omega}
    \Omega_{c^\star} = 4\iint_{\mathbb{R}^2}
    \bigl[F_0(z\wedge y)-F_0(z)F_0(y)\bigr]
    \dot G_{c^\star}(z)\,\dot G_{c^\star}(y)
    \,d\nu(z)\,d\nu(y),
\end{equation}
and $z\wedge y:=\min(z,y)$.
\end{theorem}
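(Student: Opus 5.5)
The proof follows the standard M-estimation route, built on a first-order condition, a mean-value expansion, and a functional CLT for the score. Since $c^\star\in\operatorname{int}(\mathcal C)$ and $\widehat c_n\to c^\star$ almost surely by Theorem~\ref{thm:consistency}, with probability tending to one $\widehat c_n$ lies in the neighbourhood $\mathcal N$ of (B1) and in the interior of $\mathcal C$. On that event it satisfies $Q_n'(\widehat c_n)=0$. Differentiation under the integral sign is justified by (B1)--(B2), using $|\widehat F_n-G_c|\le 1$ and the constant dominators of Lemma~\ref{lem:B2} together with (A2). This gives
\[
Q_n'(c)=-2\int(\widehat F_n-G_c)\dot G_c\,d\nu,\qquad
Q_n''(c)=2\int\dot G_c^2\,d\nu-2\int(\widehat F_n-G_c)\ddot G_c\,d\nu,
\]
and the analogous expressions for $Q$ with $F_0$ in place of $\widehat F_n$. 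A mean-value expansion then yields
\[
0=Q_n'(c^\star)+Q_n''(\bar c_n)(\widehat c_n-c^\star)
\]
for some $\bar c_n$ between $\widehat c_n$ and $c^\star$. Hence
\[
\sqrt n(\widehat c_n-c^\star)=-\,Q_n''(\bar c_n)^{-1}\sqrt n\,Q_n'(c^\star).
\]

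For the score term, note first that $Q'(c^\star)=0$ because $c^\star$ is an interior minimiser. It follows that
\[
\sqrt n\,Q_n'(c^\star)=-2\int\sqrt n(\widehat F_n-F_0)\dot G_{c^\star}\,d\nu.
\]
This is a sample mean of i.i.d. variables, namely
\[
\sqrt n\,Q_n'(c^\star)=n^{-1/2}\sum_i \xi_i,\qquad \xi_i=-2\int\{1(Z_i\le z)-F_0(z)\}\dot G_{c^\star}(z)\,d\nu(z).
\]
Each $\xi_i$ is bounded by $4C_1\nu(\mathbb R)$ and has mean zero. The ordinary Lindeberg--L\'evy CLT therefore applies directly, without needing Donsker's theorem. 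By Fubini, $\operatorname{Var}(\xi_i)$ equals the double integral of $4[F_0(z\wedge y)-F_0(z)F_0(y)]\dot G_{c^\star}(z)\dot G_{c^\star}(y)$, which is $\Omega_{c^\star}$ in \eqref{eq:Omega}. Lemma~\ref{lem:Omega_pos} (or a direct argument, if $\nu$ lacks an open interval) ensures this limit is nondegenerate, although the CLT holds regardless.

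The Hessian term is the step I expect to need the most care. I will show $Q_n''(\bar c_n)\to H_{c^\star}$ in probability, and do it in two pieces. First, uniformly over $c\in\mathcal N$,
\[
|Q_n''(c)-Q''(c)|\le 2\sup_z|\widehat F_n-F_0|\int M_2\,d\nu\to0
\]
almost surely, by Glivenko--Cantelli. Second, $Q''$ is continuous at $c^\star$. This follows by dominated convergence from (B1), with integrands $\dot G_c^2$ and $(F_0-G_c)\ddot G_c$ dominated by $C_1^2$ and $C_2$ and $\nu$ finite. Since $\bar c_n\to c^\star$, the two pieces give $Q_n''(\bar c_n)\to H_{c^\star}>0$ by (B3), so the inverse exists with probability tending to one. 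Slutsky's lemma then delivers $\sqrt n(\widehat c_n-c^\star)\Rightarrow \N(0,\Omega_{c^\star}/H_{c^\star}^2)$.

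The only delicate point is the interplay between the event where the first-order condition holds and the mean-value point. I handle it by working on events whose probability tends to one, and by noting that $Q_n$ is $C^2$ in $c$ on $\mathcal N$ for every sample path, since $\widehat F_n$ enters only linearly.
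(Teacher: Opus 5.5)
Your proof is correct, and its skeleton is the paper's: first-order condition with probability tending to one, mean-value expansion, a score limit, uniform convergence of $Q_n''$ via Glivenko--Cantelli plus continuity of $Q''$, then Slutsky. The one real difference is the score step.

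The paper invokes Donsker's theorem, $\sqrt{n}(\widehat F_n-F_0)\Rightarrow\mathbb{B}_{F_0}$ in $\ell^\infty(\mathbb{R})$, and pushes it through the bounded linear functional $f\mapsto-2\int f\,\dot G_{c^\star}\,d\nu$ by the continuous mapping theorem. You instead note that this functional applied to the empirical process is exactly $n^{-1/2}\sum_i\xi_i$, with bounded, mean-zero i.i.d.\ summands. The scalar Lindeberg--L\'evy CLT then suffices, and Fubini gives $\Var(\xi_i)=\Omega_{c^\star}$ directly.

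Your version is more elementary. It avoids the measurability subtleties of weak convergence in the non-separable space $\ell^\infty(\mathbb{R})$. It also shows that the stated limit does not depend on the nondegeneracy result of Lemma~\ref{lem:Omega_pos}, whose hypothesis on $\nu$ is not among the theorem's assumptions.

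The paper's version buys modularity. The same step, a functional limit composed with a continuous linear map, is reused unchanged in two later places. For the plug-in statistics, $\sqrt{n}(\widehat F_n^\diamond-\Phi)$ is not an average of fixed i.i.d.\ terms, and Lemmas~\ref{lem:plugin_ep} and~\ref{lem:robust_ep} supply a projected Gaussian-process limit. For the local-power theorem, the data form a triangular array with drift $\eta$. In both settings your scalar CLT would need an extra asymptotic-linearity (stochastic equicontinuity) or Lindeberg--Feller argument, whereas the functional route only changes the limit process.
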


\begin{remark}[Structure of the sandwich variance]\label{rem:sandwich}
The asymptotic variance $\Omega_{c^\star}/H_{c^\star}^2$ in Theorem~\ref{thm:interior} has the sandwich form standard in minimum-distance asymptotics. The numerator $\Omega_{c^\star}$, given in \eqref{eq:Omega}, is the variance of the limiting score $\sqrt{n}\,Q_n'(c^\star)$: a weighted integral of the Brownian bridge covariance $F_0(z\wedge y)-F_0(z)F_0(y)$ against the score direction $\dot G_{c^\star}$. The denominator $H_{c^\star}^2$ is the squared curvature of the population criterion at $c^\star$. Under correct specification, $F_0=G_{c^\star}$, so $\sqrt{n}(\widehat F_n-G_{c^\star})\Rightarrow\mathbb{B}_{G_{c^\star}}$ and the covariance kernel in \eqref{eq:Omega} reduces to $G_{c^\star}(z\wedge y)-G_{c^\star}(z)G_{c^\star}(y)$, so that $\Omega_{c^\star}$ becomes the variance of the $G_{c^\star}$-Brownian bridge projected onto $\dot G_{c^\star}$. For the plug-in tests of Section~\ref{sec:plugin}, the same sandwich structure holds with the Brownian bridge covariance replaced by the projected covariance kernel that accounts for estimation of location and scale; $H_{c^\star}$ and $\dot G_{c^\star}$ are unchanged.
\end{remark}

\begin{remark}[Assumption \textup{(B3)} under misspecification]\label{rem:B_misspec}
Under misspecification, (B1) and (B2) remain verified properties of $G_c$. Only (B3) changes character: the curvature $H$ involves the additional term
$-2\int_\mathbb{R}\{F_0(z)-G_{c^\star}(z)\}\ddot G_{c^\star}(z)\,d\nu(z)$, and positive curvature is no longer automatic. This is the standard second-order local identification condition in minimum-distance asymptotics under misspecification \citep{newey_mcfadden94}, maintained as an assumption.
\end{remark}

\subsection{Boundary inference}\label{subsec:boundary}

This subsection establishes the limiting distribution of $\widehat c_n$ at the Gaussian null $c_0=3/2$, which lies on the boundary of $\mathcal{C}$, where $Q_n'(\widehat c_n)=0$ need not hold, and the interior argument of Theorem~\ref{thm:interior} does not apply. Beyond the failure of the first-order condition, the one-sided constraint $c\geq c_0$ reshapes the limit itself: sample paths along which the unconstrained Gaussian fluctuation would be negative are instead redirected to $\widehat c_n=c_0$, so the limit places mass $1/2$ at zero and is half-normal on $(0,\infty)$, rather than normal. The proof uses a localisation argument. Writing $c=c_0+h/\sqrt{n}$ with $h\geq 0$, the scaled criterion $L_n(h)=n\{Q_n(c_0+h/\sqrt{n})-Q_n(c_0)\}$ converges to a quadratic limit $L(h)$ whose minimiser over $h\geq 0$ is the censored Gaussian $\max\{0,Z\}$. The key ingredients are the score limit from Donsker's theorem, as in the proof of Theorem~\ref{thm:interior}; the two properties of $Q_n''$ established in Lemma~\ref{lem:curvature}: strict positivity of $H_{c_0}$ and uniform convergence of $Q_n''$ to $Q''$ on a neighbourhood of $c_0$; and the tightness of $\sqrt{n}(\widehat c_n-c_0)$ established in Lemma~\ref{lem:tightness}.

\begin{lemma}[Curvature at the Gaussian null]
\label{lem:curvature}
Let $\mathcal N\subset(1/2,\infty)$ be any compact neighbourhood of $c_0$, as in \textup{(B1)--(B2)}. Under \textup{(B1)--(B2)}, $F_0=\Phi=G_{3/2}$, and $\nu$ assigning positive mass to some open interval,
\begin{equation}\label{eq:Hc0}
    H_{c_0} = Q''(c_0) = 2\int_{\mathbb{R}}
    \dot G_{c_0}(z)^2\,d\nu(z) > 0.
\end{equation}
Moreover,
\begin{equation}\label{eq:Qn_uniform}
    \sup_{c\in\mathcal{N}}|Q_n''(c)-Q''(c)|\;\to_p\; 0.
\end{equation}
\end{lemma}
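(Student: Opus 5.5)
The plan is to differentiate the criteria twice under the integral sign, evaluate at the null, and then handle the uniform statement by comparing $Q_n''$ with $Q''$ directly.

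\emph{Differentiating.} By (B1) and the domination in (B2) (Lemma~\ref{lem:B2} supplies constant dominators, which are $\nu$-integrable by (A2)), dominated convergence allows two differentiations under the integral for both $Q$ and $Q_n$ on $\mathcal N$. This gives
\[
Q''(c)=2\int\bigl\{\dot G_c(z)^2-(F_0(z)-G_c(z))\ddot G_c(z)\bigr\}\,d\nu(z),
\]
and $Q_n''$ has the same form with $F_0$ replaced by $\widehat F_n$. The integrand of $Q_n$ is bounded by $1$, so $Q_n$ is finite and the differentiation is legitimate pathwise for each sample.

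\emph{Value at the null.} Under $F_0=\Phi=G_{3/2}$ the factor $F_0-G_{c_0}$ vanishes identically, so $H_{c_0}=2\int \dot G_{c_0}^2\,d\nu$. Strict positivity is exactly Lemma~\ref{lem:B3} with $c^\star=c_0$, using the hypothesis that $\nu$ charges an open interval.

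There is one subtlety at $c_0=3/2$. Lemma~\ref{lem:differentiability} is stated on $(3/2,\infty)$, but Lemma~\ref{lem:B1} gives real-analyticity on $(1/2,\infty)$, and $\mathcal N\subset(1/2,\infty)$ is a two-sided neighbourhood of $3/2$. So $\dot G_{c_0}$ and $\ddot G_{c_0}$ are honest two-sided derivatives, and the formula applies at the endpoint of $\mathcal C$.

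\emph{Uniform convergence.} Subtracting the two expressions, the $\dot G_c^2$ terms cancel exactly, leaving
\[
Q_n''(c)-Q''(c)=-2\int\{\widehat F_n(z)-F_0(z)\}\,\ddot G_c(z)\,d\nu(z).
\]
Hence
\[
\sup_{c\in\mathcal N}|Q_n''(c)-Q''(c)|\le 2\,\sup_z|\widehat F_n(z)-F_0(z)|\cdot\sup_{c\in\mathcal N}\int|\ddot G_c|\,d\nu\le 2C_2\,\nu(\mathbb R)\,\|\widehat F_n-F_0\|_\infty .
\]
The right-hand side tends to $0$ almost surely by Glivenko--Cantelli, and hence in probability. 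Note that this step does not use $F_0=\Phi$.

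\emph{Main obstacle.} The main obstacle is justifying the second differentiation under the integral uniformly on a neighbourhood straddling $c=3/2$, where $\mathcal N$ extends below the admissible region. This requires the uniform bound $|\ddot G_c(z)|\le C_2$ for all $z$ and $c\in\mathcal N$, including $c<3/2$ (where $G_c$ need not be a distribution function but is still a well-defined analytic function). I would rely on Lemma~\ref{lem:B2}, which is stated for any compact $\mathcal N\subset(1/2,\infty)$. The mean value theorem together with the constant dominator then gives the interchange.
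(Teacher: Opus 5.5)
Your proof is correct and follows essentially the same route as the paper: the value of $H_{c_0}$ and its positivity come from Lemma~\ref{lem:B3} at $c^\star=c_0$, and the uniform statement comes from the exact identity $Q_n''(c)-Q''(c)=-2\int\{\widehat F_n-F_0\}\ddot G_c\,d\nu$ combined with the constant dominator of Lemma~\ref{lem:B2} and Glivenko--Cantelli. One harmless slip: for $c\in\mathcal N$ below $3/2$ the integrand of $Q_n$ is not bounded by $1$, because there $g_c(z)\sim(c-\tfrac32)|z|^{-3}<0$ in the tails and $G_c$ leaves $[0,1]$; but $G_c$ is still uniformly bounded on $\mathcal N$ by the same Euler-integral bound used in Lemma~\ref{lem:B2}, and that is all the interchange needs.
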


\noindent Although $c_0=3/2$ is the left endpoint of the parameter space $\mathcal C=[3/2,c_U]$, it is interior to $(1/2,\infty)$, so $\mathcal N$ is necessarily two-sided and extends strictly below $c_0$ as well as above. This costs nothing: (B1)--(B2) are verified on all of $(1/2,\infty)$ by Lemmas~\ref{lem:B1} and~\ref{lem:B2}, and in the proofs of Theorems~\ref{thm:interior} and~\ref{thm:boundary} the supremum in \eqref{eq:Qn_uniform} is applied only at points $c\geq c_0$ within $\mathcal C$. The following lemma establishes that $\sqrt{n}(\widehat c_n-c_0)$ remains bounded in probability: the tightness condition required by the argmax continuous mapping theorem.

\begin{lemma}[Tightness of the boundary estimator]\label{lem:tightness}
Suppose the conditions of
Corollary~\ref{cor:consistency_correct} and
\textup{(B1)--(B2)} hold, that $F_0=\Phi=G_{3/2}$, and that
$\nu$ assigns positive mass to some open interval. Then
\[
    \sqrt{n}\,(\widehat c_n-c_0) = O_p(1).
\]
\end{lemma}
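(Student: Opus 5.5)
The plan is to exploit the boundary structure through the one-sided expansion of $Q_n$ at $c_0$, combined with the basic inequality $Q_n(\widehat c_n)\le Q_n(c_0)$. By Corollary~\ref{cor:consistency_correct}, $\widehat c_n\to c_0$ almost surely. Fix a compact neighbourhood $\mathcal N\subset(1/2,\infty)$ of $c_0$, so that with probability tending to one $\widehat c_n\in\mathcal N\cap\mathcal C$. On this event, (B1) allows differentiation under the integral (domination from Lemma~\ref{lem:B2} and (A2)), and a second-order Taylor expansion with mean-value point $\bar c_n$ between $c_0$ and $\widehat c_n$ gives
\[
    0\;\ge\; Q_n(\widehat c_n)-Q_n(c_0)
    = Q_n'(c_0)\,(\widehat c_n-c_0)+\tfrac12 Q_n''(\bar c_n)\,(\widehat c_n-c_0)^2 .
\]

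The score term is handled next. Write $Q_n'(c_0)=-2\int\{\widehat F_n-\Phi\}\dot G_{c_0}\,d\nu$; under the null, $Q'(c_0)=0$, since $F_0=G_{c_0}$. Then
\[
    \sqrt n\,|Q_n'(c_0)|\le 2\sup_z\bigl|\sqrt n(\widehat F_n-\Phi)(z)\bigr|\int |\dot G_{c_0}|\,d\nu .
\]
This bound is $O_p(1)$ by the DKW inequality (or Donsker), together with the boundedness of $\dot G_{c_0}$ from Lemma~\ref{lem:B2} and finiteness of $\nu$. Mere tightness is all that is needed here, not the Gaussian limit.

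For the curvature term, Lemma~\ref{lem:curvature} gives $\sup_{\mathcal N}|Q_n''-Q''|\to_p0$ and $Q''(c_0)=H_{c_0}>0$. Continuity of $Q''$ on $\mathcal N$ follows from (B1)--(B2) by dominated convergence. Combined with $\bar c_n\to c_0$ almost surely, this yields $Q_n''(\bar c_n)\ge H_{c_0}/2$ with probability tending to one.

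On the intersection of these events, $\widehat c_n-c_0\ge0$ (since $\widehat c_n\in\mathcal C$). Dividing the inequality by $\widehat c_n-c_0$ when it is positive gives
\[
    \tfrac14 H_{c_0}(\widehat c_n-c_0)\le -Q_n'(c_0)\le |Q_n'(c_0)|,
\]
so $\sqrt n(\widehat c_n-c_0)\le 4H_{c_0}^{-1}\sqrt n|Q_n'(c_0)|=O_p(1)$. When $\widehat c_n=c_0$, the bound holds trivially. The complementary events have vanishing probability, which completes the argument.

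The main obstacle I expect is the Taylor step at the boundary. The expansion must be valid on a two-sided neighbourhood even though $\widehat c_n$ only lives on $[c_0,c_U]$, and the mean-value point must be controlled uniformly. Both issues are covered by the real-analyticity in Lemma~\ref{lem:B1} on $(1/2,\infty)$ and the uniform convergence of Lemma~\ref{lem:curvature}, so I would be careful to state explicitly that the supremum is over $\mathcal N$ and only evaluated at $\bar c_n\in[c_0,\widehat c_n]$. A minor point is that $\widehat c_n$ is only a measurable selection from the argmin; the inequality $Q_n(\widehat c_n)\le Q_n(c_0)$ holds for any selection, so this causes no difficulty.
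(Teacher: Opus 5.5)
Your proposal is correct and follows essentially the same route as the paper: a second-order Taylor expansion at $c_0$, the basic inequality $Q_n(\widehat c_n)\le Q_n(c_0)$, a curvature lower bound from Lemma~\ref{lem:curvature}, and $\sqrt n\,Q_n'(c_0)=O_p(1)$, which together give $\sqrt n(\widehat c_n-c_0)\le C\,\sqrt n\,|Q_n'(c_0)|$. The only cosmetic difference is that the paper shrinks $\mathcal N$ so that $Q_n''\ge H_{c_0}/4$ holds on all of $\mathcal N$ with probability tending to one, which avoids tracking the intermediate point $\bar c_n$ and gives the constant $8/H_{c_0}$ rather than your $4/H_{c_0}$.
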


\noindent With both lemmas in place, the boundary limit follows from a localisation argument and the argmax continuous mapping theorem of \citet{vandervaart_wellner23}.

\begin{theorem}[Boundary limit under the Gaussian null]\label{thm:boundary}
Suppose the conditions of Corollary~\ref{cor:consistency_correct} and
\textup{(B1)--(B2)} hold, that $F_0=\Phi=G_{3/2}$, and that $\nu$ assigns positive mass to some open interval. Then
\[
    \sqrt{n}\,(\widehat c_n-c_0)\;\Rightarrow\;\max\{0,\,Z\},
    \quad Z\sim\N\!\left(0,\,\frac{\Omega_{c_0}}{H_{c_0}^2}\right),
\]
where $H_{c_0}=2\int_{\mathbb{R}}\dot G_{c_0}(z)^2\,d\nu(z)$ and
\begin{equation}\label{eq:Omega_c0}
    \Omega_{c_0} = 4\iint_{\mathbb{R}^2}
    \bigl[\Phi(z\wedge y)-\Phi(z)\Phi(y)\bigr]
    \dot G_{c_0}(z)\,\dot G_{c_0}(y)
    \,d\nu(z)\,d\nu(y).
\end{equation}
The limit places mass $1/2$ at zero and is half-normal on $(0,\infty)$.
\end{theorem}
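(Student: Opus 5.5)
The plan is a localisation argument around $c_0$, using the argmax continuous mapping theorem. Write $c=c_0+h/\sqrt n$ with $h\in[0,\sqrt n(c_U-c_0)]$ and set $L_n(h)=n\{Q_n(c_0+h/\sqrt n)-Q_n(c_0)\}$, so that $\widehat h_n:=\sqrt n(\widehat c_n-c_0)$ minimises $L_n$ over this range. By Lemma~\ref{lem:B1}, $c\mapsto Q_n(c)$ is $C^2$ on a two-sided neighbourhood $\mathcal N\subset(1/2,\infty)$ of $c_0$. Differentiation under the integral is justified by the bounded dominators of Lemma~\ref{lem:B2} together with (A2). A second-order Taylor expansion with mean-value point $\bar c$ between $c_0$ and $c$ then gives
\[
    L_n(h)=h\,\sqrt n\,Q_n'(c_0)+\tfrac12 h^2\,Q_n''(\bar c).
\]

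The two coefficients are handled separately.

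\emph{Score term.} $Q_n'(c_0)=-2\int\{\widehat F_n-\Phi\}\dot G_{c_0}\,d\nu$. Because $\dot G_{c_0}$ is bounded and $\nu$ is finite, the map $f\mapsto\int f\dot G_{c_0}\,d\nu$ is a continuous linear functional on $\ell^\infty(\mathbb R)$. Donsker's theorem and the continuous mapping theorem then give
\[
    S_n:=\sqrt n\,Q_n'(c_0)\Rightarrow S:=-2\int\mathbb B_\Phi\,\dot G_{c_0}\,d\nu,
\]
which is centred Gaussian. Fubini and the Brownian bridge covariance show its variance is exactly $\Omega_{c_0}$ in \eqref{eq:Omega_c0}, as in the proof of Theorem~\ref{thm:interior}.

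\emph{Curvature term.} By Lemma~\ref{lem:curvature}, $\sup_{\mathcal N}|Q_n''-Q''|\to_p0$, and $Q''$ is continuous at $c_0$ by (B1)--(B2) and dominated convergence. Hence $\sup_{0\le h\le K}|Q_n''(\bar c)-H_{c_0}|\to_p0$ for every fixed $K$, with $H_{c_0}>0$. It follows that $L_n(h)\Rightarrow L(h)=hS+\tfrac12H_{c_0}h^2$ in $\ell^\infty([0,K])$ for every $K$.

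\emph{Argmin step.} The limit $L$ is strictly convex on $[0,\infty)$ with unique minimiser $\widehat h=\max\{0,-S/H_{c_0}\}$, which is almost surely well defined. By Lemma~\ref{lem:tightness}, $\widehat h_n=O_p(1)$. Also $\widehat h_n$ exactly minimises $L_n$ over $[0,\sqrt n(c_U-c_0)]$, an interval that eventually contains any fixed $[0,K]$. The argmax continuous mapping theorem of \citet{vandervaart_wellner23} then gives $\widehat h_n\Rightarrow\max\{0,Z\}$ with $Z:=-S/H_{c_0}\sim\N(0,\Omega_{c_0}/H_{c_0}^2)$. Finally, $\Omega_{c_0}>0$ by Lemma~\ref{lem:Omega_pos}, so $Z$ is non-degenerate and symmetric. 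Hence $P(\max\{0,Z\}=0)=P(Z\le0)=1/2$, and on $(0,\infty)$ the law coincides with that of $|Z|$ restricted there, which is the half-normal description in the statement.

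\emph{Main obstacle.} The delicate step is the uniform control of the Taylor remainder on compacta in $h$ when the expansion point $c_0$ sits at the edge of $\mathcal C$. I would resolve this with Lemma~\ref{lem:curvature}'s two-sided neighbourhood: $\bar c$ lies in $[c_0,c_0+K/\sqrt n]\subset\mathcal N$ for $n$ large, so the uniform convergence applies even though only $h\ge0$ is used. Tightness is taken from Lemma~\ref{lem:tightness}. If that lemma were unavailable, I would instead derive it from $L_n(\widehat h_n)\le L_n(0)=0$: on the event $\widehat c_n\in\mathcal N$, which has probability tending to one by consistency (Corollary~\ref{cor:consistency_correct}), this gives $\widehat h_n\le 2|S_n|/\inf_{\mathcal N}Q_n''=O_p(1)$, after shrinking $\mathcal N$ so that $\inf_{\mathcal N}Q''>0$.
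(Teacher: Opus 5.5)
Your proposal is correct and follows essentially the same route as the paper: you localise at $c=c_0+h/\sqrt n$, control the second-order Taylor remainder on compacta through the two-sided neighbourhood and the uniform curvature convergence of Lemma~\ref{lem:curvature}, take the score limit from Donsker's theorem, and apply the argmax continuous mapping theorem with tightness from Lemma~\ref{lem:tightness}; your fallback tightness argument is the same strong-convexity bound the paper uses to prove that lemma. One small slip: restricted to $(0,\infty)$, the law of $\max\{0,Z\}$ is the law of $Z$ restricted there (total mass $1/2$), which is half the law of $|Z|$, so it is the conditional law given positivity that is half-normal.
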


\begin{remark}[Interior versus boundary]
Theorems~\ref{thm:interior} and~\ref{thm:boundary} describe qualitatively different regimes. When $c^\star>3/2$, the estimator is regular: $\sqrt{n}$-consistent and asymptotically normal with sandwich variance $\Omega_{c^\star}/H_{c^\star}^2$. When $c^\star=c_0=3/2$, the one-sided constraint instead produces the point-mass-plus-half-normal limit of
Theorem~\ref{thm:boundary}.
\end{remark}

\noindent This nonstandard limit has an immediate consequence for testing: normal critical values from Theorem~\ref{thm:interior} do not apply at $c_0$. I construct a test from the fit-improvement statistic
$T_n=n\{Q_n(c_0)-Q_n(\widehat c_n)\}\geq0$, the minimum-distance analogue of a likelihood-ratio statistic; Remark~\ref{rem:test_principle} discusses this choice relative to the Wald and score alternatives.

\begin{theorem}[Boundary limit of the fit-improvement
statistic] \label{thm:fit_improvement}
Under the conditions of Theorem~\ref{thm:boundary}, the fit-improvement statistic $T_n=n\{Q_n(c_0)-Q_n(\widehat c_n)\}$ satisfies
\begin{equation}\label{eq:Tn_limit}
    T_n \;\Rightarrow\;\frac12\delta_0
    + \frac12\!\left(\frac{\Omega_{c_0}}{2H_{c_0}}\chi_1^2\right),
\end{equation}
where $\delta_0$ is a point mass at zero, and the standardised statistic $\widetilde T_n=(2H_{c_0}/\Omega_{c_0})\,T_n$ satisfies
\begin{equation}\label{eq:Tntilde_limit}
    \widetilde T_n \;\Rightarrow\;
    \frac12\delta_0+\frac12\chi_1^2.
\end{equation}
\end{theorem}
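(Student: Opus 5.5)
The plan is to reuse the localisation behind Theorem~\ref{thm:boundary} and evaluate the local criterion at its constrained minimiser. Write $S_n:=\sqrt{n}\,Q_n'(c_0)=-2\int\sqrt{n}\{\widehat F_n-\Phi\}\dot G_{c_0}\,d\nu$. By Donsker's theorem, finiteness of $\nu$ (A2) and boundedness of $\dot G_{c_0}$ (Lemma~\ref{lem:B2}), the map $f\mapsto\int f\,\dot G_{c_0}\,d\nu$ is continuous on $\ell^\infty(\mathbb R)$. Hence $S_n\Rightarrow S\sim\N(0,\Omega_{c_0})$, with $\Omega_{c_0}$ as in \eqref{eq:Omega_c0}. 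Set $\widehat h_n=\sqrt n(\widehat c_n-c_0)\ge0$. A second-order Taylor expansion of $Q_n$ about $c_0$, valid on $\mathcal N$ by (B1), gives
\[
    T_n=-\bigl\{n(Q_n(\widehat c_n)-Q_n(c_0))\bigr\}
    =-\widehat h_n S_n-\tfrac12\widehat h_n^2\,Q_n''(\bar c_n),
\]
where $\bar c_n$ lies between $c_0$ and $\widehat c_n$. By Corollary~\ref{cor:consistency_correct}, $\bar c_n\to c_0$. Lemma~\ref{lem:curvature} gives uniform convergence of $Q_n''$ to $Q''$ on $\mathcal N$, and $Q''$ is continuous, so $Q_n''(\bar c_n)\to_p H_{c_0}$. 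Lemma~\ref{lem:tightness} gives $\widehat h_n=O_p(1)$. Therefore
\[
    T_n=-\widehat h_nS_n-\tfrac12H_{c_0}\widehat h_n^2+o_p(1).
\]

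The next step is to identify $\widehat h_n$ jointly with $S_n$. I would apply the argmax continuous mapping step of Theorem~\ref{thm:boundary} in its joint form. The local criterion $L_n(h)=hS_n+\tfrac12H_{c_0}h^2+o_p(1)$ holds uniformly on compacts in $h\ge0$, so $\widehat h_n=\max\{0,-S_n/H_{c_0}\}+o_p(1)$. Substituting this into the expansion gives
\[
    T_n=\frac{S_n^2}{2H_{c_0}}\,\mathbf 1\{S_n<0\}+o_p(1).
\]
This is a continuous function of $S_n$, since the right-hand side vanishes at $S_n=0$. By the continuous mapping theorem,
\[
    T_n\Rightarrow\frac{\max\{0,-S\}^2}{2H_{c_0}}.
\]
Because $S$ is symmetric with no atom at zero, this limit equals $0$ with probability $1/2$. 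With the remaining probability it equals $(\Omega_{c_0}/(2H_{c_0}))\chi_1^2$, since $S^2/\Omega_{c_0}\sim\chi_1^2$ and $S^2$ is independent of the sign of $S$. This gives \eqref{eq:Tn_limit}. Multiplying by the constant $2H_{c_0}/\Omega_{c_0}$, which is positive and finite by Lemmas~\ref{lem:curvature} and~\ref{lem:Omega_pos}, yields \eqref{eq:Tntilde_limit}.

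The main obstacle is the step that replaces $\widehat h_n$ by $\max\{0,-S_n/H_{c_0}\}$ up to $o_p(1)$, since convergence in distribution of $\widehat h_n$ alone is not enough to pass to $T_n$. I would handle it in one of two ways. The first route argues directly that the quadratic $q_n(h)=hS_n+\tfrac12Q_n''(\bar c)h^2$ has curvature bounded away from zero with probability tending to one. In that case the minimiser of $L_n$ over $h\ge0$ lies within $o_p(1)$ of the minimiser of $q_n$, with the same Taylor remainder control used in Lemma~\ref{lem:tightness}. The second, simpler route avoids the minimiser entirely by noting that $T_n=-\min_{0\le h\le \sqrt n(c_U-c_0)}L_n(h)$. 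By tightness, this minimum equals the minimum over $[0,K]$ with probability arbitrarily close to one. On $[0,K]$ the map $L_n\mapsto\min L_n$ is continuous under uniform convergence, so $T_n\Rightarrow-\min_{h\ge0}\{hS+\tfrac12H_{c_0}h^2\}=S^2\mathbf 1\{S<0\}/(2H_{c_0})$. I would present the second route, since it only requires the uniform local quadratic approximation and tightness already established.
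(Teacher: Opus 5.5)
Your proposal is correct, and the route you say you would present is essentially the paper's own proof: write $T_n=-\min_h L_n(h)$ exactly, use Lemma~\ref{lem:tightness} to restrict the minimum to a compact $[0,M]$ with probability close to one, pass the minimum through the uniform local quadratic approximation of $L_n$, and evaluate $-\min_{h\ge0}\{hS+\tfrac12H_{c_0}h^2\}=S^2\mathbf{1}\{S<0\}/(2H_{c_0})$, whose law is the stated mixture. Your first route, via $\widehat h_n=\max\{0,-S_n/H_{c_0}\}+o_p(1)$, would also go through using strong convexity of the quadratic, but it is not needed; your writing the range as $[0,\sqrt n(c_U-c_0)]$ is a small sharpening of the paper's $h\ge0$.
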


\begin{remark}[Choice of test principle]\label{rem:test_principle}
Three statistics suggest themselves for testing $c=c_0$: a Wald statistic based on $\widehat c_n-c_0$, studentised by $\Omega_{c_0}^{1/2}/H_{c_0}$; a score (Lagrange multiplier) statistic based on $Q_n'(c_0)$, studentised by $\Omega_{c_0}^{1/2}$; and the standardised criterion-difference statistic $\widetilde T_n$ above. Since $H_{c_0}$ and $\Omega_{c_0}$ are known exactly under the null, all three statistics are immediately operational, with no nuisance parameters to estimate. At an interior $c^\star$, these three coincide to first order, exactly as in the classical Wald--LR--LM trinity. This equivalence fails at the boundary. The score is evaluated at the fixed point $c_0$, before any constrained optimisation occurs, so $\sqrt{n}\,Q_n'(c_0)\Rightarrow\N(0,\Omega_{c_0})$ regardless of the constraint $c\geq c_0$: an LM test retains a standard null distribution. The Wald and criterion-difference statistics, by contrast, are both built from $\widehat c_n$ and inherit its censoring, giving the nonstandard limits $\max\{0,Z\}$ and $\tfrac12\delta_0+\tfrac12\chi_1^2$ respectively: the familiar finding of the literature on testing at the boundary of the parameter space \citep{chernoff54, self_liang87, andrews01}. I use standardised $\widetilde T_n$ rather than the regular LM alternative for two reasons. First, regularity of the null distribution says nothing about power at the boundary, where the local-power equivalence of Wald, LR, and LM at interior points need not carry over. Second, $\widetilde T_n$ is built from the same minimum-distance criterion $Q_n$ used throughout the paper, whereas an LM test requires a separate, score-based apparatus. I prefer $\widetilde T_n$ to the Wald alternative because $\widetilde T_n$ is the exact difference in criterion values at $c_0$ and $\widehat c_n$, whereas the Wald statistic relies on a single local quadratic approximation to $Q_n$ anchored at $c_0$; the criterion-difference statistic therefore tracks the finite-sample shape of $Q_n$ more closely, the standard rationale for preferring likelihood-ratio-type statistics over Wald statistics more generally.
\end{remark}

\begin{remark}[Null constants and critical values]\label{rem:critical_values}
The standardised statistic $\widetilde T_n$ is operational once $\nu$ has been chosen. The null constants $H_{c_0}$ and $\Omega_{c_0}$, given in \eqref{eq:Hc0} and \eqref{eq:Omega_c0}, are determined entirely by $\Phi$, $\dot G_{c_0}$, and $\nu$; they are not unknown population parameters requiring estimation. Both, moreover, reduce to closed-form expressions in the same special functions used to define the model itself: $\dot G_{c_0}$ is given explicitly by \eqref{eq:Gdot}--\eqref{eq:1F1dot} of Lemma~\ref{lem:differentiability} in terms of the Gamma and digamma functions and the $c$-derivative of ${}_1F_1$, while $\Phi$ itself is available exactly via the confluent hypergeometric representation \eqref{eq:normalcdf} of Lemma~\ref{lem:normalcdf}, with no numerical integration of the Gaussian density required. Evaluating $H_{c_0}$ and $\Omega_{c_0}$ at any point therefore requires only ${}_1F_1$, $\Gamma$, and $\psi$: the same three special functions on which the entire hypergeometric family is built; though the integrals defining $H_{c_0}$ and $\Omega_{c_0}$ over $\nu$ are one- and two-dimensional respectively, and generally still require numerical quadrature; what is avoided is any further numerical approximation of the integrands.

For the Gaussian-weighted choice $d\nu(z)=d\Phi(z)$: the $\lambda=0$ member of the tail-weighted family $\nu_\lambda$ in \eqref{eq:nu_lambda}, rather than the $\lambda=1/2$ choice used in the simulations and empirical illustration below (see Remark~\ref{rem:nu_choice}); writing $a(u)=\dot G_{c_0} \{\Phi^{-1}(u)\}$ for $u\in(0,1)$, the constants take the form
\[
    H_{c_0} = 2\int_0^1 a(u)^2\,du, \quad \Omega_{c_0}
    = 4\int_0^1\!\int_0^1 \{\min(u,v)-uv\}\,a(u)\,a(v)\,du\,dv,
\]
both of which are evaluable by numerical quadrature. For general $\nu_\lambda$, including the tail-weighted $\nu_{1/2}$ used in this paper, the analogous integrals can be computed over $\mathbb{R}$, or, after the same substitution, over the corresponding $\Beta(1-\lambda,1-\lambda)$ measure on $(0,1)$. Since $\nu_\lambda$ in \eqref{eq:nu_lambda} is defined only up to a positive constant, $H_{c_0}$ and $\Omega_{c_0}$ individually depend on that constant for $\lambda>0$; the estimator $\widehat c_n$, the asymptotic variance $\Omega_{c_0}/H_{c_0}^2$, and the standardised statistic $\widetilde T_n$ do not. The limiting null law $\frac12\delta_0+\frac12\chi_1^2$ yields a simple rejection rule. For an asymptotic level-$\alpha$ upper-tail test,
\[
    \Pr\!\left(\tfrac12\delta_0+\tfrac12\chi_1^2>k_\alpha\right)
    = \tfrac12\Pr(\chi_1^2>k_\alpha) = \alpha,
\]
so $k_\alpha=\chi_{1,1-2\alpha}^2$, the $(1-2\alpha)$-quantile of the $\chi_1^2$ distribution. The rejection rule is
\[
    \text{reject }H_0:F_0=\Phi
    \quad\Longleftrightarrow\quad
    \widetilde T_n > \chi_{1,1-2\alpha}^2.
\]
For $\alpha=0.05$, the critical value is $\chi_{1,0.90}^2\approx 2.706$; for $\alpha=0.01$, it is $\chi_{1,0.98}^2\approx 5.412$. These are strictly smaller than the standard $\chi_1^2$ critical values ($3.841$ and $6.635$).
\end{remark}

\noindent The moments of the limiting distribution are also available in closed form.
\begin{corollary}[Moments of $\widetilde T_\infty$]
\label{cor:moments_Ttilde}
Under the conditions of Theorem~\ref{thm:fit_improvement},
$\widetilde T_n\Rightarrow\widetilde T_\infty\sim\frac12
\delta_0+\frac12\chi_1^2$, with
\[
    \E[\widetilde T_\infty] = \frac12, \quad
    \Var(\widetilde T_\infty) = \frac54.
\]
\end{corollary}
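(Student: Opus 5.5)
The moments are computed directly from the mixture representation, with weak convergence itself taken from Theorem~\ref{thm:fit_improvement}. Write $\widetilde T_\infty \overset{d}{=} B\,X$, where $B\sim\mathrm{Bernoulli}(1/2)$ and $X\sim\chi_1^2$ are independent. Equivalently, use the representation from the proof of Theorem~\ref{thm:fit_improvement}: if $W\sim\N(0,1)$, then $\widetilde T_\infty \overset{d}{=} (\max\{0,W\})^2 = W^2\,\mathbf 1\{W>0\}$. This second form is the cleanest, since it gives every moment by symmetry of $W$.

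First compute the raw moments. Because $W^2$ is even in $W$, $\E[W^{2k}\mathbf 1\{W>0\}]=\tfrac12\E[W^{2k}]$. Hence $\E[\widetilde T_\infty]=\tfrac12\E[W^2]=\tfrac12$. Likewise $\E[\widetilde T_\infty^2]=\tfrac12\E[W^4]=\tfrac12\cdot 3=\tfrac32$, using $\E[\chi_1^2]=1$ and $\E[(\chi_1^2)^2]=\Var(\chi_1^2)+1=3$. The variance is then $\tfrac32-\tfrac14=\tfrac54$. As a cross-check, the law of total variance over $B$ gives $\tfrac12\Var(\chi_1^2)+\Var(B)\cdot 1^2 = 1+\tfrac14=\tfrac54$.

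The one point needing care is what the statement asserts. If it only concerns the moments of the limit law $\widetilde T_\infty$, the argument above is complete. If one also wanted $\E[\widetilde T_n]\to\tfrac12$, that would need uniform integrability of $\widetilde T_n$ (for example a bound on $\E[\widetilde T_n^{1+\delta}]$ via $T_n\le n\{Q_n(c_0)-Q_n(c)\}$ and moments of the empirical-process score). That is not claimed here, and I would not attempt it.
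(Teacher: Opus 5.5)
Your proposal is correct and follows the paper's proof: both write $\widetilde T_\infty = BX$ with independent $B\sim\mathrm{Bernoulli}(1/2)$ and $X\sim\chi_1^2$, then use $\E[X]=1$ and $\E[X^2]=3$ to get $\E[\widetilde T_\infty]=\tfrac12$, $\E[\widetilde T_\infty^2]=\tfrac32$, and $\Var(\widetilde T_\infty)=\tfrac54$. Your censored-Gaussian form $W^2\mathbf 1\{W>0\}$ and the law-of-total-variance check are equivalent restatements of the same calculation, and you are right that the corollary concerns only the limit law, so no uniform-integrability argument for $\widetilde T_n$ is needed.
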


\noindent The simulated mean and variance of $\widetilde T_n$ are reported alongside the limiting values $1/2$ and $5/4$ from Corollary~\ref{cor:moments_Ttilde} as an informal diagnostic of how closely the finite-sample distribution tracks the boundary asymptotics. Theorems~\ref{thm:boundary} and~\ref{thm:fit_improvement}, together with
Corollary~\ref{cor:moments_Ttilde}, complete the boundary inference theory: $\widetilde T_n$ has a known null limit, an explicit rejection rule, and null constants computable from $\nu$ and $\dot G_{c_0}$ alone. The next two results establish directed consistency against fixed alternatives and non-trivial local asymptotic power against $\sqrt{n}$-local alternatives.

\subsection{Directed consistency}\label{subsec:directed}

Theorems~\ref{thm:boundary} and~\ref{thm:fit_improvement} establish the null distribution of $\widetilde T_n$. The following theorem shows that the test is also consistent: under any fixed alternative for which the best hypergeometric approximation strictly improves on the Gaussian boundary, the statistic diverges to infinity and the test rejects with
probability tending to one.

\begin{theorem}[Directed consistency]\label{thm:directed}
Suppose the conditions of Theorem~\ref{thm:consistency} hold,
and that $\nu$ assigns positive mass to some open interval. Let
\begin{equation}\label{eq:Delta}
    \Delta = Q(c_0)-Q(c^\star)
\end{equation}
be the population fit improvement. If $\Delta>0$, then
$Q_n(c_0)-Q_n(\widehat c_n)\to_p\Delta$, and consequently
\[
    \widetilde T_n \;=\; \left(\frac{2H_{c_0}}{\Omega_{c_0}}\right)
    n\{Q_n(c_0)-Q_n(\widehat c_n)\} \;\to_p\; \infty.
\]
Hence $\Pr_{F_0}(\widetilde T_n>k_\alpha)\to 1$ for any fixed
critical value $k_\alpha<\infty$.
\end{theorem}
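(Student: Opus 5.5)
The plan is to reduce the result to uniform convergence of the criterion together with Theorem~\ref{thm:consistency}. The first step is to show $\sup_{c\in\mathcal C}|Q_n(c)-Q(c)|\to 0$ almost surely. For every $c$,
\[
|Q_n(c)-Q(c)| = \Bigl|\int (\widehat F_n-F_0)(\widehat F_n+F_0-2G_c)\,d\nu\Bigr| \le 2\,\nu(\mathbb R)\,\sup_z|\widehat F_n(z)-F_0(z)|,
\]
because all three functions take values in $[0,1]$. Glivenko--Cantelli and (A2) then give the uniform bound, and the bound does not depend on $c$. This is the same uniform convergence already used in the proof of Theorem~\ref{thm:consistency}, which I take as available.

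The second step is to handle the two terms of the fit improvement. At the fixed point, $Q_n(c_0)\to Q(c_0)$ almost surely. For the minimiser, I would write
\[
|Q_n(\widehat c_n)-Q(c^\star)| = \bigl|\min_{\mathcal C}Q_n-\min_{\mathcal C}Q\bigr| \le \sup_{\mathcal C}|Q_n-Q|,
\]
so this term also converges almost surely. Hence $Q_n(c_0)-Q_n(\widehat c_n)\to\Delta$ almost surely, which implies convergence in probability. Alternatively one could use $\widehat c_n\to c^\star$ together with continuity of $Q$, which follows from (A0) and dominated convergence. The min-inequality route is cleaner, though, because it does not need the consistency of $\widehat c_n$ itself.

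The third step is the divergence of the statistic. The constants $H_{c_0}$ and $\Omega_{c_0}$ are fixed null constants determined by $\Phi$, $\dot G_{c_0}$ and $\nu$; they do not depend on $F_0$. Since $\nu$ charges an open interval, Lemma~\ref{lem:curvature} (or Lemma~\ref{lem:B3}) gives $H_{c_0}>0$ and Lemma~\ref{lem:Omega_pos} gives $\Omega_{c_0}>0$ with $\Omega_{c_0}<\infty$, so the ratio $2H_{c_0}/\Omega_{c_0}$ lies in $(0,\infty)$. On the event $\{Q_n(c_0)-Q_n(\widehat c_n)>\Delta/2\}$, whose probability tends to one, we have
\[
\widetilde T_n \ge (2H_{c_0}/\Omega_{c_0})\,n\Delta/2\to\infty.
\]
This gives $\widetilde T_n\to_p\infty$, and $\Pr(\widetilde T_n>k_\alpha)\to1$ follows immediately.

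The only delicate point is making sure the standardising constants are strictly positive and finite, since everything else is routine. Positivity comes from Lemmas~\ref{lem:curvature} and~\ref{lem:Omega_pos}. Finiteness comes from the bounded-$\dot G_{c_0}$ property in Lemma~\ref{lem:B2} combined with (A2).
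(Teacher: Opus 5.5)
Your proposal is correct and follows essentially the same route as the paper. Both arguments use the uniform bound on $\sup_{\mathcal C}|Q_n-Q|$ from the consistency proof, the comparison $|\min_{\mathcal C}Q_n-\min_{\mathcal C}Q|\le\sup_{\mathcal C}|Q_n-Q|$ for the minimised criterion, and the fact that $2H_{c_0}/\Omega_{c_0}$ is a fixed, finite, strictly positive null constant, so the $n$-scaled improvement diverges.

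One small citation point: under the alternative, justify $H_{c_0}>0$ by Lemma~\ref{lem:B3}, which places no condition on $F_0$, rather than Lemma~\ref{lem:curvature}, which assumes $F_0=\Phi$. This is exactly what the paper does.
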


\begin{remark}[Directed, not omnibus]\label{rem:directed}
Consistency is directed at alternatives for which the best hypergeometric approximation strictly improves on the Gaussian boundary, that is, $\Delta>0$: the heavy-tailed deformation encoded by $\{G_c:c>3/2\}$. It need not detect every departure from normality: under a skewed alternative the symmetric family cannot capture, the minimum-distance projection $c^\star$ can remain at $c_0=3/2$, giving $\Delta=0$ and no guaranteed power. This is precisely the sense in which the test complements rather than replaces omnibus procedures such as Jarque--Bera.
\end{remark}

\noindent Theorem~\ref{thm:directed} establishes power against fixed alternatives; the next result characterises power against local alternatives approaching the null at rate $n^{-1/2}$.

\subsection{Local asymptotic power}\label{subsec:localpower}

Theorem~\ref{thm:directed} establishes power tending to one against fixed alternatives with $\Delta>0$. The complementary local result characterises power against $\sqrt{n}$-local alternatives. The key is the projection of the local departure from normality onto the score direction $\dot G_{c_0}$. Consider a triangular array with distribution functions
\begin{equation}\label{eq:local_alt}
    F_n(z) = \Phi(z) + \frac{\eta(z)}{\sqrt{n}} + r_n(z),
\end{equation}
where $\eta$ is the local departure from normality. The score functional and its variance under the local alternative are determined by the inner product
\[
    A_\nu(\eta) = \int_{\mathbb{R}}\eta(z)\,\dot G_{c_0}(z)\,d\nu(z),
\]
and the Brownian bridge covariance kernel $K(z,y)=\Phi(z\wedge y)-\Phi(z)\Phi(y)$.

\begin{theorem}[Local asymptotic power]\label{thm:local_power}
Suppose \textup{(B1)--(B2)} hold and $\nu$ assigns positive mass to some open interval, so that Theorem~\ref{thm:fit_improvement} applies at $F_0=\Phi$. Let $F_n$ satisfy \eqref{eq:local_alt} with
\[
    \int_{\mathbb{R}}|\eta(z)\,\dot G_{c_0}(z)|\,d\nu(z)<\infty,
\]
and $\sqrt{n}(\widehat F_n(z)-\Phi(z))\Rightarrow
\mathbb{B}_\Phi(z)+\eta(z)$ in $\ell^\infty(\mathbb{R})$,
where $\mathbb{B}_\Phi$ is a $\Phi$-Brownian bridge. Then
\begin{equation}\label{eq:Tn_local}
    \widetilde T_n\;\Rightarrow\;
    (Z-\delta_\nu(\eta))^2\,\mathbf{1}\{Z-\delta_\nu(\eta)<0\},
    \quad Z\sim\N(0,1),
\end{equation}
where the local power index is
\begin{equation}\label{eq:delta}
    \delta_\nu(\eta) = \frac{2A_\nu(\eta)}{\sqrt{\Omega_{c_0(\nu)}}}.
\end{equation}
The asymptotic rejection probability of the level-$\alpha$ test is
\begin{equation}\label{eq:power}
    \pi_\nu(\eta) =
    \Phi\bigl\{\delta_\nu(\eta)-\Phi^{-1}(1-\alpha)\bigr\}.
\end{equation}
\end{theorem}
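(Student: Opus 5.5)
The plan is to rerun the localisation argument behind Theorems~\ref{thm:boundary} and~\ref{thm:fit_improvement}, with the centred empirical process now converging to the shifted limit $\mathbb{B}_\Phi+\eta$. I would also keep careful track of the sign of the score.

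\emph{Step 1: score limit.} Differentiating under the integral sign, which (B1)--(B2) and Lemma~\ref{lem:B2} justify, gives
\[
    \sqrt n\,Q_n'(c_0) = -2\int_{\mathbb R}\sqrt n\{\widehat F_n(z)-\Phi(z)\}\dot G_{c_0}(z)\,d\nu(z).
\]
The map $x\mapsto\int x\,\dot G_{c_0}\,d\nu$ is a bounded linear functional on $\ell^\infty(\mathbb R)$, because $\dot G_{c_0}$ is bounded and $\nu$ is finite. The assumed weak convergence and the continuous mapping theorem then give $\sqrt n\,Q_n'(c_0)\Rightarrow S:=-2\int(\mathbb{B}_\Phi+\eta)\dot G_{c_0}\,d\nu$. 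The limit is Gaussian with mean $-2A_\nu(\eta)$ and variance $\Omega_{c_0}$ as in \eqref{eq:Omega_c0}. The integrability hypothesis on $\eta\dot G_{c_0}$ makes $A_\nu(\eta)$ finite. Writing $Z:=-2\int\mathbb{B}_\Phi\dot G_{c_0}\,d\nu/\sqrt{\Omega_{c_0}}\sim\N(0,1)$, we get $S/\sqrt{\Omega_{c_0}}=Z-\delta_\nu(\eta)$. Any remainder $r_n$ is already absorbed into the stated weak-convergence hypothesis.

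\emph{Step 2: curvature and tightness under the local array.} Here I would re-establish \eqref{eq:Qn_uniform} and Lemma~\ref{lem:tightness} along the triangular array. The difference $Q_n''(c)-Q''_\Phi(c)$ equals $-2\int(\widehat F_n-\Phi)\ddot G_c\,d\nu$, where $Q_\Phi$ is the population criterion at $\Phi$. This is $O_p(n^{-1/2})$ uniformly in $c\in\mathcal N$, because $\|\widehat F_n-\Phi\|_\infty=O_p(n^{-1/2})$ by the assumed weak convergence and $\ddot G_c$ is uniformly bounded by Lemma~\ref{lem:B2}. Lemma~\ref{lem:curvature} then gives $Q_n''\to_p H_{c_0}>0$ uniformly near $c_0$.

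Consistency $\widehat c_n\to_p c_0$ follows as in Theorem~\ref{thm:consistency}: $\sup_c|Q_n(c)-Q_\Phi(c)|\to_p0$ because $\widehat F_n\to\Phi$ uniformly in probability, and Lemma~\ref{lem:identifiability} identifies $c_0$. Tightness then follows from the standard quadratic lower bound
\[
    Q_n(c)-Q_n(c_0)\ge (c-c_0)Q_n'(c_0)+\tfrac14H_{c_0}(c-c_0)^2
\]
on a shrinking neighbourhood, together with $Q_n'(c_0)=O_p(n^{-1/2})$. I expect this step to be the main obstacle. Each earlier lemma was proved at a fixed $F_0=\Phi$, so every application of Glivenko--Cantelli or of the law of large numbers has to be replaced by the uniform $O_p(n^{-1/2})$ bound supplied by the hypothesis.

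\emph{Step 3: localisation and argmin.} Set $L_n(h)=n\{Q_n(c_0+h/\sqrt n)-Q_n(c_0)\}$ for $h\ge0$. A second-order Taylor expansion with the uniform curvature from Step~2 gives
\[
    L_n(h)=h\sqrt n\,Q_n'(c_0)+\tfrac12H_{c_0}h^2+o_p(1),
\]
uniformly on compacts in $h$. This converges to $L(h)=hS+\tfrac12H_{c_0}h^2$. The limit $L$ has the unique minimiser $\max\{0,-S/H_{c_0}\}$ over $h\ge0$, almost surely. By tightness and the argmax continuous mapping theorem,
\[
    T_n=-\min_{h\ge0}L_n(h)+o_p(1)\Rightarrow\frac{S^2}{2H_{c_0}}\mathbf 1\{S<0\}.
\]
Multiplying by $2H_{c_0}/\Omega_{c_0}$ and substituting $S/\sqrt{\Omega_{c_0}}=Z-\delta_\nu(\eta)$ yields \eqref{eq:Tn_local}.

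\emph{Step 4: power.} With $k_\alpha=\chi^2_{1,1-2\alpha}=\{\Phi^{-1}(1-\alpha)\}^2$, the limit exceeds $k_\alpha$ exactly when $Z-\delta_\nu(\eta)<-\Phi^{-1}(1-\alpha)$. This event has probability $\Phi\{\delta_\nu(\eta)-\Phi^{-1}(1-\alpha)\}$. The limiting law is continuous at $k_\alpha>0$, so the rejection probabilities converge, which gives \eqref{eq:power}.
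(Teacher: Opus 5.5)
Your proof is correct. Steps 1, 3 and 4 coincide with the paper's argument: the same score limit $S_\eta=S_0-2A_\nu(\eta)$, the same localised quadratic, and the same one-sided minimisation and power calculation.

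The genuine difference is Step 2, that is, how you get consistency and tightness of $\sqrt n(\widehat c_n-c_0)$ under the drifting array. The paper does not re-derive these. It asserts that $\{F_n\}$ is contiguous to $\Phi$ ``under the same regularity underlying the assumed local empirical process limit''. It then transfers $\sqrt n(\widehat c_n-c_0)=O_p(1)$ from the fixed null (Lemma~\ref{lem:tightness}) to $F_n$. The split $\widehat F_n-\Phi=(\widehat F_n-F_n)+(F_n-\Phi)$ is used only to obtain the curvature limit.

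You instead rerun the consistency and quadratic-lower-bound arguments directly along the array. This needs only $\|\widehat F_n-\Phi\|_\infty=O_p(n^{-1/2})$, which follows from the maintained weak convergence and the continuous mapping theorem for the sup norm.

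The paper's route is shorter and reuses the null lemma verbatim. However, contiguity is an additional hypothesis that the stated empirical-process limit does not imply. For example, $F_n(z)=(1-n^{-3/4})\Phi(z)+n^{-3/4}\mathbf{1}\{z\ge 0\}$ satisfies the hypothesis with $\eta\equiv0$. Yet the event that some observation equals $0$ has probability $0$ under $\Phi^{\otimes n}$ and probability tending to $1$ under $F_n^{\otimes n}$.

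Your argument therefore establishes the theorem under exactly its stated assumptions. It also supplies the portmanteau step (continuity of the limit law at $k_\alpha>0$) that the paper leaves implicit.

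The only thing to tidy is the ``shrinking neighbourhood''. A fixed small neighbourhood of $c_0$ on which $Q_n''\ge H_{c_0}/2$ with probability tending to one suffices. A shrinking one works only if it shrinks slowly enough for consistency to keep $\widehat c_n$ inside it.
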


\begin{remark}[Directed local sensitivity]
\label{rem:local_power}
The functional limit $\sqrt{n}(\widehat F_n-\Phi)\Rightarrow \mathbb{B}_\Phi+\eta$ is maintained as the standard local empirical process limit for a triangular array drifting toward $\Phi$ at rate $n^{-1/2}$ \citep[cf.][Thm.~3.10.12] {vandervaart_wellner23}, rather than re-derived here; setting $\eta\equiv0$ recovers Donsker's theorem under the fixed null (Theorem~\ref{thm:fit_improvement}). The local power index
$\delta_\nu(\eta)=2A_\nu(\eta)/\sqrt{\Omega_{c_0}(\nu)}$ measures the projection of $\eta$ onto the score direction $\dot G_{c_0}$, standardised by the null standard deviation. Alternatives with positive projection have power above size; alternatives orthogonal to $\dot G_{c_0}$, including skewed departures since $\dot G_{c_0}$ is odd, have $\delta_\nu(\eta)=0$ and power equal to size asymptotically. This is the local counterpart of Remark~\ref{rem:directed}: the test is locally sensitive only to departures aligned with the hypergeometric score direction.
\end{remark}

\begin{corollary}[Local hypergeometric alternatives]\label{cor:local_hgm}
Under the local hypergeometric alternatives
$F_n(z)=G_{c_0+\tau/\sqrt{n}}(z)$ with $\tau\geq 0$,
\[
    \delta_\nu(\tau) =
    \frac{\tau H_{c_0}(\nu)}{\sqrt{\Omega_{c_0}(\nu)}},
\]
and the asymptotic rejection probability is
\[
    \pi_\nu(\tau) = \Phi\!\left\{
    \frac{\tau H_{c_0}(\nu)}{\sqrt{\Omega_{c_0}(\nu)}}
    -\Phi^{-1}(1-\alpha)\right\}.
\]
\end{corollary}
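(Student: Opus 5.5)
The plan is to obtain the corollary as a direct specialisation of Theorem~\ref{thm:local_power}. I will identify the local departure $\eta$ for the hypergeometric path, verify the hypotheses of that theorem, and then evaluate $A_\nu(\eta)$ in terms of $H_{c_0}$.

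First, I will write the local alternative in the form \eqref{eq:local_alt}. By Lemma~\ref{lem:B1}, $c\mapsto G_c(z)$ is smooth on a two-sided compact neighbourhood $\mathcal N\subset(1/2,\infty)$ of $c_0=3/2$. A second-order Taylor expansion with Lagrange remainder then gives
\[
    G_{c_0+\tau/\sqrt n}(z)=\Phi(z)+\frac{\tau}{\sqrt n}\,\dot G_{c_0}(z)+r_n(z),\qquad |r_n(z)|\le \frac{\tau^2}{2n}\,C_2,
\]
uniformly in $z$, where $C_2$ is the constant of Lemma~\ref{lem:B2}. Thus $\eta=\tau\dot G_{c_0}$ and $\sqrt n\,\sup_z|r_n(z)|\to0$. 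The integrability condition $\int|\eta\,\dot G_{c_0}|\,d\nu<\infty$ holds because $|\dot G_{c_0}|\le C_1$ (Lemma~\ref{lem:B2}) and $\nu(\mathbb R)<\infty$ by (A2).

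Second, I will verify the functional limit $\sqrt n(\widehat F_n-\Phi)\Rightarrow\mathbb B_\Phi+\eta$ in $\ell^\infty(\mathbb R)$. This is the step I expect to need the most care, although the paper treats it as a maintained standard fact (Remark~\ref{rem:local_power}).
\begin{itemize}
\item I will use the quantile coupling $Z_{ni}=F_n^{-1}(U_i)$ with $U_i$ i.i.d.\ uniform. Then $\sqrt n(\widehat F_n-F_n)=\mathbb U_n\circ F_n$, where $\mathbb U_n$ is the uniform empirical process.
\item Since $\sup_z|F_n-\Phi|\to0$ by Lemma~\ref{lem:continuity}, and the uniform Brownian bridge has uniformly continuous paths, $\mathbb U_n\circ F_n$ and $\mathbb U_n\circ\Phi$ are asymptotically equivalent. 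Hence $\mathbb U_n\circ F_n\Rightarrow\mathbb B\circ\Phi=\mathbb B_\Phi$; alternatively one can cite \citet[Thm.~3.10.12]{vandervaart_wellner23}.
\item Adding the deterministic term $\sqrt n(F_n-\Phi)=\tau\dot G_{c_0}+\sqrt n\,r_n$, which converges uniformly to $\eta$, gives the required limit.
\end{itemize}
The remaining hypotheses, (B1)--(B2) and positive $\nu$-mass on an open interval, are exactly those of Theorem~\ref{thm:local_power}, which therefore applies.

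Finally, I will compute the local power index:
\[
    A_\nu(\eta)=\tau\int\dot G_{c_0}^2\,d\nu=\frac{\tau H_{c_0}}{2}
\]
by \eqref{eq:Hc0} of Lemma~\ref{lem:curvature}. Hence
\[
    \delta_\nu(\tau)=\frac{2A_\nu}{\sqrt{\Omega_{c_0}}}=\frac{\tau H_{c_0}}{\sqrt{\Omega_{c_0}}}.
\]
This index is nonnegative because $\tau\ge0$, and $\Omega_{c_0}>0$ by Lemma~\ref{lem:Omega_pos}. Substituting $\delta_\nu(\tau)$ into \eqref{eq:power} yields the stated rejection probability $\pi_\nu(\tau)$.
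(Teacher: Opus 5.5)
Your proposal is correct and follows the paper's own proof: you read off $\eta=\tau\dot G_{c_0}$ from the expansion in $c$, compute $A_\nu(\eta)=\tau H_{c_0}/2$ via \eqref{eq:Hc0}, and substitute into \eqref{eq:delta} and \eqref{eq:power}, and you also verify two hypotheses of Theorem~\ref{thm:local_power} that the paper leaves as maintained, namely the uniform $o(n^{-1/2})$ remainder (through Lemma~\ref{lem:B2}) and the local empirical-process limit. Keep the quantile-coupling argument but drop the fallback appeal to Theorem~3.10.12 of van der Vaart and Wellner: that is a contiguous-alternatives result assuming a Hellinger-differentiable path, whereas this path is not differentiable in quadratic mean at $c_0$ (Remark~\ref{rem:mle}), and indeed $G_{c_0+\tau/\sqrt n}^{\otimes n}$ and $\Phi^{\otimes n}$ are asymptotically separated by the event $\{\max_i|Z_i|>2\sqrt{\ln n}\}$; for the same reason, the tightness of $\sqrt n(\widehat c_n-c_0)$ under $F_n$ that the theorem needs should be obtained directly, by rerunning Lemma~\ref{lem:tightness} with $\sup_z|\widehat F_n-\Phi|\to_p 0$ and $\sqrt n\,Q_n'(c_0)\Rightarrow S_\eta$, not by the contiguity step in the paper's proof of Theorem~\ref{thm:local_power}.
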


\noindent Theorems~\ref{thm:boundary}, \ref{thm:fit_improvement}, \ref{thm:directed} and \ref{thm:local_power} complete the asymptotic theory for the fixed-scale test. The next section develops plug-in location-scale versions.

\section{Plug-in location-scale procedures}\label{sec:plugin}

The fixed-scale test assumes known location and scale. In practice one tests the composite Gaussian null $H_0:F_0\in \{\N(\mu,\sigma^2):\mu\in\mathbb{R},\sigma>0\}$ by standardising with estimated location and scale before applying the fixed-scale procedure. Two standardisations are considered. The first uses the sample mean and standard deviation, which are natural under the Gaussian null. The second uses the sample median and normal-calibrated IQR, which is motivated by the infinite variance of $G_c$ for $c>3/2$, under which the sample standard deviation is an unstable scale estimator and median/IQR standardisation preserves the tail signal more effectively. In both cases, the fixed-scale theory extends immediately: consistency, interior asymptotics, boundary inference, and directed consistency all carry over with $\widehat F_n$ replaced by the standardised empirical distribution. The only structural change is the projected covariance kernel, which replaces the Brownian bridge kernel $K(z,y)=\Phi(z\wedge y)-\Phi(z)\Phi(y)$ to account for estimation of location and scale. The curvature $H_{c_0}$, score direction $\dot G_{c_0}$, null limit $\frac12\delta_0+\frac12\chi_1^2$, and rejection rule $\widetilde T_n^\diamond>\chi_{1,1-2\alpha}^2$ are all unchanged.

\subsection{Location-scale standardisation and plug-in
statistics}\label{subsec:plugin_stats}

Let $X_1,\ldots,X_n$ be the observed sample. I consider two location-scale standardisations, denoted by superscript $\diamond\in\{P,R\}$: mean/standard-deviation ($P$) and median/IQR ($R$).\\
\\
\textit{Mean/standard-deviation ($P$).} Define
\[
    \widehat\mu_n^P = \bar X_n, \quad (\widehat\sigma_n^P)^2 =
    \frac{1}{n}\sum_{i=1}^n(X_i-\bar X_n)^2,
\]
and let $Y_i^P=(X_i-\widehat\mu_n^P)/\widehat\sigma_n^P$, with empirical distribution $\widehat F_n^P$.\\
\\
\textit{Median/IQR ($R$).} Let $\widehat q_n(p)$ denote the $p$-th sample quantile and $a=\Phi^{-1}(3/4)$. Let
\begin{equation}\label{eq:median_iqr}
    \widehat\mu_n^R = \widehat q_n(1/2), \quad
    \widehat\sigma_n^R =
    \frac{\widehat q_n(3/4)-\widehat q_n(1/4)}{2a},
\end{equation}
where $2a$ is the interquartile range of $\Phi$, used to calibrate $\widehat\sigma_n^R$. Let $Y_i^R=(X_i-\widehat\mu_n^R)/\widehat\sigma_n^R$, with empirical distribution $\widehat F_n^R$. For each standardisation $\diamond\in\{P,R\}$, define the plug-in criterion, estimator, and fit-improvement statistic
\[
    Q_n^\diamond(c) = \int_{\mathbb{R}}
    \{\widehat F_n^\diamond(z)-G_c(z)\}^2\,d\nu(z),
    \quad c\in\mathcal{C},
\]
\[
    \widehat c_n^\diamond \in
    \arg\min_{c\in\mathcal{C}}Q_n^\diamond(c), \quad
    T_n^\diamond = n\{Q_n^\diamond(c_0)
    -Q_n^\diamond(\widehat c_n^\diamond)\}.
\]
\noindent Under the Gaussian null, $\widehat\mu_n^P\to\mu_0$ and $\widehat\sigma_n^P\to\sigma_0$ almost surely by the strong law of large numbers, since $\mathrm{E}[X_i^2] <\infty$; similarly $\widehat\mu_n^R\to\mu_0$ and $\widehat\sigma_n^R\to\sigma_0$ almost surely by the consistency of sample quantiles at the population quantiles
corresponding to probabilities $1/4$, $1/2$, and $3/4$, where the Gaussian
density is continuous and positive. The two conditions are not equivalent: the mean/SD case requires a finite second moment, while the median/IQR case requires only local density regularity, with no moment condition.

\subsection{Asymptotic theory}\label{subsec:plugin_asymptotics}

Under the Gaussian null, both standardisations yield a mean-zero
Gaussian empirical-process limit. The only change relative to the
fixed-scale theory is that the Brownian-bridge covariance kernel is
replaced by the projected kernel induced by the chosen location-scale
standardisation. This projection accounts for the first-order effect of estimating location and scale and leads to the same boundary limit after using the corresponding plug-in score variance. The next two lemmas give the projected empirical processes for the mean/standard-deviation and median/IQR standardisations, respectively.

\begin{lemma}[Projected empirical process: mean/SD]\label{lem:plugin_ep}
Suppose $X_i\stackrel{\text{i.i.d.}}{\sim}\N(\mu_0,\sigma_0^2)$.
Then, in $\ell^\infty(\mathbb{R})$,
\begin{equation}\label{eq:ZP}
    \sqrt{n}\{\widehat F_n^P(z)-\Phi(z)\}
    \;\Rightarrow\;
    \mathbb{Z}_P(z),
\end{equation}
where $\mathbb{Z}_P$ is the mean-zero Gaussian process admitting the
representation
\[
    \mathbb{Z}_P(z) = \mathbb{B}_\Phi(z) + \phi(z)Z_1 + z\phi(z)Z_2.
\]
Here $\mathbb{B}_\Phi$ is a $\Phi$-Brownian bridge and
$(\mathbb{B}_\Phi,Z_1,Z_2)$ is jointly Gaussian with
\[
    Z_1\sim\N(0,1),\quad Z_2\sim\N(0,1/2),\quad \Cov(Z_1,Z_2)=0,
\]
and cross-covariances
\[
    \Cov(\mathbb{B}_\Phi(z),Z_1)=-\phi(z),\quad
    \Cov(\mathbb{B}_\Phi(z),Z_2)=-\frac12 z\phi(z).
\]
Consequently, the covariance kernel of $\mathbb{Z}_P$ is
\begin{equation}\label{eq:KP}
    K_P(z,y) = \Phi(z\wedge y)-\Phi(z)\Phi(y) -\phi(z)\phi(y)
    -\frac12 zy\,\phi(z)\phi(y).
\end{equation}
\end{lemma}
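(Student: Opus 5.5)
By location-scale invariance of the standardised sample, the plan is to reduce to the standard normal case. Write $U_i=(X_i-\mu_0)/\sigma_0\stackrel{\text{i.i.d.}}{\sim}\N(0,1)$ and let $\widehat F_n^U$ be their empirical distribution function. Then $Y_i^P=(U_i-\bar U_n)/s_n$, where $s_n^2=n^{-1}\sum_i U_i^2-\bar U_n^2$. Consequently
\[
    \widehat F_n^P(z)=\widehat F_n^U\bigl(t_n(z)\bigr),\qquad t_n(z)=\bar U_n+s_n z .
\]
I would decompose
\[
    \sqrt n\{\widehat F_n^P(z)-\Phi(z)\}
    =\sqrt n\{\widehat F_n^U-\Phi\}\bigl(t_n(z)\bigr)
    +\sqrt n\{\Phi(t_n(z))-\Phi(z)\}
    =: A_n(z)+D_n(z),
\]
and treat the two terms separately.

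\emph{Step 1 (joint weak convergence).} The class $\{\mathbf 1\{u\le z\}:z\in\mathbb R\}\cup\{u,\;u^2\}$ is $\Phi$-Donsker. It is the VC class of indicators plus two square-integrable functions. Hence the processes $\sqrt n(\widehat F_n^U-\Phi)$, $\sqrt n\,\bar U_n$ and $\tfrac12\sqrt n(\overline{U^2}_n-1)$ converge jointly. The limit is a tight Gaussian element $(\mathbb B_\Phi,Z_1,Z_2)$ whose covariances are those of $(\mathbf 1\{U\le z\},U,\tfrac12(U^2-1))$. Since $\bar U_n^2=O_p(n^{-1})$, we have $\sqrt n(s_n-1)=\tfrac12\sqrt n(\overline{U^2}_n-1)+o_p(1)\Rightarrow Z_2$.

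The covariances follow from Gaussian integration by parts, using $\int_{-\infty}^z u\phi(u)\,du=-\phi(z)$ and $\int_{-\infty}^z u^2\phi(u)\,du=\Phi(z)-z\phi(z)$. These give $\Var Z_1=1$, $\Var Z_2=\tfrac14\cdot 2=\tfrac12$, $\Cov(Z_1,Z_2)=\tfrac12 E[U^3]=0$, $\Cov(\mathbb B_\Phi(z),Z_1)=-\phi(z)$, and $\Cov(\mathbb B_\Phi(z),Z_2)=\tfrac12(\Phi(z)-z\phi(z))-\tfrac12\Phi(z)=-\tfrac12 z\phi(z)$.

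\emph{Step 2 (the random time change).} For $A_n$, I would use asymptotic equicontinuity of the uniform empirical process. Note that $\sup_z|\Phi(t_n(z))-\Phi(z)|\to_p 0$, since $\sup_z|z|\phi(\xi)$ is bounded for $\xi$ between $z$ and $t_n(z)$. It follows that $\sup_z|A_n(z)-\sqrt n(\widehat F_n^U-\Phi)(z)|\to_p0$. For $D_n$, a second-order mean-value expansion gives
\[
    D_n(z)=\phi(z)\bigl\{\sqrt n\,\bar U_n+z\sqrt n(s_n-1)\bigr\}+R_n(z),
\]
with $|R_n(z)|\lesssim n^{-1/2}\,O_p(1)\,(1+z^2)\sup_{\xi}|\xi\phi(\xi)|$. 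The supremum of $\xi\phi(\xi)$ is taken over $\xi$ between $z$ and $t_n(z)$.

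\emph{Main obstacle.} I expect the hardest step to be making this expansion uniform in $z$, because $t_n(z)-z=\bar U_n+(s_n-1)z$ grows linearly in $|z|$. My plan is to restrict to the event $|s_n-1|\le\tfrac12$, whose probability tends to one. On that event $|\xi|\ge |z|/2-|\bar U_n|$, so $(1+z^2)|\xi\phi(\xi)|$ is bounded uniformly in $z$ by Gaussian decay. Therefore $\sup_z|R_n(z)|=O_p(n^{-1/2})$. By the continuous mapping theorem, Steps 1 and 2 together give
\[
    \sqrt n(\widehat F_n^P-\Phi)\Rightarrow \mathbb B_\Phi+\phi Z_1+z\phi Z_2
    \quad\text{in }\ell^\infty(\mathbb R).
\]

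\emph{Step 3 (kernel).} Expanding $\Cov(\mathbb Z_P(z),\mathbb Z_P(y))$ with the covariances from Step 1 yields three contributions:
\[
    \phi(z)\phi(y)(1-2)=-\phi(z)\phi(y),\qquad
    zy\,\phi(z)\phi(y)\bigl(\tfrac12-\tfrac12-\tfrac12\bigr)=-\tfrac12 zy\,\phi(z)\phi(y),
\]
with the $Z_1$–$Z_2$ cross terms vanishing. Adding the Brownian bridge kernel gives \eqref{eq:KP}.
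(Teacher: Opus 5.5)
Your proof is correct, but it takes a different route from the paper's. The paper never handles the random time change itself. After reducing to $W_i=(X_i-\mu_0)/\sigma_0$, it checks that $(\bar W_n,\widehat s_n)$ is asymptotically linear with influence vector $\ell(w)=(w,\tfrac12(w^2-1))'$. It then computes Durbin's derivative vector $g(\Phi(z))=(-\phi(z),-z\phi(z))'$, his $h$-function $h(\Phi(z))=(-\phi(z),-\tfrac12 z\phi(z))'$ and $L=\E[\ell(W)\ell(W)']=\mathrm{diag}(1,\tfrac12)$, and reads $K_P$ off the covariance formula in Corollary~1 of Durbin (1973). The representation $\mathbb B_\Phi+\phi Z_1+z\phi Z_2$ is attached afterwards by computing cross-covariances, and the joint convergence of $\sqrt n(\widehat F_n^W-\Phi)$, $\sqrt n\,\bar W_n$ and $\sqrt n(\widehat s_n-1)$ is simply asserted.

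You instead prove directly, for this special case, what Durbin's corollary packages. Your Donsker-class step supplies the joint limit that the paper takes for granted. Asymptotic equicontinuity removes the random argument $t_n(z)$. Your tail control of the second-order remainder on $\{|s_n-1|\le\tfrac12\}$ is exactly the uniformity in $z$ that Durbin's regularity conditions deliver implicitly.

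Your version gives a self-contained argument set directly in $\ell^\infty(\mathbb R)$. In it, the representation is the literal limit, obtained as a continuous image of the joint limit $(\mathbb B_\Phi,Z_1,Z_2)$, and $K_P$ is a consequence rather than the starting point. The paper's version buys brevity and a template reused verbatim for the median/IQR case in Lemma~\ref{lem:robust_ep}. Your argument transfers there too: the Bahadur representation replaces the linearisation of $(\bar U_n,s_n)$, and the quantile influence functions are linear combinations of indicators already in your Donsker class.
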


\noindent The two correction terms in $K_P$ reflect the components of the empirical process absorbed by estimating $\mu$ and $\sigma$ respectively.

\begin{lemma}[Projected empirical process: median/IQR]\label{lem:robust_ep}
Suppose $X_i\stackrel{\text{i.i.d.}}{\sim}\N(\mu_0,\sigma_0^2)$.
Let $a=\Phi^{-1}(3/4)$ and define the influence functions
\[
    \IF_m(w) = \frac{1/2-\mathbf{1}\{w\leq 0\}}{\phi(0)}, \quad \IF_s(w)
    =
    \frac{1}{2a}\left\{\frac{3/4-\mathbf{1}\{w\leq a\}}{\phi(a)} -
    \frac{1/4-\mathbf{1}\{w\leq -a\}}{\phi(a)}\right\}.
\]
Then, in $\ell^\infty(\mathbb{R})$,
\[
    \sqrt{n}\{\widehat F_n^R(z)-\Phi(z)\}\;\Rightarrow\;\mathbb{Z}_R(z),
\]
where $\mathbb{Z}_R$ is the mean-zero Gaussian process admitting the
representation
\[
    \mathbb{Z}_R(z) = \mathbb{B}_\Phi(z) + \phi(z)Z_m + z\phi(z)Z_s.
\]
Here $\mathbb{B}_\Phi$ is a $\Phi$-Brownian bridge and $(\mathbb{B}_\Phi,Z_m,Z_s)$ is jointly Gaussian, where
\[
    Z_m\sim \N(0,\E[\IF_m(W)^2]), \quad
    Z_s\sim \N(0,\E[\IF_s(W)^2]), \quad
    \Cov(Z_m,Z_s)=\E[\IF_m(W)\IF_s(W)],
\]
with $W\sim\N(0,1)$, and
\[
    \Cov(\mathbb{B}_\Phi(z),Z_m) =
    \E[(\mathbf{1}\{W\leq z\}-\Phi(z))\IF_m(W)],
\]
\[
    \Cov(\mathbb{B}_\Phi(z),Z_s) =
    \E[(\mathbf{1}\{W\leq z\}-\Phi(z))\IF_s(W)].
\]
Consequently, the covariance kernel of $\mathbb{Z}_R$ is
\begin{equation}\label{eq:KR}
    K_R(z,y) = \E[\xi_z(W)\xi_y(W)],
\end{equation}
where
\[
    \xi_z(w) = \mathbf{1}\{w\leq z\}-\Phi(z) + \phi(z)\IF_m(w) +
    z\phi(z)\IF_s(w).
\]
\end{lemma}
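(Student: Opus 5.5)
The plan is to reduce to the standard case and then run a Bahadur-type linearisation of the median/IQR estimators. Write $X_i=\mu_0+\sigma_0W_i$ with $W_i$ i.i.d.\ $\N(0,1)$. By location-scale equivariance of sample quantiles, $\widehat\mu_n^R=\mu_0+\sigma_0\widetilde\mu_n$ and $\widehat\sigma_n^R=\sigma_0\widetilde\sigma_n$, where $\widetilde\mu_n,\widetilde\sigma_n$ are the median and IQR-scale of the $W_i$. Hence $Y_i^R=(W_i-\widetilde\mu_n)/\widetilde\sigma_n$, and we may take $\mu_0=0$, $\sigma_0=1$ without loss.

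Next I will write $\widehat F_n^R(z)=\mathbb{F}_n(\widetilde\mu_n+\widetilde\sigma_n z)$, where $\mathbb{F}_n$ is the empirical distribution function of the $W_i$, and decompose
\[
\sqrt n\{\widehat F_n^R(z)-\Phi(z)\}=\mathbb{G}_n(\widetilde\mu_n+\widetilde\sigma_n z)+\sqrt n\{\Phi(\widetilde\mu_n+\widetilde\sigma_n z)-\Phi(z)\},
\]
with $\mathbb{G}_n=\sqrt n(\mathbb{F}_n-\Phi)$.

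\emph{Second term.} I will invoke the Bahadur representation of sample quantiles at $p\in\{1/4,1/2,3/4\}$:
\[
\widehat q_n(p)=\Phi^{-1}(p)+n^{-1}\sum_i\frac{p-\mathbf 1\{W_i\le\Phi^{-1}(p)\}}{\phi(\Phi^{-1}(p))}+o_p(n^{-1/2}).
\]
This is valid because $\phi$ is continuous and positive there. It gives $\sqrt n\widetilde\mu_n=n^{-1/2}\sum\IF_m(W_i)+o_p(1)$ and $\sqrt n(\widetilde\sigma_n-1)=n^{-1/2}\sum\IF_s(W_i)+o_p(1)$, using $\phi(-a)=\phi(a)$.

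A first-order Taylor expansion of $\Phi$ then gives $\phi(z)\sqrt n\widetilde\mu_n+z\phi(z)\sqrt n(\widetilde\sigma_n-1)+o_p(1)$. This must hold \emph{uniformly} in $z$. The remainder is bounded by $\sup_t|\phi'(t)|\,(\widetilde\mu_n+(\widetilde\sigma_n-1)z)^2/\sqrt n$ for $|z|\le M$. For $|z|>M$, both $\Phi(\widetilde\mu_n+\widetilde\sigma_n z)-\Phi(z)$ and the linear terms are small uniformly, because $z\phi(z)$ and the Gaussian tails decay and $\widetilde\sigma_n\to1$. The sign of the influence functions has to be checked carefully: a larger sample median shifts $\widehat F_n^R$ upward, which matches the $+\phi(z)Z_m$ convention in the statement.

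\emph{First term.} The class of indicators is Donsker, so $\mathbb{G}_n$ is asymptotically uniformly equicontinuous in the standard deviation semimetric $\rho(s,t)=|\Phi(s)-\Phi(t)|^{1/2}$. Since $\sup_z|\Phi(\widetilde\mu_n+\widetilde\sigma_nz)-\Phi(z)|\to_p0$, we obtain $\sup_z|\mathbb{G}_n(\widetilde\mu_n+\widetilde\sigma_nz)-\mathbb{G}_n(z)|\to_p0$. I expect this step, together with the uniform control of the Taylor remainder over the whole real line, to be the main technical obstacle. I would handle it with the standard argument of van der Vaart and Wellner for estimated-parameter empirical processes.

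Combining the two terms,
\[
\sqrt n\{\widehat F_n^R(z)-\Phi(z)\}=n^{-1/2}\sum_i\xi_z(W_i)+o_p(1)
\]
uniformly in $z$. The class $\{\xi_z:z\in\mathbb R\}$ is Donsker: it is the indicator class plus two fixed bounded functions multiplied by bounded deterministic coefficients $\phi(z)$ and $z\phi(z)$, and each $\xi_z$ has mean zero. Weak convergence in $\ell^\infty(\mathbb R)$ to a mean-zero Gaussian process with covariance $\E[\xi_z(W)\xi_y(W)]=K_R(z,y)$ then follows.

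The stated joint representation $\mathbb{B}_\Phi+\phi Z_m+z\phi Z_s$ comes from the joint multivariate CLT applied to $(\mathbb{G}_n,n^{-1/2}\sum\IF_m,n^{-1/2}\sum\IF_s)$, which is jointly Donsker. The listed variances and cross-covariances are just the second moments of the corresponding summands.
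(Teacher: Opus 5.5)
Your proof is correct. It shares the paper's first step: the reduction to $W_i\sim\N(0,1)$ by equivariance, and the Bahadur representations at $p\in\{1/4,1/2,3/4\}$ that identify the influence vector $(\IF_m,\IF_s)'$. After that you take a different route to the process-level statement.

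The paper invokes Durbin's (1973) Corollary~1 for the family $\Phi((w-\mu)/s)$. It computes Durbin's $g(\Phi(z))=(-\phi(z),-z\phi(z))'$, $h_R$ and $L_R$, and takes the limiting covariance from Durbin's formula. It then obtains the representation $\mathbb{B}_\Phi+\phi Z_m+z\phi Z_s$ and $K_R=\E[\xi_z(W)\xi_y(W)]$ by matching covariances.

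You instead prove the Durbin-type expansion directly:
\begin{itemize}
\item asymptotic equicontinuity of $\mathbb{G}_n$ removes the random evaluation point;
\item a uniform delta-method expansion handles $\sqrt n\{\Phi(\widetilde\mu_n+\widetilde\sigma_n z)-\Phi(z)\}$;
\item the resulting uniform linearisation $n^{-1/2}\sum_i\xi_z(W_i)+o_p(1)$, together with joint convergence of $(\mathbb{G}_n,\mathbb{G}_n\IF_m,\mathbb{G}_n\IF_s)$ and continuity of $(f,a,b)\mapsto\{z\mapsto f(z)+a\phi(z)+bz\phi(z)\}$ on $\ell^\infty(\mathbb{R})\times\mathbb{R}^2$, gives the limit. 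That map is continuous because $\phi$ and $z\phi(z)$ are bounded.
\end{itemize}

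The paper's route is shorter, but it inherits Durbin's regularity conditions and his $D[0,1]$ framework on the $t=\Phi(z)$ scale, which must then be carried over to $\ell^\infty(\mathbb{R})$. Yours is self-contained and works in $\ell^\infty(\mathbb{R})$ from the outset. It also makes the joint Gaussian representation a genuine limit statement rather than an identification of covariances.

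One sentence should be sharpened. On $\{|z|>M\}$, the quantity you need to be small is $\sqrt n\,|\Phi(\widetilde\mu_n+\widetilde\sigma_n z)-\Phi(z)|$, not the unscaled difference. This follows from the mean value theorem. On the event $\{|\widetilde\mu_n|\le 1,\ |\widetilde\sigma_n-1|\le 1/2\}$ and for $|z|\ge 4$, the intermediate point has modulus at least $|z|/4$. Hence the term is at most $\phi(z/4)\{|\sqrt n\,\widetilde\mu_n|+|z|\,|\sqrt n(\widetilde\sigma_n-1)|\}$. This is uniformly $o_p(1)$ as $M\to\infty$, by the $O_p(1)$ rates from the Bahadur step.
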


\begin{remark}[Modularity of the plug-in correction]
\label{rem:plugin_modularity}
The plug-in score variance for standardisation
$\diamond\in\{P,R\}$ is
\begin{equation}\label{eq:plugin_Omega}
    \Omega_{c_0}^\diamond = 4\iint_{\mathbb{R}^2}
    K_\diamond(z,y)\,\dot G_{c_0}(z)\,\dot G_{c_0}(y)\,d\nu(z)\,d\nu(y).
\end{equation}
Relative to the fixed-scale case, the minimum-distance criterion, the boundary argument, and the limiting one-sided quadratic form are unchanged. What changes is the Gaussian empirical-process kernel: the Brownian-bridge kernel is replaced by the projected kernel $K_\diamond$, and consequently the score variance $\Omega_{c_0}$ is replaced by $\Omega_{c_0}^\diamond$. Intuitively, estimating location and scale absorbs from the empirical process the components that look, to first order, like small location and scale perturbations of the null distribution. For mean/SD standardisation this gives the usual normal location-scale projection. For median/IQR standardisation the same mechanism applies, but the projection is driven by the quantile influence functions instead. This replacement has no effect on the form of the boundary limit after standardisation: the normalising constant in the statistic is simply computed with $K_\diamond$ through \eqref{eq:plugin_Omega}. In numerical implementation, the same quadrature scheme used in the fixed-scale case can be used after replacing the kernel. This is the familiar phenomenon in empirical-process goodness-of-fit theory that nuisance-parameter estimation changes the covariance structure of the limiting process, but not the basic form of the resulting projected test statistic.
\end{remark}

\begin{theorem}[Plug-in fit-improvement limit]
\label{thm:plugin_fit_improvement}
Suppose the Gaussian location-scale null holds and the conditions of Theorem~\ref{thm:fit_improvement} hold with $\widehat F_n$ replaced by $\widehat F_n^\diamond$. Then
\[
    \widetilde T_n^\diamond =
    \frac{2H_{c_0}}{\Omega_{c_0}^\diamond}
    \,n\{Q_n^\diamond(c_0)-Q_n^\diamond(\widehat c_n^\diamond)\}
    \;\Rightarrow\; \tfrac12\delta_0+\tfrac12\chi_1^2.
\]
An asymptotic level-$\alpha$ test rejects the Gaussian location-scale null when $\widetilde T_n^\diamond>\chi_{1,1-2\alpha}^2$.
\end{theorem}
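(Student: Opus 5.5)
The plan is to rerun the proofs of Theorems~\ref{thm:boundary} and~\ref{thm:fit_improvement} with $\widehat F_n$ replaced by $\widehat F_n^\diamond$. The only probabilistic input that changes is the empirical-process limit. The curvature $H_{c_0}$ and the uniform convergence of $Q_n''$ (Lemma~\ref{lem:curvature}) depend on the data only through $\widehat F_n^\diamond-G_c$. That difference is uniformly bounded and converges uniformly to $\Phi-G_c$, by Lemma~\ref{lem:plugin_ep} or~\ref{lem:robust_ep} together with Glivenko--Cantelli-type consistency. So $Q_n^{\diamond\prime\prime}$ converges uniformly on $\mathcal N$ to the same $Q''$, and $H_{c_0}$ is unchanged. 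Consistency $\widehat c_n^\diamond\to c_0$ follows exactly as in Theorem~\ref{thm:consistency}, since $\sup_z|\widehat F_n^\diamond-\Phi|\to0$ in probability.

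\textbf{Step 1: the score.} Write $Q_n^{\diamond\prime}(c_0)=-2\int(\widehat F_n^\diamond-\Phi)\dot G_{c_0}\,d\nu$. The map $f\mapsto\int f\dot G_{c_0}\,d\nu$ is continuous and linear on $\ell^\infty(\mathbb R)$, because $\dot G_{c_0}$ is bounded (Lemma~\ref{lem:B2}) and $\nu$ is finite. By the continuous mapping theorem applied to Lemma~\ref{lem:plugin_ep} or~\ref{lem:robust_ep},
\[
\sqrt n\,Q_n^{\diamond\prime}(c_0)\Rightarrow -2\int \mathbb Z_\diamond\,\dot G_{c_0}\,d\nu\sim\N(0,\Omega^\diamond_{c_0}),
\]
with $\Omega^\diamond_{c_0}$ as in \eqref{eq:plugin_Omega}; Fubini applied to the covariance gives this variance. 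I also need $\Omega^\diamond_{c_0}>0$ so that the standardisation is well defined. I would verify this by noting that $\int\mathbb Z_\diamond\dot G_{c_0}\,d\nu$ equals $\E$-type integrals of $\xi_z(W)$, i.e. the variance of $\int\xi_z(W)\dot G_{c_0}(z)\,d\nu(z)$, which is a nonconstant function of $W$.

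\textbf{Step 2: localisation and tightness.} Set $L_n^\diamond(h)=n\{Q_n^\diamond(c_0+h/\sqrt n)-Q_n^\diamond(c_0)\}$. A second-order Taylor expansion with the uniform curvature convergence gives, uniformly on compacts in $h\ge0$,
\[
L_n^\diamond(h)=h\sqrt n\,Q_n^{\diamond\prime}(c_0)+\tfrac12 H_{c_0}h^2+o_p(1).
\]
Tightness of $\sqrt n(\widehat c_n^\diamond-c_0)$ follows exactly as in Lemma~\ref{lem:tightness}. Near $c_0$ the quadratic term dominates, and the score is $O_p(1)$ by Step 1.

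\textbf{Step 3: the limits.} The argmin continuous mapping theorem then gives $\sqrt n(\widehat c_n^\diamond-c_0)\Rightarrow\max\{0,Z\}$ with $Z=-S/H_{c_0}$ and $S\sim\N(0,\Omega^\diamond_{c_0})$. Substituting back,
\[
T_n^\diamond=-\min_{h\ge0}L_n^\diamond(h)+o_p(1)\Rightarrow \frac{(\max\{0,-S\})^2}{2H_{c_0}}.
\]
Multiplying by $2H_{c_0}/\Omega^\diamond_{c_0}$ yields $(\max\{0,N\})^2$ with $N\sim\N(0,1)$, which is distributed as $\tfrac12\delta_0+\tfrac12\chi_1^2$. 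The rejection rule is then that of Remark~\ref{rem:critical_values}.

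The main obstacle is the genuinely new ingredient: justifying that the projected process limit holds in $\ell^\infty(\mathbb R)$ jointly with the uniform-in-$c$ control needed for the remainder. This is delicate in the median/IQR case, where the Bahadur representation of the sample quantiles must be combined with the standardised empirical process. Since Lemmas~\ref{lem:plugin_ep} and~\ref{lem:robust_ep} are available, I would take them as given and only check that $\ell^\infty$ convergence suffices for the linear functionals above, which it does by boundedness of $\dot G_{c_0}$ and finiteness of $\nu$.
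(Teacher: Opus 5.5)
Your proposal is correct and matches the paper's proof: rerun the localisation, tightness and argmin argument of Theorems~\ref{thm:boundary} and~\ref{thm:fit_improvement}. The score limit becomes the projected one, obtained from Lemma~\ref{lem:plugin_ep} or~\ref{lem:robust_ep} by the continuous mapping theorem, and $H_{c_0}$ is unchanged by the uniform-convergence argument of Lemma~\ref{lem:curvature}. Your extra check that $\Omega^\diamond_{c_0}>0$, via nonconstancy of $\int\xi_z(W)\dot G_{c_0}(z)\,d\nu(z)$, fills in a point the paper leaves implicit in its hypotheses.
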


\noindent The same substitution principle applies to the remaining minimum-distance results. Consistency of $\widehat c_n^\diamond$, interior asymptotic normality with the corresponding projected score variance, and directed consistency against alternatives satisfying
$\Delta^\diamond=Q^\diamond(c_0)-Q^\diamond(c_\diamond^\star)>0$ follow from the arguments of Theorems~\ref{thm:consistency}, \ref{thm:interior}, and \ref{thm:directed}, with the fixed-scale empirical-process variance replaced by the appropriate projected variance $\Omega^\diamond$.

\begin{corollary}[Moments]\label{cor:plugin_moments}
Under the conditions of Theorem~\ref{thm:plugin_fit_improvement}, $\E[\widetilde T_\infty^\diamond]=1/2$ and $\Var(\widetilde T_\infty^\diamond)=5/4$ for each $\diamond\in\{P,R\}$, by the same calculation as Corollary~\ref{cor:moments_Ttilde}.
\end{corollary}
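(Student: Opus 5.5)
The two moments depend only on the limit law. Theorem~\ref{thm:plugin_fit_improvement} gives $\widetilde T_n^\diamond\Rightarrow\widetilde T_\infty^\diamond\sim\tfrac12\delta_0+\tfrac12\chi_1^2$ for each $\diamond\in\{P,R\}$. This is the same law that appears in Corollary~\ref{cor:moments_Ttilde}. The plug-in constants $\Omega_{c_0}^\diamond$ and $H_{c_0}$ enter only through the standardisation, which is already absorbed in the statement of the theorem, so they play no further role. The corollary is therefore a moment computation for a fixed two-component mixture.

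I will realise the limit as $\widetilde T_\infty=V\,\mathbf 1\{B=1\}$, where $B\sim\text{Bernoulli}(1/2)$ is independent of $V\sim\chi_1^2$. Equivalently, $\widetilde T_\infty=Z^2\mathbf 1\{Z>0\}$ with $Z\sim\N(0,1)$; this is the form that comes out of the boundary argument, and symmetry of $Z$ shows the two descriptions agree. By independence, $\E[\widetilde T_\infty^k]=\tfrac12\E[V^k]$ for $k=1,2$. Since $\E[V]=1$ and $\E[V^2]=\Var(V)+1=3$, this gives $\E[\widetilde T_\infty]=\tfrac12$ and $\E[\widetilde T_\infty^2]=\tfrac32$. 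Hence $\Var(\widetilde T_\infty)=\tfrac32-\tfrac14=\tfrac54$.

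There is no real obstacle. The moments are moments of the limiting random variable $\widetilde T_\infty^\diamond$, not limits of $\E[\widetilde T_n^\diamond]$, so no uniform integrability argument is needed.
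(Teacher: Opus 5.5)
Your proposal is correct and follows the paper's proof. Both take the limit law $\tfrac12\delta_0+\tfrac12\chi_1^2$ from Theorem~\ref{thm:plugin_fit_improvement} and compute its moments via the independent Bernoulli$(1/2)$ times $\chi_1^2$ representation used in Corollary~\ref{cor:moments_Ttilde}; your observation that these are moments of the limit variable, so no uniform integrability is needed, is also accurate.
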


\begin{corollary}[Plug-in local power]\label{cor:plugin_local_power}
Under local alternatives $F_n^\diamond(z)=\Phi(z)+\eta_\diamond(z)/
\sqrt{n} + o(n^{-1/2})$ satisfying the conditions of
Theorem~\ref{thm:local_power} with $K$ replaced by $K_\diamond$,
\[
    \pi_\nu^\diamond(\eta_\diamond) = \Phi\!\left\{
    \delta_\nu^\diamond(\eta_\diamond)-\Phi^{-1}(1-\alpha)
    \right\}, \quad \delta_\nu^\diamond(\eta_\diamond) =
    \frac{2\int_\mathbb{R} \eta_\diamond(z)\,\dot G_{c_0}(z)\,d\nu(z)}
    {\sqrt{\Omega_{c_0}^\diamond(\nu)}}.
\]
\end{corollary}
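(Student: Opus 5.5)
The plan is to reduce the statement to Theorem~\ref{thm:local_power}, applied with the plug-in empirical process in place of the raw one. The fixed-scale proof uses only three ingredients: the functional limit of $\sqrt{n}(\widehat F_n-\Phi)$; the curvature and uniform second-derivative convergence of Lemma~\ref{lem:curvature}; and tightness of $\sqrt n(\widehat c_n-c_0)$ from Lemma~\ref{lem:tightness}. The curvature ingredient depends only on $\Phi$, $\dot G_{c_0}$ and $\nu$. Moreover, $Q_n^{\diamond\prime\prime}(c)-Q''(c)$ involves $\widehat F_n^\diamond$ only through $\sup_z|\widehat F_n^\diamond-\Phi|\to_p0$. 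It is therefore unchanged. So the first step is to establish the local analogue of Lemmas~\ref{lem:plugin_ep} and~\ref{lem:robust_ep}:
\[
    \sqrt{n}\{\widehat F_n^\diamond-\Phi\}\Rightarrow\mathbb{Z}_\diamond+\eta_\diamond
    \quad\text{in }\ell^\infty(\mathbb{R}),
\]
where $\mathbb{Z}_\diamond$ has kernel $K_\diamond$. Here $F_n^\diamond$ is read as the law of the standardised observations, so that $\eta_\diamond$ already incorporates the first-order location--scale drift induced by the alternative. This is what the hypothesis ``conditions of Theorem~\ref{thm:local_power} with $K$ replaced by $K_\diamond$'' supplies. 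I would state it as maintained, in the same way Remark~\ref{rem:local_power} maintains the fixed-scale local limit. If a derivation is wanted, it would come from Le Cam's third lemma applied jointly to $(\mathbb{B}_\Phi,Z_1,Z_2)$ or $(\mathbb{B}_\Phi,Z_m,Z_s)$.

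Given this limit, I localise exactly as in Theorem~\ref{thm:boundary}. Define $L_n^\diamond(h)=n\{Q_n^\diamond(c_0+h/\sqrt n)-Q_n^\diamond(c_0)\}$ for $h\ge0$. A second-order Taylor expansion, justified by (B1)--(B2), gives
\[
    L_n^\diamond(h)=-hS_n^\diamond+\tfrac12H_{c_0}h^2+o_p(1)
\]
uniformly on compacts, with $S_n^\diamond=2\int\sqrt n(\widehat F_n^\diamond-\Phi)\dot G_{c_0}\,d\nu$. By the continuous mapping theorem and the bounded dominator of Lemma~\ref{lem:B2}, $S_n^\diamond\Rightarrow S^\diamond$. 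The limit $S^\diamond$ is normal with mean $2\int\eta_\diamond\dot G_{c_0}\,d\nu$, which is finite by the integrability hypothesis, and variance
\[
    4\iint K_\diamond\,\dot G_{c_0}\dot G_{c_0}\,d\nu\,d\nu=\Omega^\diamond_{c_0}.
\]
Tightness of $\sqrt n(\widehat c_n^\diamond-c_0)$ follows as in Lemma~\ref{lem:tightness}, since the argument uses only $S_n^\diamond=O_p(1)$ and $H_{c_0}>0$. The argmax theorem then gives $\sqrt n(\widehat c_n^\diamond-c_0)\Rightarrow\max\{0,S^\diamond/H_{c_0}\}$. It also gives $T_n^\diamond=-\min_{h\ge0}L_n^\diamond(h)+o_p(1)\Rightarrow (S^\diamond)^2\mathbf 1\{S^\diamond>0\}/(2H_{c_0})$.

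Now write $S^\diamond/\sqrt{\Omega^\diamond_{c_0}}=\delta^\diamond_\nu(\eta_\diamond)-Z$ with $Z\sim\N(0,1)$. Then
\[
    \widetilde T_n^\diamond\Rightarrow(Z-\delta_\nu^\diamond)^2\mathbf 1\{Z-\delta_\nu^\diamond<0\},
\]
which is the exact analogue of \eqref{eq:Tn_limit}. The rejection probability is
\[
    \Pr\{(\delta^\diamond_\nu-Z)_+^2>\chi^2_{1,1-2\alpha}\}
    =\Pr\{Z<\delta^\diamond_\nu-\Phi^{-1}(1-\alpha)\},
\]
because $\sqrt{\chi^2_{1,1-2\alpha}}=\Phi^{-1}(1-\alpha)$. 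This yields the formula for $\pi^\diamond_\nu$.

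The main obstacle is the first step. For $\diamond=P$ in particular, the mean/SD standardisation must behave regularly along the local sequence. This requires uniform square-integrability of $X_i$ under $F_n$ so that $\sqrt n(\bar X_n-\mu_n,\widehat\sigma_n^P-\sigma_n)$ has a tight Gaussian limit jointly with the empirical process. It is exactly what fails for fixed hypergeometric alternatives, so I would confine the corollary to local sequences satisfying it. The median/IQR case needs only density regularity at the quartiles and is cleaner. With that step granted, everything else is the fixed-scale argument verbatim with $K$ replaced by $K_\diamond$.
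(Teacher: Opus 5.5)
Your proposal is correct and follows the paper's route: the paper's proof reruns Theorem~\ref{thm:local_power} under the maintained plug-in local limit, with $K$ replaced by $K_\diamond$ and $\Omega_{c_0}$ by $\Omega_{c_0}^\diamond$, exactly as you do; your score $S_n^\diamond=-\sqrt n\,Q_n^{\diamond\prime}(c_0)$ is just the negative of the paper's and gives the same limit and power formula. The one small difference is that you re-derive tightness of $\sqrt n(\widehat c_n^\diamond-c_0)$ directly under the local sequence, whereas the paper's Theorem~\ref{thm:local_power} transfers it from the null by contiguity; your version is fine, provided you note that consistency of $\widehat c_n^\diamond$ under $F_n$ is also needed and that it follows from the same $\sup_z|\widehat F_n^\diamond-\Phi|\to_p0$.
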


\begin{corollary}[Plug-in consistency]\label{cor:plugin_consistency} For $\diamond\in\{P,R\}$, let $F_0^\diamond$ denote the distribution of $(X-\mu_0^\diamond)/\sigma_0^\diamond$, where $(\mu_0^\diamond,\sigma_0^\diamond)$ are the population limits of the corresponding standardising estimators, with $\sigma_0^\diamond>0$. Define \[ Q^\diamond(c) = \int_{\mathbb R}\{F_0^\diamond(z)-G_c(z)\}^2\,d\nu(z), \] and suppose that $Q^\diamond$ has a unique minimiser $c_\diamond^\star\in\mathcal C$. Then $\widehat c_n^\diamond\to c_\diamond^\star$ almost surely. In particular, under the Gaussian location-scale null, $F_0^\diamond=G_{c_0}=\Phi$ for both $\diamond\in\{P,R\}$, and hence $\widehat c_n^\diamond\to c_0$ almost surely.
\end{corollary}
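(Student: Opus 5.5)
The plan is to run the argument of Theorem~\ref{thm:consistency} with $\widehat F_n$ replaced by $\widehat F_n^\diamond$ and $F_0$ replaced by $F_0^\diamond$. Because (A0)--(A2) concern only the family $\{G_c\}$ and the measure $\nu$, they continue to hold, and uniqueness of the minimiser of $Q^\diamond$ is assumed. The single new ingredient is a Glivenko--Cantelli statement for the standardised empirical distribution: $\sup_z|\widehat F_n^\diamond(z)-F_0^\diamond(z)|\to0$ almost surely. Once this is available, the rest of the proof repeats the earlier one.

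\emph{Step 1: uniform convergence of the standardised EDF.} Write $\widehat F_n^\diamond(z)=\widehat F_n^X(\widehat\mu_n^\diamond+\widehat\sigma_n^\diamond z)$, where $\widehat F_n^X$ is the empirical distribution function of the raw $X_i$, and let $F_X$ be their distribution function. Then
\[
\sup_z|\widehat F_n^\diamond(z)-F_0^\diamond(z)|\le \sup_x|\widehat F_n^X(x)-F_X(x)|+\sup_z\bigl|F_X(\widehat\mu_n^\diamond+\widehat\sigma_n^\diamond z)-F_X(\mu_0^\diamond+\sigma_0^\diamond z)\bigr|.
\]
The first term tends to zero almost surely by Glivenko--Cantelli. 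For the second term, note that $F_0^\diamond(z)=F_X(\mu_0^\diamond+\sigma_0^\diamond z)$ and that $(\widehat\mu_n^\diamond,\widehat\sigma_n^\diamond)\to(\mu_0^\diamond,\sigma_0^\diamond)$ almost surely with $\sigma_0^\diamond>0$; this is how the population limits in the statement are defined. If $F_X$ is continuous, then $(\mu,\sigma)\mapsto F_X(\mu+\sigma\,\cdot)$ is continuous in sup-norm at $(\mu_0^\diamond,\sigma_0^\diamond)$. To see this, truncate to $|z|\le M$ and use uniform continuity of $F_X$ there; for $|z|>M$, both values are within $\epsilon$ of $0$ or $1$ once $M$ is large, because $\sigma$ stays bounded away from zero. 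This step is the main obstacle. It requires continuity of $F_X$, which holds under the Gaussian null. Without continuity, the argument still goes through after integrating against $\nu$, provided the atoms of $F_0^\diamond$ carry zero $\nu$-mass. That is automatic when $\nu\ll$ Lebesgue measure, as for $\nu_{1/2}$, because pointwise convergence off a countable set suffices for $\int\{\widehat F_n^\diamond-F_0^\diamond\}^2d\nu\to0$ by bounded convergence. The plan is to assume continuity or absolute continuity of $\nu$ and to state this explicitly.

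\emph{Step 2: uniform convergence of the criterion.} Since every distribution function lies in $[0,1]$,
\[
|Q_n^\diamond(c)-Q^\diamond(c)|\le\int|\widehat F_n^\diamond-F_0^\diamond|\,|\widehat F_n^\diamond+F_0^\diamond-2G_c|\,d\nu\le 2\nu(\mathbb R)\sup_z|\widehat F_n^\diamond-F_0^\diamond|.
\]
The right-hand side does not depend on $c$, so $\sup_{c\in\mathcal C}|Q_n^\diamond-Q^\diamond|\to0$ almost surely by Step 1 and (A2). Continuity of $Q^\diamond$ on the compact set $\mathcal C$ follows from (A0) and dominated convergence, with the constant $4$ as dominating function, which is $\nu$-integrable by (A2).

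\emph{Step 3: argmin.} Apply the standard argument, exactly as in the proof of Theorem~\ref{thm:consistency}. Fix $\epsilon>0$. By compactness, continuity and uniqueness, $\inf_{|c-c^\star_\diamond|\ge\epsilon}Q^\diamond(c)-Q^\diamond(c^\star_\diamond)=\eta>0$. Once $\sup|Q_n^\diamond-Q^\diamond|<\eta/2$, any minimiser of $Q_n^\diamond$ lies within $\epsilon$ of $c^\star_\diamond$, so $\widehat c_n^\diamond\to c_\diamond^\star$ almost surely.

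\emph{Step 4: the Gaussian null.} For the particular case, suppose $X\sim\N(\mu_0,\sigma_0^2)$. The estimators then converge to $(\mu_0,\sigma_0)$: by the strong law for the mean/SD version, and by consistency of sample quantiles together with the calibration by $2a$ for the median/IQR version, as noted after \eqref{eq:median_iqr}. Hence $F_0^\diamond=\Phi=G_{c_0}$. Uniqueness of the minimiser now follows from Lemma~\ref{lem:identifiability}, using (A4) for $\nu$, so Step 3 gives $\widehat c_n^\diamond\to c_0$ almost surely.
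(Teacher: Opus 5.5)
Your proposal is correct and follows essentially the same route as the paper: strong consistency of the standardising estimators plus Glivenko--Cantelli gives uniform convergence of $\widehat F_n^\diamond$ to $F_0^\diamond$. The argument of Theorem~\ref{thm:consistency} then completes the proof, with Lemma~\ref{lem:identifiability} used under the null. Your remark that the sup-norm step needs continuity of $F_X$ (or else $\nu$ putting no mass on the countably many atoms, e.g.\ $\nu\ll$ Lebesgue) is a genuine refinement: the paper's one-line proof uses this condition implicitly, and it holds automatically under the Gaussian null.
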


\begin{corollary}[Plug-in directed consistency]\label{cor:plugin_directed} For $\diamond\in\{P,R\}$, let $\Delta^\diamond = Q^\diamond(c_0)-Q^\diamond(c_\diamond^\star)$, where $c_\diamond^\star$ is the unique minimiser of $Q^\diamond$ over $\mathcal C$. If $\Delta^\diamond>0$, then $ n\{Q_n^\diamond(c_0)-Q_n^\diamond(\widehat c_n^\diamond)\} \to_p \infty$. Consequently, $\Pr_{F_0}(\widetilde T_n^\diamond>k_\alpha)\to1$ for any fixed $k_\alpha<\infty$.
\end{corollary}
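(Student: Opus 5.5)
The plan mirrors the proof of Theorem~\ref{thm:directed}, with $\widehat F_n$ replaced by $\widehat F_n^\diamond$ and $F_0$ by $F_0^\diamond$. The central claim is
\[
    Q_n^\diamond(c_0)-Q_n^\diamond(\widehat c_n^\diamond)\to_p\Delta^\diamond>0 .
\]
Multiplying by $n$ then gives divergence to $+\infty$ in probability. Since $2H_{c_0}/\Omega_{c_0}^\diamond$ is a fixed positive constant (Lemma~\ref{lem:curvature} for $H_{c_0}$, and $\Omega^\diamond_{c_0}>0$ as for Lemma~\ref{lem:Omega_pos}), $\widetilde T_n^\diamond\to_p\infty$ and $\Pr_{F_0}(\widetilde T_n^\diamond>k_\alpha)\to1$ for any fixed $k_\alpha$.

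\emph{Step 1: uniform convergence of the criterion.} I would first show
\[
    \sup_{c\in\mathcal C}|Q_n^\diamond(c)-Q^\diamond(c)|\to0
\]
almost surely, or at least in probability. Both $\widehat F_n^\diamond$ and $G_c$ take values in $[0,1]$, so
\[
    |Q_n^\diamond(c)-Q^\diamond(c)|\le 2\int|\widehat F_n^\diamond-F_0^\diamond|\,d\nu ,
\]
uniformly in $c$. By dominated convergence under \textup{(A2)}, it then suffices to show $\widehat F_n^\diamond(z)\to F_0^\diamond(z)$ for $\nu$-a.e.\ $z$. Write
\[
    \widehat F_n^\diamond(z)=\widehat F_n^X(\widehat\mu_n^\diamond+\widehat\sigma_n^\diamond z),
\]
where $\widehat F_n^X$ is the raw empirical distribution function. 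Glivenko--Cantelli for $\widehat F_n^X$, together with $(\widehat\mu_n^\diamond,\widehat\sigma_n^\diamond)\to(\mu_0^\diamond,\sigma_0^\diamond)$ with $\sigma_0^\diamond>0$, gives convergence at every continuity point of $F_0^\diamond$, of which there are all but countably many. For the discontinuity points, I would either assume $F_0^\diamond$ continuous or bound the contribution using $\nu$-null atoms. Since $\nu_{1/2}$ is absolutely continuous, countable sets are $\nu$-null and this point is harmless.

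\emph{Step 2: convergence of the minimised value.} I would decompose
\[
    Q_n^\diamond(c_0)-Q_n^\diamond(\widehat c_n^\diamond)=\{Q_n^\diamond(c_0)-Q^\diamond(c_0)\}+Q^\diamond(c_0)-\min_{\mathcal C}Q_n^\diamond .
\]
The first bracket tends to zero by Step~1. For the last term,
\[
    \Bigl|\min_{\mathcal C}Q_n^\diamond-\min_{\mathcal C}Q^\diamond\Bigr|\le\sup_{\mathcal C}|Q_n^\diamond-Q^\diamond|\to0 ,
\]
and $\min_{\mathcal C}Q^\diamond=Q^\diamond(c_\diamond^\star)$. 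Hence the difference converges to $\Delta^\diamond$. This step needs neither consistency of $\widehat c_n^\diamond$ (Corollary~\ref{cor:plugin_consistency}) nor uniqueness of the minimiser, only convergence of the minimum value.

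\emph{Main obstacle.} The delicate point is the convergence of the standardising estimators under a fixed, possibly heavy-tailed alternative. For $\diamond=R$, the population quantiles at $1/4,1/2,3/4$ need to be unique, which holds when the density is positive there; then sample quantiles converge almost surely, and $\sigma_0^R>0$ is assumed. For $\diamond=P$ with infinite variance, for example $F_0=G_c$ with $c>3/2$, $\widehat\sigma_n^P$ diverges, and the limits $(\mu_0^P,\sigma_0^P)$ in the statement do not exist with $\sigma_0^P<\infty$. I would therefore read the corollary as conditional on the hypothesis of Corollary~\ref{cor:plugin_consistency} that these finite limits exist, which requires a finite second moment for $\diamond=P$. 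Under that hypothesis the argument above applies verbatim to both standardisations, and I would add a remark that this is the mechanism behind the zero power of the mean/SD version reported in the simulations.
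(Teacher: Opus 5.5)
Your proposal is correct and follows essentially the same route as the paper: uniform convergence of $Q_n^\diamond$ to $Q^\diamond$ on $\mathcal C$, then $Q_n^\diamond(c_0)-Q_n^\diamond(\widehat c_n^\diamond)\to_p\Delta^\diamond$, then scaling by $n$ and by the fixed constant $2H_{c_0}/\Omega_{c_0}^\diamond\in(0,\infty)$; the paper gets the first step from $\sup_z|\widehat F_n^\diamond-F_0^\diamond|\to0$ in Corollary~\ref{cor:plugin_consistency}, whereas your pointwise-plus-dominated-convergence version is slightly more careful and also tolerates atoms of $F_0^\diamond$ when $\nu$ is atomless. Your reading of the statement as conditional on finite standardiser limits $(\mu_0^\diamond,\sigma_0^\diamond)$, which do not exist for $\diamond=P$ under infinite variance, matches what the paper's proof implicitly assumes through Corollary~\ref{cor:plugin_consistency}.
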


\begin{remark}[Plug-in versus joint estimation]\label{rem:plugin_vs_joint}
An alternative to the plug-in approach is to estimate $\theta=(\mu,\sigma,c)$ jointly by minimising
\[
    (\widehat\mu_n,\widehat\sigma_n,\widehat c_n)
    \in
    \arg\min_{\mu,\sigma,c}
    \int_{\mathbb{R}}\left\{
    \widehat F_n(z)
    -G_c\!\left(\frac{z-\mu}{\sigma}\right)
    \right\}^2 d\nu(z).
\]
The joint theory is developed in Appendix~\ref{app:joint}. I prefer the plug-in approach for three reasons. First, the criterion has a nearly degenerate score-variance Schur complement in the $c$-direction after optimising over $(\mu,\sigma)$: under the Gaussian null, the $c$-score direction $\dot G_{c_0}$ nearly lies in the span of $\{\phi(z),z\phi(z)\}$ in the Brownian bridge covariance metric, making the residual score variance
$\Omega_{cc\cdot\lambda}\approx0.000013$ very small
(Table~\ref{tab:null_size_tail_scaling_robust}). The curvature Schur complement $H_{cc\cdot\lambda}\approx0.0023$ is not similarly reduced, giving a normalisation constant $2H_{cc\cdot\lambda}/\Omega_{cc\cdot\lambda}
\approx355$ that amplifies any small spurious criterion improvement into a large test statistic; since this is structural, the size distortion worsens with $n$. Simulation evidence confirms the severity: the $10\%$ rejection rate rises from $0.089$ at $n=200$ to $0.309$ at $n=1600$, with the boundary rate falling away from the theoretical $1/2$
(Table~\ref{tab:null_size_tail_exact_profile_appendix}). Second, the plug-in approach preserves the modular structure of the theory: the limit distribution, curvature, and rejection rule are invariant across all three versions of the test; only the projected covariance kernel must be recomputed. Third, the two-step structure allows the location-scale estimator to be adapted to the operating conditions: median/IQR standardisation remains valid when the variance is infinite ($c>3/2$), while mean/SD standardisation does not.
\end{remark}

\section{Finite-sample performance}\label{sec:simulations}

The simulations examine the finite-sample size and power of three hypergeometric (HGM) statistics: the fixed-scale statistic $\widetilde T_n$, the mean/SD plug-in statistic $\widetilde T_n^P$, and the median/IQR plug-in statistic $\widetilde T_n^R$. All use the tail-weighted criterion $d\nu_{1/2}(z)\propto [\Phi(z)\{1-\Phi(z)\}]^{-1/2}\,d\Phi(z)$, and asymptotic critical values from $\frac12\delta_0+\frac12\chi_1^2$. The criterion is evaluated on the Gaussian quantile grid \eqref{eq:grid} with $J=301$ nodes and normalised weights \eqref{eq:weights}. The null constants $H_{c_0}$, $\Omega_{c_0}$, $\Omega_{c_0}^P$, and $\Omega_{c_0}^R$ are computed once and reported in Table~\ref{tab:null_size_tail_scaling_robust}. The parameter space is $\mathcal C=[3/2,8]$, searched over a fine grid concentrated near the boundary. Unless stated otherwise, size results are based on $R=10^5$ replications and power results on $R=10^4$ replications.

\subsection{Null size}\label{subsec:sim_size}

Table~\ref{tab:null_size_tail_main_robust} reports empirical rejection
frequencies under the Gaussian null $F_0=\Phi$ for
$n\in\{25,50,100,200,400,800,1600\}$ and nominal levels
$10\%$, $5\%$, and $1\%$. The boundary frequency
$\Pr(\widehat c=c_0)$ and the mean of the standardised statistic provide
diagnostics for the boundary limit, whose corresponding values are
$1/2$ and $1/2$. The fixed-scale statistic is accurately sized throughout the design. At the $5\%$ level, rejection frequencies range from $0.050$ to $0.052$, and both diagnostic columns are close to their limiting values even for $n=25$. Thus the fixed-scale boundary approximation is reliable in quite small samples.

The mean/SD plug-in statistic converges more slowly but is well calibrated by moderate sample sizes. It mildly over-rejects at the $10\%$ level for small $n$ and is conservative at the $1\%$ level for small $n$, but these distortions diminish as $n$ increases. The slower convergence of the boundary frequency reflects the additional first-order effect of estimating location and scale, which is accounted for asymptotically by the projected kernel $K_P$. The median/IQR plug-in statistic is the most demanding finite-sample case. It is liberal for small samples, with $5\%$ size equal to $0.095$ at $n=25$, and remains mildly oversized at the largest sample sizes reported. This is consistent with the greater sampling variability of quantile-based standardisation. In practice, the robust plug-in test should be used with caution at $n<200$, although its size distortion declines substantially as $n$ grows. Appendix~\ref{app:simulations} reports an additional check: under $\N(3,4)$ the robust plug-in rejection frequencies match those under $\N(0,1)$ up to simulation error, confirming location-scale invariance (Table~\ref{tab:null_size_tail_robust_plugin_invariance_N3_2sq}). Overall, the simulations support use of the boundary-mixture critical values for the fixed-scale and mean/SD plug-in statistics, and show that the robust median/IQR version requires larger samples for accurate null calibration.

\input{table_null_size_tail_main_with_robust}

\subsection{Power}\label{subsec:sim_power}

The test is directed: by Theorem~\ref{thm:local_power}, local power is determined by the projection of the departure onto the hypergeometric score direction $\dot G_{c_0}$, giving power against symmetric heavy-tailed alternatives and limited power against skewed alternatives by construction. Three tests are compared: the fixed-scale $G_c$ test, the robust plug-in $\widetilde T_n^R$, and the omnibus benchmarks Jarque--Bera (JB), Anderson--Darling (AD), and Shapiro--Wilk (SW). Mean/SD plug-in and size-adjusted results are in Appendix~\ref{app:simulations}, where the mean/SD plug-in is shown to fail completely under HGM alternatives with infinite variance.

\subsubsection{HGM alternatives}

Table~\ref{tab:power_hgm_alternatives_revised} and Figure~\ref{fig:power_curves_robust} (upper panel) report rejection frequencies against HGM alternatives $G_c$ with $c\in\{1.55,1.60,1.75,2.00,3.00\}$ at nominal level $5\%$. The fixed-scale test $\widetilde T_n$ has strong power against HGM alternatives, as the theory predicts. At $c=2.00$ power reaches $0.590$ at $n=25$ and $0.967$ by $n=100$; at $c=3.00$ it reaches $0.964$ at $n=25$. This confirms the directed consistency of Theorem~\ref{thm:directed} in finite samples. For the closest alternative, $c=1.55$, the local-power approximation in Corollary~\ref{cor:local_hgm}, $\pi_n(c)
\approx \Phi\left\{ 1.052 \sqrt{n}\left(c-\frac32\right)-1.645
\right\}$, is also quantitatively accurate: it predicts rejection frequencies of $0.132$, $0.184$, $0.277$, and $0.438$ for $n=100,200,400$, and $800$, respectively, compared with the simulated frequencies $0.135$, $0.186$, $0.275$, and $0.413$. The robust plug-in test $\widetilde T_n^R$ has substantially lower power. At $c=1.60$ it reaches only $0.167$ at $n=800$; at $c=3.00$ it reaches $0.986$ at $n=800$. The gap relative to the fixed-scale test reflects the loss of signal from the median/IQR standardisation: the projected kernel $K_R$ absorbs part of the HGM tail signal together with the location-scale nuisance, as Corollary~\ref{cor:plugin_local_power} predicts. The omnibus benchmarks substantially outperform the robust plug-in at close alternatives: JB reaches $0.931$ at $c=1.60$, $n=400$, against $0.129$ for $\widetilde T_n^R$, reflecting the high finite-sample kurtosis of HGM alternatives with $c>3/2$, to which JB's kurtosis component is directly sensitive; $\widehat c$ by contrast provides a fitted distribution with explicit tail-risk implications rather than a rejection decision.

\subsubsection{Standard non-HGM alternatives}

Table~\ref{tab:power_standard_alternatives_revised} and Figure~\ref{fig:power_curves_robust} (lower panel) report rejection frequencies against six alternatives: contaminated normal, Laplace, chi-squared $\chi^2_3$, $t_3$, $t_5$, and $t_{10}$. The fixed-scale test has essentially zero power against every non-HGM alternative. Against $t_3$ it
never exceeds zero. The standardised $t_3$ distribution has density tail index $4$ (survival $\propto|z|^{-3}$), which is lighter than the HGM tail-index-2 direction (density $\propto|z|^{-3}$, survival $\propto|z|^{-2}$) targeted by the test: the best HGM approximation to $t_3$ data is therefore at the Gaussian boundary $c_0$, giving $\Delta=0$ and no directed power. Against the chi-squared, power is essentially zero because the departure is in the skewness direction, which is orthogonal to $\dot G_{c_0}$ by the symmetry of the
HGM family.

The robust plug-in has useful power against symmetric heavy-tailed alternatives. Against Laplace it reaches $0.272$ at $n=25$ and $0.999$ at $n=800$; against $t_3$ it reaches $0.195$ at $n=25$ and $0.977$ at $n=800$. Power against the contaminated normal and $t_{10}$ is moderate, reflecting their relative proximity to the Gaussian. Against the chi-squared, power falls to zero at large $n$, consistent with Theorem~\ref{thm:local_power}: skewed departures are orthogonal to the even function $\dot G_{c_0}$. The omnibus benchmarks dominate the robust plug-in against non-HGM alternatives: JB, AD, and SW have substantially higher power against the contaminated normal, chi-squared, and $t_5$. At $n=25$ the robust plug-in reaches $0.122$ against $t_{10}$, slightly above JB ($0.100$) and AD ($0.080$). This reflects the local sensitivity of directed tests near their targeted direction; after size adjustment the advantage disappears (Table~\ref{tab:appendix_sizeadj_robust_standard_power}). The HGM test complements rather than competes with omnibus procedures: it isolates the symmetric heavy-tailed component of non-Gaussianity and provides a fitted parameter $\widehat c$ with direct tail-risk interpretation.

\input{table_power_hgm_alternatives_revised}

\input{table_power_standard_alternatives_revised}

\begin{figure}[!p]
\centering
\includegraphics[width=0.78\textwidth]{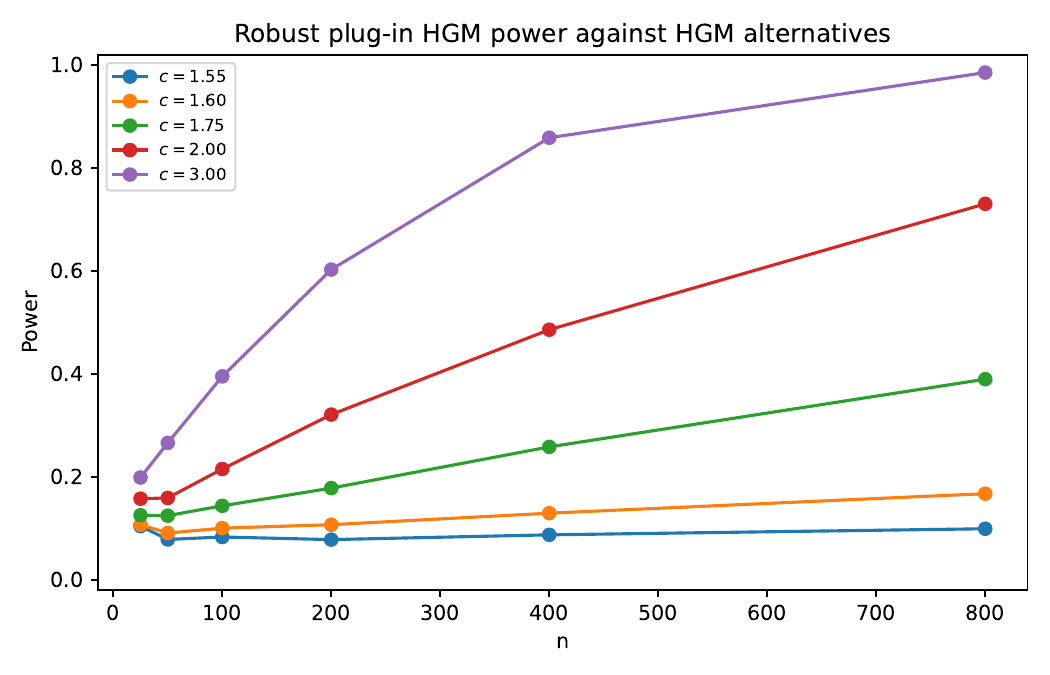}
\vspace{0.4cm}
\includegraphics[width=0.78\textwidth]{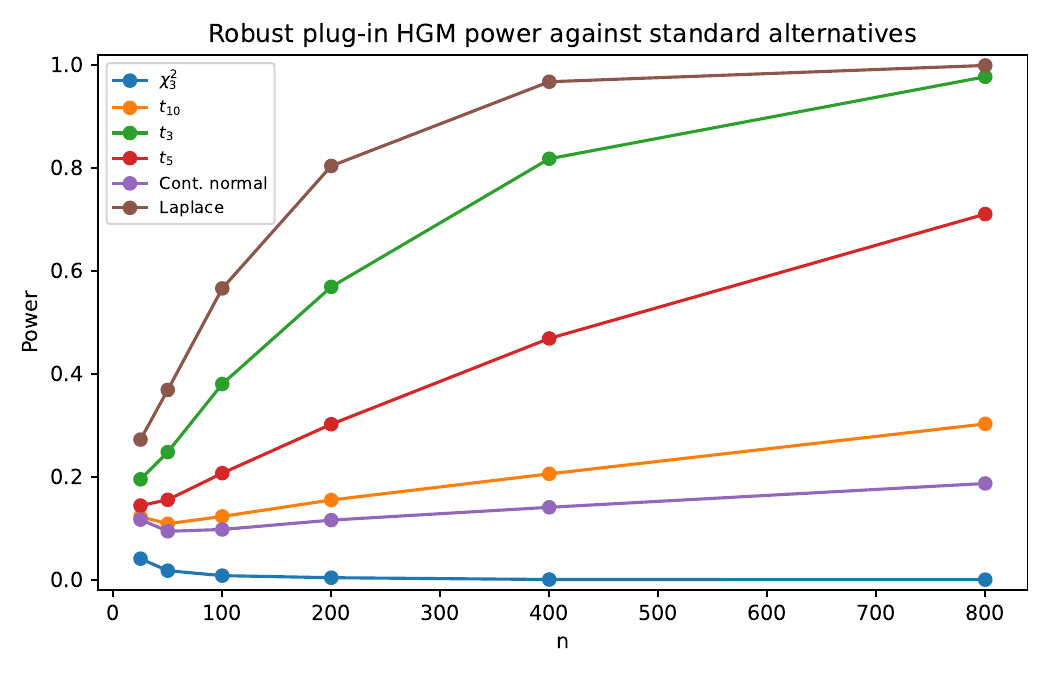}
\caption{Power curves for the robust plug-in statistic $\widetilde T_n^R$ at nominal level $5\%$, $R=10^4$ replications. Upper panel: HGM alternatives $G_c$, $c\in\{1.55,1.60,1.75,2.00,3.00\}$. Lower panel: standard alternatives standardised to mean zero and unit variance (contaminated normal, Laplace, chi-squared $\chi^2_3$, $t_3$, $t_5$, $t_{10}$).}
\label{fig:power_curves_robust}
\end{figure}

\section{Empirical illustration}\label{sec:application}

This section provides two descriptive illustrations using daily S\&P~500 index data from the FRED database \citep{fred_sp500}. Daily log returns $r_t=100\ln(P_t/P_{t-1})$ give $n=1255$ observations from 2~June~2021 to 1~June~2026. The first exercise applies the HGM test to raw returns; the second applies it to standardised residuals from a GARCH(1,1) model
\citep{bollerslev86},
\[
    r_t = \mu+\varepsilon_t, \quad
    h_t = \omega+\alpha\varepsilon_{t-1}^2+\beta h_{t-1},
    \quad z_t = \varepsilon_t/\sqrt{h_t},
\]
where $h_t$ is the conditional variance and $z_t$ is the standardised residual. The model is estimated by Gaussian MLE, with $\widehat\alpha=0.105$, $\widehat\beta=0.863$, $\widehat\alpha+\widehat\beta=0.968$. Both exercises are descriptive: the formal theory of Sections~\ref{sec:inference}--\ref{sec:plugin} assumes i.i.d. observations. The GARCH residual exercise additionally involves a generated-regressors problem; it serves as a robustness check on whether tail non-normality persists in the shape of the standardised residuals after filtering out volatility clustering. Figure~\ref{fig:garch_volatility} shows the estimated conditional volatility $\widehat h_t^{1/2}$: the 2022 equity decline and the tariff-driven market stress of early 2025 both produce sustained episodes of higher conditional volatility. Full descriptive statistics, goodness-of-fit tables, and GARCH diagnostics are in Tables \ref{tab:sp500_returns_desc}--\ref{tab:sp500_garch_residuals_tail} of Appendix~\ref{app:empirical}.

\subsection{Unconditional distribution of raw returns}\label{subsec:emp_raw}

Returns are standardised by $(\widehat\mu_n^R, \widehat\sigma_n^R)$ of~\eqref{eq:median_iqr} and $c$ is estimated by minimising $Q_n^R(c)$ with $d\nu_{1/2}$. Table~\ref{tab:sp500_returns_tests} reports normality test results: Jarque--Bera, Anderson--Darling, and Shapiro--Wilk all strongly reject the Gaussian null. The HGM statistic is $\widetilde T_n^R=26.65$ ($p\approx1.2\times10^{-7}$), with tail coefficient $\widehat c-3/2=0.135$, locating the departure in the symmetric heavy-tailed direction. Figure~\ref{fig:sp500_density_overlay} overlays the fitted densities; Figure~\ref{fig:sp500_tail_survival} plots two-sided survival functions on a logarithmic scale. In the observable range ($z\in[2,5]$) HGM lies below Student $t$ because the tail coefficient is small and the HGM asymptotic dominance sets in only beyond the sample range. By average log score, Student $t$ provides the best fit and HGM ranks third behind Laplace (Table~\ref{tab:sp500_returns_gof}). Lower-tail quantiles are reported in Table~\ref{tab:sp500_returns_var}. The models diverge materially at the $0.1\%$ level: Gaussian $-3.24$, Laplace $-4.60$, empirical $-4.82$, Student $t$ $-5.51$, HGM $-6.57$. The HGM quantile falls below the sample minimum ($-6.16$), illustrating that $\widehat c$ carries tail-risk implications that differ substantially from standard alternatives even when $\widehat c-3/2$ is small.

\subsection{Conditional innovation distribution: GARCH(1,1) residuals}\label{subsec:emp_garch}

The standardised residuals $\widehat z_t$ are treated as the sample, with $(\widehat\mu_n^R,\widehat\sigma_n^R)$ applied before fitting.  Excess kurtosis falls from $6.54$ (raw returns) to $1.56$, confirming that the GARCH filter accounts for most tail heaviness through time-varying volatility. Table~\ref{tab:sp500_garch_residuals_tests} reports test results. The HGM statistic falls to $\widetilde T_n^R=6.32$ ($p=0.006$), with $\widehat c =1.563$, much closer to the Gaussian boundary. Omnibus tests still reject normality, driven partly by residual left skewness ($-0.49$) that lies outside the scope of the symmetric HGM family. By log score, HGM now ranks second behind Student $t$ only, ahead of Laplace and Gaussian, reflecting closer alignment of the filtered residual tail structure with the hypergeometric family (Table~\ref{tab:sp500_garch_residuals_gof}). Figures~\ref{fig:sp500_garch_density} and~\ref{fig:sp500_garch_tail} display fitted densities and survival functions. Lower-tail quantiles at the $0.1\%$ level are (Table~\ref{tab:sp500_garch_residuals_var}): Gaussian $-3.12$, Student $t$ $-4.05$, Laplace $-4.68$, empirical $-4.70$, HGM $-5.02$. Student $t$ falls short of the empirical quantile; HGM gives the most extreme extrapolation, as for raw returns.

\subsection{Tail dominance and Value-at-Risk}\label{subsec:crossover}

For every $c>3/2$ and $z>0$, $1-G_c(z)>1-\Phi(z)$: the HGM distribution assigns strictly more probability to tail events than the Gaussian. The correction factor $\rho(z,c):=(1-G_c(z))/(1-\Phi(z))>1$ is strictly increasing in both $z$ and $c$ (Lemma~\ref{lem:R_monotone}), with exact values in Table~\ref{tab:correction}. HGM VaR quantiles are computed exactly by inverting $1-G_c(q_p)=1-p$; Table~\ref{tab:var_ratio} reports exact values alongside Gaussian quantiles. The ratio $q_p^{\mathrm{HGM}}/q_p^{\Phi}$ diverges as $p\to1^-$ for any $c>3/2$ (Proposition~\ref{prop:var_divergence}): any statistically significant rejection of $H_0:c=c_0$, however small $\widehat c-3/2$, implies understatement of extreme tail risk by the Gaussian model at sufficiently high confidence levels. Full details are in Appendix~\ref{app:crossover}.

\subsection{Limitations}\label{subsec:emp_limitations}

These exercises are not intended to establish that the HGM distribution fits better than all competitors: in both exercises Student $t$ provides a better overall log score, and omnibus tests have broader power. The purpose is to show what the fitted parameter $\widehat c$ implies for tail risk, and how that implication changes when volatility clustering is removed by GARCH filtering. Three limitations apply. First, the formal null theory assumes i.i.d.\ observations; raw returns exhibit serial dependence and volatility clustering \citep{engle82,bollerslev86}, and GARCH residuals are generated regressors, so size control is not formally guaranteed in either case. Second, the symmetric HGM family cannot capture the residual left skewness ($-0.49$) in the GARCH residuals, motivating future extensions to asymmetric families. Third, minimum-distance estimation rather than MLE means comparisons of log scores are descriptive rather than formal likelihood-ratio tests.

\begin{figure}[p]
\centering
\includegraphics[width=0.82\textwidth]{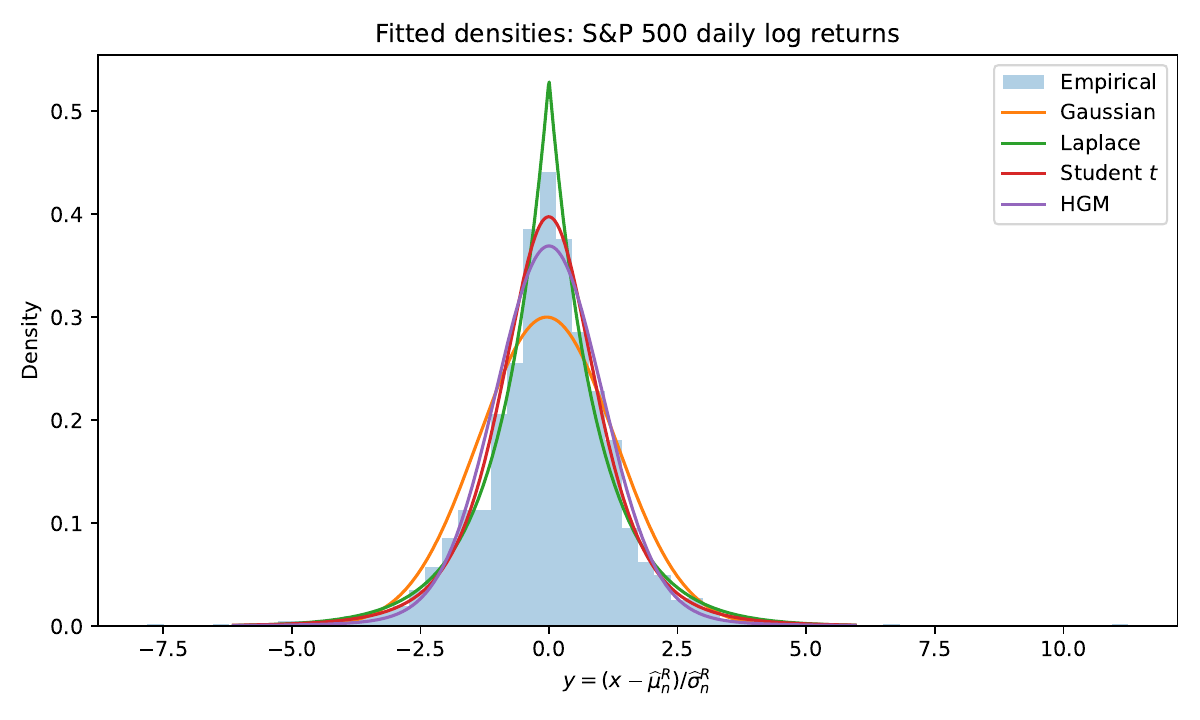}
\caption{Fitted densities of robust-standardised raw returns: HGM and Student $t$ are similar near the centre; both are more peaked than the Gaussian.}
\label{fig:sp500_density_overlay}
\end{figure}

\begin{figure}[p]
\centering
\includegraphics[width=0.82\textwidth]{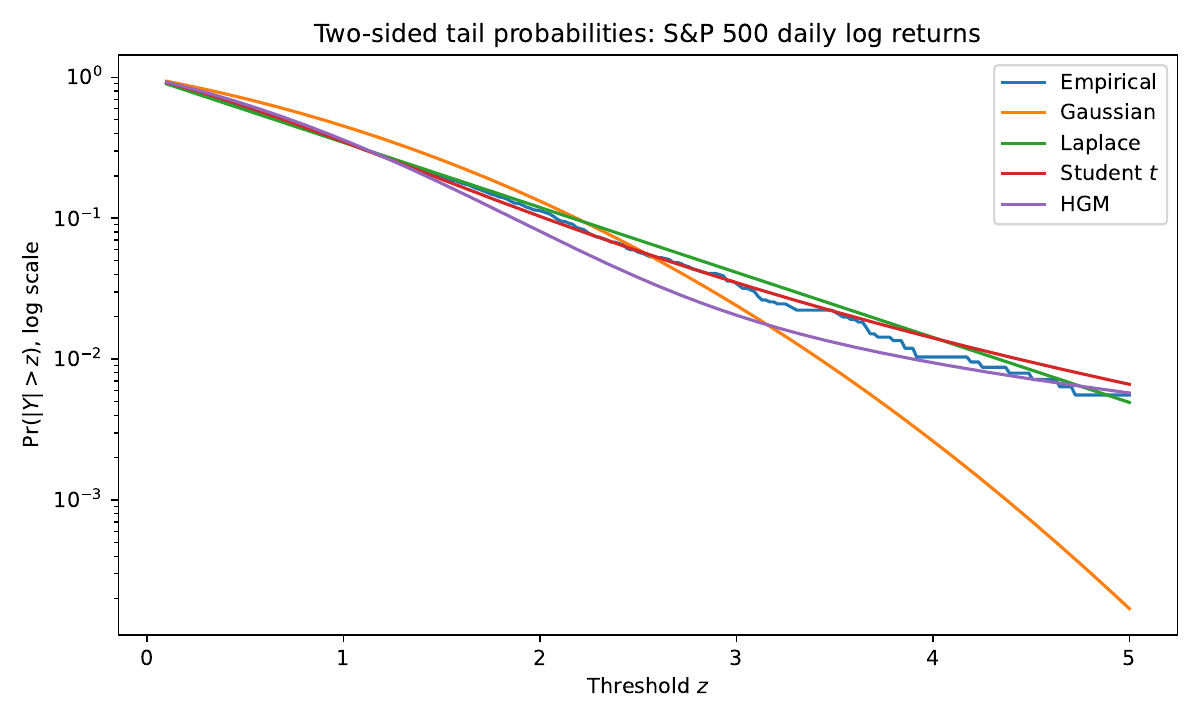}
\caption{Two-sided survival functions $\Pr(|Y|>z)$ for robust-standardised raw returns, logarithmic scale: HGM lies below Student $t$ in the 2--5 standard deviation range despite the heavier asymptotic tail index, reflecting the small tail coefficient $\widehat c-3/2=0.135$.}
\label{fig:sp500_tail_survival}
\end{figure}

\begin{figure}[p]
\centering
\includegraphics[width=0.82\textwidth]
    {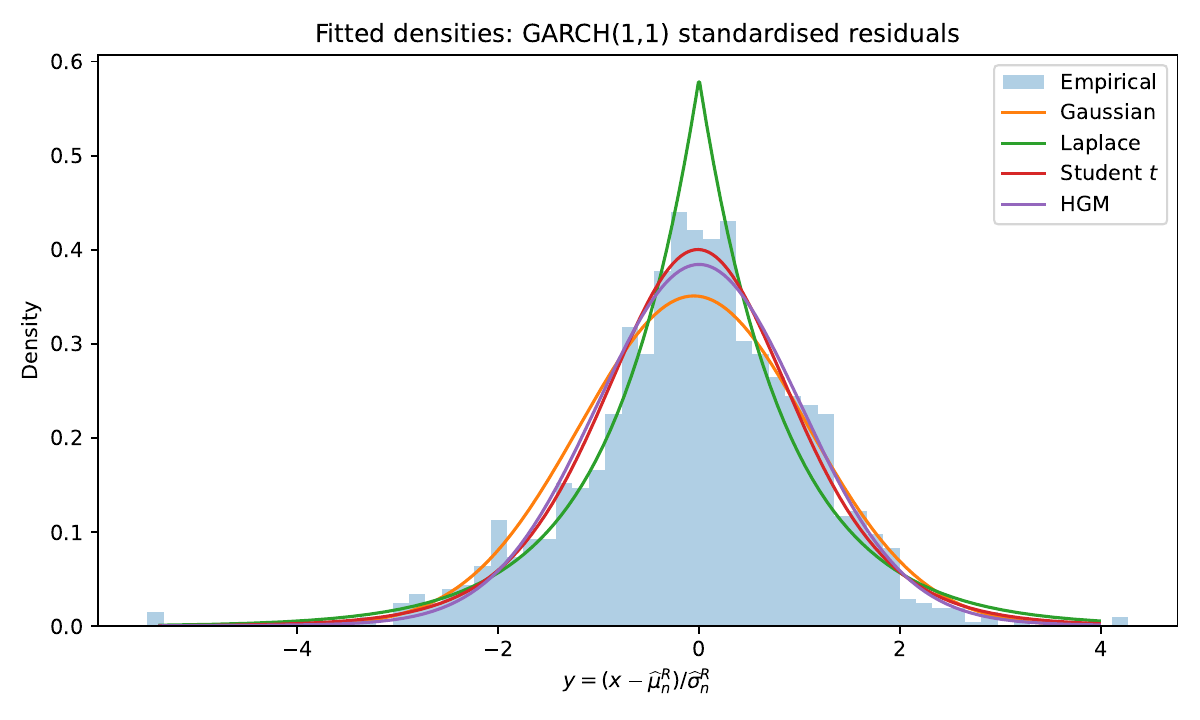}
\caption{Fitted densities of robust-standardised GARCH(1,1) residuals: HGM and Student $t$ are the closest competitors; excess kurtosis is reduced relative to raw returns.}
\label{fig:sp500_garch_density}
\end{figure}

\begin{figure}[p]
\centering
\includegraphics[width=0.82\textwidth]
    {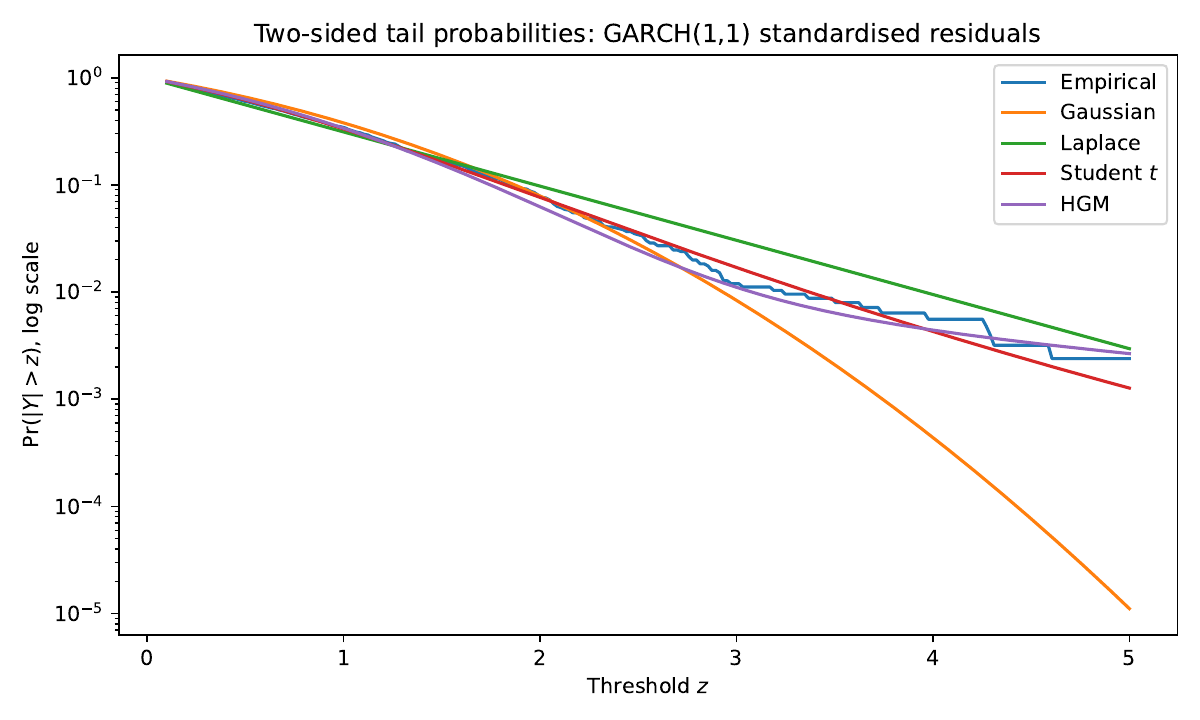}
\caption{Two-sided survival functions $\Pr(|Y|>z)$ for robust-standardised GARCH(1,1) residuals, logarithmic scale: HGM and Student $t$ track the empirical tail more closely than for raw returns; Laplace overstates tail mass at intermediate thresholds.}
\label{fig:sp500_garch_tail}
\end{figure}

\clearpage

\section{Conclusion}\label{sec:conclusion}

This paper develops a complete inferential framework for a one-parameter family of distribution functions built from the confluent hypergeometric function ${}_1F_1$. The family $\{G_c:c\geq3/2\}$ places the standard normal at an exact finite boundary, with algebraic tails of fixed index $2$ for $c>3/2$ and a beta-precision Gaussian scale-mixture representation. These properties are mathematical consequences of the endpoint constraints that determine the family from the ${}_1F_1$ representation of $\Phi$, not modelling assumptions. More broadly, the construction reverses the usual role of hypergeometric functions in econometrics. Rather than arising as analytic solutions to a model or inferential problem specified in advance, the hypergeometric representation is used here as the starting point for model construction: selected parameters are freed, the restrictions required of the econometric object are imposed, and the resulting admissible parameters are estimated from the data.

The central result is that the standardised fit-improvement statistic $\widetilde T_n$ converges to $\frac{1}{2}\delta_0+\frac{1}{2}\chi_1^2$ under the Gaussian null, with the explicit rejection rule $\widetilde T_n>\chi^2_{1,1-2\alpha}$. This limit holds for all three versions of the test: fixed-scale, mean/SD plug-in, and robust median/IQR plug-in, with only the projected covariance kernel changing across versions. The modular structure is deliberate: extensions to new standardisation procedures require only a new kernel, leaving the limit distribution, curvature, and rejection rule unchanged. The fitted parameter $\widehat c$ provides a constructive complement to omnibus normality tests: it locates the departure from normality in the hypergeometric tail direction, quantifies the departure as a single interpretable index, and implies tail-risk extrapolations that differ materially from Gaussian benchmarks even when $\widehat c-3/2$ is small.

The framework opens a natural research programme. The most immediate extension frees the tail index, generating a two-parameter family that nests the present symmetric case and extends the boundary inference to a two-dimensional constrained problem. A closely related development introduces asymmetry, allowing the family to capture signed skewness alongside heavy tails and generating a richer mixture limit at the boundary. A further line of research develops the formal two-step asymptotic theory for the test applied to residuals from parametric volatility models, where first-step GARCH estimation uncertainty propagates through an additional projected-process correction to the covariance kernel. A fourth extension pursues likelihood-based inference using the explicit density $g_c$ of Corollary~\ref{cor:density}, yielding an efficiency comparison and a likelihood-ratio counterpart to the fit-improvement statistic. A longer-horizon direction pursues multivariate normality testing via matrix-argument hypergeometric functions, replacing the scalar boundary $c_0=3/2$ with a matrix boundary at the identity. The present paper establishes the foundations on which this programme builds.

\newpage


\appendix

\section{Proofs}\label{app:proofs}

\begin{proof}[Proof of Lemma~\ref{lem:normalcdf}]
The identity is classical \citep[eq.~7.1.21]{abramowitz_stegun72}. Recall the series definition of the confluent hypergeometric function,
\[
    {}_1F_1(a;c;x) = \sum_{k=0}^{\infty}\frac{(a)_k}{(c)_k}\frac{x^k}{k!},
\]
where $(a)_k=a(a+1)\cdots(a+k-1)$ denotes the Pochhammer symbol. Therefore
\[
    {}_1F_1\!\left(\frac12;\frac32;-\frac{z^2}{2}\right) =
    \sum_{k=0}^{\infty}\frac{(1/2)_k}{(3/2)_k}\frac{(-z^2/2)^k}{k!}.
\]
For each $k\geq0$,
\[
    \frac{(1/2)_k}{(3/2)_k} = \prod_{j=0}^{k-1}\frac{2j+1}{2j+3} = \frac{1}{2k+1}.
\]
Hence
\[
    z\,{}_1F_1\!\left(\frac12;\frac32;-\frac{z^2}{2}\right) =
    \sum_{k=0}^{\infty}\frac{(-1)^k z^{2k+1}}{2^k k!(2k+1)}.
\]
The power series is entire, so it may be differentiated term by term on $\mathbb{R}$, giving
\[
    \frac{d}{dz}\left[z\,{}_1F_1\!\left(\frac12;\frac32;-\frac{z^2}{2}\right)\right]=\sum_{k=0}^{\infty}\frac{(-1)^k z^{2k}}{2^k k!} =
    e^{-z^2/2}.
\]
Define
\[
    H(z) = \frac12 + \frac{z}{\sqrt{2\pi}}\,
    {}_1F_1\!\left(\frac12;\frac32;-\frac{z^2}{2}\right).
\]
Then
\[
    H'(z) = \frac{1}{\sqrt{2\pi}}\,e^{-z^2/2} =
    \phi(z) = \Phi'(z), \quad H(0)=\frac12=\Phi(0).
\]
It follows that
\[
    H(z)=\Phi(z)
\]
for all \(z\in\mathbb R\).
\end{proof}

\begin{proof}[Proof of Lemma~\ref{lem:endpoint}]
Under the stated parameter exclusions, Kummer's function obeys the
large-argument asymptotic expansion \citep[eq.~13.1.5]{abramowitz_stegun72}
\[
    {}_1F_1(a;c;x) \sim \frac{\Gamma(c)}{\Gamma(c-a)}\,|x|^{-a}, \quad x\to-\infty,
\]
with finite, non-zero leading coefficient. Substituting $x=-z^2/2$,
so that $|x|^{-a}=2^{a}|z|^{-2a}$ and
$z\,|z|^{-2a}=\operatorname{sgn}(z)|z|^{1-2a}$, gives
\[
    G_{a,c}(z) \sim \frac12 + 2^{a}\,\kappa(a,c)\,
    \frac{\Gamma(c)}{\Gamma(c-a)}\,\operatorname{sgn}(z)\,|z|^{1-2a},
    \quad |z|\to\infty.
\]
The coefficient $2^{a}\kappa(a,c)\,\Gamma(c)/\Gamma(c-a)$ is finite and
non-zero. Three cases arise according to the value of $a$. If $a<1/2$, then $|z|^{1-2a}\to\infty$, so both endpoint limits are infinite. If $a>1/2$, then $|z|^{1-2a}\to 0$, so both endpoint limits equal $1/2$. Neither is compatible with the prescribed limits $0$ and $1$. Therefore $a=1/2$, in which case $|z|^{1-2a}=1$ and the two endpoint conditions reduce to
\[
    \sqrt{2}\,\kappa\!\left(\tfrac12,c\right)\,\frac{\Gamma(c)}{\Gamma(c-1/2)} = \frac12.
\]
Solving for $\kappa(1/2,c)$ yields~\eqref{eq:Khalf}.
\end{proof}

\begin{proof}[Proof of Lemma~\ref{lem:monotonicity}]
Using $\frac{d}{dx}\,{}_1F_1(a;c;x)=\frac{a}{c}\,{}_1F_1(a+1;c+1;x)$ and the chain rule,
\[
    G_c'(z) = \kappa(c)\left[{}_1F_1\!\left(\frac12;c;-\frac{z^2}{2}\right) - \frac{z^2}{2c}\,{}_1F_1\!\left(\frac32;c+1;-\frac{z^2}{2}\right)\right].
\]
Since $G_c'$ is even in $z$, it suffices to show $G_c'(z)\geq 0$
for $x:=z^2/2\geq 0$. Applying Kummer's transformation
\citep[eq.~13.1.27]{abramowitz_stegun72},
${}_1F_1(a;c;x)=e^x\,{}_1F_1(c-a;c;-x)$, to each term gives
\[
    G_c'(z) = \kappa(c)\,e^{-x}\left[{}_1F_1\!\left(c-\frac12;c;x\right) - \frac{x}{c}\,{}_1F_1\!\left(c-\frac12;c+1;x\right)\right].
\]
For $c>1/2$ the Euler integral representation
\citep[eq.~13.2.1]{abramowitz_stegun72} applies to both terms.
Substituting and simplifying using $\Gamma(c+1)=c\,\Gamma(c)$,
$\Gamma(1/2)=\sqrt{\pi}$, and $\Gamma(3/2)=\sqrt{\pi}/2$ gives
\[
    G_c'(z) = \frac{e^{-x}}{2\sqrt{2\pi}}\int_0^1 e^{xu}\,u^{c-3/2}\,(1-u)^{-1/2}\left[1-2x(1-u)\right]du.
\]
The identity
\[
    e^{xu}(1-u)^{-1/2}\left[1-2x(1-u)\right] = -2\,\frac{d}{du}\!\left[e^{xu}(1-u)^{1/2}\right]
\]
allows the integrand to be written as $-2\,u^{c-3/2}\,V'(u)$, where
$V(u):=e^{xu}(1-u)^{1/2}$, so
\[
    G_c'(z) = \frac{-e^{-x}}{\sqrt{2\pi}}\int_0^1 u^{c-3/2}\,V'(u)\,du.
\]
Integrate by parts on $[\delta,1-\varepsilon]$ and let
$\delta\downarrow 0$ and $\varepsilon\downarrow 0$. The upper
boundary term satisfies $u^{c-3/2}\,V(u)\big|_{u=1-\varepsilon}
=O(\varepsilon^{1/2})\to 0$ for every fixed $c$.

If $c>3/2$: the lower boundary term satisfies
$\delta^{c-3/2}\,V(\delta)\to 0$, and the resulting integral
converges since $c-5/2>-1$. Hence
\begin{equation}\label{eq:Gc_integral}
    G_c'(z) = \frac{(c-\tfrac32)\,e^{-x}}{\sqrt{2\pi}}
    \int_0^1 e^{xu}\,u^{c-5/2}\,(1-u)^{1/2}\,du > 0,
\end{equation}
since the integrand is strictly positive on $(0,1)$, the endpoints contribute nothing to the integral, and $c-3/2>0$.

If $c=3/2$: then $u^{c-3/2}\equiv 1$ and, evaluating directly,
\[
    G_c'(z) = -\frac{e^{-x}}{\sqrt{2\pi}}\bigl[V(u)\bigr]_0^1 = \frac{e^{-x}}{\sqrt{2\pi}} = \phi(z) > 0,
\]
where $\phi$ denotes the standard normal density. In both cases $G_c'(z)>0$, which establishes the stated inequality $G_c'(z)\geq 0$.

The more compact form of the density \eqref{eq:gc} can be used to give an  alternative proof of Lemma~\ref{lem:monotonicity}. For $c>3/2$, Euler's integral representation of ${}_1F_1(3/2;c;-z^2/2)$ has a strictly positive integrand, while at $c=3/2$ the identity ${}_1F_1(c;c;x)=e^x$ gives $g_{3/2}(z)=\phi(z)>0$. I retain the direct argument of
Lemma~\ref{lem:monotonicity}, since it may generalise more usefully in later work, and it does not depend on a fortunate reduction to a single contiguous function.
\end{proof}

\begin{proof}[Proof of Corollary~\ref{cor:Gc_distn}]
Lemma~\ref{lem:endpoint} gives $G_c(-\infty)=0$ and $G_c(+\infty)=1$; Lemma~\ref{lem:monotonicity} gives $G_c'\geq0$, so $G_c$ is non-decreasing. Continuity follows because ${}_{1}F_1(\frac12;c;\,\cdot\,)$
is entire and $z\mapsto-z^2/2$ is smooth, so $G_c$ is a composition of continuous functions. The three conditions of a distribution function
are satisfied.
\end{proof}

\begin{proof}[Proof of Corollary~\ref{cor:density}]
Since ${}_{1}F_1(\frac12;c;\,\cdot\,)$ is entire, $G_c$ is differentiable on $\mathbb{R}$ with derivative $g_c:=G_c'$. Applying the recurrence relation \citep[eq.~13.4.4]{abramowitz_stegun72} ${}_1F_1(a+1;c;x)={}_1F_1(a;c;x)+\frac{x}{c}\,{}_1F_1(a+1;c+1;x)$ to $G_c'(z)$ from the proof of Lemma~\ref{lem:monotonicity}, \eqref{eq:gc} follows immediately. The fundamental theorem of calculus and $G_c(0)=\frac12$ give $G_c(z)=\frac12+\int_0^z g_c(t)\,dt$, so $G_c$ is absolutely continuous with density $g_c$. Since the ${}_1F_1$ term in \eqref{eq:gc} depends on $z$ only through $-z^2/2$, it follows that $g_c$ is even. Finally, $g_{3/2}=G_{3/2}'=\Phi'=\phi$ by Lemma~\ref{lem:normalcdf}.
\end{proof}

\begin{proof}[Proof of Lemma~\ref{lem:continuity}]
From \eqref{eq:Gc_def},
\[
    G_c(z)-G_{c'}(z) = z\left[\kappa(c)\,{}_1F_1\!\left(\tfrac12;c;-\tfrac{z^2}{2}\right) - \kappa(c')\,{}_1F_1\!\left(\tfrac12;c';-\tfrac{z^2}{2}\right)\right].
\]
The map $c\mapsto \kappa(c)=\Gamma(c-1/2)/(2\sqrt{2}\,\Gamma(c))$
is continuous on $[3/2,\infty)$ since the Gamma function is
continuous and non-vanishing on $(0,\infty)$. The map $c\mapsto{}_1F_1(\frac12;c;x)$ is continuous for each fixed $x\in\mathbb{R}$ and $c\in[3/2,\infty)$. Each term
$(1/2)_k/[(c)_kk!]\cdot x^k$ is continuous in $c$ on $[3/2,\infty)$, since $(c)_k>0$ is a polynomial in $c$ with no zeros there. For any
compact subset $[c_L,c_U]\subset[3/2,\infty)$, $(c)_k\geq(c_L)_k>0$, so
$|(1/2)_k/(c)_k|\leq(1/2)_k/(c_L)_k$ and $\sum_{k=0}^\infty(1/2)_k/(c_L)_k\cdot
|x|^k/k!={}_1F_1(\tfrac12;c_L;|x|)<\infty$ since ${}_1F_1$ is entire; the series therefore converges absolutely and uniformly in $c$ on
$[c_L,c_U]$ by the Weierstrass M-test, and continuity of the sum follows by the uniform limit theorem. Pointwise continuity of
$c\mapsto G_c(z)$ then follows since $G_c(z)=\tfrac12+z\, \kappa(c)\,
{}_1F_1(\tfrac12;c;-z^2/2)$ is a product of continuous functions of $c$ for each fixed $z$.

For uniform convergence, write $G_c(z)-\Phi(z)=G_c(z)-G_{3/2}(z)$, so
\[
    \sup_{z\in\mathbb{R}}|G_c(z)-\Phi(z)| =
    \sup_{z\in\mathbb{R}}|z|\,\left|
    \kappa(c)\,{}_1F_1\!\left(\tfrac12;c;
    -\tfrac{z^2}{2}\right)
    -\frac{1}{\sqrt{2\pi}}\,{}_1F_1\!
    \left(\tfrac12;\tfrac32;-\tfrac{z^2}{2}
    \right)\right|.
\]
Applying the asymptotic expansion \citep[eq.~13.5.1]{abramowitz_stegun72}
\[
    {}_1F_1(a;c;-x) =
    \frac{\Gamma(c)}{\Gamma(c-a)}\,x^{-a}
    \left[1+\frac{a(1+a-c)}{x}+O(x^{-2})
    \right]
\]
with $x=z^2/2\to+\infty$ to each term, the leading $x^{-1/2}$ terms cancel identically since $\kappa(c)\frac{\Gamma(c)}{\Gamma(c-\frac12)}\sqrt{2}
\equiv\frac12$, leaving
\[
    \sup_{z\in\mathbb{R}}|G_c(z)-\Phi(z)| \leq
    \sup_{|z|>Z}\left|\tfrac12\bigl(\tfrac32-c\bigr)|z|^{-2}
    +o(|z|^{-2})\right| + \sup_{|z|\leq Z}|G_c(z)-\Phi(z)|.
\]
The first term is bounded by $\tfrac12|c-\tfrac32|Z^{-2}+o(Z^{-2})$,
which tends to zero both as $Z\to\infty$ for fixed $c$, and as $c\downarrow\tfrac32$ for fixed $Z$ since $|c-\tfrac32|\to0$. The second term tends to zero as $c\downarrow\tfrac32$ for any fixed $Z$, by
uniform convergence of $c\mapsto G_c(z)$ on the compact set $\{|z|\leq Z\}$, established in the pointwise continuity argument above. Hence $\sup_z|G_c(z)-\Phi(z)|\to0$ as $c\downarrow\tfrac32$, as required.
\end{proof}

\begin{proof}[Proof of Lemma~\ref{lem:differentiability}]
The series representation of ${}_1F_1$ is
\[
    {}_1F_1\!\left(\frac12;c;-\frac{z^2}{2}\right)
    = \sum_{k=0}^{\infty}\frac{(1/2)_k}{(c)_k\,k!}
    \left(-\frac{z^2}{2}\right)^k.
\]
Each term is analytic in $c$ on $(3/2,\infty)$, since $(c)_k=c(c+1)\cdots(c+k-1)$ is a polynomial in $c$ with no zeros there. The proof of Lemma~\ref{lem:continuity} establishes that the series converges absolutely and uniformly in $c$ on compact subsets of $(3/2,\infty)$ via the Weierstrass M-test with dominating series $\sum_{k=0}^\infty(1/2)_k/((c_L)_k\,k!)\cdot|z|^{2k}/2^k$. The $c$-derivatives of each term are bounded by the same dominating series up to a logarithmic factor in $k$ (from differentiating $(c)_k$ and using the asymptotic expansion $\psi(x) \sim \ln(x) - \tfrac{1}{2x} - \tfrac{1}{12x^2} + O(x^{-4})$ as $x \to \infty$ \citep[eq.~6.3.18]{abramowitz_stegun72}, so $\psi(c+k)-\psi(c)\sim\ln k$); since ${}_1F_1$ is entire, the terms of the original series decay faster than any geometric series in $k$, and this logarithmic growth is dominated and absolute convergence of the differentiated series is preserved for all $z\in\mathbb{R}$, justifying term-by-term differentiation to all orders and giving infinite differentiability of $c\mapsto{}_1F_1(1/2;c;-z^2/2)$ on $(3/2,\infty)$.

Differentiating $(c)_k=\prod_{j=0}^{k-1}(c+j)$ logarithmically with respect to $c$ gives
\[
    \frac{\partial}{\partial c}(c)_k
    = (c)_k\sum_{j=0}^{k-1}\frac{1}{c+j}
    = (c)_k\left[\psi(c+k)-\psi(c)\right],
\]
where $\psi=\Gamma'/\Gamma$ is the digamma function \citep[eq.~6.3.1]{abramowitz_stegun72} and the second equality follows from the digamma recurrence \citep[eq.~6.3.5]{abramowitz_stegun72} $\psi(c+1) = \psi(c) + \tfrac1c$ applied $k$ times. Hence,
\[
    \frac{\partial}{\partial c}\frac{(1/2)_k}{(c)_k\,k!}
    = -\frac{(1/2)_k}{(c)_k\,k!}\left[\psi(c+k)-\psi(c)\right],
\]
establishing \eqref{eq:1F1dot}. Differentiating $\kappa(c)=\exp(\ln\Gamma(c-\tfrac12)-\ln\Gamma(c)) /(2\sqrt{2})$ gives \eqref{eq:Kdot} by the definition of $\psi$. The product rule applied to $G_c(z)=\tfrac12+z\,\kappa(c)\,{}_1F_1(1/2;c;-z^2/2)$ then gives \eqref{eq:Gdot}; infinite differentiability of $c\mapsto G_c(z)$ follows since $\kappa(c)$ and ${}_1F_1(1/2;c;-z^2/2)$ are both infinitely differentiable in $c$ on $(3/2,\infty)$, with derivatives given by the polygamma functions $\psi^{(n)}(x)=\tfrac{d^n}{dx^n}\psi(x)$ \citep[eq.~6.4.1]{abramowitz_stegun72}.
\end{proof}

\begin{proof}[Proof of Lemma~\ref{lem:tail}]
Part (ii) is immediate from Corollary~\ref{cor:density}, which
gives $g_{3/2}=\phi$. For part (i), write $x=z^2/2\to+\infty$
and express $g_c$ from the proof of Lemma~\ref{lem:monotonicity} as
\[
    g_c(z) = \kappa(c)\left[{}_1F_1\!\left(\frac12;c;-x\right) -
    \frac{x}{c}\,{}_1F_1\!\left(\frac32;c+1;-x\right)\right].
\]
Applying the Kummer asymptotic expansion \citep[eq.~13.5.1]{abramowitz_stegun72}, as in the proof of Lemma~\ref{lem:continuity},
to each term, with $C:=\Gamma(c)/\Gamma(c-1/2)$, gives
\begin{align*}
    {}_1F_1\!\left(\frac12;c;-x\right)
    &=
    C\,x^{-1/2}\left[1+\frac{3/2-c}{2x} + O(x^{-2})\right],
    \\
    \frac{x}{c}\,{}_1F_1\!\left(\frac32;c+1;-x\right)
    &=
    C\,x^{-1/2}\left[1+\frac{3(3/2-c)}{2x} + O(x^{-2})\right].
\end{align*}
Subtracting, the leading $x^{-1/2}$ terms cancel, leaving
\[
    g_c(z) = \kappa(c)\,C\,x^{-1/2}\left[\frac{c-3/2}{x}+O(x^{-2})\right] = \frac{c-3/2}{2\sqrt{2}}\,x^{-3/2} + O(x^{-5/2}),
\]
using $\kappa(c)\,C=1/(2\sqrt{2})$. Since $x=z^2/2$, we have
$x^{-3/2}=2\sqrt{2}\,|z|^{-3}$ and $O(x^{-5/2})=o(|z|^{-3})$,
giving \eqref{eq:tail_algebraic}. Applying this argument directly to \eqref{eq:gc} gives the same result.
\end{proof}

\begin{proof}[Proof of Corollary~\ref{cor:survival}]
For $z>0$, $1-G_c(z)=\int_z^\infty g_c(t)\,dt$. By
Lemma~\ref{lem:tail}, $g_c(t)=(c-3/2)\,t^{-3}+o(t^{-3})$ as
$t\to\infty$. The leading term integrates to
\[
    \int_z^\infty (c-3/2)\,t^{-3}\,dt = \frac{c-3/2}{2}\,z^{-2}.
\]
For the remainder: given $\varepsilon>0$, choose $M$ such that
$|g_c(t)-(c-3/2)\,t^{-3}|\leq\varepsilon\, t^{-3}$ for all
$t\geq M$. Then for $z\geq M$,
\[
    \left|\int_z^\infty\!\left[g_c(t)-(c-3/2)\,t^{-3}\right]dt
    \right| \leq \varepsilon\int_z^\infty t^{-3}\,dt = \frac{\varepsilon}{2}\,z^{-2}.
\]
Since $\varepsilon>0$ was arbitrary, the remainder is
$o(z^{-2})$, giving \eqref{eq:survival}.
\end{proof}

\begin{proof}[Proof of Corollary~\ref{cor:moments}]
Since $g_c$ is even, $\E[|Z|^r]=2\int_0^\infty z^r g_c(z)\,dz$.
The integrand is continuous and bounded on $[0,M]$ for any
$M<\infty$, so convergence is determined by the tail integral
$\int_M^\infty z^r g_c(z)\,dz$. By Lemma~\ref{lem:tail},
$g_c(z)=(c-3/2)z^{-3}+o(z^{-3})$ as $z\to\infty$, so
$z^rg_c(z)/(c-3/2)z^{r-3}\to1$; by the limit comparison
test, the tail integral converges if and only if
$\int_M^\infty z^{r-3}\,dz<\infty$, if and only if
$r<2$. The case $r=1<2$ gives
$\E[|Z|]<\infty$, so $\E[Z]=0$ by symmetry; the case $r=2$
gives $\int_M^\infty z^{-1}\,dz=\infty$, so $\E[Z^2]=\infty$.
\end{proof}

\begin{proof}[Proof of Proposition~\ref{prop:mixture}]
Applying the change of variable $u=1-\lambda$ to the integral form \eqref{eq:Gc_integral} obtained in the proof of Lemma~\ref{lem:monotonicity} gives
\[
    g_c(z) = \frac{c-\tfrac32}{\sqrt{2\pi}}
    \int_0^1\lambda^{1/2}\,(1-\lambda)^{c-5/2}
    \,e^{-\lambda z^2/2}\,d\lambda.
\]
Identifying $(\sqrt{\lambda}/\sqrt{2\pi}) e^{-\lambda z^2/2}$ as the
$\mathrm{N}(0,\lambda^{-1})$ density and $(c-\tfrac32)(1-\lambda)^{c-5/2}$ as the $\mathrm{Beta}(1,c-\tfrac32)$ density gives \eqref{eq:mixture_density}, and the stochastic representation $Z=\varepsilon/\sqrt{\Lambda}$ follows immediately. Working directly from the ${}_1F_1$ series via Lemma~\ref{lem:monotonicity} avoids the need to justify differentiation under the integral sign that would arise from applying the Euler integral representation of ${}_1F_1$ to \eqref{eq:Gc_def} directly.
\end{proof}

\begin{proof}[Proof of Corollary~\ref{cor:unimodal}]
For $c=\tfrac32$, $g_{3/2}=\phi$ is strictly unimodal with unique mode at zero since $\phi(0)>\phi(z)$ for all $z\neq0$. For $c>\tfrac32$, \eqref{eq:mixture_density} expresses $g_c(z)$ as a mixture of $\mathrm{N}(0,\lambda^{-1})$ densities with strictly positive weights $(c-\tfrac32)(1-\lambda)^{c-5/2}$ on $(0,1)$. Each $\mathrm{N}(0,\lambda^{-1})$ density
attains its unique maximum at $z=0$ and is strictly decreasing in $|z|$; since the weights are strictly positive on $(0,1)$, the mixture inherits both properties: $g_c(z)<g_c(0)$ for all $z\neq0$, and $g_c$ is strictly decreasing on $(0,\infty)$.
\end{proof}

\begin{proof}[Proof of Corollary~\ref{cor:charfun}]
By Proposition~\ref{prop:mixture}, $Z=\varepsilon/\sqrt{\Lambda}$ with $\varepsilon\sim\N(0,1)$ and $\Lambda\sim\Beta(1,c-3/2)$ independent. Conditionally, $Z \mid (\Lambda=\lambda)\sim\N(0,\lambda^{-1})$, so $\E[e^{itZ}\mid(\Lambda=\lambda)]=e^{-t^2/(2\lambda)}$. The density of $\Lambda$ is $\lambda \mapsto (c-3/2)(1-\lambda)^{c-5/2}$ on $(0,1)$, giving
\[
    \varphi_c(t) = (c-3/2)\int_0^1 \exp\!\left(-\tfrac{t^2}{2\lambda}\right)\,
    (1-\lambda)^{c-5/2}\,d\lambda,
\]
which proves \eqref{eq:charfun_integral}. Setting
$x=t^2/2$ and substituting $y=(1-\lambda)/\lambda$ gives
\[
    \varphi_c(t) = (c-3/2)\,e^{-x} \int_0^\infty
    e^{-xy}\,y^{c-5/2}(1+y)^{1/2-c}\,dy.
\]
Applying the integral form of Tricomi's confluent hypergeometric function \citep[eq.~13.2.5]{abramowitz_stegun72} proves \eqref{eq:charfun}. The boundary case follows from $\Lambda\sim\Beta(1,c-3/2)\Rightarrow\delta_1$ as $c\downarrow \tfrac32$, so $Z=\varepsilon/\sqrt{\Lambda}\to\varepsilon\sim\N(0,1)$
and $\varphi_c(t)\to e^{-t^2/2}$ pointwise.
\end{proof}

\begin{proof}[Proof of Corollary~\ref{cor:location_scale_family}]
Since $G_c$ is a distribution function by Corollary~\ref{cor:Gc_distn} and
$x\mapsto(x-\mu)/\sigma$ is strictly increasing for $\sigma>0$, the composition $G_{\mu,\sigma,c}(x)=G_c\bigl((x-\mu)/\sigma\bigr)$ is a distribution function on $\mathbb{R}$. Differentiating gives
$g_{\mu,\sigma,c}(x)=\sigma^{-1}g_c\bigl((x-\mu)/\sigma\bigr)$. The Gaussian nesting follows from $G_{3/2}=\Phi$. Symmetry about $\mu$ follows from the evenness of $g_c$ (Corollary~\ref{cor:density}): for all
$t\in\mathbb{R}$, $g_{\mu,\sigma,c}(\mu+t)=\sigma^{-1}g_c(t/\sigma)
=\sigma^{-1}g_c(-t/\sigma)=g_{\mu,\sigma,c}(\mu-t)$. For the tail, apply Lemma~\ref{lem:tail} with $z=(x-\mu)/\sigma$: as $|x|\to\infty$,
\[
    g_{\mu,\sigma,c}(x) = \sigma^{-1}g_c\!\left(\tfrac{x-\mu}{\sigma}\right)
    \sim \sigma^{-1}\bigl(c-\tfrac32\bigr) \left|\tfrac{x-\mu}{\sigma}\right|^{-3} = \bigl(c-\tfrac32\bigr)\sigma^2|x-\mu|^{-3}.
\]
For the moments, let $Z=(X-\mu)/\sigma\sim G_c$, so $\E[|X-\mu|^r]=\sigma^r\E[|Z|^r]$. By Corollary~\ref{cor:moments}, $\E[|Z|^r]<\infty\Leftrightarrow r<2$, and the same equivalence holds for $\E[|X-\mu|^r]$. In particular $r=1<2$ gives $\E[|X-\mu|]<\infty$, and
$\E[X]=\mu$ follows by symmetry. The variance does not exist since $r=2$ is excluded.
\end{proof}

\begin{proof}[Proof of Theorem~\ref{thm:consistency}]
The proof has two steps: uniform convergence of $Q_n$ to $Q$
on $\mathcal{C}$, and passage from uniform convergence to
convergence of the minimiser. Since $\widehat F_n$, $F_0$, and $G_c$ all take values in $[0,1]$,
\[
    |Q_n(c)-Q(c)| \leq 2\int_{\mathbb{R}}|\widehat F_n(z)-F_0(z)|\,d\nu(z) \leq 2\,\nu(\mathbb{R})\sup_{z\in\mathbb{R}}
    |\widehat F_n(z)-F_0(z)|,
\]
where the second inequality uses \textup{(A2)}. Taking the
supremum over $c\in\mathcal{C}$,
\[
    \sup_{c\in\mathcal{C}}|Q_n(c)-Q(c)| \leq 2\,\nu(\mathbb{R})
    \sup_{z\in\mathbb{R}}|\widehat F_n(z)-F_0(z)|.
\]
By the Glivenko--Cantelli theorem, the right-hand side converges
to zero almost surely, so $Q_n\to Q$ uniformly on $\mathcal{C}$
almost surely. For each $z\in\mathbb{R}$, the map
$c\mapsto\{F_0(z)-G_c(z)\}^2$ is continuous by \textup{(A0)}
and bounded by $1$. Since $\nu(\mathbb{R})<\infty$ by
\textup{(A2)}, dominated convergence gives that $Q$ is
continuous on $\mathcal{C}$. Since $\mathcal{C}$ is compact by
\textup{(A1)} and $Q$ has a unique minimiser $c^\star$ by
\textup{(A3)}, uniform convergence of $Q_n$ to $Q$ implies
$\widehat c_n\to c^\star$ almost surely by the argmin theorem
\citep[Theorem~2.1]{newey_mcfadden94}.
\end{proof}

\begin{proof}[Proof of Lemma~\ref{lem:identifiability}]
The direction $c=c_0\Rightarrow G_c=G_{c_0}$ $\nu$-almost
everywhere is immediate. For the converse, let $c\neq c_0$
in $\mathcal{C}$, without loss of generality $c>c_0$.
By Corollary~\ref{cor:survival},
\[
    1-G_c(z)\sim\frac{c-3/2}{2}\,z^{-2}, \quad
    1-G_{c_0}(z)\sim\frac{c_0-3/2}{2}\,z^{-2},
    \quad z\to\infty,
\]
where the second asymptotic is read as $o(z^{-2})$ when
$c_0=3/2$, since $1-\Phi(z)$ decays super-exponentially.
In either case,
\[
    \frac{1-G_c(z)}{1-G_{c_0}(z)}\to
    \begin{cases}
        \dfrac{c-3/2}{c_0-3/2}\neq 1 & \text{if } c_0>3/2,\\[6pt]
        +\infty & \text{if } c_0=3/2,
    \end{cases}
\]
so $G_c(z)\neq G_{c_0}(z)$ for all sufficiently large $z$,
and by \textup{(A4)}, $\nu((M,\infty))>0$ for every $M$,
so $G_c$ and $G_{c_0}$ differ on a set of positive
$\nu$-measure and are not equal $\nu$-almost everywhere.
If $F_0=G_{c_0}$, then $Q(c_0)=0$ while $Q(c)>0$ for all
$c\neq c_0$, so $c^\star=c_0$ is the unique minimiser of $Q$
and \textup{(A3)} holds.
\end{proof}

\begin{proof}[Proof of Corollary~\ref{cor:consistency_correct}]
By Lemma~\ref{lem:identifiability}, \textup{(A4)} implies
\textup{(A3)} with $c^\star=c_0$. The conclusion then follows
from Theorem~\ref{thm:consistency} under \textup{(A0)--(A3)}.
The Gaussian null is the case $c_0=3/2$, using
$G_{3/2}=\Phi$ by Corollary~\ref{cor:density}.
\end{proof}

\begin{proof}[Proof of Lemma~\ref{lem:B1}]
Infinite differentiability of $c\mapsto G_c(z)$ on $(3/2,\infty)$ is established in Lemma~\ref{lem:differentiability}. It remains to show real-analyticity. Fix $z\in\mathbb{R}$ and let $\mathcal{N}$ be a neighbourhood of $c^\star$ in $(1/2,\infty)$. The function $K(c)=\Gamma(c-1/2)/(2\sqrt{2}\,\Gamma(c))$ is analytic on $\mathcal{N}$ since $\Gamma$ is analytic and non-vanishing on $(0,\infty)$. The map
$c\mapsto{}_1F_1(\frac12;c;-z^2/2)$ is meromorphic in $c$ with poles only at $c\in\mathbb{Z}_{0,-}$; since $\mathcal{N}\subset(1/2,\infty)$, no pole lies in $\mathcal{N}$, so it is analytic there. Hence $G_c(z)$, a sum and product of functions analytic in $c$, is real-analytic on $\mathcal{N}$.
\end{proof}

\begin{proof}[Proof of Lemma~\ref{lem:B2}]
Let $c_L=\inf_{c\in\mathcal{N}}c>1/2$. By the Euler integral representation
\citep[eq.~13.2.1]{abramowitz_stegun72},
\[
    G_c(z) = \frac12 + \frac{z}{2\sqrt{2\pi}}
    \int_0^1 e^{-z^2\lambda/2}\,\lambda^{-1/2}
    \,(1-\lambda)^{c-3/2}\,d\lambda.
\]
Since $(1-\lambda)^{c-3/2}\leq(1-\lambda)^{c_L-3/2}$ for $c\geq c_L$, the $c$-derivatives of the integrand are bounded uniformly in $c\in\mathcal{N}$ by
\[
    B_m(z) = \frac{|z|}{2\sqrt{2\pi}}\int_0^1
    e^{-z^2\lambda/2}\,\lambda^{-1/2}\,
    (1-\lambda)^{c_L-3/2}
    |\ln(1-\lambda)|^m\,d\lambda,
\]
which is finite for each fixed $z$: near $\lambda=0$ the integrand behaves like $\lambda^{m-1/2}$, which is integrable; near $\lambda=1$,
$(1-\lambda)^{c_L-3/2}|\ln(1-\lambda)|^m\to0$ since $c_L>1/2$ implies $c_L-3/2>-1$ and the power dominates the logarithm. Dominated convergence with dominator $B_m(z)$ therefore justifies differentiating under the
integral sign, giving
\begin{align*}
    \dot G_c(z)
    &= \frac{z}{2\sqrt{2\pi}} \int_0^1
    e^{-z^2\lambda/2}\,\lambda^{-1/2}
    \,(1-\lambda)^{c-3/2}\ln(1-\lambda)\,d\lambda,
    \\
    \ddot G_c(z)
    &= \frac{z}{2\sqrt{2\pi}} \int_0^1
    e^{-z^2\lambda/2}\,\lambda^{-1/2}
    \,(1-\lambda)^{c-3/2}\{\ln(1-\lambda)\}^2\,d\lambda,
\end{align*}
with $|\dot G_c(z)|\leq B_1(z)$ and $|\ddot G_c(z)|\leq B_2(z)$ uniformly in $c\in\mathcal{N}$. It remains to show that $\sup_z B_m(z)<\infty$. As $z\to0$, $B_m(z)\to0$ since $|z|\to0$ and the integral is bounded. As $|z|\to\infty$, $e^{-z^2\lambda/2}$ is negligible for all $\lambda$ bounded away from zero, so replacing $|\ln(1-\lambda)|^m\sim\lambda^m$ and
$(1-\lambda)^{c_L-3/2}\sim 1$ near $\lambda=0$, extending the upper limit to $\infty$, and using Euler's integral form of the Gamma function $\Gamma(x)=\int_0^\infty u^{x-1}e^{-u}du$ \citep[eq.~6.1.1]{abramowitz_stegun72}, it follows that
\[
    B_m(z) \sim \frac{|z|}{2\sqrt{2\pi}}\int_0^\infty
    \lambda^{m-1/2}\,e^{-z^2\lambda/2}\,d\lambda
    = \frac{|z|}{2\sqrt{2\pi}}
    \left(\frac{2}{z^2}\right)^{m+1/2}
    \Gamma\!\left(m+\tfrac12\right)
    = O(|z|^{-2m})\to0.
\]
Since $B_m$ is continuous, tends to zero at $z=0$ and as $|z|\to\infty$, it attains its supremum: $\sup_z B_m(z)\leq C_m<\infty$. Setting $M_1(z)=C_1$ and $M_2(z)=C_2$, both integrability conditions of \textup{(B2)} follow since $M_1$ and $M_2$ are constants and $\nu(\mathbb{R})<\infty$ by
\textup{(A2)}.
\end{proof}

\begin{proof}[Proof of Lemma~\ref{lem:B3}]
For positivity, note that $2\int_{\mathbb{R}}\dot
G_{c^\star}(z)^2\,d\nu(z)=0$ if and only if $\dot
G_{c^\star}=0$ $\nu$-almost everywhere. From
Lemma~\ref{lem:differentiability}, and expanding near $z=0$,
${}_1F_1(\tfrac12;c^\star;-z^2/2)=1+O(z^2)$ and
$\tfrac{\partial}{\partial c}\,{}_1F_1(\tfrac12;c^\star;
-z^2/2)=O(z^2)$, so
\[
    \dot G_{c^\star}(z) = z\dot \kappa(c^\star)+O(z^3),
\]
where $\dot \kappa(c^\star)=\kappa(c^\star)[\psi(c^\star-\tfrac12)
-\psi(c^\star)]<0$ since $\psi$ is strictly increasing on
$(0,\infty)$ by $\psi'(x)=\sum_{k=0}^{\infty}(x+k)^{-2}$
\citep[eq.~6.4.10]{abramowitz_stegun72}, and $c^\star-\tfrac12
<c^\star$. Hence $\dot G_{c^\star}$ is non-zero on a punctured
neighbourhood of zero, which has positive $\nu$-measure by
assumption, so $2\int_{\mathbb{R}}\dot G_{c^\star}(z)^2\,
d\nu(z)>0$. For the identification with $H_{c^\star}$: differentiating
$Q(c)=\int_{\mathbb{R}}\{F_0(z)-G_c(z)\}^2\,d\nu(z)$ twice
under the integral sign, justified by \textup{(B2)}, gives
\[
    Q''(c) = 2\int_{\mathbb{R}}\dot G_c(z)^2\,d\nu(z) -
    2\int_{\mathbb{R}}\{F_0(z)-G_c(z)\}\,\ddot G_c(z)\,
    d\nu(z).
\]
Under correct specification $F_0=G_{c^\star}$, the second
term vanishes, giving $H_{c^\star}=Q''(c^\star)=
2\int_{\mathbb{R}}\dot G_{c^\star}(z)^2\,d\nu(z)$, positive
by the above.
\end{proof}

\begin{proof}[Proof of Lemma~\ref{lem:Omega_pos}]
By construction (proof of Theorem~\ref{thm:interior}), $\Omega_{c^\star}=\Var(S)$ for $S=\int_{\mathbb{R}}
\mathbb{B}_{G_{c^\star}}(z)\,\dot G_{c^\star}(z)\,d\nu(z)$, a mean-zero Gaussian random variable; hence $\Omega_{c^\star} \geq0$ automatically. Equality would require $S=0$ almost surely, which (since $\mathbb{B}_{G_{c^\star}}$ is a non-degenerate Gaussian process whenever $G_{c^\star}$ has a strictly positive density) holds only if $\dot G_{c^\star}=0$ $\nu$-almost everywhere. By Lemma~\ref{lem:B3}, $\dot G_{c^\star}$ is non-zero on a punctured neighbourhood of zero, which has positive $\nu$-measure by assumption: so $\dot G_{c^\star}$ is not
$\nu$-almost everywhere zero, giving a contradiction. Hence $\Omega_{c^\star}>0$ as claimed.
\end{proof}

\begin{proof}[Proof of Theorem~\ref{thm:interior}]
Since $c^\star\in\operatorname{int}(\mathcal{C})$, consistency (Theorem~\ref{thm:consistency}) gives $\widehat c_n\in\operatorname{int}(\mathcal{C})$ with probability tending to one, so $Q_n'(\widehat c_n)=0$
eventually. A mean-value expansion about $c^\star$ gives
\begin{equation}\label{eq:mvt}
    \sqrt{n}\,(\widehat c_n-c^\star) = -\{Q_n''(\widetilde c_n)\}^{-1}
    \sqrt{n}\,Q_n'(c^\star),
\end{equation}
where $\widetilde c_n$ lies between $\widehat c_n$ and $c^\star$. By (B1)--(B2), differentiation under the integral sign gives
\[
    Q_n'(c^\star) = -2\int_{\mathbb{R}}
    \{\widehat F_n(z)-G_{c^\star}(z)\}
    \,\dot G_{c^\star}(z)\,d\nu(z).
\]
From the population first-order condition $Q'(c^\star)=0$ it follows that the score
\[
    \sqrt{n}\,Q_n'(c^\star) = -2\int_{\mathbb{R}}
    \sqrt{n}\,\{\widehat F_n(z)-F_0(z)\}
    \,\dot G_{c^\star}(z)\,d\nu(z).
\]
By Donsker's theorem \citep{donsker52}, $\sqrt{n}(\widehat F_n-F_0)
\Rightarrow\mathbb{B}_{F_0}$ in the space of bounded functions equipped with the supremum norm $\ell^\infty(\mathbb{R})$; $\mathbb{B}_{F_0}$ is an $F_0$-Brownian bridge, a mean-zero Gaussian process with covariance kernel $F_0(z\wedge y)-F_0(z)F_0(y)$, where $z\wedge y:=\min(z,y)$. The map $f\mapsto\int f\,\dot G_{c^\star}\,d\nu$ is a continuous linear functional on $\ell^\infty(\mathbb{R})$ under \textup{(B2)}, so the continuous mapping theorem gives $\sqrt{n}\,Q_n'(c^\star)\Rightarrow\N(0,\Omega_{c^\star})$, non-degenerate by Lemma~\ref{lem:Omega_pos}, with $\Omega_{c^\star}$ as in \eqref{eq:Omega}. By \textup{(B1)--(B2)} and the Glivenko--Cantelli theorem, $Q_n''(c)\to Q''(c)$ almost surely, uniformly on the neighbourhood $\mathcal{N}$ of $c^\star$ introduced in
\textup{(B1)}. Since $\widetilde c_n \to c^\star$ almost surely and $Q''$ is continuous, $Q_n''(\widetilde c_n)\to H_{c^\star}>0$ almost surely by
\textup{(B3)}, and hence in probability, which is all that
Slutsky's theorem requires. Slutsky's theorem applied to
\eqref{eq:mvt} then gives $\sqrt{n}(\widehat c_n-c^\star)
\Rightarrow\N(0,\Omega_{c^\star}/H_{c^\star}^2)$.
\end{proof}

\begin{proof}[Proof of Lemma~\ref{lem:curvature}]
Since $F_0=\Phi=G_{c_0}$, the Gaussian null is the correctly
specified case with $c^\star=c_0$, so \eqref{eq:Hc0} follows
from the identification in Lemma~\ref{lem:B3}, and positivity
from the bound in Lemma~\ref{lem:B3} applied with
$c^\star=c_0$. For \eqref{eq:Qn_uniform}, write
\[
    Q_n''(c)-Q''(c) = -2\int_{\mathbb{R}}
    \{\widehat F_n(z)-F_0(z)\}\,\ddot G_c(z)\,d\nu(z),
\]
so
\[
    \sup_{c\in\mathcal{N}}|Q_n''(c)-Q''(c)|\leq
    2\sup_{z \in \mathbb{R}}|\widehat F_n(z)-F_0(z)|
    \int_{\mathbb{R}}\sup_{c\in\mathcal{N}}|\ddot G_c(z)|\,d\nu(z).
\]
The integral is finite by (B2); $\sup_z|\widehat F_n(z)-F_0(z)|\to 0$ almost surely by Glivenko--Cantelli. The product converges to zero almost surely, which is stronger than \eqref{eq:Qn_uniform} requires: only convergence in probability is used in the proofs of Theorems~\ref{thm:interior} and~\ref{thm:boundary}.
\end{proof}

\begin{proof}[Proof of Lemma~\ref{lem:tightness}]
From the score argument of Theorem~\ref{thm:interior},  $\sqrt{n}\,Q_n'(c_0)\Rightarrow S\sim\N(0,\Omega_{c_0})$; in particular $\sqrt{n}\,Q_n'(c_0)=O_p(1)$. Since $Q''$ is continuous on a neighbourhood $\mathcal{N}$ of $c_0$ by \textup{(B1)--(B2)}, and $Q''(c_0)=H_{c_0}>0$ by Lemma~\ref{lem:curvature}, by shrinking $\mathcal N$ if necessary we may assume $Q''(c)\geq H_{c_0}/2$ for all $c\in\mathcal N$. By \eqref{eq:Qn_uniform}, with probability tending to one, $Q_n''(c)\geq H_{c_0}/4$ for all $c\in\mathcal N$. By Taylor's theorem with Lagrange remainder, for any $c\in \mathcal N$, with probability tending to one,
\[
    Q_n(c)-Q_n(c_0) \geq Q_n'(c_0)(c-c_0) +
    \frac{H_{c_0}}{8}(c-c_0)^2.
\]
By Corollary~\ref{cor:consistency_correct},
$\widehat c_n\to c_0$ almost surely, so $\widehat c_n\in \mathcal N$ with probability tending to one; since $\widehat c_n$ minimises $Q_n$ over $\mathcal C$ and $c_0 \in \mathcal{C}$, $Q_n(\widehat c_n)\leq Q_n(c_0)$. Setting $c=\widehat c_n$ above gives, with probability tending to one,
\[
    \frac{H_{c_0}}{8}(\widehat c_n-c_0)^2 \leq
    -Q_n'(c_0)(\widehat c_n-c_0) \leq
    |Q_n'(c_0)|\,|\widehat c_n-c_0|.
\]
When $\widehat c_n=c_0$ the bound is trivial; when
$\widehat c_n>c_0$, dividing by $\widehat c_n-c_0>0$ gives
$|\widehat c_n-c_0|\leq(8/H_{c_0})\,|Q_n'(c_0)|$, and hence
\[
    \sqrt{n}\,|\widehat c_n-c_0| \leq \frac{8}{H_{c_0}}\,
    \sqrt{n}\,|Q_n'(c_0)| = O_p(1),
\]
and the claim follows.
\end{proof}

\begin{proof}[Proof of Theorem~\ref{thm:boundary}]
Reparametrise locally by $c=c_0+h/\sqrt{n}$ with $h\geq 0$,
and define the localised criterion
\[
    L_n(h) = n\bigl\{Q_n(c_0+h/\sqrt{n})-Q_n(c_0)\bigr\}.
\]
By Taylor's theorem with Lagrange remainder, with the
remainder term controlled uniformly for $h$ in compact sets
by Lemma~\ref{lem:curvature},
\begin{equation}\label{eq:taylor_boundary}
    L_n(h) = h\,\sqrt{n}\,Q_n'(c_0) + \tfrac12 h^2\,Q_n''(c_0) + o_p(1).
\end{equation}
By the score argument of Theorem~\ref{thm:interior},
$\sqrt{n}\,Q_n'(c_0)\Rightarrow S\sim\N(0,\Omega_{c_0})$, non-degenerate by Lemma~\ref{lem:Omega_pos}, where $\Omega_{c_0}$ is the correct-specification form of \eqref{eq:Omega}. By Lemma~\ref{lem:curvature}, $Q_n''(c_0)\to_p H_{c_0}>0$, with $H_{c_0}$ as in \eqref{eq:Hc0}. Substituting into \eqref{eq:taylor_boundary}, $L_n\Rightarrow L$ uniformly on compact sets, where
\[
    L(h) = hS + \tfrac12 H_{c_0}h^2, \quad h\geq 0.
\]
Since $H_{c_0}>0$, $L$ is strictly convex on $[0,\infty)$ with unique minimiser
\[
    h^\star = \arg\min_{h\geq 0}L(h) =
    \max\!\left\{0,\,-\frac{S}{H_{c_0}}\right\}.
\]
The conditions of the argmax continuous mapping theorem
\citep[Lemma~3.2.1]{vandervaart_wellner23} are now in place:
$L_n\Rightarrow L$ uniformly on compact sets and $L$ has the
unique minimiser $h^\star$, while $\sqrt{n}(\widehat c_n-c_0)$, the exact minimiser of $L_n$ over $h\geq 0$, is asymptotically tight by
Lemma~\ref{lem:tightness}. Applying the theorem to $-L_n$ and $-L$ gives
\[
    \sqrt{n}\,(\widehat c_n-c_0)\;\Rightarrow\;
    h^\star = \max\!\left\{0,\,-\frac{S}{H_{c_0}}\right\}.
\]
Writing $Z=-S/H_{c_0}\sim\N(0,\Omega_{c_0}/H_{c_0}^2)$ gives
$h^\star=\max\{0,Z\}$. Since $Z$ is symmetric about zero,
$\Pr(Z\leq0)=1/2$, so the limit places mass $1/2$ at $\{0\}$;
and since $\max\{0,Z\}$ restricted to $(0,\infty)$ coincides
in distribution with $Z$ restricted to $(0,\infty)$, the
remaining mass is half-normal.
\end{proof}

\begin{proof}[Proof of Theorem~\ref{thm:fit_improvement}]
By the localisation argument in the proof of
Theorem~\ref{thm:boundary}, with the localised criterion
$L_n(h)=n\{Q_n(c_0+h/\sqrt{n})-Q_n(c_0)\}$,
\[
    L_n\;\Rightarrow\; L(h)=hS+\tfrac12 H_{c_0}h^2,
    \quad h\geq 0,
\]
uniformly on compact sets, where $S\sim\N(0,\Omega_{c_0})$.
Since $c=c_0+h/\sqrt{n}$ is an increasing reparametrisation
and $\widehat c_n$ minimises $Q_n$ over $c\geq c_0$,
$h_n^\star:=\sqrt{n}(\widehat c_n-c_0)$ minimises $L_n$ over
$h\geq 0$, and
\[
    T_n = n\{Q_n(c_0)-Q_n(\widehat c_n)\} =
    -L_n(h_n^\star) = -\min_{h\geq 0}L_n(h)
\]
exactly. By Lemma~\ref{lem:tightness}, $h_n^\star=O_p(1)$, so
for any $\varepsilon>0$ there is an $M<\infty$, taken large
enough that also $h^\star\leq M$, with $\Pr(h_n^\star\leq M)>1-\varepsilon$ for $n$ large. On this event, $\min_{h\geq 0}L_n(h)=\min_{h\in[0,M]}L_n(h)$;
and since $M$ was also chosen so that $h^\star\leq M$,
$\min_{h\in[0,M]}L(h)=\min_{h\geq 0}L(h)$ holds always, not
just with probability tending to one. Since $L_n\Rightarrow L$ uniformly on $[0,M]$,
\[
    \Bigl|\min_{h\in[0,M]}L_n(h)-\min_{h\in[0,M]}L(h)\Bigr|
    \leq \sup_{h\in[0,M]}|L_n(h)-L(h)| \to_p 0,
\]
so $T_n\Rightarrow-\min_{h\geq 0}L(h)$. The strictly convex limit $L$ has minimiser $h^\star=\max\{0,-S/H_{c_0}\}$, as in the proof of
Theorem~\ref{thm:boundary}. If $S\geq 0$, the minimum is at $h^\star=0$ and the limiting improvement is zero. If $S<0$, the minimum is at $h^\star=-S/H_{c_0}>0$, giving
\[
    \min_{h\geq 0}L(h) = -\frac{S}{H_{c_0}}\cdot S +
    \frac{H_{c_0}}{2}\cdot\frac{S^2}{H_{c_0}^2} =
    -\frac{S^2}{2H_{c_0}}.
\]
Hence
\[
    T_n \;\Rightarrow\; \frac{S^2}{2H_{c_0}}\,\mathbf{1}\{S<0\}.
\]
Writing $S=\Omega_{c_0}^{1/2}\,U$ with $U\sim\N(0,1)$,
\[
    T_n \;\Rightarrow\; \frac{\Omega_{c_0}}{2H_{c_0}}\,
    U^2\,\mathbf{1}\{U<0\}.
\]
Since $\Pr(U<0)=1/2$ and $U^2\sim\chi_1^2$, the limit
distributes as $\frac12\delta_0+\frac12(\Omega_{c_0}/
2H_{c_0})\chi_1^2$, establishing \eqref{eq:Tn_limit}.
Multiplying by $2H_{c_0}/\Omega_{c_0}$ gives
\eqref{eq:Tntilde_limit}, which is well-defined since $\Omega_{c_0}>0$ by Lemma~\ref{lem:Omega_pos}.
\end{proof}

\begin{proof}[Proof of Corollary~\ref{cor:moments_Ttilde}]
Write $\widetilde T_\infty=BX$ where $X\sim\chi_1^2$ and
$B\sim\mathrm{Bernoulli}(1/2)$ are independent. Since
$\E[X]=1$ and $\E[X^2]=3$,
\[
    \E[\widetilde T_\infty] = \E[B]\E[X] = \frac12, \quad
    \E[\widetilde T_\infty^2] = \E[B]\E[X^2] = \frac32,
\]
giving $\Var(\widetilde T_\infty)=\frac32-\frac14=\frac54$.
\end{proof}

\begin{proof}[Proof of Theorem~\ref{thm:directed}]
Uniform convergence of $Q_n$ to $Q$ on $\mathcal{C}$,
established in the proof of Theorem~\ref{thm:consistency},
gives
\[
    Q_n(c_0)\to_p Q(c_0), \quad Q_n(\widehat c_n) =
    \inf_{c\in\mathcal{C}}Q_n(c) \to_p
    \inf_{c\in\mathcal{C}}Q(c) = Q(c^\star).
\]
Hence $Q_n(c_0)-Q_n(\widehat c_n)\to_p Q(c_0)-Q(c^\star)
=\Delta$. Writing $R_n:=Q_n(c_0)-Q_n(\widehat c_n)-\Delta
\to_p 0$, for any fixed $M$, eventually $M/n-\Delta<-\Delta/2$,
so $\{n(\Delta+R_n)\leq M\}\subseteq\{R_n\leq-\Delta/2\}$, and
$\Pr(R_n\leq-\Delta/2)\to0$; hence $n\{Q_n(c_0)-Q_n(\widehat
c_n)\}\to_p\infty$ whenever $\Delta>0$. The constants $H_{c_0}=2\int_\mathbb{R}\dot G_{c_0}(z)^2\,d\nu(z)$ and $\Omega_{c_0}$ in \eqref{eq:Omega_c0} are determined entirely by $\Phi$, $\dot G_{c_0}$, and $\nu$: the same fixed quantities appearing in Theorem~\ref{thm:boundary}; and so remain well-defined, finite, and strictly positive regardless of the true $F_0$. Positivity of $H_{c_0}$ is exactly Lemma~\ref{lem:B3} applied with $c^\star=c_0$, which carries no hypothesis on $F_0$. Hence $2H_{c_0}/\Omega_{c_0}$ is a fixed, finite, strictly positive constant, and $\widetilde T_n\to_p\infty$, giving
$\Pr_{F_0}(\widetilde T_n>k_\alpha)\to 1$ for any fixed $k_\alpha<\infty$.
\end{proof}

\begin{proof}[Proof of Theorem~\ref{thm:local_power}]
The proof modifies the boundary argument of Theorem~\ref{thm:fit_improvement} to account for the local drift. The score at $c_0$ is
\[
    \sqrt{n}\,Q_n'(c_0) = -2\int_{\mathbb{R}}
    \sqrt{n}\,\{\widehat F_n(z)-\Phi(z)\}
    \,\dot G_{c_0}(z)\,d\nu(z).
\]
Since $f\mapsto-2\int f\,\dot G_{c_0}\,d\nu$ is a continuous linear functional on $\ell^\infty(\mathbb{R})$ under \textup{(B2)} (as in the proof of Theorem~\ref{thm:interior}), the assumed limit $\sqrt{n}(\widehat F_n-\Phi)\Rightarrow \mathbb{B}_\Phi+\eta$ and the continuous mapping theorem give
\[
    \sqrt{n}\,Q_n'(c_0)\;\Rightarrow\;S_\eta :=
    -2\int_{\mathbb{R}}\{\mathbb{B}_\Phi(z)+\eta(z)\}
    \,\dot G_{c_0}(z)\,d\nu(z) = S_0 - 2A_\nu(\eta),
\]
where $S_0:=-2\int_\mathbb{R}\mathbb{B}_\Phi(z)\,\dot G_{c_0}(z)\, d\nu(z)$. Since $\mathbb{B}_\Phi$ has covariance kernel $\Phi(z\wedge y)-\Phi(z)\Phi(y)$, $S_0$ is Gaussian with mean zero and variance $\Omega_{c_0}$ (matching
\eqref{eq:Omega_c0}), positive by Lemma~\ref{lem:Omega_pos}. We write $H_{c_0}(\nu)$ and $\Omega_{c_0}(\nu)$ to make explicit the dependence on $\nu$ suppressed elsewhere. Hence $S_\eta\sim\N(-2A_\nu(\eta),\Omega_{c_0}(\nu))$, and writing $S_\eta=\sqrt{\Omega_{c_0}(\nu)}\,(Z-\delta_\nu(\eta))$ with $Z\sim\N(0,1)$ recovers \eqref{eq:delta}. Writing $\widehat F_n-\Phi=(\widehat F_n-F_n)+(F_n-\Phi)$: the second term is deterministic and $o(1)$ by \eqref{eq:local_alt}; the first is $O_p(n^{-1/2})$, since $\sqrt{n}(\widehat F_n-F_n)\Rightarrow \mathbb{B}_\Phi$ follows by subtracting the deterministic limit $\sqrt{n}(F_n-\Phi)\to\eta$ from $\sqrt{n}(\widehat F_n-\Phi)\Rightarrow \mathbb{B}_\Phi+\eta$. Hence $\sup_z|\widehat F_n(z)-\Phi(z)|\to_p0$ as under the fixed null, and the argument of Lemma~\ref{lem:curvature}
gives $Q_n''(c_0)\to_p H_{c_0}(\nu)$, the same constant as under
$\Phi$.

Contiguity of $\{F_n\}$ to $\Phi$ holds under the same regularity underlying the assumed local empirical process limit. Since $\sqrt{n}(\widehat c_n-c_0)=O_p(1)$ under $\Phi$ (from Lemma~\ref{lem:tightness}), contiguity transfers this to $O_p(1)$ under $F_n$ \citep[Ch.~6]{vandervaart98}. With the score limit $S_\eta$ in place of $S$, $H_{c_0}(\nu)$ unchanged, and tightness confirmed, the localisation, Taylor expansion, and argmax continuous mapping argument of Theorem~\ref{thm:fit_improvement} apply directly to give
\[
    \widetilde T_n\;\Rightarrow\;\frac{S_\eta^2}{\Omega_{c_0}(\nu)}
    \mathbf{1}\{S_\eta<0\} =
    (Z-\delta_\nu(\eta))^2\,\mathbf{1}\{Z-\delta_\nu(\eta)<0\}, \quad Z \sim \N(0,1),
\]
which is \eqref{eq:Tn_local}. The rejection probability
\eqref{eq:power} follows since $\Pr\{Z-\delta_\nu(\eta)<-\sqrt{k_\alpha}\}=
\Phi\{\delta_\nu(\eta)-\sqrt{k_\alpha}\}$ and $\sqrt{k_\alpha}=\Phi^{-1}(1-\alpha)$.
\end{proof}

\begin{proof}[Proof of Corollary~\ref{cor:local_hgm}]
The expansion $G_{c_0+\tau/\sqrt{n}}(z)=G_{c_0}(z)+(\tau/\sqrt{n})\,\dot G_{c_0}(z)+o(n^{-1/2})$ gives $\eta(z)=\tau\,\dot G_{c_0}(z)$, so $A_\nu(\eta)=\tau\int_{\mathbb{R}}\dot G_{c_0}(z)^2\,d\nu(z)=(\tau/2)H_{c_0}(\nu)$. Substituting into \eqref{eq:delta} and \eqref{eq:power} gives the result.
\end{proof}

\begin{proof}[Proof of Lemma~\ref{lem:plugin_ep}]
Let $W_i=(X_i-\mu_0)/\sigma_0\stackrel{\text{i.i.d}}{\sim}\N(0,1)$, so $Y_i^P=(W_i-\bar W_n)/\widehat s_n$, where $\bar W_n=n^{-1} \sum_{i=1}^n W_i$ and $\widehat s_n^2=n^{-1}\sum_{i=1}^n(W_i-\bar W_n)^2$, and $\widehat F_n^P(z)=\widehat F_n^W(\bar W_n+\widehat s_n z)$. We apply \citet[corollary~1]{durbin73} to the normal location-scale family
\[
    F(w,\theta)=\Phi\left(\frac{w-\mu}{s}\right),
    \quad
    \theta=(\mu,s)' ,
\]
with true value $\theta_0=(0,1)'$ and estimator $\widehat\theta_n=(\bar W_n,\widehat s_n)'$. For $t=\Phi(z)$,
\[
    F(W_i,\widehat\theta_n)\leq t
    \quad\Longleftrightarrow\quad
    \frac{W_i-\bar W_n}{\widehat s_n}\leq z
    \quad\Longleftrightarrow\quad
    Y_i^P\leq z.
\]
Thus Durbin's estimated empirical process satisfies
\[
    \widehat y_n(\Phi(z)) = \sqrt n(\widehat F_n^P(z)-\Phi(z)).
\]
The estimator $\widehat\theta_n$ satisfies Durbin's asymptotic linearity condition, since
\[
    \sqrt n
    \begin{pmatrix}
        \bar W_n\\
        \widehat s_n-1
    \end{pmatrix}
    =
    \frac1{\sqrt n}\sum_{i=1}^n
    \ell(W_i)+o_p(1),
    \quad
    \ell(w)=
    \begin{pmatrix}
        w\\
        \frac12(w^2-1)
    \end{pmatrix}.
\]
Moreover,
\[
    L:=\E[\ell(W)\ell(W)']
    =
    \begin{pmatrix}
        1 & 0\\
        0 & 1/2
    \end{pmatrix}.
\]
For $t=\Phi(z)$, the $t$-quantile under $\theta_0$ is $z$. Therefore Durbin's derivative vector is
\[
    g(\Phi(z)) =
    \left.
    \frac{\partial F(w,\theta)}{\partial\theta}
    \right|_{w=z,\theta=\theta_0}
    =
    \begin{pmatrix}
        -\phi(z)\\
        -z\phi(z)
    \end{pmatrix}
\]
and Durbin's $h$-function is
\[
    h(\Phi(z)) = \int_{-\infty}^z \ell(w)\phi(w)\,dw =
    \begin{pmatrix}
        \int_{-\infty}^z w\phi(w)\,dw\\[3pt]
        \frac12\int_{-\infty}^z (w^2-1)\phi(w)\,dw
    \end{pmatrix}
    =
    \begin{pmatrix}
        -\phi(z)\\
        -\frac12 z\phi(z)
    \end{pmatrix}.
\]
By \citet[corollary~1]{durbin73}, under the null the process
$\widehat y_n(t)$ converges weakly to a mean-zero Gaussian process with covariance
\[
    \min(t_1,t_2) - t_1t_2 - h(t_1)'g(t_2) - h(t_2)'g(t_1) + g(t_1)'Lg(t_2).
\]
Putting $t_1=\Phi(z)$ and $t_2=\Phi(y)$ gives
\[
    K_P(z,y) = \Phi(z\wedge y)-\Phi(z)\Phi(y) 
    -\phi(z)\phi(y)-\frac12 zy\,\phi(z)\phi(y).
\]
Therefore
\[
    \sqrt n\{\widehat F_n^P(z)-\Phi(z)\}\Rightarrow\mathbb Z_P(z)
\]
in $\ell^\infty(\mathbb R)$, where $\mathbb Z_P$ is the mean-zero Gaussian process with covariance kernel $K_P$, proving \eqref{eq:ZP} and \eqref{eq:KP}. The ordinary empirical process satisfies
\[
    \sqrt n(\widehat F_n^W(z)-\Phi(z)) \Rightarrow \mathbb B_\Phi(z),
\]
jointly with
\[
    \sqrt n\,\bar W_n\Rightarrow Z_1,\quad
    \sqrt n(\widehat s_n-1)\Rightarrow Z_2,
\]
where
\[
    Z_1\sim\N(0,1), \quad Z_2\sim\N(0,1/2), \quad \Cov(Z_1,Z_2)=0.
\]
The cross-covariances are
\[
    \Cov(\mathbb B_\Phi(z),Z_1) =
    \E[(\mathbf{1}\{W\leq z\}-\Phi(z))W] =
    \int_{-\infty}^z w\phi(w)\,dw = -\phi(z),
\]
and
\[
    \Cov(\mathbb B_\Phi(z),Z_2) =
    \frac12\E[(\mathbf{1}\{W\leq z\}-\Phi(z))(W^2-1)] =
    \frac12\int_{-\infty}^z (w^2-1)\phi(w)\,dw = -\frac12 z\phi(z).
\]
Consequently, the Gaussian process with kernel $K_P$ admits the representation
\[
    \mathbb Z_P(z) = \mathbb B_\Phi(z)+\phi(z)Z_1+z\phi(z)Z_2,
\]
with the stated joint Gaussian structure, as claimed.
\end{proof}

\begin{proof}[Proof of Lemma~\ref{lem:robust_ep}]
Let $W_i=(X_i-\mu_0)/\sigma_0\stackrel{\text{i.i.d}}{\sim}\N(0,1)$, so $Y_i^R=(W_i-\widetilde \mu_n^R)/\widetilde \sigma_n^R$, where $\widetilde\mu_n^R=\widehat q_n^W(1/2)$ and $\widetilde\sigma_n^R = (\widehat q_n^W(3/4)-\widehat q_n^W(1/4))/2a$, and $\widehat F_n^R(z) = \widehat F_n^W(\widetilde\mu_n^R+\widetilde\sigma_n^R z)$. The Bahadur representation for sample quantiles gives, for $p\in\{1/4,1/2,3/4\}$,
\[
    \sqrt n(\widehat q_n^W(p)-q_p) = \frac1{\sqrt n}\sum_{i=1}^n
    \frac{p-\mathbf{1}\{W_i\leq q_p\}}{\phi(q_p)} +o_p(1),
\]
where $q_p=\Phi^{-1}(p)$; see \citet{bahadur66} and
\citet[Thm.~7.3]{dasgupta08}. Since $q_{1/2}=0$, $q_{3/4}=a$, and $q_{1/4}=-a$, it follows that
\[
    \sqrt n\,\widetilde\mu_n^R =
    \frac1{\sqrt n}\sum_{i=1}^n \IF_m(W_i)+o_p(1), \quad \sqrt n(\widetilde\sigma_n^R-1) =
    \frac1{\sqrt n}\sum_{i=1}^n \IF_s(W_i)+o_p(1).
\]
Thus the estimator $\widehat\theta_n^R = (\widetilde\mu_n^R,\widetilde\sigma_n^R)'$ satisfies Durbin's asymptotic linearity condition with influence vector $\ell_R(w) = (\IF_m(w), \IF_s(w))'$. We now apply \citet[corollary~1]{durbin73} to the normal
location-scale family
\[
    F(w,\theta)=\Phi\left(\frac{w-\mu}{s}\right), \quad \theta=(\mu,s)',
\]
with true value $\theta_0=(0,1)'$ and estimator $\widehat\theta_n^R$. For $t=\Phi(z)$,
\[
    F(W_i,\widehat\theta_n^R)\leq t
    \quad\Longleftrightarrow\quad
    \frac{W_i-\widetilde\mu_n^R}{\widetilde\sigma_n^R}\leq z
    \quad\Longleftrightarrow\quad
    Y_i^R\leq z.
\]
It follows that Durbin's estimated empirical process satisfies
\[
    \widehat y_n(\Phi(z)) = \sqrt n(\widehat F_n^R(z)-\Phi(z)).
\]
For $t=\Phi(z)$, the $t$-quantile under $\theta_0$ is $z$, and Durbin's derivative vector is
\[
    g(\Phi(z)) = \left.
    \frac{\partial F(w,\theta)}{\partial\theta}
    \right|_{w=z,\theta=\theta_0} =
    \begin{pmatrix}
        -\phi(z)\\
        -z\phi(z)
    \end{pmatrix}.
\]
Moreover,
\[
    L_R := \E[\ell_R(W)\ell_R(W)'] = \E
    \begin{bmatrix}
        \IF_m(W)^2 & \IF_m(W)\IF_s(W)\\
        \IF_m(W)\IF_s(W) & \IF_s(W)^2
    \end{bmatrix},
\]
and Durbin's $h$-function is
\[
    h_R(\Phi(z)) =
    \int_{-\infty}^z \ell_R(w)\phi(w)\,dw =
    \begin{pmatrix}
        \E[\mathbf{1}\{W\leq z\}\IF_m(W)]\\
        \E[\mathbf{1}\{W\leq z\}\IF_s(W)]
    \end{pmatrix}.
\]
Since $\E[\IF_m(W)]=\E[\IF_s(W)]=0$, this is equivalently
\[
    h_R(\Phi(z)) =
    \begin{pmatrix}
        \E[(\mathbf{1}\{W\leq z\}-\Phi(z))\IF_m(W)]\\
        \E[(\mathbf{1}\{W\leq z\}-\Phi(z))\IF_s(W)]
    \end{pmatrix}.
\]
By \citet[corollary~1]{durbin73} we therefore have weak convergence in
$\ell^\infty(\mathbb R)$ of
\[
    \sqrt n(\widehat F_n^R(z)-\Phi(z))
\]
to a mean-zero Gaussian process. Equivalently, this process admits the
 representation
\[
    \mathbb Z_R(z) = \mathbb B_\Phi(z) + \phi(z)Z_m + z\phi(z)Z_s,
\]
where $(\mathbb B_\Phi,Z_m,Z_s)$ is jointly Gaussian and $(Z_m,Z_s)' \sim \N_2(0,L_R)$. The cross-covariances with the Brownian bridge are precisely the components of $h_R(\Phi(z))$, namely
\[
    \Cov(\mathbb B_\Phi(z),Z_m) =
    \E[(\mathbf{1}\{W\leq z\}-\Phi(z))\IF_m(W)],
\]
and
\[
    \Cov(\mathbb B_\Phi(z),Z_s) =
    \E[(\mathbf{1}\{W\leq z\}-\Phi(z))\IF_s(W)].
\]
Finally, collecting the first-order summand gives
\[
    \xi_z(W) = \mathbf{1}\{W\leq z\}-\Phi(z) + \phi(z)\IF_m(W)
    + z\phi(z)\IF_s(W).
\]
Thus
\[
    K_R(z,y) = \Cov(\mathbb Z_R(z),\mathbb Z_R(y)) =
    \E[\xi_z(W)\xi_y(W)],
\]
because $\E[\xi_z(W)]=0$. This proves the claim.
\end{proof}

\begin{proof}[Proof of Theorem~\ref{thm:plugin_fit_improvement}]
The argument is identical to Theorem~\ref{thm:fit_improvement}
with the Brownian bridge score limit $S\sim\N(0,\Omega_{c_0})$
replaced by the projected score limit $S^\diamond\sim\N(0,
\Omega_{c_0}^\diamond)$, which follows from
Lemma~\ref{lem:plugin_ep} or Lemma \ref{lem:robust_ep} and the
continuous mapping theorem. The curvature $Q_n^{\diamond\prime
\prime}(c_0)\to_p H_{c_0}$ by the same uniform convergence
argument as Lemma~\ref{lem:curvature}, with $\widehat F_n$
replaced by $\widehat F_n^\diamond$.
\end{proof}

\begin{proof}[Proof of Corollary~\ref{cor:plugin_moments}]
The standardised statistic $\widetilde T_n^\diamond$ has the same
limiting law $\frac12\delta_0+\frac12\chi_1^2$ for each
$\diamond\in\{P,R\}$ by Theorem~\ref{thm:plugin_fit_improvement}.
The moment calculation is identical to
Corollary~\ref{cor:moments_Ttilde}.
\end{proof}

\begin{proof}[Proof of Corollary~\ref{cor:plugin_local_power}]
The argument is identical to Theorem~\ref{thm:local_power} with the Brownian bridge kernel $K$ replaced by $K_\diamond$ and $\Omega_{c_0}$ replaced by $\Omega_{c_0}^\diamond$. The score limit under the local alternative acquires the drift $-2A_\nu^\diamond(\eta_\diamond)$ as in the fixed-scale case, giving $\delta_\nu^\diamond(\eta_\diamond)$ and $\pi_\nu^\diamond(\eta_\diamond)$ as stated.
\end{proof}

\begin{proof}[Proof of Corollary~\ref{cor:plugin_consistency}]
Strong consistency of the standardising estimators and Glivenko--Cantelli imply
\[
    \sup_z|\widehat F_n^\diamond(z)-F_0^\diamond(z)|\to0
\]
almost surely. Thus $Q_n^\diamond$ converges uniformly to $Q^\diamond$ by the same argument as in Theorem~\ref{thm:consistency}. The argmin theorem then gives $\widehat c_n^\diamond\to c_\diamond^\star$ almost surely. Under
the Gaussian location-scale null, standardisation gives $F_0^\diamond=\Phi=G_{c_0}$, so $c_\diamond^\star=c_0$ by
Lemma~\ref{lem:identifiability}.
\end{proof}

\begin{proof}[Proof of Corollary~\ref{cor:plugin_directed}] By the uniform convergence established in Corollary~\ref{cor:plugin_consistency}, \[ Q_n^\diamond(c_0)-Q_n^\diamond(\widehat c_n^\diamond) \to_p Q^\diamond(c_0)-Q^\diamond(c_\diamond^\star) = \Delta^\diamond . \] If $\Delta^\diamond>0$, the $n$-scaled improvement diverges in probability. Since $2H_{c_0}/\Omega_{c_0}^\diamond$ is finite and strictly positive, $\widetilde T_n^\diamond\to_p\infty$, and the rejection probability tends to one for any fixed $k_\alpha$. \end{proof}

\begin{proof}[Proof of Lemma~\ref{lem:R_monotone}]

\textit{Monotonicity in $c$.}
By the beta-precision normal scale mixture representation of Proposition~\ref{prop:mixture},
\[
    1-G_c(z) = \E\bigl[1-\Phi(z\sqrt{\Lambda})\bigr],
    \quad \Lambda\sim\mathrm{Beta}(1,c-3/2),
\]
and $1-\Phi(z)$ does not depend on $c$. Throughout this proof, $\E[\cdot]$ denotes expectation with respect to $\Lambda$. The $\mathrm{Beta}(1,c-3/2)$ distribution is stochastically decreasing in $c$ (larger $c$ shifts mass toward smaller $\lambda$), and $\lambda\mapsto1-\Phi(z\sqrt{\lambda})$ is decreasing in $\lambda$ for $z>0$. Therefore $\E[1-\Phi(z\sqrt{\Lambda})]$ is strictly increasing in $c$, equivalently $\dot G_c(z)<0$ for all $z>0$ and $c>3/2$, giving $\partial \rho(z,c)/\partial c>0$.

\textit{Monotonicity in $z$.}
Write $g_c(z)=\E[\phi(z\sqrt{\Lambda})\sqrt{\Lambda}]$. Then $\rho'(z)>0$ if and only if
\[
    \E\bigl[\phi(z)\{1-\Phi(z\sqrt{\Lambda})\}
    -\{1-\Phi(z)\}\phi(z\sqrt{\Lambda})\sqrt{\Lambda}\bigr]>0.
\]
Setting $u=z\sqrt{\lambda}$ so that $\phi(z\sqrt{\lambda})\sqrt{\lambda}=\phi(u)\cdot u/z$, the integrand is strictly positive for each $\lambda\in(0,1)$ if and only if
\[
    z\cdot\frac{\phi(z)}{1-\Phi(z)} >
    u\cdot\frac{\phi(u)}{1-\Phi(u)},
    \quad u=z\sqrt{\lambda}<z,
\]
that is, if $s\mapsto s\,\phi(s)/(1-\Phi(s))$ is strictly increasing, where $\phi(s)/(1-\Phi(s))$ is the Gaussian hazard rate. Differentiating,
\[
    \frac{d}{ds}
    \!\left[\frac{s\,\phi(s)}{1-\Phi(s)}\right] =
    \frac{\phi(s)\bigl[(1-s^2)(1-\Phi(s))+s\phi(s)\bigr]}
    {(1-\Phi(s))^2}.
\]
For $s\in(0,1]$ both terms in the numerator bracket are non-negative and $s\phi(s)>0$. For $s>1$, Mill's inequality
$1-\Phi(s)<\phi(s)/s$ gives $(s^2-1)(1-\Phi(s))<(s^2-1)\phi(s)/s<s\phi(s)$,
so the bracket is again strictly positive. Hence $s\,\phi(s)/(1-\Phi(s))$ is strictly increasing for all $s>0$. Since $\Lambda\in(0,1)$ almost surely, the integrand is strictly positive almost surely, giving $\rho'(z)>0$ as required.
\end{proof}

\begin{proof}[Proof of Proposition~\ref{prop:var_divergence}]
From Corollary~\ref{cor:survival}, $1-G_c(z)\sim(c-3/2)/(2z^2)$ as $z\to\infty$, giving $q_p^{\mathrm{HGM}}\sim[(c-3/2)/2]^{1/2}(1-p)^{-1/2}$ as $p\to1^-$. The Gaussian quantile satisfies $q_p^\Phi\sim[2\ln(1-p)^{-1}]^{1/2}$, which grows more slowly than $(1-p)^{-1/2}$.
\end{proof}

\begin{proof}[Proof of Theorem~\ref{thm:joint_consistency}]
For any $\lambda$ and $z$, $|\widehat F_n(\mu+\sigma z)-F_0(\mu+\sigma z)|
\leq\sup_{x \in \mathbb{R}}|\widehat F_n(x)-F_0(x)|$, so
\[
    \sup_{\theta \in \Theta}|Q_n(\theta)-Q(\theta)|\leq 2\nu(\mathbb{R})\,
    \sup_{x \in  \mathbb{R}}|\widehat F_n(x)-F_0(x)|\to0
\]
almost surely by the Glivenko--Cantelli theorem. The argmin theorem then gives the result; see the proof of Theorem~\ref{thm:consistency} for the standard argument.
\end{proof}

\begin{proof}[Proof of Theorem~\ref{thm:joint_boundary}]
Introduce $\lambda=\lambda_0+t/\sqrt{n}$, $c=c_0+h/\sqrt{n}$, $h\geq0$, with $v=(t',h)'\in\mathbb{R}^2\times[0,\infty)$. Since $\sigma_0>0$, the constraint $\sigma>0$ is not locally binding and $t$ may be taken in
$\mathbb{R}^2$. The local criterion $L_n(v)=n[Q_n(\theta_0+v/\sqrt{n})-Q_n(\theta_0)]$ satisfies $L_n(v)=v'S_n+\frac{1}{2}v'Hv+o_p(1)$ uniformly over bounded $v$, by the same quadratic approximation argument as Theorem~\ref{thm:boundary}. For fixed $h$, minimising over $t$ yields
\[
    t(h)=-H_{\lambda\lambda}^{-1}S_\lambda
    -H_{\lambda\lambda}^{-1}H_{\lambda c}\,h.
\]
Substituting leaves the reduced boundary criterion
\[
    L_{\mathrm{red}}(h)
    = h\,S_{c\cdot\lambda}
    +\tfrac{1}{2}H_{cc\cdot\lambda}\,h^2,
    \quad h\geq0,
\]
up to a constant in $h$. This has the same form as the local criterion in Theorem~\ref{thm:boundary}, with $S_c$ and $H_{cc}$ replaced by their
nuisance-adjusted counterparts $S_{c\cdot\lambda}$ and $H_{cc\cdot\lambda}$. The constrained minimiser is $h_n=\max\{0,-S_{c\cdot\lambda}/H_{cc\cdot\lambda}\}$. Tightness of $\sqrt{n}(\widehat c_n-c_0)$ follows by the same strong-convexity argument as
Lemma~\ref{lem:tightness}, and the remainder of the proof is identical to that of Theorem~\ref{thm:boundary}, with the Gaussianity of $S_{c\cdot\lambda}\sim\N(0,\Omega_{cc\cdot\lambda})$ giving the stated distribution.
\end{proof}

\begin{proof}[Proof of Theorem~\ref{thm:joint_fit_improvement}]
The improvement from allowing $h\geq0$ rather than fixing $h=0$ satisfies
\[
    T_n^J\Rightarrow{-}\min_{h\geq0}\!\left\{hS_{c\cdot\lambda}
    +\tfrac{1}{2}H_{cc\cdot\lambda}h^2\right\}
    =\frac{S_{c\cdot\lambda}^2}{2H_{cc\cdot\lambda}}
    \mathbf{1}\{S_{c\cdot\lambda}<0\}.
\]
Writing $S_{c\cdot\lambda}=\Omega_{cc\cdot\lambda}^{1/2}U$ with $U\sim\N(0,1)$, this equals $0$ with probability $1/2$ and equals $(\Omega_{cc\cdot\lambda}/2H_{cc\cdot\lambda}) \,\chi_1^2$ with probability $1/2$. Multiplying by $2H_{cc\cdot\lambda}/\Omega_{cc\cdot\lambda}$ gives $\frac{1}{2}\delta_0 +\frac{1}{2}\chi_1^2$, and the rejection rule follows as in Theorem~\ref{thm:fit_improvement}.
\end{proof}

\newpage

\section{Simulation details}\label{app:simulations}

This appendix provides a complete description of the simulation design used in Section~\ref{sec:simulations}, sufficient for independent replication. It gives all numerical choices, every formula used in implementation, and the designs of the null-size, power, and diagnostic experiments.

\subsection*{Common numerical ingredients}

All hypergeometric (HGM) statistics are based on the same
minimum-distance criterion, applied to the appropriate empirical
distribution function:
\[
    Q_n(c) = \int_{\mathbb{R}}\{\widehat F_n(z)-G_c(z)\}^2\,d\nu(z),
    \quad c\in[c_0,c_U],
\]
where
\[
    G_c(z) = \frac12 + z\,\frac{\Gamma(c-\frac12)}{2\sqrt2\,\Gamma(c)}
    \,{}_1F_1\!\left(\frac12;c;-\frac{z^2}{2}\right), \quad c\geq \frac32,
\]
with $c_0=3/2$, $c_U=8$, and $d\nu_{1/2}(z)\propto[\Phi(z)\{1-\Phi(z)\}]^{-1/2}d\Phi(z)$, and $\Phi$ is the distribution function of the standard normal. The integral is approximated by
\begin{equation}\label{eq:Qn_discrete}
    Q_n(c) \approx \sum_{j=1}^J
    w_j\{\widehat F_n(z_j)-G_c(z_j)\}^2
\end{equation}
on the Gaussian quantile grid
\begin{equation}\label{eq:grid}
    z_j = \Phi^{-1}(u_j), \quad u_j = \frac{j-1/2}{J}, \quad j=1,\ldots,J,
\end{equation}
with $J=301$, and tail-weighted normalised weights
\begin{equation}\label{eq:weights}
    w_j = \frac{[u_j(1-u_j)]^{-1/2}}
    {\sum_{\ell=1}^J[u_\ell(1-u_\ell)]^{-1/2}}.
\end{equation}
As a robustness check, calculations were repeated using a
finer uniform grid with $J=501$ nodes, and separately by using Gauss--Chebyshev
quadrature. The resulting rejection frequencies were almost unchanged. For computational speed, the precomputed identity
\[
    Q_n(c) = \sum_{j=1}^J w_j\widehat F_n(z_j)^2 +
    \sum_{j=1}^J w_jG_c(z_j)^2 -
    2\sum_{j=1}^J w_j\widehat F_n(z_j)G_c(z_j)
\]
is used, with all $G_c(z_j)$ values precomputed for every $(z_j,c)$ pair before the Monte Carlo loop begins.

\subsection*{Parameter grid}

Minimisation over $c$ uses a deterministic grid
$\mathcal C_{\mathrm{grid}}$ concentrated near the boundary:
\[
\begin{aligned}
    \mathcal{C}_{\mathrm{grid}}
    =
    \{c_0\}
    &\cup
    \bigl\{c_0+\exp(a_m):
    a_m\in[\ln(10^{-8}),\ln(10^{-3})],\;
    m=1,\ldots,90\bigr\}\\
    &\cup
    \bigl\{c_0+t_m:
    t_m\in[0.0012,0.10],\;
    m=1,\ldots,100\bigr\}\\
    &\cup
    \bigl\{c_0+s_m:
    s_m\in[0.105,0.50],\;
    m=1,\ldots,100\bigr\}\\
    &\cup
    \bigl\{r_m:
    r_m\in[2.02,c_U],\;
    m=1,\ldots,100\bigr\}.
\end{aligned}
\]
This gives 391 grid points. The logarithmically spaced block near $c_0$ is used because the null value is a boundary point and the fitted value can lie extremely close to $c_0$. The remaining blocks give progressively coarser coverage of moderate and large departures from the Gaussian boundary. For each simulated sample, $\widehat c_n$ is the grid point minimising \eqref{eq:Qn_discrete}. The estimator is classified as on the lower boundary whenever $\widehat c_n\leq c_0+10^{-4}$, and as on the upper boundary whenever $\widehat c_n\geq c_U-10^{-5}$. The fit improvement $\Delta_n=Q_n(c_0)-Q_n(\widehat c_n)$ is truncated at zero if negative, and set to zero if $\Delta_n\leq 10^{-10}$:
\[
    \Delta_n \leftarrow \max\{\Delta_n,\,0\}, \quad \Delta_n \leftarrow
    \Delta_n\,\mathbf{1}\{\Delta_n>10^{-10}\}.
\]

\subsection*{Null constants}

The null constants are computed on the same Gaussian quantile grid and
with the same weights used for the criterion. The score derivative $\dot G_{c_0}(z)= \partial G_c(z)/\partial c\,|_{c=c_0}$ is evaluated analytically using \eqref{eq:Gdot}--\eqref{eq:1F1dot}. Writing
\[
    \kappa(c)=\frac{\Gamma(c-\frac12)}{2\sqrt2\,\Gamma(c)},
    \quad x=-\frac{z^2}{2},
\]
we have
\[
    \dot G_c(z) =
    z\left[
        \dot \kappa(c)\,{}_1F_1\!\left(\frac12;c;x\right)
        + \kappa(c)\,\frac{\partial}{\partial c}
        {}_1F_1\!\left(\frac12;c;x\right)
    \right],
\]
where
\[
    \dot \kappa(c) = \kappa(c)\{\psi(c-\tfrac12)-\psi(c)\},
\]
and $\psi$ denotes the digamma function. The derivative of
${}_1F_1$ with respect to its second parameter is computed from
\[
    \frac{\partial}{\partial c}\,
    {}_1F_1\!\left(\frac12;c;x\right) = - \sum_{k=1}^{\infty}
    \frac{(1/2)_k}{(c)_k\,k!}x^k[\psi(c+k)-\psi(c)].
\]
The series is summed until the next term is negligible relative to the
running sum. The curvature constant is discretised as
\[
    H_{c_0} = 2\sum_{j=1}^J w_j\,\dot G_{c_0}(z_j)^2.
\]
The fixed-scale variance uses the Brownian-bridge kernel
\[
    K_{jk} = \min(u_j,u_k)-u_ju_k,
\]
giving
\begin{equation}\label{eq:Omega_discrete}
    \Omega_{c_0} = 4\sum_{j=1}^J\sum_{k=1}^J w_j\,w_k\,K_{jk}\,
    \dot G_{c_0}(z_j)\,\dot G_{c_0}(z_k).
\end{equation}
For the mean/SD plug-in statistic, the Brownian-bridge kernel is
replaced by the projected kernel
\[
    K^P_{jk} = K_{jk} - \phi(z_j)\phi(z_k) -
    \frac12 z_jz_k\phi(z_j)\phi(z_k),
\]
consistent with \eqref{eq:KP}. The corresponding variance
$\Omega_{c_0}^P$ is obtained from \eqref{eq:Omega_discrete} with
$K_{jk}$ replaced by $K^P_{jk}$. The projected kernel for robust median/IQR standardisation, consistent with \eqref{eq:KR}, is
$K^R_{jk}=\E[\xi^R_{z_j}(W)\xi^R_{z_k}(W)]$, $W\sim\N(0,1)$, where the projected influence function is
\[
    \xi^R_z(x) = \mathbf{1}\{x\leq z\}-\Phi(z) + \phi(z)\IF_m(x)
    + z\phi(z)\IF_s(x),
\]
with influence functions $\IF_m$ and $\IF_s$ from Lemma~\ref{lem:robust_ep}. Setting
$a=\Phi^{-1}(3/4)$:
\[
    \IF_m(x) = \frac{1/2-\mathbf{1}\{x\leq0\}}{\phi(0)}, \quad
    \IF_s(x) = \frac{1}{2a} \left[
    \frac{3/4-\mathbf{1}\{x\leq a\}}{\phi(a)} -
    \frac{1/4-\mathbf{1}\{x\leq -a\}}{\phi(a)} \right].
\]
The expectation defining $K^R_{jk}$ is approximated by deterministic quadrature on a fine probability grid $v_\ell=(\ell-1/2)/M$, $x_\ell=\Phi^{-1}(v_\ell)$, $\ell=1,\ldots,M$, with $M=4001$:
\[
    K^R_{jk} \approx \frac{1}{M} \sum_{\ell=1}^M
    \xi^R_{z_j}(x_\ell)\,\xi^R_{z_k}(x_\ell).
\]
The corresponding variance $\Omega_{c_0}^R$ is obtained from \eqref{eq:Omega_discrete} with
$K_{jk}$ replaced by $K^R_{jk}$.

\subsection*{Three HGM statistics}

Each Monte Carlo sample is used to compute three HGM statistics, using
the same $z$-grid, $c$-grid, weights, and boundary critical values. Given the raw sample $X_1,\ldots,X_n$, the empirical distribution function is
$\widehat F_n(z)=n^{-1}\sum_{i=1}^n\mathbf{1}\{X_i\leq z\}$,
evaluated at each $z_j$. Let $\widehat c_n=\arg\min_{c \in \mathcal{C}_{\mathrm{grid}}}Q_n(c)$ and set $\Delta_n = Q_n(c_0)-Q_n(\widehat c_n)$. The fixed-scale statistic is
\[
    \widetilde T_n = \frac{2H_{c_0}}{\Omega_{c_0}}\,n\,\Delta_n .
\]
For the mean/SD plug-in statistic, each sample is standardised by
\[
    \widehat\mu_n^P=\bar X_n, \quad \widehat\sigma_n^P =
    \left\{
        \frac1n\sum_{i=1}^n (X_i-\bar X_n)^2
    \right\}^{1/2},
\]
and $Y_i^P=(X_i-\widehat\mu_n^P)/\widehat\sigma_n^P$. The empirical distribution of $Y_1^P,\ldots,Y_n^P$ defines $Q_n^P(c)$, with minimiser $\widehat c_n^P$, improvement $\Delta_n^P=Q_n^P(c_0) - Q_n^P(\widehat c_n^P)$, and mean/SD plug-in statistic
\[
    \widetilde T_n^P = \frac{2H_{c_0}}{\Omega_{c_0}^P}\,n\,\Delta_n^P.
\]
For the robust median/IQR plug-in statistic, each sample is standardised by the sample median $\widehat\mu_n^R$ and the normal-calibrated IQR scale
\[
    \widehat\sigma_n^R =
    \frac{\widehat q_n(3/4)-\widehat q_n(1/4)}
    {2\Phi^{-1}(3/4)},
\]
where $\widehat q_n(p)$ is the $p$-th sample quantile, and $\widehat\sigma_n^R$ is truncated below at $10^{-12}$ in computation. Let $Y_i^R=(X_i-\widehat\mu_n^R)/\widehat\sigma_n^R$. The empirical distribution of $Y_1^R,\ldots,Y_n^R$ defines $Q_n^R(c)$, with minimiser $\widehat c_n^R$, improvement $\Delta_n^R = Q_n^R(c_0) - Q_n^R(\widehat c_n^R)$, and robust median/IQR plug-in statistic
\[
    \widetilde T_n^R = \frac{2H_{c_0}}{\Omega_{c_0}^R}\,n\,\Delta_n^R.
\]
The statistics use boundary critical value $k_\alpha=\chi^2_{1,1-2\alpha}$, where $\Pr(\frac12\delta_0+\frac12\chi_1^2>k_\alpha)=\alpha$.

\subsection*{Scaling constants}

Table~\ref{tab:null_size_tail_scaling_robust} reports the numerical
constants used to scale the fixed-scale, mean/SD plug-in, and
median/IQR plug-in statistics. The same tail-weighted measure
$\nu_{1/2}$ and Gaussian quantile grid are used throughout. The plug-in constants differ only through the projected score variance. The mean/SD projection gives a substantially smaller variance than the fixed-scale statistic and therefore a larger normalising factor, while
the median/IQR projection produces an intermediate scaling. The profiled
joint constants are included only as a diagnostic for Remark~\ref{rem:plugin_vs_joint}; the very small $\Omega_{cc\cdot\lambda}$ explains the large profiled scaling factor and the instability of the exact-profiled statistic in the simulations.

\input{table_null_size_tail_scaling_with_robust}

\subsection*{Null-size simulations}

Null-size simulations draw $X_i\stackrel{\mathrm{i.i.d.}}{\sim}\N(0,1)$ for $n\in\{25,50,100,200,400,800,1600\}$, using $R=10^5$ replications and random-number seed 46387492; see Table~\ref{tab:null_size_tail_main_robust} for results. For each replication, the same sample is used to compute $\widetilde T_n$, $\widetilde T_n^P$, and $\widetilde T_n^R$. Rejection at level $\alpha$ uses $k_{0.10}=1.642$, $k_{0.05}=2.706$, and $k_{0.01}=5.412$ from Table~\ref{tab:null_size_tail_scaling_robust}. For each statistic, sample size, and nominal level, empirical size and Monte Carlo standard error are
\[
    \widehat{\mathrm{size}}_\alpha = \frac1R\sum_{r=1}^R
    \mathbf 1\{\widetilde T_{n,r}>k_\alpha\}, \quad
    \widehat{\mathrm{mcse}}_\alpha = \left[
    \frac{\widehat{\mathrm{size}}_\alpha
    \{1-\widehat{\mathrm{size}}_\alpha\}}{R}
    \right]^{1/2},
\]
where $\widetilde T_{n,r}$ denotes the panel-specific statistic. For
example, for the robust statistic $\widetilde T_n^R$ at the $5\%$ level,
the Monte Carlo standard errors are below $0.001$ for all sample sizes
reported. Diagnostics recorded for each panel and sample size include rejection frequencies at $\alpha\in\{0.10,0.05,0.01\}$, the lower-boundary rate $\Pr(\widehat c_n\leq c_0+10^{-4})$, the upper-boundary rate
$\Pr(\widehat c_n\geq c_U-10^{-5})$, the mean and median of $\widehat c_n$, and summary quantiles of the corresponding statistic. For the plug-in versions, the mean estimated scale is also recorded.

\subsection*{Location-scale invariance}

\noindent The robust plug-in statistic is also evaluated
under $X_i\stackrel{\mathrm{i.i.d.}}{\sim}\N(3,4)$ for $n\in\{100,400,1600\}$ with $R=10^5$ replications; see Table~\ref{tab:null_size_tail_robust_plugin_invariance_N3_2sq}. The rejection frequencies under $\N(0,1)$ and $\N(3,4)$ are very close in absolute terms, with differences no larger than $0.002$ for the reported rejection probabilities. This confirms approximate location-scale invariance of the implemented robust plug-in procedure.

\input{table_null_size_tail_robust_plugin_invariance_N3_2sq}

\clearpage

\subsection*{Exact-profiled statistic}

The exact-profiled statistic minimises the criterion jointly over $(\mu,\sigma,c)$, which, as noted in Appendix~\ref{app:joint}, is equivalent to optimising $(\mu,\sigma)$ exactly for each fixed $c$ and then minimising the resulting profile criterion over $c$:
\[
    Q_n^{EP}(\lambda,c) = \sum_{j=1}^J w_j\bigl\{\widehat F_n(\mu+\sigma z_j) - G_c(z_j)\bigr\}^2,\quad\lambda=(\mu,\sigma)'.
\]
The restricted (Gaussian boundary) and unrestricted profiles are
\[
    Q_{n,0}^{EP} = \inf_{\mu\in\mathbb{R},\,\sigma>0}
    Q_n^{EP}(\lambda,c_0),\quad Q_{n,1}^{EP} =
    \inf_{\mu\in\mathbb{R},\,\sigma>0,\,
    c\in[c_0,c_U]} Q_n^{EP}(\lambda,c).
\]
The improvement is $\Delta_n^{EP}=Q_{n,0}^{EP}-Q_{n,1}^{EP}$, set to zero if negative. Both profiles are approximated by a three-stage nested grid search in $(\mu,\ln\sigma)$, with grids centred on the sample mean $\bar X$ and sample standard deviation $s_X$. For the restricted profile at $c=c_0$:
\begin{align*}
    \text{Stage~1 ($9\times9$):}\quad&
    \mu\in\bar X+s_X[-0.55,0.55],\quad
    \sigma\in s_X\exp([-0.55,0.55]);\\
    \text{Stage~2 ($7\times7$):}\quad&
    \mu\in\widehat\mu_1+s_X[-0.16,0.16],\quad
    \sigma\in\widehat\sigma_1\exp([-0.18,0.18]);\\
    \text{Stage~3 ($5\times5$):}\quad&
    \mu\in\widehat\mu_2+s_X[-0.05,0.05],\quad
    \sigma\in\widehat\sigma_2\exp([-0.06,0.06]).
\end{align*}
Each stage centres on the previous stage's minimiser. For the unrestricted profile, initialised at $(\widehat\mu_0,\widehat\sigma_0)$ and searched jointly over $(\mu,\ln\sigma,c)$ on $\mathcal{C}_{\mathrm{grid}}$:
\begin{align*}
    \text{Stage~1 ($9\times9\times|\mathcal{C}|$):}
    \quad&
    \mu\in\widehat\mu_0+s_X[-0.35,0.35],\quad
    \sigma\in\widehat\sigma_0\exp([-0.35,0.35]);\\
    \text{Stage~2 ($7\times7\times|\mathcal{C}|$):}
    \quad&
    \mu\in\widehat\mu_1+s_X[-0.12,0.12],\quad
    \sigma\in\widehat\sigma_1\exp([-0.14,0.14]);\\
    \text{Stage~3 ($5\times5\times|\mathcal{C}|$):}
    \quad&
    \mu\in\widehat\mu_2+s_X[-0.04,0.04],\quad
    \sigma\in\widehat\sigma_2\exp([-0.05,0.05]).
\end{align*}
If the unrestricted criterion exceeds the restricted criterion due to numerical error, the improvement is set to zero. The statistic is
$\widetilde T_n^{EP}=(2H_{cc\cdot\lambda}/\Omega_{cc\cdot\lambda})\,n\,\Delta_n^{EP}$,
with scaling constants computed from the $3\times3$ derivative matrix
$D=(D_1,\ldots,D_J)'$,
\[
    D_j = \bigl(\phi(z_j),\;z_j\phi(z_j),\;{-\dot G_{c_0}(z_j)}\bigr)',
\]
evaluated at $\sigma_0=1$ (the simulation null). Let $W=\mathrm{diag}(w_1,\ldots,w_J)$ and let $K_0$ be the $J\times J$ Brownian bridge kernel
matrix with $(K_0)_{jk}=\Phi(z_j\wedge z_k)-\Phi(z_j)\Phi(z_k)$. Then
\[
    H^{EP}=2D'WD,\quad\Omega^{EP}=4D'WK_0WD.
\]
Partitioning conformably with $\lambda=(\mu,\ln\sigma)'$ and $c$,
\[
    H_{cc\cdot\lambda} = H_{cc}-H_{c\lambda}H_{\lambda\lambda}^{-1}
    H_{\lambda c},\quad a'=\bigl({-H_{c\lambda}
    H_{\lambda\lambda}^{-1}},\;1\bigr), \quad
    \Omega_{cc\cdot\lambda}=a'\Omega^{EP}a.
\]
The exact-profile simulation experiment uses $R=10^3$ replications under $\N(0,1)$ for $n\in\{50,100,200,400,800,1600\}$. The near-degeneracy of $\Omega_{cc\cdot\lambda}\approx0.000013$ (Table~\ref{tab:null_size_tail_scaling_robust}) is structural: as explained in Remark~\ref{rem:plugin_vs_joint}, the $c$-score direction $\dot G_{c_0}$ nearly lies in the span of $\{\phi(z),z\phi(z)\}$ in the Brownian bridge covariance metric, making the residual score variance very small and the normalisation constant very large. The size distortion worsens with $n$ over any practically feasible range; the plug-in approach avoids this problem by construction.

\input{table_null_size_tail_exact_profile_appendix}

\newpage

\subsection*{Power simulation design}

Power is assessed at nominal level $\alpha=0.05$ with $R=10^4$ replications and seed 72921253, for $n\in\{25,50,100,200,400,800\}$. The HGM statistics use the asymptotic boundary critical value $k_{0.05}=\chi^2_{1,0.90}=2.706$. The main power tables report rejection frequencies for the fixed-scale test $\widetilde T_n$, the robust plug-in $\widetilde T_n^R$, and the omnibus benchmarks Jarque--Bera (JB), Anderson--Darling (AD), and Shapiro--Wilk (SW). The mean/SD plug-in $\widetilde T_n^P$ appears only in the
appendix tables.\\
\\
\noindent \textit{Jarque--Bera.} The statistic is
\[
    {JB}_n = \frac{n}{6} \left(\widehat S_n^2+\tfrac14\widehat K_n^2\right),
\]
where $\widehat S_n$ is sample skewness and $\widehat K_n$ is sample excess kurtosis; rejection occurs at ${JB}_n>\chi^2_{2,0.95}=5.991$.\\
\\
\noindent \textit{Anderson--Darling.} Each sample is standardised
by its sample mean and standard deviation. For sorted standardised observations $Z_{(1)}\leq\cdots\leq Z_{(n)}$ and $P_i=\Phi(Z_{(i)})$,
\[
    A_n^2 = -n - \frac1n\sum_{i=1}^n(2i-1)
    \bigl\{\ln P_i+\ln(1-P_{n+1-i})\bigr\}.
\]
The \citet{stephens74} finite-sample adjustment gives $A_n^{2,*}=A_n^2(1+0.75/n+2.25/n^2)$, with $5\%$ critical value $0.787$.\\
\\
\textit{Shapiro--Wilk.} The statistic is
\[
    W = \frac{\bigl(\sum_{i=1}^n a_i X_{(i)}\bigr)^2}
             {\sum_{i=1}^n(X_i-\bar X)^2},
\]
where $X_{(1)}\leq\cdots\leq X_{(n)}$ are the order statistics, $m_i=\E[Z_{(i)}]$ for $Z_1,\ldots,Z_n\stackrel{\mathrm{i.i.d.}}{\sim}\N(0,1)$, $V$ is the covariance matrix of $(Z_{(1)},\ldots,
Z_{(n)})'$, and $a\propto m'V^{-1}$. Under normality, $W\approx1$; departures from normality reduce $W$ below $1$. The statistic is computed via the \citet{royston92} approximation as implemented in SciPy; rejection occurs at $p<0.05$.\\
\\
For each test, alternative, and sample size,
\[
    \widehat\pi = \frac{1}{R}\sum_{r=1}^R
    \mathbf{1}\{\text{test rejects in replication }r\},
    \quad
    \widehat{\mathrm{mcse}} =
    \bigl[\widehat\pi(1-\widehat\pi)/R\bigr]^{1/2}.
\]
For the HGM procedures the simulations also record $\mathrm{mean}(\widehat c_n)$, $\mathrm{median}(\widehat c_n)$, the boundary rate $\Pr(\widehat c_n\leq c_0+10^{-4})$, and the mean of $\widetilde T_n$.

\subsection*{HGM alternatives}

HGM alternatives $X_i\sim G_{c_A}$ with $c_A\in\{1.55,1.60,1.75,2.00,3.00\}$ are generated by inverse-CDF interpolation. For each $c_A$, a monotone grid is built as follows. Set $s_{\max}=\sinh^{-1}(x_{\max})$ with $x_{\max}=20{,}000$ and let $s$ be equally spaced in $[-s_{\max},s_{\max}]$ with $N_{\mathrm{CDF}}= 120{,}000$ points. The $x$-grid is $x(s)=\sinh(s)$, whose Jacobian $\cosh(s)$ equals $1$ at $s=0$ and grows as $|x|$ for large $|x|$, concentrating grid density near the centre and giving logarithmic spacing in the tails. Probabilities $p(s)=G_{c_A}(x(s))$ are clipped to $[0,1]$ to handle floating-point rounding at extreme arguments, then made strictly
monotone by taking running maxima and removing resulting ties. To draw a sample of size $n$, independent uniforms on the retained probability range $[p_{\min},p_{\max}]$ are linearly interpolated through the resulting $(p,x)$ grid, inverting the CDF without extrapolation.

\subsection*{Standard non-HGM alternatives}

Standard alternatives are standardised to mean zero and unit variance so that the fixed-scale test, which assumes $\N(0,1)$ with known location and scale, is evaluated against a shape departure rather than a
scale mismatch:
\begin{align*}
    \text{Student }t_k\;(k\in\{3,5,10\}):\quad
    &X\sim\sqrt{(k-2)/k}\cdot t_k,\\
    \text{Laplace}:\quad
    &X\sim\mathrm{Laplace}(0,1/\sqrt{2}),\\
    \text{Cont.\ normal}:\quad
    &X=X^\ast/\sqrt{1.40},\quad
    X^\ast\sim0.95\,\N(0,1)+0.05\,\N(0,9),\\
    \text{Chi-squared}:\quad
    &X=(\chi^2_3-3)/\sqrt{6}.
\end{align*}

\subsection*{Power: mean/SD plug-in}

Tables~\ref{tab:appendix_mean_plugin_hgm_power} and~\ref{tab:appendix_mean_plugin_standard_power} confirm that $\widetilde T_n^P$ has essentially zero power in all scenarios. Against HGM alternatives with $c>3/2$, the sample standard deviation diverges as $n\to\infty$ because the variance is infinite, collapsing the standardised observations toward zero and destroying all tail signal: power against $G_{3.0}$ is $0.004$ at $n=25$ and zero by $n=100$, and the pattern is decreasing in $n$ for every alternative. Against standard alternatives, power is similarly negligible. These results validate the recommendation of the robust median/IQR version.

\subsection*{Power: size-adjusted robust plug-in}

The size-adjusted robust plug-in replaces the asymptotic critical value $\chi^2_{1,0.90}$ with the empirical $95\%$ quantiles of $\widetilde T_n^R$ under $\N(0,1)$, computed from the null-size simulations and passed as precomputed constants. The values used are:
\[
\begin{array}{lllllll}
n: & 25 & 50 & 100 & 200 & 400 & 800 \\[2pt]
\widehat k_{n,0.05}^R: &
4.184 & 3.494 & 3.321 & 3.115 & 3.009 & 2.952.
\end{array}
\]
This removes the mild liberalness of Panel~C in Table~\ref{tab:null_size_tail_main_robust} and provides a conservative benchmark for genuine power. Against HGM alternatives
(Table~\ref{tab:appendix_sizeadj_robust_hgm_power}), size-adjusted power rises clearly with $c$ and $n$: at $c=3.00$ it reaches $0.981$ at $n=800$; at $c=2.00$ it reaches $0.705$; against $c=1.55$ it is $0.053$--$0.087$ across all $n$. Against standard alternatives (Table~\ref{tab:appendix_sizeadj_robust_standard_power}), it reaches $0.999$ against Laplace and $0.974$ against $t_3$ at $n=800$; against $\chi^2_3$ it is $0.018$ or below at all $n$, consistent with the symmetry of the HGM family.

\subsection*{Reproducibility}

All simulations use Python (v.~3.12.11) with NumPy (v.~2.5.0) and SciPy (v.~1.18.0). Pseudo-random draws use NumPy's default bit generator (PCG64) with seeds $46387492$ (null-size simulations) and $72921253$ (power simulations). All $G_c$ values are evaluated using \texttt{scipy.special.hyp1f1} and precomputed on the $z$- and $c$-grids before the Monte Carlo loop. Code and data are available from the author on request.

\input{table_appendix_mean_plugin_hgm_power}
\input{table_appendix_mean_plugin_standard_power}

\clearpage

\input{table_appendix_sizeadjusted_robust_hgm_power}
\input{table_appendix_sizeadjusted_robust_standard_power}

\clearpage

\section{Empirical illustration}\label{app:empirical}

This appendix collects descriptive statistics,
fitted parameters, normality test results,
goodness-of-fit measures, lower-tail quantiles,
and tail probabilities for both exercises of
Section~\ref{sec:application}. Tables for the
raw returns appear first, followed by the GARCH
parameter estimates, the estimated conditional
volatility series, and the corresponding tables
for the standardised residuals.

\clearpage

\input{table_sp500_returns_desc}
\input{table_sp500_returns_params}
\input{table_sp500_returns_tests}
\input{table_sp500_returns_gof}

\clearpage

\input{table_sp500_returns_var}
\input{table_sp500_returns_tail}

\clearpage

\input{table_sp500_garch_residuals_desc}
\input{table_sp500_garch11_parameters}

\begin{figure}[p]
\centering
\includegraphics[width=0.82\textwidth]
    {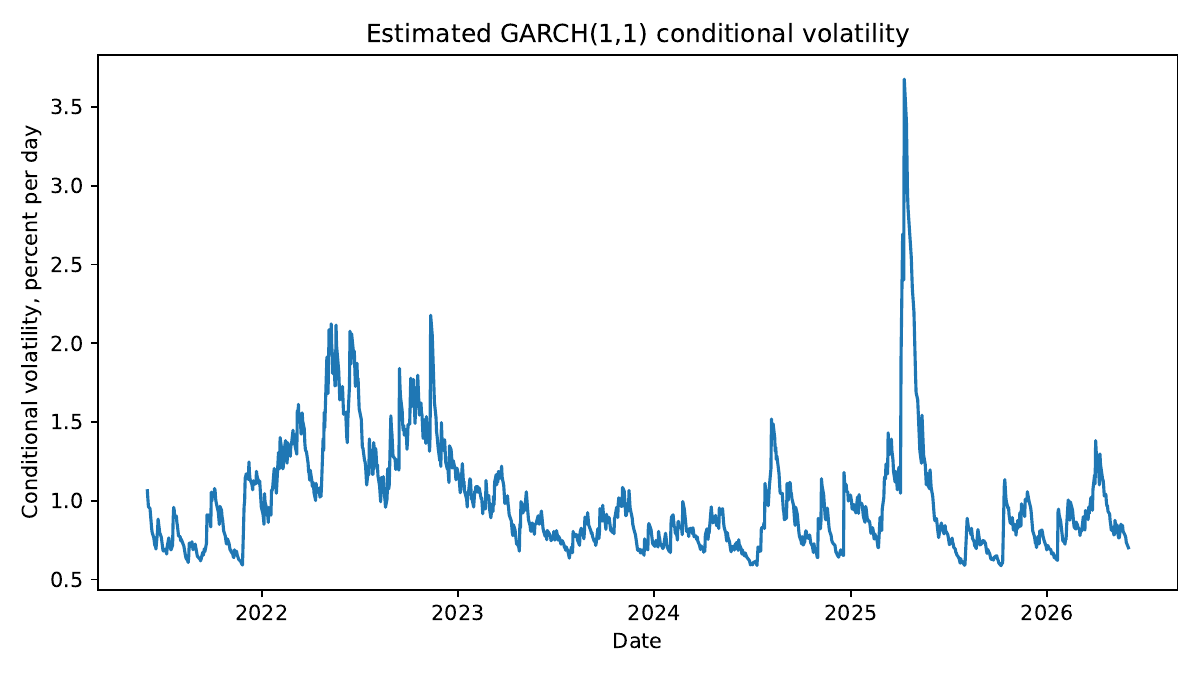}
\caption{Estimated GARCH(1,1) conditional volatility
$\widehat h_t^{1/2}$, 2~June~2021 to 1~June~2026.}
\label{fig:garch_volatility}
\end{figure}

\clearpage

\input{table_sp500_garch_residuals_params}
\input{table_sp500_garch_residuals_tests}
\input{table_sp500_garch_residuals_gof}

\clearpage

\input{table_sp500_garch_residuals_var}
\input{table_sp500_garch_residuals_tail}

\clearpage

\section{Tail dominance and Value-at-Risk (VaR)}\label{app:crossover}

For every $c>3/2$ and $z>0$, $1-G_c(z)>1-\Phi(z)$. The correction factor
\[
    \rho(z,c) := \frac{1-G_c(z)}{1-\Phi(z)} > 1
    \quad\text{for all }z>0,\;c>3/2
\]
satisfies the following monotonicity property.

\begin{lemma}[Monotonicity of the correction factor]\label{lem:R_monotone}
For every $c>3/2$, $\rho(z,c)$ is strictly increasing in $c$ for each fixed $z>0$, and strictly increasing in $z$ for each fixed $c>3/2$.
\end{lemma}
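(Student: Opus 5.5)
Both monotonicity claims follow from the beta-precision mixture of Proposition~\ref{prop:mixture}. It gives, for $z>0$,
\[
    1-G_c(z)=\E\bigl[1-\Phi(z\sqrt{\Lambda})\bigr],\qquad g_c(z)=\E\bigl[\sqrt{\Lambda}\,\phi(z\sqrt{\Lambda})\bigr],\qquad \Lambda\sim\Beta(1,c-\tfrac32).
\]
The identity for $1-G_c$ is obtained by integrating \eqref{eq:mixture_density} over $(z,\infty)$ and applying Tonelli to interchange the integrals. The denominator $1-\Phi(z)$ of $\rho$ does not depend on $c$, so the $c$-monotonicity reduces to a statement about the numerator alone. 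The $z$-monotonicity reduces to a sign condition on $\phi(z)\{1-G_c(z)\}-\{1-\Phi(z)\}g_c(z)$, since
\[
    \partial_z\rho=\frac{\phi(z)\{1-G_c(z)\}-\{1-\Phi(z)\}g_c(z)}{\{1-\Phi(z)\}^2}.
\]

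\emph{Step 1 (monotone in $c$).} For $c'>c>3/2$ the Beta survival function is $\Pr(\Lambda>\lambda)=(1-\lambda)^{c-3/2}$, which is strictly decreasing in $c$ for every $\lambda\in(0,1)$. Hence $\Beta(1,c'-\frac32)$ is strictly stochastically smaller than $\Beta(1,c-\frac32)$. For fixed $z>0$ the map $\lambda\mapsto 1-\Phi(z\sqrt\lambda)$ is strictly decreasing and continuous. Writing the expectation as $\int_0^1 \Pr(\Lambda\le\lambda)\,d(\cdot)$, via the Stieltjes/integration-by-parts form of $\E h(\Lambda)$ for monotone $h$, shows that the expectation is strictly larger under $c'$. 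This gives strict increase of $1-G_c(z)$ in $c$ and hence of $\rho$. As an alternative, differentiate directly: $\partial_c$ of the Beta density gives the weight $(c-\frac32)^{-1}+\ln(1-\lambda)$, which integrates to zero and is decreasing in $\lambda$. By a covariance (Chebyshev) inequality against the decreasing function $1-\Phi(z\sqrt\lambda)$ the derivative is positive. I would pick whichever is cleaner, probably the stochastic-order argument.

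\emph{Step 2 (monotone in $z$).} Fix $c$. By the sign condition above, it suffices to show that
\[
    \E\bigl[\phi(z)\{1-\Phi(z\sqrt\Lambda)\}-\{1-\Phi(z)\}\sqrt\Lambda\,\phi(z\sqrt\Lambda)\bigr]>0.
\]
I will show the integrand is positive pointwise for every $\lambda\in(0,1)$. Put $u=z\sqrt\lambda\in(0,z)$, so that $\sqrt\lambda\,\phi(u)=u\phi(u)/z$. Pointwise positivity is then equivalent to $s\mapsto s\phi(s)/\{1-\Phi(s)\}$ being strictly increasing on $(0,\infty)$. Its derivative has the sign of $(1-s^2)\{1-\Phi(s)\}+s\phi(s)$. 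This is clearly positive for $s\le1$. For $s>1$ Mills' inequality $1-\Phi(s)<\phi(s)/s$ gives $(s^2-1)\{1-\Phi(s)\}<s\phi(s)$, so it is positive there too. Since $\Lambda\in(0,1)$ almost surely and $\Lambda$ is nondegenerate, the expectation is strictly positive.

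\emph{Main obstacle.} The delicate point is the interchange of the two representations, namely writing $1-G_c$ as a mixture expectation and differentiating under it in $z$. This is justified by dominated convergence, since $\sqrt\lambda\,\phi(z\sqrt\lambda)\le\phi(0)$ is bounded. The other place that needs care is the monotone-hazard calculation in Step 2, which holds via Mills' inequality. The stated inequality $\rho>1$ itself follows from Step 1 by letting $c\downarrow\frac32$, because Lemma~\ref{lem:continuity} gives $\rho(z,\frac32)=1$ in the limit. It also follows directly from $1-\Phi(z\sqrt\lambda)>1-\Phi(z)$.
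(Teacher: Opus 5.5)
Your proposal is correct and follows the paper's proof essentially step for step. In $c$, you use the stochastic ordering of $\Beta(1,c-\tfrac32)$ acting on the decreasing map $\lambda\mapsto 1-\Phi(z\sqrt\lambda)$; in $z$, you show the mixture integrand is positive pointwise, reduce this via $u=z\sqrt\lambda$ to strict increase of $s\phi(s)/\{1-\Phi(s)\}$, and handle $s>1$ with Mills' inequality. Beyond what the paper writes, you supply the strictness through the explicit survival function $(1-\lambda)^{c-3/2}$ and justify $1-G_c(z)=\E[1-\Phi(z\sqrt\Lambda)]$ by Tonelli; the paper only asserts both.
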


\noindent Table~\ref{tab:correction} reports exact values of $\rho$ computed from the HGM survival function. For $X\sim G_c$, the VaR at confidence level $p$ is the quantile $q_p>0$ satisfying $1-G_c(q_p)=1-p$, computed by numerical inversion of $G_c$; Table~\ref{tab:var_ratio} reports exact values alongside Gaussian quantiles for $\widehat c-3/2 =0.30$. The ratio $q_p^{\mathrm{HGM}}/ q_p^\Phi$ is increasing in both $c$ and $p$:

\begin{proposition}[VaR divergence]\label{prop:var_divergence}
For any $c>3/2$, $q_p^{\mathrm{HGM}}/q_p^\Phi\to\infty$ as $p\to1^-$.
\end{proposition}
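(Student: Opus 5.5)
The plan is to compare the growth rates of the two quantiles as $p\to1^-$, using Corollary~\ref{cor:survival} for the HGM tail and a standard Mills-ratio bound for the Gaussian tail. Fix $c>3/2$ and write $\varepsilon=1-p\downarrow0$.

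\emph{HGM quantile.} By Corollary~\ref{cor:survival}, $1-G_c(z)\sim\frac{c-3/2}{2}z^{-2}$. I will not need an exact asymptotic for $q_p^{\mathrm{HGM}}$; a lower bound suffices. Choose $z_0$ such that $1-G_c(z)\ge\frac{c-3/2}{4}z^{-2}$ for all $z\ge z_0$. Then take $\varepsilon$ small enough that $\sqrt{(c-3/2)/(4\varepsilon)}\ge z_0$.

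Since $G_c$ is continuous and strictly increasing (Lemma~\ref{lem:monotonicity}), $q_p^{\mathrm{HGM}}$ is well defined. Evaluating at the candidate point gives
\[
1-G_c\Bigl(\sqrt{(c-3/2)/(4\varepsilon)}\Bigr)\ge\varepsilon,
\]
so monotonicity yields
\[
q_p^{\mathrm{HGM}}\ge\sqrt{(c-3/2)/(4\varepsilon)}.
\]
Hence $q_p^{\mathrm{HGM}}\gtrsim\varepsilon^{-1/2}$.

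\emph{Gaussian quantile.} This step needs an upper bound. Mills' inequality gives $1-\Phi(z)\le\phi(z)/z\le e^{-z^2/2}$ for $z\ge1$. Setting $z=\sqrt{2\ln(1/\varepsilon)}$, which is at least $1$ once $\varepsilon<e^{-1/2}$, we get $1-\Phi(z)\le\varepsilon$. Therefore
\[
q_p^\Phi\le\sqrt{2\ln(1/\varepsilon)}.
\]

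\emph{Conclusion.} Combining the two bounds,
\[
\frac{q_p^{\mathrm{HGM}}}{q_p^\Phi}\ge\frac{\sqrt{c-3/2}}{2}\,\bigl\{\varepsilon\cdot2\ln(1/\varepsilon)\bigr\}^{-1/2}.
\]
The right-hand side tends to $\infty$ because $\varepsilon\ln(1/\varepsilon)\to0$. Both quantiles are positive for $p>1/2$, so the ratio is well defined.

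\emph{Main obstacle.} The only delicate step is inverting the survival asymptotic into a quantile bound. Working with one-sided inequalities and the monotonicity of $G_c$ handles this cleanly, and avoids having to assert $q_p^{\mathrm{HGM}}\sim[(c-3/2)/2]^{1/2}\varepsilon^{-1/2}$, although that sharper statement also follows by running the same argument with constants $(1\pm\delta)$.
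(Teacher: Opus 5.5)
Your proof is correct and follows the same route as the paper. In both, Corollary~\ref{cor:survival} makes $q_p^{\mathrm{HGM}}$ of order $(1-p)^{-1/2}$, which is compared with the Gaussian growth rate $\sqrt{2\ln(1/(1-p))}$. Your one-sided bounds, using monotonicity of $G_c$ and Mills' inequality, make rigorous the quantile-inversion step that the paper simply states as an asymptotic equivalence.
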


\noindent Proposition~\ref{prop:var_divergence} implies that any rejection of $H_0:c=c_0$, however small $\widehat c-3/2$, entails diverging Gaussian understatement of VaR at sufficiently extreme confidence levels.

\input{table_correction_factor}
\input{table_var_ratio}

\clearpage

\section{Joint location-scale estimation}\label{app:joint}

This appendix provides a self-contained treatment of joint minimum-distance estimation of $\theta=(\mu,\sigma,c)'$, included as a theoretical reference for the plug-in procedures of Section~\ref{sec:plugin}. Joint estimation is
equivalent to exact profiling: for each fixed $c$, minimising the criterion over $\lambda=(\mu,\sigma)'$ yields a profile criterion whose minimiser over $c$ coincides with the joint minimiser $\widehat c_n$. I use the term \textit{joint} in the theoretical development below and \textit{exact-profiled} in the computational implementation of Section~\ref{app:simulations}, where it serves as a diagnostic. All results are valid under the stated conditions, but the near-degeneracy of $\Omega_{cc\cdot\lambda}\approx0.000013$ (Table~\ref{tab:null_size_tail_scaling_robust}) means the asymptotic approximation requires very large $n$; see Section~\ref{app:simulations}.

\subsection*{Setup}

Let $\Theta=\mathcal{M}\times\mathcal{S}\times\mathcal{C}$ be compact, with
$\mathcal{M}=[\mu_L,\mu_U]$, $\mathcal{S}=[\sigma_L,\sigma_U]$, and
$\mathcal{C}=[c_0,c_U]$. Writing $\lambda=(\mu,\sigma)'$ and working in
standardised coordinates $z=(x-\mu)/\sigma$, define
\[
    Q_n(\lambda,c) = \int_{\mathbb{R}}\bigl\{\widehat F_n(\mu+\sigma z)
    -G_c(z)\bigr\}^2\,d\nu(z),
\]
with population counterpart $Q(\lambda,c)$. Let $\widehat\theta_n=(\widehat\lambda_n', \widehat c_n)'\in\arg\min_\Theta Q_n$.

\subsection*{Consistency}

\begin{theorem}[Joint consistency]\label{thm:joint_consistency}
Suppose $\nu(\mathbb{R})<\infty$, $\Theta$ is compact, and $Q$ has a unique minimiser in $\Theta$. Then $\widehat\theta_n$ converges to that minimiser almost surely.
\end{theorem}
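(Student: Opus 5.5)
The plan is to reuse the two-step argument of Theorem~\ref{thm:consistency}: first a uniform almost-sure convergence of $Q_n$ to $Q$ over $\Theta$, then the argmin theorem \citep[Theorem~2.1]{newey_mcfadden94}. The one new feature is that the empirical distribution function is now evaluated at $\mu+\sigma z$ rather than at $z$. This change of argument has to be absorbed into a sup-norm bound that does not depend on $\theta$.

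\textbf{Uniform convergence.} The population criterion is
\[
Q(\lambda,c)=\int_{\mathbb{R}}\{F_0(\mu+\sigma z)-G_c(z)\}^2\,d\nu(z).
\]
Since all functions involved take values in $[0,1]$, the identity $a^2-b^2=(a-b)(a+b)$ gives
\[
|Q_n(\theta)-Q(\theta)|\le 2\int_{\mathbb{R}}|\widehat F_n(\mu+\sigma z)-F_0(\mu+\sigma z)|\,d\nu(z).
\]
For every $\theta$ and $z$, the integrand is at most $\sup_{x\in\mathbb{R}}|\widehat F_n(x)-F_0(x)|$, because $x=\mu+\sigma z$ ranges over a subset of $\mathbb{R}$. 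Combined with $\nu(\mathbb{R})<\infty$, this yields
\[
\sup_{\Theta}|Q_n-Q|\le 2\nu(\mathbb{R})\,\sup_x|\widehat F_n(x)-F_0(x)|.
\]
The right-hand side tends to zero almost surely by Glivenko--Cantelli. Notably, this step needs no restriction on $\sigma$ beyond its lying in $\mathcal S$.

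\textbf{Continuity of $Q$ on $\Theta$.} Take $\theta_k\to\theta$. Then $G_{c_k}(z)\to G_c(z)$ for each $z$ by Lemma~\ref{lem:continuity}. For the other term, $F_0(\mu_k+\sigma_k z)\to F_0(\mu+\sigma z)$ holds whenever $\mu+\sigma z$ is a continuity point of $F_0$.

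This is the main obstacle. Because $\sigma\ge\sigma_L>0$, the set of $z$ at which $\mu+\sigma z$ is a discontinuity point of $F_0$ is countable. I will therefore either assume $F_0$ continuous (true under the null and the HGM alternatives), or require $\nu$ to be atomless (true for $\nu_{1/2}$, which has a density). Either way, convergence holds $\nu$-almost everywhere. The integrand is bounded by $1$ and $\nu$ is finite, so dominated convergence gives $Q(\theta_k)\to Q(\theta)$.

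\textbf{Conclusion.} $\Theta$ is compact, $Q$ is continuous on it, $Q$ has a unique minimiser $\theta^\star$ by assumption, and $Q_n\to Q$ uniformly almost surely. The standard argument then applies. For any open neighbourhood $U$ of $\theta^\star$, the number $\varepsilon=\min_{\Theta\setminus U}Q-Q(\theta^\star)$ is strictly positive by compactness and uniqueness. Eventually $\sup_\Theta|Q_n-Q|<\varepsilon/2$, and on that event every minimiser $\widehat\theta_n$ lies in $U$. Hence $\widehat\theta_n\to\theta^\star$ almost surely.
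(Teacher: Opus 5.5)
Your argument is the paper's: the same $\theta$-free bound $\sup_{\Theta}|Q_n-Q|\le 2\nu(\mathbb{R})\sup_x|\widehat F_n(x)-F_0(x)|$ with Glivenko--Cantelli, followed by the argmin theorem. Your one addition is the explicit proof that $Q$ is continuous in $(\mu,\sigma)$, a step the paper skips by deferring to Theorem~\ref{thm:consistency}, whose condition \textup{(A0)} only gives continuity in $c$; your extra requirement that $F_0$ be continuous or $\nu$ atomless is not among the stated hypotheses, but it supplies the continuity of $Q$ that the paper's appeal to the argmin theorem tacitly needs, and it holds in the paper's setting since $\nu_{1/2}$ has a density.
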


\subsection*{Boundary distribution}

Under the Gaussian location-scale null $F_0(\mu_0+\sigma_0 z)=\Phi(z)=G_{c_0}(z)$, write $\theta_0=(\lambda_0',c_0)'$ and partition
\[
    S = \begin{pmatrix}S_\lambda\\S_c\end{pmatrix}, \quad
    H =
    \begin{pmatrix}
        H_{\lambda\lambda} & H_{\lambda c}\\
        H_{c\lambda} & H_{cc}
    \end{pmatrix},
\]
where $S_n=\sqrt{n}\,\nabla Q_n(\theta_0)\Rightarrow S\sim\N(0,\Omega)$,
$H=\nabla^2 Q(\theta_0)$, and $\Omega$ is the limiting score covariance matrix. Under the standardised-coordinate criterion, the null derivative vector is
\[
    D_0(z) = \bigl(\phi(z)/\sigma_0,\; z\phi(z)/\sigma_0,\;
    {-\dot G_{c_0}(z)}\bigr)',
\]
so $H = 2\int_\mathbb{R} D_0(z) D_0(z)'\,d\nu(z)$ and
\[
    \Omega=4\iint_{\mathbb{R}^2}[\Phi(z\wedge y)-\Phi(z)\Phi(y)]
    D_0(z)D_0(y)'\,d\nu(z)\,d\nu(y).
\]
Define the nuisance-adjusted score and curvature:
\[
    S_{c\cdot\lambda} = S_c-H_{c\lambda}H_{\lambda\lambda}^{-1}
    S_\lambda, \quad H_{cc\cdot\lambda} = H_{cc}-H_{c\lambda}H_{\lambda\lambda}^{-1}H_{\lambda c},
\]
and $\Omega_{cc\cdot\lambda}=a'\Omega a$, where $a'=(-H_{c\lambda}H_{\lambda\lambda}^{-1},\;1)$.

\begin{theorem}[Joint boundary distribution]\label{thm:joint_boundary}
Suppose the Gaussian location-scale null holds, $H_{\lambda\lambda}$ is nonsingular, and $H_{cc\cdot\lambda}>0$. Then
\[
    \sqrt{n}(\widehat c_n-c_0)\Rightarrow\max\bigl\{0,\;Z_\lambda\bigr\},
    \quad Z_\lambda\sim\N\!\left(0,\,\frac{\Omega_{cc\cdot\lambda}}
    {H_{cc\cdot\lambda}^2}\right).
\]
\end{theorem}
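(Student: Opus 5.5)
The plan is to reproduce the localisation argument of Theorem~\ref{thm:boundary} in three dimensions and then profile out the nuisance block. First, reparametrise locally by $\lambda=\lambda_0+t/\sqrt n$ and $c=c_0+h/\sqrt n$, with $t\in\mathbb R^2$ and $h\ge0$. Write $v=(t',h)'$ and set $L_n(v)=n\{Q_n(\theta_0+v/\sqrt n)-Q_n(\theta_0)\}$. Since $\sigma_0>0$ lies in the interior of $\mathcal S$ (and I will assume $\mu_0\in\operatorname{int}\mathcal M$), only the $c$-constraint binds locally.

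Second, I will establish a uniform quadratic expansion $L_n(v)=v'S_n+\tfrac12 v'Hv+o_p(1)$ on compacts. The score is $S_n=\sqrt n\nabla Q_n(\theta_0)=-2\int\sqrt n\{\widehat F_n(\mu_0+\sigma_0 z)-\Phi(z)\}D_0(z)\,d\nu(z)$, with sign conventions absorbed into $D_0$. Since $\sqrt n\{\widehat F_n(\mu_0+\sigma_0\cdot)-\Phi\}\Rightarrow\mathbb B_\Phi$ by Donsker, and $D_0$ is bounded (Lemma~\ref{lem:B2} for the $c$-component, and $\phi$, $z\phi$ trivially), the continuous mapping theorem gives $S_n\Rightarrow S\sim\N(0,\Omega)$.

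The main obstacle is the curvature. $\widehat F_n$ is a step function and cannot be differentiated in $(\mu,\sigma)$, so the Lemma~\ref{lem:curvature} argument does not transfer directly. I would instead expand $F_0(\mu+\sigma z)$, which is smooth, and write
\[
\widehat F_n(\mu+\sigma z)=F_0(\mu+\sigma z)+n^{-1/2}\mathbb G_n(\mu+\sigma z).
\]
The empirical-process term is then handled by stochastic equicontinuity of $\mathbb G_n$: because $|\mu+\sigma z-\mu_0-\sigma_0 z|=O(n^{-1/2}(1+|z|))$, we get $\mathbb G_n(\mu+\sigma z)-\mathbb G_n(\mu_0+\sigma_0 z)=o_p(1)$ uniformly in $z$ on compacts. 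The tails are controlled by finiteness of $\nu$ together with truncation at large $|z|$. This yields the expansion with $H=2\int D_0D_0'\,d\nu$, which is the same $H$ as the smooth population Hessian at $\theta_0$.

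Third, tightness. I will prove $\sqrt n(\widehat\theta_n-\theta_0)=O_p(1)$ by first using consistency from Theorem~\ref{thm:joint_consistency}. The uniqueness hypothesis there follows from Lemma~\ref{lem:identifiability} plus location-scale distinctness of the Gaussian. I then adapt Lemma~\ref{lem:tightness}. Nonsingularity of $H_{\lambda\lambda}$ together with $H_{cc\cdot\lambda}>0$ gives positive definiteness of $H$ via the Schur complement. The population criterion then satisfies $Q(\theta)-Q(\theta_0)\ge\kappa|\theta-\theta_0|^2$ near $\theta_0$. Combined with $\sup|Q_n-Q-(\text{linear score term})|=o_p(|\theta-\theta_0|^2+n^{-1})$ from the step above, this gives the $\sqrt n$ rate through the standard rate theorem of \citet{vandervaart_wellner23}.

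Finally, the limit problem $\min_{t\in\mathbb R^2,h\ge0}\{v'S+\tfrac12v'Hv\}$ is strictly convex and has a unique minimiser. For fixed $h$, minimising over $t$ gives $t(h)=-H_{\lambda\lambda}^{-1}(S_\lambda+H_{\lambda c}h)$. Substituting leaves $hS_{c\cdot\lambda}+\tfrac12H_{cc\cdot\lambda}h^2$ plus a term free of $h$, so $h^\star=\max\{0,-S_{c\cdot\lambda}/H_{cc\cdot\lambda}\}$. The argmax continuous mapping theorem, with tightness in hand, transfers this to $\sqrt n(\widehat c_n-c_0)$. Since $S_{c\cdot\lambda}=a'S\sim\N(0,\Omega_{cc\cdot\lambda})$ is symmetric, setting $Z_\lambda=-S_{c\cdot\lambda}/H_{cc\cdot\lambda}$ gives the stated law.
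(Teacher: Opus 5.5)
Your proof is correct and follows the paper's skeleton: localise at rate $n^{-1/2}$, obtain $L_n(v)=v'S_n+\tfrac12v'Hv+o_p(1)$ on compacts, profile out $t$ through the Schur complement, establish tightness, and apply the argmax continuous mapping theorem. Where you differ is in how the two technical steps are justified.

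The paper simply invokes the arguments of Theorem~\ref{thm:boundary} and Lemma~\ref{lem:tightness}. These rest on a Lagrange-remainder Taylor expansion of $Q_n$ and on uniform convergence of the sample second derivative (Lemma~\ref{lem:curvature}), where every parameter derivative falls on $G_c$. In the $(\mu,\sigma)$ directions the parameters enter through $\widehat F_n(\mu+\sigma z)$, so that argument does not carry over verbatim. You instead expand the smooth $F_0(\mu+\sigma z)$ and control $\mathbb G_n(\mu+\sigma z)-\mathbb G_n(\mu_0+\sigma_0 z)$ by asymptotic equicontinuity. You then derive tightness of the whole vector $\sqrt n(\widehat\theta_n-\theta_0)$ from quadratic growth of the population criterion (positive definiteness of $H$ via the Schur complement), not from a sample Hessian.

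This buys rigour in the nuisance directions and gives exactly the joint tightness a three-dimensional argmax argument needs. The paper asserts tightness only of $\widehat c_n$ and profiles before passing to the limit, which tacitly also requires the profiled nuisance minimiser to be tight. The paper's route buys brevity and a verbatim reuse of the one-dimensional boundary proof.

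Two small points to tighten. First, the remainder bound $o_p(|\theta-\theta_0|^2+n^{-1})$ used for the rate must hold uniformly over a shrinking neighbourhood $|\theta-\theta_0|\le\delta_n\to0$, not only over $O(n^{-1/2})$ balls. It follows from the same equicontinuity argument, but not literally from your expansion step. Second, Lemma~\ref{lem:identifiability} does not by itself exclude $G_c(z)=\Phi(\alpha+\beta z)$ with $c>c_0$, although the same tail-ratio argument does.
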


\begin{remark}[Connection with the fixed-scale result]
Theorem~\ref{thm:joint_boundary} has the same form as Theorem~\ref{thm:boundary}: the projection of $\sqrt{n}(\widehat c_n-c_0)$
onto $[0,\infty)$ of a centred Gaussian. The only change is that the raw $c$-direction is replaced by its component orthogonal, in the local quadratic metric, to the nuisance $(\mu,\sigma)$ directions.
\end{remark}

\subsection*{Fit-improvement statistic}

Let $\widehat\lambda_{0n}\in\arg\min_\lambda Q_n(\lambda,c_0)$ and $(\widehat\lambda_n, \widehat c_n)\in\arg\min_{\lambda,c\geq c_0}
Q_n(\lambda,c)$. Define
\[
    T_n^J=n\{Q_n(\widehat\lambda_{0n},c_0)
    -Q_n(\widehat\lambda_n,\widehat c_n)\}.
\]

\begin{theorem}[Fit-improvement statistic]\label{thm:joint_fit_improvement}
Under the assumptions of Theorem~\ref{thm:joint_boundary},
\[
    \widetilde T_n^J = \frac{2H_{cc\cdot\lambda}}{\Omega_{cc\cdot\lambda}}
    \,T_n^J \Rightarrow \tfrac12\delta_0+\tfrac12\chi_1^2.
\]
An asymptotic level-$\alpha$ test rejects $H_0:c=c_0$ when
$\widetilde T_n^J>\chi^2_{1,1-2\alpha}$.
\end{theorem}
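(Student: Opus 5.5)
The plan is to reduce the joint problem to the scalar boundary problem already solved in Theorem~\ref{thm:fit_improvement}, by concentrating out the nuisance direction $\lambda=(\mu,\sigma)'$ in the local quadratic approximation. Localise both the restricted and unrestricted problems around $\theta_0$ by writing $\lambda=\lambda_0+t/\sqrt n$ and $c=c_0+h/\sqrt n$, and set $L_n(t,h)=n\{Q_n(\theta_0+v/\sqrt n)-Q_n(\theta_0)\}$. Then
\[
    T_n^J=\min_{t}L_n(t,0)-\min_{t,\,h\geq0}L_n(t,h)
\]
exactly. As in the proof of Theorem~\ref{thm:joint_boundary}, $L_n(v)=v'S_n+\tfrac12 v'Hv+o_p(1)$ uniformly on compacts, with $S_n\Rightarrow S\sim\N(0,\Omega)$.

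The main obstacle is justifying that the $o_p(1)$ remainder can be taken uniform over the relevant region. This requires tightness of both the restricted minimiser $\sqrt n(\widehat\lambda_{0n}-\lambda_0)$ and the unrestricted minimiser $\sqrt n(\widehat\theta_n-\theta_0)$. I would argue this by the strong-convexity device of Lemma~\ref{lem:tightness}, applied in $\mathbb R^3$. Nonsingularity of $H_{\lambda\lambda}$ together with $H_{cc\cdot\lambda}>0$ makes $H$ positive definite. Uniform convergence of $\nabla^2 Q_n$ near $\theta_0$ (Glivenko--Cantelli plus domination, as in Lemma~\ref{lem:curvature}) then gives $Q_n(\theta)-Q_n(\theta_0)\geq \nabla Q_n(\theta_0)'(\theta-\theta_0)+\tfrac{\underline h}{8}\|\theta-\theta_0\|^2$ for $\theta$ near $\theta_0$. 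Consistency from Theorem~\ref{thm:joint_consistency} puts both minimisers in that neighbourhood, so both are $O_p(n^{-1/2})$. Note that $\widehat F_n(\mu+\sigma z)$ is not smooth in $\lambda$, so the expansion must be done at the population level. Concretely, $Q_n-Q$ is handled through the empirical process $\sqrt n\{\widehat F_n(\mu+\sigma z)-F_0(\mu+\sigma z)\}$, whose stochastic equicontinuity (Donsker) lets one replace it by its value at $\lambda_0$ up to $o_p(1)$ over $n^{-1/2}$-neighbourhoods. I regard this as the delicate point and would lean on the same argument already invoked for Theorem~\ref{thm:joint_boundary}.

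Given tightness, restrict both minimisations to a compact ball $\{\|v\|\leq M\}$ with probability at least $1-\varepsilon$, where $M$ also contains the limit minimisers. The uniform approximation then gives
\[
    T_n^J\Rightarrow \min_t L(t,0)-\min_{t,\,h\geq0}L(t,h),\qquad L(v)=v'S+\tfrac12v'Hv,
\]
using $|\min L_n-\min L|\leq\sup|L_n-L|$ on each set, exactly as in the proof of Theorem~\ref{thm:fit_improvement}. For fixed $h$, minimising the quadratic over $t$ gives $t(h)$ as in the proof of Theorem~\ref{thm:joint_boundary}. The profiled value is then
\[
    \min_t L(t,h)=C+hS_{c\cdot\lambda}+\tfrac12H_{cc\cdot\lambda}h^2,
\]
where $C=-\tfrac12S_\lambda'H_{\lambda\lambda}^{-1}S_\lambda$ does not depend on $h$ and cancels in the difference. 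Hence the limit is
\[
    -\min_{h\geq0}\{hS_{c\cdot\lambda}+\tfrac12H_{cc\cdot\lambda}h^2\}=\frac{S_{c\cdot\lambda}^2}{2H_{cc\cdot\lambda}}\mathbf 1\{S_{c\cdot\lambda}<0\},
\]
computed as in Theorem~\ref{thm:fit_improvement}.

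Finally, $S_{c\cdot\lambda}=a'S$ is a linear map of the Gaussian $S$, hence $\N(0,a'\Omega a)=\N(0,\Omega_{cc\cdot\lambda})$. Writing $S_{c\cdot\lambda}=\Omega_{cc\cdot\lambda}^{1/2}U$ with $U\sim\N(0,1)$, the limit is $(\Omega_{cc\cdot\lambda}/2H_{cc\cdot\lambda})U^2\mathbf 1\{U<0\}$. Rescaling by $2H_{cc\cdot\lambda}/\Omega_{cc\cdot\lambda}$ gives $\tfrac12\delta_0+\tfrac12\chi_1^2$, which requires $\Omega_{cc\cdot\lambda}>0$. I would note this as an implicit requirement, analogous to Lemma~\ref{lem:Omega_pos}: it holds because $\dot G_{c_0}$ is not in the span of $\phi$ and $z\phi$. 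The rejection rule then follows from Remark~\ref{rem:critical_values}.
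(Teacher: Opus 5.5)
Your proposal is correct and follows the paper's route: localise, invoke the uniform quadratic approximation from the proof of Theorem~\ref{thm:joint_boundary}, profile out $t$ to reach $hS_{c\cdot\lambda}+\tfrac12H_{cc\cdot\lambda}h^2$, and evaluate the one-sided minimum as in Theorem~\ref{thm:fit_improvement}; your explicit handling of the restricted minimiser, the cancelling constant $-\tfrac12S_\lambda'H_{\lambda\lambda}^{-1}S_\lambda$, and joint tightness fills in what the paper's short proof leaves implicit. One refinement is that $\Omega_{cc\cdot\lambda}>0$ is not an extra requirement: $a'D_0$ is the $L^2(\nu)$ residual from projecting $-\dot G_{c_0}$ onto $\operatorname{span}\{\phi,z\phi\}$, so the hypothesis $H_{cc\cdot\lambda}=2\int(a'D_0)^2\,d\nu>0$ already gives $a'D_0\neq0$ on a set of positive $\nu$-measure, and the Brownian-bridge argument of Lemma~\ref{lem:Omega_pos} then yields $\Omega_{cc\cdot\lambda}=a'\Omega a>0$.
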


\subsection*{Why joint estimation fails in practice}

The results above require $H_{cc\cdot\lambda}>0$ and $\Omega_{cc\cdot\lambda}>0$. Both hold: from Table~\ref{tab:null_size_tail_scaling_robust}, $H_{cc\cdot\lambda}\approx0.0023$ and $\Omega_{cc\cdot\lambda}\approx0.000013$, giving normalisation constant
$2H_{cc\cdot\lambda}/\Omega_{cc\cdot\lambda} \approx355$. The near-degeneracy of $\Omega_{cc\cdot\lambda}$ and the resulting size
distortion are explained in Remark~\ref{rem:plugin_vs_joint}; simulation
evidence is in Section~\ref{app:simulations}.

\newpage

\section{Logical dependencies among the main results}\label{app:dependencies}

This appendix maps the logical structure of the paper's lemmas, propositions, theorems, and corollaries. Tables~\ref{tab:deps_one}
and~\ref{tab:deps_two} list all numbered results in order of appearance, with their direct (explicitly invoked) inputs. Three structural observations are worth noting.

\textit{Single foundational chain.}
Every result ultimately traces back to Lemma~\ref{lem:normalcdf} and
Lemma~\ref{lem:endpoint}. Lemma~\ref{lem:normalcdf} is a classical identity. Lemma~\ref{lem:endpoint} establishes that the endpoint constraints force $a=\frac{1}{2}$ and determine $\kappa(c)$ uniquely. Every result in
Section~\ref{sec:hgm_family} is a direct or indirect consequence of
Lemma~\ref{lem:endpoint}, and every result in Sections~\ref{sec:inference}--\ref{sec:plugin} depends on the family properties established there.

\textit{Theorem~\ref{thm:fit_improvement} as the logical hub.}
The central result has three direct inputs (Theorem~\ref{thm:boundary} and Lemmas~\ref{lem:Omega_pos},\,\ref{lem:tightness}) and feeds directly into four downstream results: Corollary~\ref{cor:moments_Ttilde}, Theorem~\ref{thm:local_power}, Theorem~\ref{thm:plugin_fit_improvement}, and Theorem~\ref{thm:joint_fit_improvement}. The limit distribution, the rejection rule, and the modular inheritance of the plug-in theory all follow from this result.

\textit{Clean modularity of the plug-in extension.}
Lemmas~\ref{lem:plugin_ep} and~\ref{lem:robust_ep} derive the projected kernels $K_P$ and $K_R$. All four plug-in corollaries follow by the same argument as the fixed-scale case, with $K$ replaced by $K_\diamond$. Extending the framework to a new standardisation procedure requires only verifying the corresponding projected kernel lemma; the remainder of the theory is inherited automatically.

\begin{table}[p]
\small
\centering
\setlength{\tabcolsep}{4pt}
\caption{Dependency map, Part~I: family construction
and supporting lemmas}
\label{tab:deps_one}
\begin{tabular}{p{2.8cm}p{6.4cm}p{6.0cm}}
\toprule
Result & Statement (brief) & Direct inputs \\
\midrule
\multicolumn{3}{l}{\textit{Section~\ref{sec:hgm_family}:
The hypergeometric family}}\\[3pt]
Lemma~\ref{lem:normalcdf}
  & ${}_1F_1$ representation of $\Phi$
  & Classical identity \\[2pt]
Lemma~\ref{lem:endpoint}
  & Endpoint constraints force $a=\frac12$
    and determine $\kappa(c)$ uniquely
  & --- \\[2pt]
Lemma~\ref{lem:monotonicity}
  & Monotonicity: $G_c'(z)\geq0$ for all $z$
  & Lemma~\ref{lem:endpoint} \\[2pt]
Corollary~\ref{cor:Gc_distn}
  & $G_c$ is a distribution function
  & Lemmas~\ref{lem:endpoint},
    \ref{lem:monotonicity} \\[2pt]
Corollary~\ref{cor:density}
  & Density $g_c$ exists;\ $g_{3/2}=\phi$
  & Lemmas~\ref{lem:normalcdf},
    \ref{lem:monotonicity} \\[2pt]
Lemma~\ref{lem:continuity}
  & $c\mapsto G_c(z)$ continuous;\
    $G_c\to\Phi$ uniformly as $c\downarrow\tfrac32$
  & Lemmas~\ref{lem:normalcdf},
    \ref{lem:endpoint} \\[2pt]
Lemma~\ref{lem:differentiability}
  & Differentiability of $G_c$ in $c$; score $\dot G_c(z)$
  & Lemma~\ref{lem:continuity} \\[2pt]
Lemma~\ref{lem:tail}
  & Algebraic tail:
    $g_c(z)\sim(c-\tfrac32)|z|^{-3}$
  & Lemma~\ref{lem:monotonicity}; Corollary~\ref{cor:density}\\[2pt]
Corollary~\ref{cor:survival}
  & Survival:
    $1-G_c(z)\sim\tfrac{c-3/2}{2}z^{-2}$
  & Lemma~\ref{lem:tail} \\[2pt]
Corollary~\ref{cor:moments}
  & $\E[|Z|^r]<\infty\Leftrightarrow r<2$;\
    $\E[Z]=0$
  & Lemma~\ref{lem:tail} \\[2pt]
Proposition~\ref{prop:mixture}
  & Beta-precision scale mixture:
    $Z=\varepsilon/\sqrt{\Lambda}$,\
    $\varepsilon \sim \N(0,1), \Lambda\sim\Beta(1,c-\tfrac32)$
  & Lemma~\ref{lem:monotonicity} \\[2pt]
Corollary~\ref{cor:unimodal}
  & Unimodality: $g_c$ unique mode at
    $z=0$
  & Proposition~\ref{prop:mixture} \\[2pt]
Corollary~\ref{cor:charfun}
  & Characteristic function of $G_c$
  & Proposition~\ref{prop:mixture} \\[2pt]
Corollary~\ref{cor:location_scale_family}
  & Location-scale extension $G_{\mu,\sigma,c}$
  & Lemma~\ref{lem:tail}; \\&&Corollaries~\ref{cor:Gc_distn}, ~\ref{cor:density}, ~\ref{cor:moments}\\[6pt]
\multicolumn{3}{l}{\textit{Section~\ref{sec:inference}:
Supporting lemmas}}\\[3pt]
Lemma~\ref{lem:identifiability}
  & Identification, correct specification
  & Corollary~\ref{cor:survival} \\[2pt]
Corollary~\ref{cor:consistency_correct}
  & Consistency, correct specification
  & Theorem~\ref{thm:consistency};\
    Lemma~\ref{lem:identifiability};\\ & & Corollary~\ref{cor:density} \\[2pt]
Lemma~\ref{lem:B1}
  & Verify \textup{(B1)}: $c\mapsto G_c(z)$
    real-analytic on $(1/2,\infty)$
  & Lemma~\ref{lem:differentiability} \\[2pt]
Lemma~\ref{lem:B2}
  & Verify \textup{(B2)}: $|\dot G_c|,|\ddot G_c|$
    uniformly bounded
  & --- \\[2pt]
Lemma~\ref{lem:B3}
  & $2\int_\mathbb{R}\dot G_{c^\star}(z)^2\,d\nu(z)>0$;\
    equals $H_{c^\star}$ and verifies \textup{(B3)}
    if correct specification
  & Lemma~\ref{lem:differentiability} \\[2pt]
Lemma~\ref{lem:Omega_pos}
  & $\Omega_{c^\star}>0$
  & Lemma~\ref{lem:B3} \\[2pt]
Lemma~\ref{lem:curvature}
  & $H_{c_0}>0$;\
    $Q_n''(c)\to Q''(c)$ uniformly in
    $c\in\mathcal{N}$
  & Lemma~\ref{lem:B3} \\[2pt]
Lemma~\ref{lem:tightness}
  & Tightness: $\sqrt{n}(\widehat c_n-c_0)=O_p(1)$
  & Theorem~\ref{thm:interior};\
    Lemma~\ref{lem:curvature};\\&&
    Corollary~\ref{cor:consistency_correct} \\[6pt]
\bottomrule
\end{tabular}
\begin{minipage}{\linewidth}
\smallskip\footnotesize
Notes: ``Direct inputs'' lists results whose
proofs are explicitly invoked; continued in Table~\ref{tab:deps_two}.
\end{minipage}
\end{table}

\begin{table}[p]
\small
\centering
\setlength{\tabcolsep}{4pt}
\caption{Dependency map, Part~II: inference,
plug-in procedures, and appendix results}
\label{tab:deps_two}
\begin{tabular}{p{2.8cm}p{6.6cm}p{5.8cm}}
\toprule
Result & Statement (brief) & Direct inputs \\
\midrule
\multicolumn{3}{l}{\textit{Section~\ref{sec:inference}:
Main inference results}}\\[3pt]
Theorem~\ref{thm:consistency}
  & Consistency: $\hat{c}_n\to c^*$ a.s.
  & Lemma~\ref{lem:continuity};\
    Corollary~\ref{cor:Gc_distn} \\[2pt]
Theorem~\ref{thm:interior}
  & Interior: $\sqrt{n}(\hat{c}_n-c^\star)
    \Rightarrow \N(0,\Omega_{c^\star}/H_{c^\star}^2)$
  & Theorem~\ref{thm:consistency};\\&&
    Lemmas~\ref{lem:B1}, \ref{lem:B2},
    \ref{lem:B3}, \ref{lem:Omega_pos} \\[2pt]
Theorem~\ref{thm:boundary}
  & Boundary: $\sqrt{n}(\hat{c}_n-c_0)
    \Rightarrow\max\{0,Z\}$
  & Theorem~\ref{thm:interior};\\&&
    Lemmas~\ref{lem:Omega_pos}, 
    \ref{lem:curvature}, \ref{lem:tightness} \\[2pt]
Theorem~\ref{thm:fit_improvement}
  & $\widetilde{T}_n\Rightarrow
    \tfrac12\delta_0+\tfrac12\chi_1^2$
  & Theorem~\ref{thm:boundary};\
    Lemmas~\ref{lem:Omega_pos},\,\ref{lem:tightness} \\[2pt]
Corollary~\ref{cor:moments_Ttilde}
  & $\E[\widetilde{T}_\infty]=\tfrac12$,\
    $\Var(\widetilde{T}_\infty)=\tfrac54$
  & Theorem~\ref{thm:fit_improvement} \\[2pt]
Theorem~\ref{thm:directed}
  & Directed consistency: & Theorem~\ref{thm:consistency};\\
  &
    $\Delta>0\Rightarrow\widetilde{T}_n\to_p\infty$
  & Lemmas~\ref{lem:B3}, \ref{lem:Omega_pos} \\[2pt]
Theorem~\ref{thm:local_power}
  & Local power: & Theorem~\ref{thm:fit_improvement};\\
  & $\pi_\nu(\eta)=\Phi\{\delta_\nu(\eta)
    -\Phi^{-1}(1-\alpha)\}$
  & Lemmas~\ref{lem:Omega_pos}, \ref{lem:curvature}, \ref{lem:tightness}\\[2pt]
Corollary~\ref{cor:local_hgm}
  & $\delta_\nu(\tau)=\tau \,H_{c_0}(\nu)/
    \sqrt{\Omega_{c_0}(\nu)}$
  & Theorem~\ref{thm:local_power} \\[6pt]
\multicolumn{3}{l}{\textit{Section~\ref{sec:plugin}:
Plug-in location-scale procedures}}\\[3pt]
Lemma~\ref{lem:plugin_ep}
  & Mean/SD projected empirical process;\
    kernel $K_P$
  & --- \\[2pt]
Lemma~\ref{lem:robust_ep}
  & Robust projected empirical process;\
    kernel $K_R$
  & ---  \\[2pt]
Theorem~\ref{thm:plugin_fit_improvement}
  & $\widetilde{T}_n^\diamond\Rightarrow
    \tfrac12\delta_0+\tfrac12\chi_1^2$,
    $\diamond\in\{P,R\}$
  & Theorem~\ref{thm:fit_improvement};\\&&
    Lemmas~\ref{lem:curvature},
    \ref{lem:plugin_ep},
    \ref{lem:robust_ep} \\[2pt]
Corollary~\ref{cor:plugin_moments}
  & $\E[\widetilde T_{\infty}^\diamond]=\tfrac12$,\
    $\Var(\widetilde T_{\infty}^\diamond)=\tfrac54$, $\diamond\in\{P,R\}$
  & Theorem~\ref{thm:plugin_fit_improvement}; Corollary~\ref{cor:moments_Ttilde} \\[2pt]
Corollary~\ref{cor:plugin_local_power}
  & Plug-in local power
  & Theorem~\ref{thm:local_power}\\[2pt]
Corollary~\ref{cor:plugin_consistency}
  & Plug-in consistency
  & Theorem~\ref{thm:consistency}; Lemma~\ref{lem:identifiability} \\[2pt]
Corollary~\ref{cor:plugin_directed}
  & Plug-in directed consistency
  & Corollary~\ref{cor:plugin_consistency} \\[6pt]
\multicolumn{3}{l}{\textit{Appendix~\ref{app:crossover}:
Tail dominance and Value-at-Risk (VaR)}}\\[3pt]
Lemma~\ref{lem:R_monotone}
  & $\rho(z,c)$
    strictly increasing in $z$ and $c$
  & Proposition~\ref{prop:mixture} \\[2pt]
Proposition~\ref{prop:var_divergence}
  & VaR ratio $q_p^{\mathrm{HGM}}/
    q_p^{\Phi}\to\infty$
    as $p\to1^-$
  & Corollary~\ref{cor:survival} \\[6pt]
\multicolumn{3}{l}{\textit{Appendix~\ref{app:joint}:
Joint estimation}}\\[3pt]
Theorem~\ref{thm:joint_consistency}
  & Consistency
  & Theorem~\ref{thm:consistency}
    (same argument) \\[2pt]
Theorem~\ref{thm:joint_boundary}
  & Boundary distribution
  & Theorem~\ref{thm:boundary}
    (same argument) \\[2pt]
Theorem~\ref{thm:joint_fit_improvement}
  & Fit-improvement
    statistic
  & Theorem~\ref{thm:fit_improvement}
    (same argument) \\[6pt]
\bottomrule
\end{tabular}
\begin{minipage}{\linewidth}
\smallskip\footnotesize
Notes: ``Direct inputs'' lists results whose
proofs are explicitly invoked.
\end{minipage}
\end{table}

\newpage

\bibliographystyle{plainnat}

\bibliography{ms}

\end{document}

%% file: table_null_size_tail_main_with_robust.tex
\begin{table}[!p]
\centering
\caption{Null size of the HGM tests with tail-weighted criterion}
\label{tab:null_size_tail_main_robust}
\begin{tabular}{rcccccc}
\toprule
$n$ & $10\%$ & $5\%$ & $1\%$ & $\Pr(\widehat c_n=c_0)$ & mean $\widehat c_n$ & mean $\widetilde T_n$ \\
\midrule
\multicolumn{7}{l}{Panel A: fixed-scale $G_c$ test of $\Phi$} \\
25 & 0.100 & 0.052 & 0.011 & 0.511 & 1.5898 & 0.503 \\
50 & 0.100 & 0.051 & 0.010 & 0.505 & 1.5603 & 0.500 \\
100 & 0.100 & 0.051 & 0.011 & 0.506 & 1.5413 & 0.506 \\
200 & 0.101 & 0.051 & 0.011 & 0.507 & 1.5284 & 0.503 \\
400 & 0.101 & 0.050 & 0.010 & 0.502 & 1.5198 & 0.502 \\
800 & 0.100 & 0.050 & 0.010 & 0.502 & 1.5138 & 0.502 \\
1600 & 0.100 & 0.050 & 0.010 & 0.504 & 1.5096 & 0.499 \\
\addlinespace
\multicolumn{7}{l}{Panel B: mean/standard-deviation plug-in location-scale test} \\
25 & 0.116 & 0.043 & 0.002 & 0.373 & 1.5443 & 0.565 \\
50 & 0.114 & 0.047 & 0.004 & 0.412 & 1.5290 & 0.546 \\
100 & 0.112 & 0.050 & 0.006 & 0.442 & 1.5194 & 0.537 \\
200 & 0.109 & 0.050 & 0.008 & 0.457 & 1.5133 & 0.531 \\
400 & 0.106 & 0.050 & 0.008 & 0.473 & 1.5090 & 0.517 \\
800 & 0.104 & 0.050 & 0.009 & 0.483 & 1.5062 & 0.514 \\
1600 & 0.103 & 0.050 & 0.009 & 0.491 & 1.5043 & 0.509 \\
\addlinespace
\multicolumn{7}{l}{Panel C: robust median/IQR plug-in location-scale test} \\
25 & 0.161 & 0.095 & 0.029 & 0.412 & 1.6145 & 0.815 \\
50 & 0.131 & 0.073 & 0.020 & 0.438 & 1.5659 & 0.668 \\
100 & 0.126 & 0.068 & 0.017 & 0.455 & 1.5429 & 0.632 \\
200 & 0.119 & 0.065 & 0.016 & 0.472 & 1.5282 & 0.600 \\
400 & 0.114 & 0.060 & 0.014 & 0.477 & 1.5190 & 0.569 \\
800 & 0.111 & 0.058 & 0.014 & 0.484 & 1.5130 & 0.558 \\
1600 & 0.107 & 0.057 & 0.013 & 0.492 & 1.5090 & 0.543 \\
\bottomrule
\end{tabular}
\begin{minipage}{0.95\linewidth}
\footnotesize
Notes: The table reports empirical rejection frequencies under the Gaussian null $F_0=\Phi$. All tests use the tail-weighted measure $d\nu_{1/2}(z)\propto[\Phi(z)\{1-\Phi(z)\}]^{-1/2}d\Phi(z)$. Critical values are from the boundary limit $\frac12\delta_0+\frac12\chi_1^2$. The mean--standard-deviation plug-in statistic uses the projected-process variance $\Omega_{c_0}^P(\nu_{1/2})$; the robust plug-in statistic uses the median/IQR projected-process variance $\Omega_{c_0}^R(\nu_{1/2})$. The boundary column reports $\Pr(\widehat c_n\le c_0+10^{-4})$ with $c_0=3/2$.
\end{minipage}
\end{table}

%% file: table_power_hgm_alternatives_revised.tex
\begin{table}[!p]
\centering
\caption{Power against HGM alternatives}
\label{tab:power_hgm_alternatives_revised}
\begin{tabular}{llcccccc}
\toprule
Alternative & Test & $n=25$ & $n=50$ & $n=100$ & $n=200$ & $n=400$ & $n=800$ \\
\midrule
$c=1.55$ & Fixed scale & 0.088 & 0.108 & 0.135 & 0.186 & 0.275 & 0.413 \\
 & Robust plug-in & 0.104 & 0.078 & 0.083 & 0.078 & 0.087 & 0.099 \\
 & JB & 0.117 & 0.207 & 0.335 & 0.535 & 0.745 & 0.922 \\
 & AD & 0.103 & 0.143 & 0.200 & 0.308 & 0.465 & 0.684 \\
 & SW & 0.129 & 0.199 & 0.303 & 0.484 & 0.700 & 0.899 \\
\addlinespace
$c=1.60$ & Fixed scale & 0.146 & 0.187 & 0.262 & 0.398 & 0.618 & 0.855 \\
 & Robust plug-in & 0.107 & 0.091 & 0.100 & 0.107 & 0.129 & 0.167 \\
 & JB & 0.184 & 0.329 & 0.518 & 0.745 & 0.931 & 0.995 \\
 & AD & 0.153 & 0.236 & 0.344 & 0.527 & 0.759 & 0.943 \\
 & SW & 0.188 & 0.309 & 0.474 & 0.702 & 0.906 & 0.992 \\
\addlinespace
$c=1.75$ & Fixed scale & 0.319 & 0.467 & 0.694 & 0.910 & 0.993 & 1.000 \\
 & Robust plug-in & 0.125 & 0.124 & 0.144 & 0.178 & 0.258 & 0.390 \\
 & JB & 0.308 & 0.533 & 0.778 & 0.944 & 0.997 & 1.000 \\
 & AD & 0.259 & 0.408 & 0.622 & 0.841 & 0.977 & 1.000 \\
 & SW & 0.307 & 0.501 & 0.736 & 0.920 & 0.995 & 1.000 \\
\addlinespace
$c=2.00$ & Fixed scale & 0.590 & 0.814 & 0.967 & 1.000 & 1.000 & 1.000 \\
 & Robust plug-in & 0.157 & 0.159 & 0.215 & 0.321 & 0.486 & 0.730 \\
 & JB & 0.419 & 0.682 & 0.902 & 0.990 & 1.000 & 1.000 \\
 & AD & 0.364 & 0.572 & 0.810 & 0.967 & 0.999 & 1.000 \\
 & SW & 0.417 & 0.648 & 0.879 & 0.985 & 1.000 & 1.000 \\
\addlinespace
$c=3.00$ & Fixed scale & 0.964 & 0.999 & 1.000 & 1.000 & 1.000 & 1.000 \\
 & Robust plug-in & 0.199 & 0.266 & 0.395 & 0.603 & 0.859 & 0.986 \\
 & JB & 0.524 & 0.807 & 0.967 & 0.999 & 1.000 & 1.000 \\
 & AD & 0.495 & 0.749 & 0.942 & 0.997 & 1.000 & 1.000 \\
 & SW & 0.535 & 0.790 & 0.959 & 0.999 & 1.000 & 1.000 \\
\addlinespace
\bottomrule
\end{tabular}
\begin{minipage}{0.95\linewidth}
\footnotesize
Notes: Entries are rejection frequencies at nominal level $5\%$ based on $R=10^4$ replications, against HGM alternatives $G_c: c \in \{1.55, 1.60, 1.75, 2.00, 3.00\}$. The reported HGM procedures are the fixed-scale test $\widetilde{T}_n$ and the robust median/IQR plug-in test $\widetilde{T}_n^R$; both tests use the tail-weighted criterion $d\nu_{1/2}(z)\propto[\Phi(z)\{1-\Phi(z)\}]^{-1/2}d\Phi(z)$ and asymptotic boundary critical values. Benchmark tests are Jarque–Bera (JB), Anderson–Darling (AD), and Shapiro–Wilk (SW).
\end{minipage}
\end{table}

%% file: table_power_standard_alternatives_revised.tex
\begin{table}[!p]
\centering
\caption{Power against standard non-HGM alternatives}
\label{tab:power_standard_alternatives_revised}
\begin{tabular}{llcccccc}
\toprule
Alternative & Test & $n=25$ & $n=50$ & $n=100$ & $n=200$ & $n=400$ & $n=800$ \\
\midrule
$\chi^2_3$ & Fixed scale & 0.008 & 0.004 & 0.001 & 0.000 & 0.000 & 0.000 \\
 & Robust plug-in & 0.041 & 0.018 & 0.008 & 0.004 & 0.000 & 0.000 \\
 & JB & 0.466 & 0.860 & 1.000 & 1.000 & 1.000 & 1.000 \\
 & AD & 0.675 & 0.958 & 1.000 & 1.000 & 1.000 & 1.000 \\
 & SW & 0.782 & 0.987 & 1.000 & 1.000 & 1.000 & 1.000 \\
\addlinespace
$t_{10}$ & Fixed scale & 0.031 & 0.024 & 0.014 & 0.008 & 0.004 & 0.000 \\
 & Robust plug-in & 0.122 & 0.109 & 0.123 & 0.155 & 0.206 & 0.303 \\
 & JB & 0.100 & 0.172 & 0.293 & 0.443 & 0.667 & 0.894 \\
 & AD & 0.080 & 0.098 & 0.148 & 0.216 & 0.373 & 0.660 \\
 & SW & 0.109 & 0.152 & 0.233 & 0.357 & 0.569 & 0.834 \\
\addlinespace
$t_{3}$ & Fixed scale & 0.000 & 0.000 & 0.000 & 0.000 & 0.000 & 0.000 \\
 & Robust plug-in & 0.195 & 0.248 & 0.380 & 0.569 & 0.818 & 0.977 \\
 & JB & 0.394 & 0.661 & 0.896 & 0.992 & 1.000 & 1.000 \\
 & AD & 0.364 & 0.592 & 0.837 & 0.982 & 1.000 & 1.000 \\
 & SW & 0.403 & 0.639 & 0.874 & 0.989 & 1.000 & 1.000 \\
\addlinespace
$t_{5}$ & Fixed scale & 0.011 & 0.006 & 0.002 & 0.000 & 0.000 & 0.000 \\
 & Robust plug-in & 0.144 & 0.155 & 0.207 & 0.302 & 0.469 & 0.711 \\
 & JB & 0.211 & 0.393 & 0.626 & 0.863 & 0.983 & 0.999 \\
 & AD & 0.170 & 0.277 & 0.451 & 0.716 & 0.936 & 0.998 \\
 & SW & 0.217 & 0.354 & 0.561 & 0.816 & 0.971 & 0.999 \\
\addlinespace
Cont. normal & Fixed scale & 0.010 & 0.006 & 0.001 & 0.000 & 0.000 & 0.000 \\
 & Robust plug-in & 0.117 & 0.094 & 0.098 & 0.116 & 0.141 & 0.187 \\
 & JB & 0.232 & 0.411 & 0.642 & 0.855 & 0.977 & 0.999 \\
 & AD & 0.174 & 0.269 & 0.405 & 0.599 & 0.831 & 0.975 \\
 & SW & 0.230 & 0.381 & 0.587 & 0.815 & 0.964 & 0.999 \\
\addlinespace
Laplace & Fixed scale & 0.010 & 0.004 & 0.001 & 0.000 & 0.000 & 0.000 \\
 & Robust plug-in & 0.272 & 0.369 & 0.566 & 0.804 & 0.968 & 0.999 \\
 & JB & 0.290 & 0.507 & 0.783 & 0.968 & 0.999 & 1.000 \\
 & AD & 0.302 & 0.517 & 0.807 & 0.982 & 1.000 & 1.000 \\
 & SW & 0.321 & 0.520 & 0.798 & 0.977 & 1.000 & 1.000 \\
\addlinespace
\bottomrule
\end{tabular}
\begin{minipage}{0.95\linewidth}
\footnotesize
Notes: Entries are rejection frequencies at nominal level $5\%$ based on $R=10^4$ replications. Continuous alternatives are standardised to mean zero and unit variance before testing; this is relevant for the fixed-scale test, which assumes known location and scale, but immaterial for the plug-in and omnibus tests, which are location-scale invariant. The contaminated normal is the unit-variance version of $0.95\N(0,1)+0.05\N(0,9)$.
\end{minipage}
\end{table}

%% file: table_null_size_tail_scaling_with_robust.tex
\begin{table}[!h]
\centering
\caption{Scaling constants and asymptotic critical values}
\label{tab:null_size_tail_scaling_robust}
\begin{tabular}{lc}
\toprule
Quantity & Value \\
\midrule
$H_{c_0}(\nu_{1/2})$ & 0.036295 \\
$\Omega_{c_0}(\nu_{1/2})$ & 0.001191 \\
$2H_{c_0}/\Omega_{c_0}$ & 60.952944 \\
$\Omega_{c_0}^P(\nu_{1/2})$ & 0.000231 \\
$2H_{c_0}/\Omega_{c_0}^P$ & 313.567929 \\
$\Omega_{c_0}^R(\nu_{1/2})$ & 0.000925 \\
$2H_{c_0}/\Omega_{c_0}^R$ & 78.469455 \\
$H_{cc\cdot\lambda}$ & 0.002282 \\
$\Omega_{cc\cdot\lambda}$ & 0.000013 \\
$2H_{cc\cdot\lambda}/\Omega_{cc\cdot\lambda}$ & 355.448213 \\
$\chi^2_{1,0.80}$ & 1.642374 \\
$\chi^2_{1,0.90}$ & 2.705543 \\
$\chi^2_{1,0.98}$ & 5.411894 \\
\bottomrule
\end{tabular}
\begin{minipage}{0.85\linewidth}
\footnotesize
Notes: Constants are computed for $d\nu_{1/2}(z)\propto[\Phi(z)\{1-\Phi(z)\}]^{-1/2}d\Phi(z)$. The plug-in and robust plug-in statistics use their corresponding projected-process variances.
\end{minipage}
\end{table}

%% file: table_null_size_tail_robust_plugin_invariance_N3_2sq.tex
\begin{table}[!htbp]
\centering
\caption{Robustness check: robust plug-in HGM test under a non-standard Gaussian null}
\label{tab:null_size_tail_robust_plugin_invariance_N3_2sq}
\begin{tabular}{rcccccc}
\toprule
& \multicolumn{3}{c}{$\N(0,1)$} & \multicolumn{3}{c}{$\N(3,4)$} \\
\cmidrule(lr){2-4}\cmidrule(lr){5-7}
$n$ & $5\%$ & $1\%$ & $\Pr(\widehat c_n=c_0)$ & $5\%$ & $1\%$ & $\Pr(\widehat c_n=c_0)$ \\
\midrule
100 & 0.068 & 0.017 & 0.455 & 0.068 & 0.017 & 0.457 \\
400 & 0.060 & 0.014 & 0.477 & 0.061 & 0.015 & 0.477 \\
1600 & 0.057 & 0.013 & 0.492 & 0.055 & 0.012 & 0.492 \\
\bottomrule
\end{tabular}
\begin{minipage}{0.95\linewidth}
\footnotesize
Notes: The table compares the robust plug-in statistic under $\N(0,1)$ and $\N(3,4)$. The same tail-weighted criterion and robust projected-process scaling are used throughout. The boundary columns report $\Pr(\widehat c_n\le c_0+10^{-4})$ with $c_0=3/2$.
\end{minipage}
\end{table}

%% file: table_null_size_tail_exact_profile_appendix.tex
\begin{table}[!htbp]
\centering
\caption{Null size of the exact-profiled location-scale HGM statistic}
\label{tab:null_size_tail_exact_profile_appendix}
\begin{tabular}{rccccccc}
\toprule
$n$ & $10\%$ & $5\%$ & $1\%$ & $\Pr(\widehat c=c_0)$ & mean $\widehat c$ & mean $\widetilde T_n$ & $q_{.95}(\widetilde T_n)$ \\
\midrule
50 & 0.112 & 0.075 & 0.018 & 0.538 & 1.6381 & 0.581 & 3.557 \\
100 & 0.095 & 0.063 & 0.022 & 0.458 & 1.6098 & 0.588 & 3.485 \\
200 & 0.089 & 0.057 & 0.026 & 0.357 & 1.5796 & 0.652 & 3.109 \\
400 & 0.132 & 0.073 & 0.027 & 0.248 & 1.5598 & 0.904 & 3.451 \\
800 & 0.236 & 0.111 & 0.019 & 0.215 & 1.5316 & 1.092 & 3.776 \\
1600 & 0.309 & 0.193 & 0.050 & 0.180 & 1.5208 & 1.448 & 5.403 \\
\bottomrule
\end{tabular}
\begin{minipage}{0.95\linewidth}
\footnotesize
Notes: Empirical rejection frequencies of $\widetilde T_n^{EP}$ under $\N(0,1)$, $R=10^3$ replications. Rejection at $\widetilde T_n^{EP}>\chi^2_{1,1-2\alpha}$. Column $\Pr(\widehat c_n=c_0)$ reports $\Pr(\widehat c_n\leq c_0+10^{-4})$. Scaling constant $2H_{cc\cdot\lambda}/\Omega_{cc\cdot\lambda}\approx355$ (Table~\ref{tab:null_size_tail_scaling_robust}). Size distortion worsens with $n$ due to the near-degeneracy of $\Omega_{cc\cdot\lambda}$; see Remark~\ref{rem:plugin_vs_joint}.
\end{minipage}
\end{table}

%% file: table_appendix_mean_plugin_hgm_power.tex
\begin{table}[!p]
\centering
\caption{Mean/SD plug-in HGM power against HGM alternatives}
\label{tab:appendix_mean_plugin_hgm_power}
\begin{tabular}{llcccccc}
\toprule
Alternative & Test & $n=25$ & $n=50$ & $n=100$ & $n=200$ & $n=400$ & $n=800$ \\
\midrule
$c=1.55$ & $\widetilde{T}_n^P$ & 0.036 & 0.032 & 0.022 & 0.014 & 0.004 & 0.001 \\
\addlinespace
$c=1.60$ & $\widetilde{T}_n^P$ & 0.025 & 0.021 & 0.011 & 0.003 & 0.000 & 0.000 \\
\addlinespace
$c=1.75$ & $\widetilde{T}_n^P$ & 0.020 & 0.008 & 0.002 & 0.000 & 0.000 & 0.000 \\
\addlinespace
$c=2.00$ & $\widetilde{T}_n^P$ & 0.009 & 0.003 & 0.000 & 0.000 & 0.000 & 0.000 \\
\addlinespace
$c=3.00$ & $\widetilde{T}_n^P$ & 0.004 & 0.001 & 0.000 & 0.000 & 0.000 & 0.000 \\
\addlinespace
\bottomrule
\end{tabular}
\begin{minipage}{0.95\linewidth}
\footnotesize
Notes: The mean/sd plug-in statistic is reported as an appendix diagnostic.
\end{minipage}
\end{table}

%% file: table_appendix_mean_plugin_standard_power.tex
\begin{table}[!p]
\centering
\caption{Mean/SD plug-in HGM power against standard alternatives}
\label{tab:appendix_mean_plugin_standard_power}
\begin{tabular}{llcccccc}
\toprule
Alternative & Test & $n=25$ & $n=50$ & $n=100$ & $n=200$ & $n=400$ & $n=800$ \\
\midrule
$\chi^2_3$ & $\widetilde{T}_n^P$ & 0.034 & 0.015 & 0.004 & 0.001 & 0.000 & 0.000 \\
\addlinespace
$t_{10}$ & $\widetilde{T}_n^P$ & 0.020 & 0.017 & 0.006 & 0.003 & 0.001 & 0.000 \\
\addlinespace
$t_{3}$ & $\widetilde{T}_n^P$ & 0.004 & 0.001 & 0.000 & 0.000 & 0.000 & 0.000 \\
\addlinespace
$t_{5}$ & $\widetilde{T}_n^P$ & 0.011 & 0.005 & 0.001 & 0.000 & 0.000 & 0.000 \\
\addlinespace
Cont. normal & $\widetilde{T}_n^P$ & 0.026 & 0.018 & 0.008 & 0.002 & 0.000 & 0.000 \\
\addlinespace
Laplace & $\widetilde{T}_n^P$ & 0.002 & 0.000 & 0.000 & 0.000 & 0.000 & 0.000 \\
\addlinespace
\bottomrule
\end{tabular}
\begin{minipage}{0.95\linewidth}
\footnotesize
Notes: The mean/sd plug-in statistic is reported as an appendix diagnostic.
\end{minipage}
\end{table}

%% file: table_appendix_sizeadjusted_robust_hgm_power.tex
\begin{table}[!p]
\centering
\caption{Size-adjusted robust plug-in HGM power against HGM alternatives}
\label{tab:appendix_sizeadj_robust_hgm_power}
\begin{tabular}{llcccccc}
\toprule
Alternative & Test & $n=25$ & $n=50$ & $n=100$ & $n=200$ & $n=400$ & $n=800$ \\
\midrule
$c=1.55$ & $\widetilde{T}_n^R$ (size-adj.) & 0.053 & 0.054 & 0.062 & 0.064 & 0.074 & 0.087 \\
\addlinespace
$c=1.60$ & $\widetilde{T}_n^R$ (size-adj.) & 0.056 & 0.059 & 0.075 & 0.088 & 0.110 & 0.149 \\
\addlinespace
$c=1.75$ & $\widetilde{T}_n^R$ (size-adj.) & 0.066 & 0.083 & 0.108 & 0.149 & 0.232 & 0.361 \\
\addlinespace
$c=2.00$ & $\widetilde{T}_n^R$ (size-adj.) & 0.085 & 0.107 & 0.169 & 0.277 & 0.452 & 0.705 \\
\addlinespace
$c=3.00$ & $\widetilde{T}_n^R$ (size-adj.) & 0.112 & 0.195 & 0.327 & 0.557 & 0.838 & 0.981 \\
\addlinespace
\bottomrule
\end{tabular}
\begin{minipage}{0.95\linewidth}
\footnotesize
Notes: $\widetilde{T}_n^R$ uses finite-sample null 5\% critical values from the null-size simulations.
\end{minipage}
\end{table}

%% file: table_appendix_sizeadjusted_robust_standard_power.tex
\begin{table}[!p]
\centering
\caption{Size-adjusted robust plug-in HGM power against standard alternatives}
\label{tab:appendix_sizeadj_robust_standard_power}
\begin{tabular}{llcccccc}
\toprule
Alternative & Test & $n=25$ & $n=50$ & $n=100$ & $n=200$ & $n=400$ & $n=800$ \\
\midrule
$\chi^2_3$ & $\widetilde{T}_n^R$ (size-adj.) & 0.018 & 0.011 & 0.004 & 0.003 & 0.000 & 0.000 \\
\addlinespace
$t_{10}$ & $\widetilde{T}_n^R$ (size-adj.) & 0.062 & 0.074 & 0.090 & 0.131 & 0.181 & 0.276 \\
\addlinespace
$t_{3}$ & $\widetilde{T}_n^R$ (size-adj.) & 0.110 & 0.181 & 0.314 & 0.520 & 0.796 & 0.974 \\
\addlinespace
$t_{5}$ & $\widetilde{T}_n^R$ (size-adj.) & 0.076 & 0.109 & 0.161 & 0.262 & 0.433 & 0.686 \\
\addlinespace
Cont. normal & $\widetilde{T}_n^R$ (size-adj.) & 0.059 & 0.066 & 0.070 & 0.092 & 0.123 & 0.170 \\
\addlinespace
Laplace & $\widetilde{T}_n^R$ (size-adj.) & 0.166 & 0.291 & 0.498 & 0.771 & 0.960 & 0.999 \\
\addlinespace
\bottomrule
\end{tabular}
\begin{minipage}{0.95\linewidth}
\footnotesize
Notes: $\widetilde{T}_n^R$ uses finite-sample null 5\% critical values from the null-size simulations.
\end{minipage}
\end{table}

%% file: table_sp500_returns_desc.tex
\begin{table}[p]
\centering
\caption{Descriptive statistics: S\&P 500 daily log returns}
\label{tab:sp500_returns_desc}
\begin{tabular}{lc}
\toprule
Statistic & Value \\
\midrule
Observations & $1255.0000$ \\
Mean & $0.0472$ \\
Standard deviation & $1.0635$ \\
Median & $0.0824$ \\
IQR scale & $0.7999$ \\
Minimum & $-6.1609$ \\
Maximum & $9.0895$ \\
Skewness & $0.0215$ \\
Excess kurtosis & $6.5374$ \\
\bottomrule
\end{tabular}
\end{table}

%% file: table_sp500_returns_params.tex
\begin{table}[p]
\centering
\caption{Fitted distribution parameters: S\&P 500 daily log returns}
\label{tab:sp500_returns_params}
\begin{tabular}{lcccc}
\toprule
Model & Location & Scale & Shape \\
\midrule
Gaussian & $0.0472$ & $1.0630$ & — \\
Laplace & $0.0824$ & $0.7530$ & — \\
Student $t$ & $0.0789$ & $0.7529$ & $k$=3.8667 \\
HGM & $0.0824$ & $0.7999$ & $c$=1.6358 \\
\bottomrule
\end{tabular}
\end{table}

%% file: table_sp500_returns_tests.tex
\begin{table}[p]
\centering
\caption{Normality and HGM tests: S\&P 500 daily log returns}
\label{tab:sp500_returns_tests}
\begin{tabular}{lcccl}
\toprule
Test & Statistic & $p$-value \\
\midrule
Jarque--Bera & $2234.9388$ & $0.0000$ \\
Anderson--Darling & $10.9142$ & \textemdash \\
Shapiro--Wilk & $0.9450$ & $0.0000$ \\
Robust plug-in HGM & $26.7864$ & $0.0000$ \\
\bottomrule
\end{tabular}
\end{table}

%% file: table_sp500_returns_gof.tex
\begin{table}[p]
\centering
\caption{Goodness-of-fit measures: S\&P 500 daily log returns}
\label{tab:sp500_returns_gof}
\begin{tabular}{lccccl}
\toprule
Model & Mean log score & KS distance & CvM distance & Tail-weighted CvM \\
\midrule
Student $t$ & $-1.40734$ & $0.02050$ & $0.00007$ & $0.00006$ \\
Laplace & $-1.40947$ & $0.01968$ & $0.00008$ & $0.00007$ \\
HGM & $-1.42062$ & $0.02698$ & $0.00020$ & $0.00017$ \\
Gaussian & $-1.48007$ & $0.07051$ & $0.00153$ & $0.00120$ \\
\bottomrule
\end{tabular}
\end{table}

%% file: table_sp500_returns_var.tex
\begin{table}[p]
\centering
\caption{Lower-tail quantiles: S\&P 500 daily log returns}
\label{tab:sp500_returns_var}
\begin{tabular}{lcc}
\toprule
Model & Probability & Lower quantile \\
\midrule
Gaussian & $0.0100$ & $-2.4258$ \\
Gaussian & $0.0050$ & $-2.6910$ \\
Gaussian & $0.0010$ & $-3.2378$ \\
Laplace & $0.0100$ & $-2.8634$ \\
Laplace & $0.0050$ & $-3.3853$ \\
Laplace & $0.0010$ & $-4.5972$ \\
Student $t$ & $0.0100$ & $-2.7967$ \\
Student $t$ & $0.0050$ & $-3.4705$ \\
Student $t$ & $0.0010$ & $-5.5148$ \\
HGM & $0.0100$ & $-2.3371$ \\
HGM & $0.0050$ & $-3.0390$ \\
HGM & $0.0010$ & $-6.5740$ \\
Empirical & $0.0100$ & $-2.9948$ \\
Empirical & $0.0050$ & $-3.6009$ \\
Empirical & $0.0010$ & $-4.8233$ \\
\bottomrule
\end{tabular}
\end{table}

%% file: table_sp500_returns_tail.tex
\begin{table}[p]
\centering
\caption{Fitted tail probabilities: S\&P 500 daily log returns}
\label{tab:sp500_returns_tail}
\begin{tabular}{lcccc}
\toprule
Model & z & $\Pr(Y<-z)$ & $\Pr(Y>z)$ & $\Pr(|Y|>z)$ \\
\midrule
Gaussian & $2.00000$ & $0.07054$ & $0.06202$ & $0.13257$ \\
Gaussian & $3.00000$ & $0.01307$ & $0.01100$ & $0.02407$ \\
Gaussian & $4.00000$ & $0.00146$ & $0.00117$ & $0.00263$ \\
Laplace & $2.00000$ & $0.05975$ & $0.05975$ & $0.11950$ \\
Laplace & $3.00000$ & $0.02065$ & $0.02065$ & $0.04131$ \\
Laplace & $4.00000$ & $0.00714$ & $0.00714$ & $0.01428$ \\
Student $t$ & $2.00000$ & $0.05186$ & $0.05132$ & $0.10318$ \\
Student $t$ & $3.00000$ & $0.01754$ & $0.01739$ & $0.03493$ \\
Student $t$ & $4.00000$ & $0.00710$ & $0.00705$ & $0.01415$ \\
HGM & $2.00000$ & $0.04044$ & $0.04044$ & $0.08088$ \\
HGM & $3.00000$ & $0.01026$ & $0.01026$ & $0.02053$ \\
HGM & $4.00000$ & $0.00472$ & $0.00472$ & $0.00944$ \\
Empirical & $2.00000$ & $0.06693$ & $0.04701$ & $0.11394$ \\
Empirical & $3.00000$ & $0.02151$ & $0.01195$ & $0.03347$ \\
Empirical & $4.00000$ & $0.00876$ & $0.00159$ & $0.01036$ \\
\bottomrule
\end{tabular}
\end{table}

%% file: table_sp500_garch_residuals_desc.tex
\begin{table}[p]
\centering
\caption{Descriptive statistics: GARCH(1,1) standardised residuals}
\label{tab:sp500_garch_residuals_desc}
\begin{tabular}{lc}
\toprule
Statistic & Value \\
\midrule
Observations & $1255.0000$ \\
Mean & $-0.0403$ \\
Standard deviation & $0.9970$ \\
Median & $0.0050$ \\
IQR scale & $0.8768$ \\
Minimum & $-4.8092$ \\
Maximum & $3.7525$ \\
Skewness & $-0.4936$ \\
Excess kurtosis & $1.5648$ \\
\bottomrule
\end{tabular}
\end{table}

%% file: table_sp500_garch11_parameters.tex
\begin{table}[p]
\centering
\caption{GARCH(1,1) parameter estimates}
\label{tab:garch11_params}
\begin{tabular}{lccccccc}
\toprule
$\widehat\mu$ & $\widehat\omega$ & $\widehat{\alpha}$ & $\widehat{\beta}$ & $\widehat{\alpha}+\widehat{\beta}$ & Uncond.\ var. & Log like. \\
\midrule
$0.077947$ & $0.034779$ & $0.104845$ & $0.862822$ & $0.967667$ & $1.075629$ & $-1708.564423$ \\
\bottomrule
\end{tabular}
\end{table}

%% file: table_sp500_garch_residuals_params.tex
\begin{table}[p]
\centering
\caption{Fitted distribution parameters: GARCH(1,1) standardised residuals}
\label{tab:sp500_garch_residuals_params}
\begin{tabular}{lcccc}
\toprule
Model & Location & Scale & Shape \\
\midrule
Gaussian & $-0.0403$ & $0.9966$ & — \\
Laplace & $0.0050$ & $0.7531$ & — \\
Student $t$ & $-0.0046$ & $0.8432$ & $k$=6.9703 \\
HGM & $0.0050$ & $0.8768$ & $c$=1.5625 \\
\bottomrule
\end{tabular}
\end{table}

%% file: table_sp500_garch_residuals_tests.tex
\begin{table}[p]
\centering
\caption{Normality and HGM tests: GARCH(1,1) standardised residuals}
\label{tab:sp500_garch_residuals_tests}
\begin{tabular}{lcccl}
\toprule
Test & Statistic & $p$-value \\
\midrule
Jarque--Bera & $178.9911$ & $0.0000$ \\
Anderson--Darling & $4.3887$ & \textemdash \\
Shapiro--Wilk & $0.9809$ & $0.0000$ \\
Robust plug-in HGM & $6.3163$ & $0.0060$ \\
\bottomrule
\end{tabular}
\end{table}

%% file: table_sp500_garch_residuals_gof.tex
\begin{table}[p]
\centering
\caption{Goodness-of-fit measures: GARCH(1,1) standardised residuals}
\label{tab:sp500_garch_residuals_gof}
\begin{tabular}{lccccl}
\toprule
Model & Mean log score & KS distance & CvM distance & Tail-weighted CvM \\
\midrule
Student $t$ & $-1.39649$ & $0.02211$ & $0.00008$ & $0.00009$ \\
HGM & $-1.40096$ & $0.02969$ & $0.00013$ & $0.00013$ \\
Laplace & $-1.40955$ & $0.02872$ & $0.00029$ & $0.00026$ \\
Gaussian & $-1.41557$ & $0.04927$ & $0.00059$ & $0.00045$ \\
\bottomrule
\end{tabular}
\end{table}

%% file: table_sp500_garch_residuals_var.tex
\begin{table}[p]
\centering
\caption{Lower-tail quantiles: GARCH(1,1) standardised residuals}
\label{tab:sp500_garch_residuals_var}
\begin{tabular}{lcc}
\toprule
Model & Probability & Lower quantile \\
\midrule
Gaussian & $0.0100$ & $-2.3588$ \\
Gaussian & $0.0050$ & $-2.6074$ \\
Gaussian & $0.0010$ & $-3.1201$ \\
Laplace & $0.0100$ & $-2.9410$ \\
Laplace & $0.0050$ & $-3.4630$ \\
Laplace & $0.0010$ & $-4.6750$ \\
Student $t$ & $0.0100$ & $-2.5355$ \\
Student $t$ & $0.0050$ & $-2.9597$ \\
Student $t$ & $0.0010$ & $-4.0482$ \\
HGM & $0.0100$ & $-2.2896$ \\
HGM & $0.0050$ & $-2.6935$ \\
HGM & $0.0010$ & $-5.0123$ \\
Empirical & $0.0100$ & $-2.5777$ \\
Empirical & $0.0050$ & $-3.2448$ \\
Empirical & $0.0010$ & $-4.6984$ \\
\bottomrule
\end{tabular}
\end{table}

%% file: table_sp500_garch_residuals_tail.tex
\begin{table}[p]
\centering
\caption{Fitted tail probabilities: GARCH(1,1) standardised residuals}
\label{tab:sp500_garch_residuals_tail}
\begin{tabular}{lcccc}
\toprule
Model & z & $\Pr(Y<-z)$ & $\Pr(Y>z)$ & $\Pr(|Y|>z)$ \\
\midrule
Gaussian & $2.00000$ & $0.04325$ & $0.03553$ & $0.07878$ \\
Gaussian & $3.00000$ & $0.00474$ & $0.00363$ & $0.00837$ \\
Gaussian & $4.00000$ & $0.00026$ & $0.00018$ & $0.00044$ \\
Laplace & $2.00000$ & $0.04871$ & $0.04871$ & $0.09742$ \\
Laplace & $3.00000$ & $0.01520$ & $0.01520$ & $0.03041$ \\
Laplace & $4.00000$ & $0.00475$ & $0.00475$ & $0.00949$ \\
Student $t$ & $2.00000$ & $0.03879$ & $0.03750$ & $0.07629$ \\
Student $t$ & $3.00000$ & $0.00861$ & $0.00834$ & $0.01695$ \\
Student $t$ & $4.00000$ & $0.00217$ & $0.00211$ & $0.00428$ \\
HGM & $2.00000$ & $0.03122$ & $0.03122$ & $0.06243$ \\
HGM & $3.00000$ & $0.00554$ & $0.00554$ & $0.01107$ \\
HGM & $4.00000$ & $0.00221$ & $0.00221$ & $0.00442$ \\
Empirical & $2.00000$ & $0.05737$ & $0.02072$ & $0.07809$ \\
Empirical & $3.00000$ & $0.00956$ & $0.00239$ & $0.01195$ \\
Empirical & $4.00000$ & $0.00398$ & $0.00159$ & $0.00558$ \\
\bottomrule
\end{tabular}
\end{table}

%% file: table_correction_factor.tex
\begin{table}[!htbp]
\centering
\caption{Tail probability correction factor $\rho(z,c)=(1-G_c(z))/(1-\Phi(z))$}
\label{tab:correction}
\begin{tabular}{lcccc}
\toprule
& \multicolumn{4}{c}{$c - 3/2$} \\
\cmidrule(lr){2-5}
$z$ & $0.05$ & $0.10$ & $0.20$ & $0.30$ \\
\midrule
$2.0$ & 1.300 & 1.583 & 2.108 & 2.584 \\
$2.5$ & 1.785 & 2.538 & 3.958 & 5.277 \\
$3.0$ & 3.491 & 5.910 & 10.553 & 14.960 \\
$3.5$ & 11.181 & 21.167 & 40.591 & 59.337 \\
$4.0$ & 56.071 & 110.428 & 217.086 & 321.133 \\
\bottomrule
\end{tabular}
\begin{minipage}{\linewidth}
\smallskip\footnotesize
Notes: $\rho(z,c)=(1-G_c(z))/(1-\Phi(z))$ computed from the exact HGM survival function. $\rho>1$ for all $z>0$: the HGM distribution assigns strictly more probability to tail events than the Gaussian at every threshold. $\rho$ is strictly increasing in both $z$ and $c$ (Lemma~\ref{lem:R_monotone}).
\end{minipage}
\end{table}

%% file: table_var_ratio.tex
\begin{table}[!htbp]
\centering
\caption{Exact HGM versus Gaussian VaR for $c-3/2=0.30$}
\label{tab:var_ratio}
\begin{tabular}{lccc}
\toprule
$p$ & $q_p^{\Phi}$ & $q_p^{\mathrm{HGM}}$ & $q_p^{\mathrm{HGM}} / q_p^{\Phi}$ \\
\midrule
$0.990$ & 2.326 & 4.031 & 1.73 \\
$0.995$ & 2.576 & 5.579 & 2.17 \\
$0.999$ & 3.090 & 12.291 & 3.98 \\
\bottomrule
\end{tabular}
\begin{minipage}{\linewidth}
\smallskip\footnotesize
Notes: $q_p^{\mathrm{HGM}}$ is the exact HGM quantile, computed by numerical inversion of $1-G_c(q_p)=1-p$. $q_p^{\Phi}=\Phi^{-1}(p)$ is the Gaussian quantile.
\end{minipage}
\end{table}

%% file: ms.bib
@article{abadir93,
    author      = {Abadir, K. M.},
    title       = {The limiting distribution of the autocorrelation coefficient under a unit root},
    journal     = {Annals of Statistics},
    volume      = {21},
    number      = {2},
    pages       = {1058--1070},
    year        = {1993}
}

@article{abadir95,
    author      = {Abadir, K. M.},
    title       = {The limiting distribution of the $t$ ratio under a unit root},
    journal     = {Econometric Theory},
    volume      = {11},
    number      = {4},
    pages       = {775--793},
    year        = {1995}
}

@article{abadir99,
    author      = {Abadir, K. M.},
    title       = {An introduction to hypergeometric functions for economists},
    journal     = {Econometric Reviews},
    volume      = {18},
    number      = {3},
    pages       = {287--330},
    year        = {1999}
}

@article{abadir_rockinger03,
    author      = {Abadir, K. M. and Rockinger, M.},
    title       = {Density functionals, with an option-pricing application},
    journal     = {Econometric Theory},
    volume      = {19},
    number      = {5},
    pages       = {778--811},
    year        = {2003}
}

@book{abramowitz_stegun72,
    editor      = {Abramowitz, M. and Stegun, I. A.},
    title       = {Handbook of Mathematical Functions with Formulas, Graphs, and Mathematical Tables},
    publisher   = {Dover Publications},
    address     = {New York},
    year        = {1972}
}

@article{anderson_darling52,
    author      = {Anderson, T. W. and Darling, D. A.},
    title       = {Asymptotic theory of certain ``goodness of fit'' criteria based on stochastic processes},
    journal     = {Annals of Mathematical Statistics},
    volume      = {23},
    number      = {2},
    pages       = {193--212},
    year        = {1952}
}

@article{andrews01,
    author      = {Andrews, D. W. K.},
    title       = {Testing when a parameter is on the boundary of the maintained hypothesis},
    journal     = {Econometrica},
    volume      = {69},
    number      = {3},
    pages       = {683--734},
    year        = {2001}
}

@article{andrews_mallows74,
    author      = {Andrews, D. F. and Mallows, C. L.},
    title       = {Scale mixtures of normal distributions},
    journal     = {Journal of the Royal Statistical Society, Series B},
    volume      = {36},
    number      = {1},
    pages       = {99--102},
    year        = {1974}
}

@article{bahadur66,
        author  = {Bahadur, R. R.},
        title   = {A note on quantiles in large samples},
        journal = {Annals of Mathematical Statistics},
        volume  = {37},
        number  = {3},
        pages   = {577--580},
        year    = {1966}
}

@article{bai03,
    author      = {Bai, J.},
    title       = {Testing parametric conditional distributions of dynamic models},
    journal     = {Review of Economics and Statistics},
    volume      = {85},
    number      = {3},
    pages       = {531--549},
    year        = {2003}
}

@article{bai_ng05,
    author      = {Bai, J. and Ng, S.},
    title       = {Tests for skewness, kurtosis, and normality
               for time series data},
    journal     = {Journal of Business \& Economic Statistics},
    volume      = {23},
    number      = {1},
    pages       = {49--60},
    year        = {2005}
}

@article{barndorff-nielsen_etal82,
    author      = {Barndorff-Nielsen, O. and Kent, J. and S{\o}rensen, M.},
    title       = {Normal variance--mean mixtures and $z$ distributions},
    journal     = {International Statistical Review},
    volume      = {50},
    number      = {2},
    pages       = {145--159},
    year        = {1982}
}

@article{bollerslev86,
    author      = {Bollerslev, T.},
    title       = {Generalized autoregressive conditional heteroscedasticity},
    journal     = {Journal of Econometrics},
    volume      = {31},
    number      = {3},
    pages       = {307--327},
    year        = {1986}
}

@article{bollerslev87,
    author      = {Bollerslev, T.},
    title       = {A conditionally heteroskedastic time series model for speculative prices and rates of return},
    journal     = {Review of Economics and Statistics},
    volume      = {69},
    number      = {3},
    pages       = {542--547},
    year        = {1987}
}

@article{chernoff54,
    author      = {Chernoff, H.},
    title       = {On the distribution of the likelihood ratio},
    journal     = {Annals of Mathematical Statistics},
    volume      = {25},
    number      = {3},
    pages       = {573--578},
    year        = {1954}
}

@article{cont01,
    author      = {Cont, R.},
    title       = {Empirical properties of asset returns: stylized facts and statistical issues},
    journal     = {Quantitative Finance},
    volume      = {1},
    number      = {2},
    pages       = {223--236},
    year        = {2001}
}

@book{dasgupta08,
    author      = {{DasGupta}, A.},
    title       = {Asymptotic Theory of Statistics and Probability},
    publisher   = {Springer},
    address     = {New York},
    year        = {2008}
}

@article{donsker52,
    author   = {Donsker, M. D.},
    title    = {Justification and extension of {D}oob's heuristic approach to the {K}olmogorov--{S}mirnov theorems},
    journal  = {Annals of Mathematical Statistics},
    volume   = {23},
    number   = {2},
    pages    = {277--281},
    year     = {1952}
}

@article{doornik_hansen08,
    author      = {Doornik, J. A. and Hansen, H.},
    title       = {An omnibus test for univariate and multivariate normality},
    journal     = {Oxford Bulletin of Economics and Statistics},
    volume      = {70},
    number      = {S1},
    pages       = {927--939},
    year        = {2008}
}

@article{durbin73,
  author    = {Durbin, J.},
  title     = {Weak convergence of the sample distribution
             function when parameters are estimated},
  journal   = {Annals of Statistics},
  year      = {1973},
  volume    = {1},
  number    = {2},
  pages     = {279--290}
}

@article{engle82,
    author      = {Engle, R. F.},
    title       = {Autoregressive conditional heteroscedasticity with estimates of the variance of {United Kingdom} inflation},
    journal     = {Econometrica},
    volume      = {50},
    number      = {4},
    pages       = {987--1007},
    year     = {1982}
}

@article{fama65,
    author      = {Fama, E. F.},
    title       = {The behavior of stock-market prices},
    journal     = {Journal of Business},
    volume      = {38},
    number      = {1},
    pages       = {34--105},
    year        = {1965}
}

@article{forchini02,
    author      = {Forchini, G.},
    title       = {The exact cumulative distribution function of a ratio of quadratic forms in normal variables, with application to the {AR}(1) model},
    journal     = {Econometric Theory},
    volume      = {18},
    number      = {4},
    pages       = {823--852},
    year        = {2002}
}

@misc{fred_sp500,
    author      = {{Federal Reserve Bank of St.\ Louis}},
    title       = {{S\&P} 500},
    howpublished = {\url{https://fred.stlouisfed.org/series/SP500}},
    year      = {2026},
    note      = {Retrieved 2 June 2026}
}

@article{gabaix09,
    author      = {Gabaix, X.},
    title       = {Power laws in economics and finance},
    journal     = {Annual Review of Economics},
    volume      = {1},
    pages       = {255--293},
    year        = {2009}
}

@article{gallant_nychka87,
    author      = {Gallant, A. R. and Nychka, D. W.},
    title       = {Semi-nonparametric maximum likelihood estimation},
    journal     = {Econometrica},
    volume      = {55},
    number      = {2},
    pages       = {363--390},
    year        = {1987}
}

@article{gallant_tauchen89,
    author      = {Gallant, A. R. and Tauchen, G.},
    title       = {Seminonparametric estimation of conditionally constrained heterogeneous processes: asset pricing applications},
    journal     = {Econometrica},
    volume      = {57},
    number      = {5},
    pages       = {1091--1120},
    year        = {1989}
}

@techreport{gordy98,
    author      = {Gordy, M. B.},
    title       = {A generalization of generalized beta distributions},
    institution = {Federal Reserve Board},
    type        = {{Finance and Economics Discussion Series}},
    year        = {1998}
}

@article{hillier01,
    author      = {Hillier, G.},
    title       = {The density of a quadratic form in a vector uniformly distributed on the $n$-sphere},
    journal     = {Econometric Theory},
    volume      = {17},
    number      = {1},
    pages       = {1--28},
    year        = {2001}
}

@article{jarque_bera87,
    author      = {Jarque, C. M. and Bera, A. K.},
    title       = {A test for normality of observations
               and regression residuals},
    journal     = {International Statistical Review},
    volume      = {55},
    number      = {2},
    pages       = {163--172},
    year        = {1987}
}

@article{jondeau_rockinger01,
    author      = {Jondeau, E. and Rockinger, M.},
    title       = {Gram--{C}harlier densities},
    journal     = {Journal of Economic Dynamics and Control},
    volume      = {25},
    number      = {10},
    pages       = {1457--1483},
    year        = {2001}
}

@phdthesis{lawford01,
    author      = {Lawford, S.},
    title       = {Improved modelling in finite-sample and nonlinear frameworks},
    school      = {University of York},
    year        = {2001}
}

@article{mandelbrot63,
    author      = {Mandelbrot, B.},
    title       = {The variation of certain speculative prices},
    journal     = {Journal of Business},
    volume      = {36},
    number      = {4},
    pages       = {394--419},
    year        = {1963}
}

@book{mcneil_etal15,
    author      = {McNeil, A. J. and Frey, R. and Embrechts, P.},
    title       = {Quantitative Risk Management: Concepts, Techniques and Tools},
    publisher   = {Princeton University Press},
    address     = {Princeton NJ},
    year        = {2015}
}

@article{nelson91,
    author      = {Nelson, D. B.},
    title       = {Conditional heteroskedasticity in asset returns: a new approach},
    journal     = {Econometrica},
    volume      = {59},
    number      = {2},
    pages       = {347--370},
    year        = {1991}
}

@incollection{newey_mcfadden94,
    author      = {Newey, W. K. and McFadden, D. L.},
    title       = {Large sample estimation and hypothesis testing},
    booktitle   = {Handbook of Econometrics},
    editor      = {Engle, R. F. and McFadden, D. L.},
    volume      = {4},
    pages       = {2111--2245},
    publisher   = {Elsevier},
    address     = {Amsterdam},
    year        = {1994}
}

@book{olver_etal10,
    editor      = {Olver, F. W. J. and Daalhuis, A. B. O. and Lozier, D. W. and Schneider, B. I. and Boisvert, R. F. and Clark, C. W. and Miller, B. R. and Saunders, B. V. and Cohl, H. S. and {McClain}, M. A.},
    title       = {{NIST} Digital Library of Mathematical Functions},
    publisher   = {National Institute of Standards and Technology},
    year        = {2026},
    note        = {Release 1.2.6, \url{https://dlmf.nist.gov}}
}

@incollection{phillips83,
    author      = {Phillips, P. C. B.},
    title       = {Exact small sample theory in the simultaneous equations model},
    booktitle   = {Handbook of Econometrics},
    editor      = {Griliches, Z. and Intriligator, M. D.},
    volume      = {1},
    chapter     = {8},
    pages       = {449--516},
    publisher   = {North-Holland},
    address     = {Amsterdam},
    year        = {1983}
}

@article{royston92,
    author      = {Royston, P.},
    title       = {Approximating the {Shapiro--Wilk} {W-test} for non-normality},
    journal     = {Statistics and Computing},
    volume      = {2},
    pages       = {117-119},
    year        = {1992}
}

@article{self_liang87,
    author      = {Self, S. G. and Liang, K.-Y.},
    title       = {Asymptotic properties of maximum likelihood estimators and likelihood ratio tests under nonstandard conditions},
    journal     = {Journal of the American Statistical Association},
    volume      = {82},
    number      = {398},
    pages       = {605--610},
    year        = {1987}
}

@article{shapiro_wilk65,
    author      = {Shapiro, S. S. and Wilk, M. B.},
    title       = {An analysis of variance test for normality
               (complete samples)},
    journal     = {Biometrika},
    volume      = {52},
    number      = {3/4},
    pages       = {591--611},
    year        = {1965}
}

@book{shorack_wellner86,
    author      = {Shorack, G. R. and Wellner, J. A.},
    title       = {Empirical Processes with Applications to Statistics},
    publisher   = {Wiley},
    address     = {New York},
    year        = {1986}
}

@article{stephens74,
    author      = {Stephens, M. A.},
    title       = {{{EDF}} statistics for goodness of fit and some comparisons},
    journal     = {Journal of the American Statistical Association},
    volume      = {69},
    number      = {347},
    pages       = {730--737},
    year        = {1974}
}

@book{vandervaart98,
    author      = {van der Vaart, A. W.},
    title       = {Asymptotic Statistics},
    series      = {Cambridge Series in Statistical and                          Probabilistic Mathematics},
    publisher   = {Cambridge University Press},
    address     = {Cambridge},
    year        = {1998}
}

@book{vandervaart_wellner23,
    author      = {van der Vaart, A. W. and Wellner, J. A.},
    title       = {Weak Convergence and Empirical Processes},
    publisher   = {Springer},
    address     = {New York},
    year        = {2023}
}
